\theoremstyle{plain}
\newtheorem{thm}{Theorem}[section]
\newtheorem{lem}[thm]{Lemma}
\newtheorem{prop}{Proposition}
\newtheorem{corollary}{Corollary}
\theoremstyle{remark}
\newtheorem{assumption}{Assumption}
\newtheorem{defi}{Definition}
\newtheorem{example}{Example}
\newcommand{\dd}{\mathrm{d}}
\newcommand{\dx}{\dd x}
\newcommand{\eqdef}{:=}
\newcommand{\ind}{\mathbb{1}}  
\newcommand{\E}{\mathbb{E}}
\newcommand{\Var}{\mathrm{Var}}
\newcommand{\Cov}{\mathrm{Cov}}
\renewcommand{\P}{\mathbb{P}}
\newcommand{\Q}{\mathbb{Q}}
\newcommand{\iid}{\overset{i.i.d.}{\sim}}
\newcommand{\bigO}{\mathcal{O}}
\newcommand{\bigOproba}{\bigO_\P}
\newcommand{\cvprob}{\stackrel{\mathbb P}{\rightarrow}}
\newenvironment{algo}[1]{
  \begin{center}
      \begin{algorithm}
        \caption{#1}
        \DontPrintSemicolon
      }
      {
      \end{algorithm}
  \end{center}
}
\newcommand{\tnn}{\min(\tau_t^{1, \mrp}, N)}
\newcommand{\norm}[1]{\left\lVert#1\right\rVert}
\newcommand{\pr}[1]{\left( #1 \right)} 
\newcommand{\ps}[1]{\left[ #1 \right]} 
\newcommand{\px}[1]{\left\{ #1 \right\}} 
\newcommand{\abs}[1]{\left| #1 \right|} 
\newcommand{\CE}[2]{\E\ps{\left. {#1} \right \vert {#2}}} 
\newcommand{\CProb}[2]{\P\pr{\left. {#1} \right \vert {#2}}} 
\newcommand{\CVar}[2]{\Var\pr{\left. {#1} \right \vert {#2}}} 
\newcommand{\inv}[1]{{#1}^{-1}}
\newcommand{\infnorm}[1]{\norm{#1}_\infty}
\newcommand{\given}{\textrm{ } | \textrm{ }}
\newcommand{\fastmatrix}[1]{\begin{bmatrix} #1 \end{bmatrix}}
\newcommand{\CCov}[3]{\operatorname{Cov} \pr{\left. #1 , #2 \right \vert #3}}  
\newcommand{\floor}[1]{\lfloor #1 \rfloor}
\newcommand{\invp}[1]{\inv{(#1)}}
\newcommand{\notsep}{\ \textbullet\ } 
\newcommand{\nfrac}[2]{{#1}/{#2}} 
\newcommand{\splitcell}[2]{\begin{tabular}{l}
		#1\\#2 
\end{tabular}}
\newcommand{\splitcellbis}[3]{
\begin{tabular}{l}
	#1\\#2\\#3 
\end{tabular}}
\newcommand{\bw}[1]{B_t^{N, \mathrm{#1}}}
\newcommand{\bwm}[1]{\hat B_t^{N, \mathrm{#1}}}
\newcommand{\FTP}[1]{\mathcal{F}_{#1}^{+}}
\newcommand{\ceftmone}[1]{\CE{#1}{\FTP{t-1}}}
\newcommand{\norminf}[1]{\norm{#1}_\infty}
\newcommand{\btn}[1]{B_t^{N, \mathrm{#1}}} 
\newcommand{\btnhat}[1]{\hat B_t^{N, \mathrm{#1}}}
\newcommand{\epss}{\varepsilon_{\mathrm{S}}}
\newcommand{\FTM}{\mathcal F_T^-}
\newcommand{\ccovftm}[2]{\CCov{#1}{#2}{\FTM}}
\newcommand{\qtnffbs}{\Q_T^{N, \mathrm{FFBS}}}
\newcommand{\mrp}{\mathrm{PaRIS}}
\newcommand{\mrffbs}{\mathrm{FFBS}}
\newcommand{\Proj}{\Pi}
\newcommand{\Projeg}{\Proj^{(\mathcal X, \mathcal Y)}_{\mathcal X}}
\newcommand{\nmt}{(\ell_t^N)^{-1} N^{-1}} 
\newcommand{\qtnb}[1]{\bar \Q_{#1}^N}
\newcommand{\ddxz}[1]{\dd x_{0:#1}}
\newcommand{\mci}{\mathcal I}
\newcommand{\mcio}{\mci^1}
\newcommand{\mcit}{\mci^2}
\newcommand{\mciso}{\mci^{*1}}
\newcommand{\mcist}{\mci^{*2}}
\newcommand{\trajtwo}{\mcit}
\newcommand{\trajtwos}{\mcist}
\newcommand{\trajthree}{\mci^3}
\newcommand{\trajthrees}{\mci^{*3}}
\newcommand{\trajfours}{\mci^{*4}}
\newcommand{\iit}[1]{\mathcal I_t^{#1}}
\newcommand{\iitmo}[1]{\mathcal I_{t-1}^{#1}}
\newcommand{\iits}[1]{\mathcal I_t^{*#1}}
\newcommand{\iitmos}[1]{\mathcal I_{t-1}^{*#1}}
\newcommand{\atbt}{A_t B_t}
\newcommand{\mblow}{\bar M_\ell}
\newcommand{\mbhigh}{\bar M_h}
\newcommand{\gblow}{\bar G_\ell}
\newcommand{\gbhigh}{\bar G_h}
\newcommand{\rtftm}{R_t, \FTM}
\newcommand{\bwdist}[1]{\frac{G_{t-1}(X_{t-1}^i) m_t(X_{t-1}^i , X_t^{#1})}{\sum_{j=1}^N G_{t-1}(X_{t-1}^j) m_t(X_{t-1}^j, X_t^{#1})}}
\newcommand{\hlbr}{\pr{\frac{\gbhigh \mbhigh}{\gblow \mblow}}^2} 
\newcommand{\prtf}[1]{\CProb{#1}{\rtftm}}
\newcommand{\gammastar}{\Gamma^*}
\newcommand{\gdbackward}{\overleftarrow {\mathfrak{C}^{2}}}
\newcommand{\pftm}[1]{\CProb{#1}{\FTM}}
\newcommand{\oneminusmlmh}{\pr{1 - \frac{\mblow}{\mbhigh}}}
\newcommand{\addpsi}[2]{\psi_{#1}(X_{#1-1}^{ \mci_{#1-1}^{#2} } , X_{#1}^{ \mci_{#1}^{#2} })}
\newcommand{\Tei}[2]{\frac 1 N \sum e^{i #1 #2}} 
\newcommand{\Teiu}{\Tei{u}{X_{t-1}^n}}
\newcommand{\Teiv}{\Tei{v}{X_{t-1}^n}}
\newcommand{\Teiw}{\Tei{w}{X_t^n}}
\newcommand{\pinfty}{{\infty, \mrp}}
\newcommand{\dimx}{\operatorname{dim}_X}
\newcommand{\dimy}{\operatorname{dim}_Y}
\newcommand{\covx}{C_X}
\newcommand{\covy}{C_Y}
\newcommand{\sigmasmth}{\Sigma^{\mathrm{smth}}}
\newcommand{\sigmapred}{\Sigma^{\mathrm{pred}}}
\newcommand{\expfc}[1]{\exp\pr{-\abs{\log(1-L)} #1{x^{1/k}}}}
\newcommand{\znp}{ z^N \pr{ \frac{r_t^N(x_t)}{\mbhigh} } }
\newcommand{\intxt}[1]{
	\int_{\mathcal X_t} #1
}
\newcommand{\znpt}{ z^N \pr{ \frac{r_t(x_t)}{\mbhigh} } }
\newcommand{\tauinfffbs}{\tau_t^{\infty, \mrffbs}}
\newcommand{\xinfffbs}{X_t^{\infty, \mrffbs}}
\newcommand{\reweight}{
	\underline{Reweight}. Set $\omega_t^n \gets G_t(X_t^n)$ for $n=1,2,\ldots, N$\;
	Set $\ell_t^N \gets \sum_{n=1}^N \omega_t^n/N$\;
	Set $W_t^n \gets \omega_t^n/N\ell_t^N$ for $n=1,2,\ldots, N$\;
}
\newcommand{\prot}{Proof of Theorem\ }
\newcommand{\proe}{Proof of Equation\ }
\newcommand{\propp}{Proof of Proposition\ }
\newcommand{\prots}{Proof of Theorems\ }
\newcommand{\epsa}{\varepsilon_A}
\newcommand{\epsd}{\varepsilon_D}
\newcommand{\xpo}{X_{t-1}^{A_t^{n,1}}}
\newcommand{\xpt}{X_{t-1}^{A_t^{n,2}}}
\newcommand{\defcond}[1][]{X_t^{n#1}=x_t, X_{t-1}^{1:N}=x_{t-1}^{1:N}}
\newcommand{\eno}{^{n,1}}
\newcommand{\ent}{^{n,2}}
\newcommand{\expa}{^\mathrm A}
\newcommand{\expb}{^\mathrm B}
\newcommand{\ndist}{N_{\mathrm{dist}}}
\newcommand{\covmata}{\sigma\expa (\sigma\expa)^\top}
\newcommand{\covmatb}{\sigma\expb (\sigma\expb)^\top}
\begin{document}
\title{On backward smoothing algorithms}
\author{Hai-Dang Dau \& Nicolas Chopin}
\address{CREST-ENSAE, Institut Polytechnique de Paris}
\email{nicolas.chopin@ensae.fr}

%

\begin{abstract}
    In the context of state-space models, skeleton-based smoothing algorithms
    rely on a backward sampling step which by default has a $\bigO(N^2)$
    complexity (where $N$ is the number of particles). Existing improvements in
    the literature are unsatisfactory: a popular rejection sampling-- based
    approach, as we shall show, might lead to badly behaved execution time;
    another rejection sampler with stopping lacks complexity analysis; yet
    another MCMC-inspired algorithm comes with no stability guarantee. We
    provide several results that close these gaps. In particular, we prove a
    novel non-asymptotic stability theorem, thus enabling smoothing with truly
    linear complexity and adequate theoretical justification. We propose a
    general framework which unites most skeleton-based smoothing algorithms in
    the literature and allows to simultaneously prove their convergence and
    stability, both in online and offline contexts. Furthermore, we derive, as
    a special case of that framework, a new coupling-based smoothing algorithm
    applicable to models with intractable transition densities. We elaborate
    practical recommendations and confirm those with numerical experiments.

\end{abstract}
\maketitle

%

\section{Introduction} 
  
\subsection{Background} 

A state-space model is composed of an unobserved Markov process $X_0, \ldots,
X_T$ and observed data $Y_0, \ldots, Y_T$. Given $X_0, \ldots, X_T$, the data
$Y_0, \ldots, Y_T$ are independent and generated through some specified
emission distribution $Y_t | X_t \sim \boldsymbol f_t(\dd y_t | x_t)$. These
models have wide-ranging applications (e.g.\ in biology, economics and
engineering). Two important inference tasks related to state-space models are
filtering (computing the distribution of $X_t$ given $Y_0, \ldots, Y_t$) and
smoothing (computing the distribution of the whole trajectory $(X_0, \ldots,
X_t)$, again given all data until time $t$). Filtering is usually carried out
through a particle filter, that is, a sequential Monte Carlo algorithm that
propagates $N$ weighted particles (realisations) through Markov and importance
sampling steps; see \cite{SMCbook} for a general introduction to state-space
models (Chapter 2) and particle filters (Chapter 10). 

This paper is concerned with skeleton-based smoothing algorithms, i.e.
algorithms that approximate the
smoothing distributions with empirical distributions based on the  
output of a particle filter (i.e.\ the locations and weights of the $N$
particles at each time step).  A simple example is genealogy tracking
(empirically proposed in \citealp{Kitagawa1996} and theoretically analysed in \citealp{delmoral_genealogy}) which keeps track of
the ancestry (past states) of each particles. This smoother suffers from 
degeneracy: for $t$ large enough, all the particles have the same ancestor at
time $0$. 

The forward filtering backward smoothing (FFBS) algorithm \citep{GodsDoucWest}
has been proposed as a solution to this problem. Starting from the filtering
approximation at time $t$, the algorithm samples successively particles at
times $t-1$, $t-2$, etc.\ using backward kernels. Its theoretical properties,
in particular the stability as $t \to \infty$, have been studied by
\citet{DelMoral2010, Douc2011}. 


In many applications, one is mainly interested in approximating smoothing
expectations of additive functions of the form 
\[\CE{\psi_0(X_0) + \psi_1(X_0, X_1) + \cdots + \psi_t(X_{t-1}, X_t)}{Y_0, \ldots, Y_t}
\]
for some functions $\psi_0, \ldots, \psi_t$.  Such expectations can be
approximated in an online fashion by a procedure described in
\citet{DelMoral2010}. Inspired by this, the particle-based, rapid incremental
smoother (PaRIS) algorithm of \cite{Olsson2017} replaces some of the
calculations with an additional layer of Monte Carlo approximation. 

The backward sampling operation is central to both the FFBS and the PaRIS
algorithms. The naive implementation has an $\mathcal O(N^2)$ cost. There have
been numerous attempts at alleviating this problem in the literature, but, to our
knowledge, they all  lack formal support in terms of either computation
complexity or stability as $T \to \infty$. 

In the following five paragraphs, we elaborate on this limitation for each of
the three major contenders, and we point out two related challenges with current
backward sampling algorithms that we try to resolve in this article.

\subsection{State of the art}
\label{subsect:sota}

First, \citet{Douc2011} proposed to use rejection sampling for the generation
of backward indices in FFBS, and \citet{Olsson2017} extended this technique to
PaRIS. If the model has upper-and-lower bounded transition densities, this
sampler has an $\mathcal O(N)$ expected execution time (\citealt[Proposition
2]{Douc2011} and \citealt[Theorem 10]{Olsson2017}). Unfortunately, most
practical state space models (including linear Gaussian ones) violate this
assumption, and the behaviour of the algorithm in this case is unclear.
Empirically, it has been observed (\citealt{taghavi_adaptive,godsillmcmc};
\citealt[Section 4.3]{Olsson2017}) that in real examples, FFBS-reject and
PaRIS-reject frequently suffer from low acceptance rates, in contrary to what users would expect from an algorithm with linear complexity. To cite \citet{godsillmcmc}, ``[a]lthough theoretically elegant, the [...] algorithm has been found to suffer from such high rejection rates as to render it consistently slower than direct sampling implementation on problems with more than one state dimension''. To the best of our knowledge, no theoretical result has been put forward to formalise or quantify this bad behaviour.

Second, \citet{taghavi_adaptive} and \citet[Section 4.3]{Olsson2017} suggest putting a threshold on the number of rejection sampling trials to get more stable execution times. The thresholds are either chosen adaptively using a Kalman filter in \citet{taghavi_adaptive} or fixed at $\sqrt N$ in \citet[Section 4.3]{Olsson2017}. Although improvements are empirically observed, to the best of our knowledge, no theoretical analysis of the complexity of the resulting algorithm or formal justification of the proposed threshold is available.

Third, \citet{godsillmcmc} use MCMC moves starting from the filtering ancestor instead of a full backward sampling step. Empirically, this algorithm seems to prevent the degeneracy associated with the genealogy tracking smoother using a very small number of MCMC steps (e.g. less than five). Unfortunately, as far as we know, this stability property is not proved anywhere in the literature, which deters the adoption of the algorithm. Using MCMC moves provides a procedure with truly linear and deterministic run time, and a stability result is the only missing piece of the puzzle to completely resolve the $\mathcal O(N^2)$ problem. We believe one reason for the current state of affair is that the stability proof techniques employed by \citet{Douc2011} and \citet{Olsson2017} are difficult to extend to the MCMC case.

Fourth, and this is related to the third point, the stability of the PaRIS algorithm has only been proved in the asymptotic regime. More specifically, \citet{Olsson2017} established a central limit theorem as $N \to \infty$ in Corollary 5, then showed that the corresponding asymptotic error remains controlled as $T \to \infty$ in Theorem 8 and Theorem 9. While non-asymptotic stability bounds for the FFBS algorithm are already available in \citet{DelMoral2010,Douc2011,DubarryLeCorff2011}, we do not think that they can be extended straightforwardly to PaRIS and we are not aware of any such attempt in the literature.

Fifth, all backward samplers mentioned thus far require access to the transition density. Many models have dynamics that can be simulated from but transition densities that are not explicitly calculable. Enabling backward sampling in this scenario is challenging and will certainly require some kind of problem-specific knowledge to extract information from the transition densities, despite not being able to evaluate them exactly.

\subsection{Structure and contribution}\label{sec:motivation} 
Section~\ref{sec:smoothing_generic}
presents a general framework which admits as particular cases
a wide variety of existing algorithms (e.g.\ FFBS, forward-additive smoothing, PaRIS) as well as the novel ones considered later in the paper.
It allows to simultaneously prove the consistency as $N \to \infty$ and the stability as $T \to \infty$ for all of them.
The main ingredient is the discrete backward kernels, which are essentially random $N \times N$
matrices employed differently in the offline and the online settings.
On the technical side, the stability result is proved using a new technique, yielding a non-asymptotic bound that addresses the fourth point in subsection~\ref{subsect:sota}.

Next, we closely look at the use of rejection sampling and realise that in many
models, the resulting execution time may be significantly heavy-tailed; see
Section~\ref{sec:ffbs_kernel}. For instance, the run time of PaRIS may have
infinite  expectation, whereas the run time of FFBS may have infinite variance.
(Since it is technically more involved, the material for the FFBS algorithm is
delegated to Supplement~\ref{apx:ffbs_exec_result}.) These results address the
first point in subsection~\ref{subsect:sota} and we discuss their severe
practical implications.

We then derive and analyse hybrid rejection sampling schemes (i.e. schemes that
use rejection sampling only up to a certain number of attempts, and then switch
to the standard method). 
We show that they lead to a nearly $\mathcal{O}(N)$ algorithm (up to some
$\log$ factor) in Gaussian models; again see Section~\ref{sec:ffbs_kernel}.
This stems from the subtle interaction between the tail of Gaussian densities
and the finite Feynman-Kac approximation. Outside this class of model, the
hybrid algorithm can still escape the $\mathcal O(N^2)$ complexity, although it
might not reach the ideal linear run time target. These results shed some light
on the second issue mentioned in subsection~\ref{subsect:sota}.

In Section~\ref{sec:new_kernels}, we look at backward kernels that are more
efficient to simulate than the FFBS and the PaRIS ones. 
Section~\ref{sec:mcmc_kernel} describes backward kernels based on MCMC (Markov
chain Monte Carlo) following \citet{godsillmcmc} and extends them to the online
scenario. We cast this family of algorithms as a particular case of the general
framework developed in Section~\ref{sec:smoothing_generic}, which allows
convergence and stability to follow immediately. This solves the long-standing
problem described in the third point of subsection~\ref{subsect:sota}.

MCMC methods require evaluation of the
likelihood and thus cannot be applied to models with intractable transition
densities. In Section~\ref{sec:intractable}, we show how the use of forward
coupling can replace the role of backward MCMC steps in these scenarios. This
makes it possible to obtain stable performance in both on-line and off-line
scenarios (with intractable transition densities) and provides a possible solution to the fifth challenge describe in subsection~\ref{subsect:sota}. 

Section~\ref{sec:numexp} illustrates the aforementioned algorithms in both
on-line and off-line uses. We highlight how hybrid and MCMC samplers lead to a
more user-friendly (i.e.\ smaller, less random and less model-dependent)
execution time than the pure rejection sampler. We also apply our smoother for
intractable densities to a continuous-time diffusion process with
discretization. We observe that our procedure can indeed prevent degeneracy as $T \to
\infty$, provided that some care is taken to build couplings with good
performance. Section~\ref{sec:conclusion} concludes the paper with final practical
recommendations and further research directions. Table~\ref{table} gives an overview of existing and novel algorithms as well as our contributions for each.

\subsection{Related work}

Proposition 1 of \citet{Douc2011} states that under certain assumptions, the FFBS-reject algorithm has an \textit{asymptotic} $\mathcal O_{\P}(N)$ complexity. This does not contradict our results, which point out the undesirable properties of the \textit{non-asymptotic} execution time. Clearly, non-asymptotic behaviours are what users really observe in practice. From a technical point of view, the proof of \citet[Prop. 1]{Douc2011} is a simple application of Theorem 5 of the same paper. In contrast, non-asymptotic results such as Theorem~\ref{thm:ffbs_exec_infinite_general} and Theorem~\ref{thm:ffbs_exec_infinite_gaussian} require more delicate finite sample analyses.

Figure 1 of \citet{Olsson2017} and the accompanying discussion provide an
excellent intuition on the stability of smoothing algorithms based on the
support size of the backward kernels. We formalise this support size condition for the first time by the inequality ~\eqref{eq:support_cond} and construct a novel non-asymptotic stability result based on it. In contrast, \citet{Olsson2017} depart from their initial intuition and use an entirely different technique to establish stability. Their result is asymptotic in nature.

\citet{gloaguen2021pseudomarginal} briefly mention the use of MCMC in PaRIS algorithm, but their algorithm is fundamentally different to and less efficient than \citet{godsillmcmc}. Indeed, they do not start the MCMC chains at the ancestors previously obtained during the filtering step. They are thus obliged to perform a large number of MCMC iterations for decorrelation, whereas the algorithms described in our Proposition~\ref{prop:mcmc_validity}, built upon the ideas of \citet{godsillmcmc}, only require a single MCMC step to guarantee stability. However, we would like to stress again that \citet{godsillmcmc} did not prove this important fact.

Another way to reduce the computation time is to perform the expensive backward sampling steps at certain times $t$ only. For other values of $t$, the naive genealogy tracking smoother is used instead. This idea has been recently proposed by \citet{mastrototaro2021fast}, who also provided a practical recipe for deciding at which values of  $t$ the backward sampling should take place and derived corresponding theoretical results.

Smoothing in models with intractable transition densities is very challenging.
If these densities can be estimated accurately, the algorithms
proposed by \citet{gloaguen2021pseudomarginal} permit to attack this problem. A
case of particular interest is diffusion models, where unbiased transition
density estimators are provided in \citet{MR2278331, Fearnhead2008}. More
recently, \citet{yonekura2022online_smoothing} use a special bridge path-space
construction to overcome the unavailability of transition densities when the
diffusion (possibly with jumps) must be discretised.

Our smoother for intractable models are based on a general coupling principle
that is not specific to diffusions. We only require users to be able to
simulate their dynamics (e.g.\ using discretisation in the case of diffusions)
and to manipulate random numbers in their simulations so that dynamics starting
from two different points can meet with some probability. Our method does not
directly provide an estimator for the gradient of the transition density with
respect to model parameters and thus cannot be used in its current form to
perform maximum likelihood estimation (MLE) in intractable models; whereas the
aforementioned work have been able to do so in the case of diffusions. However,
the main advantage of our approach lies in its generality beyond the diffusion
case. Furthermore, modifications allowing to perform MLE are possible and might
be explored in further work specifically dedicated to the parameter estimation
problem.

The idea of coupling has been incorporated in the smoothing problem in a
different manner by \citet{jacob2019smoothing}. There, the goal is to provide
offline unbiased estimates of the expectation under the smoothing distribution.
Coupling and more generally ideas based on correlated random numbers are also
useful in the context of partially observed diffusions via the multilevel
approach \citep{jasra2017multilevel}.

In this work, we consider smoothing algorithms that are based on a unique pass
of the particle filter. Offline smoothing can be done using repeated iterations
of the conditional particle filter \citep{PMCMC}. Full trajectories can also be constructed in an online manner if one is willing to accept some lag approximations \citep{duffield2022_online_smoothing}. Another approach to smoothing
consists of using an additional information filter \citep{Fearnhead2010a}, but
it is limited to functions depending on one state only. Each of these
algorithmic families has their own advantages and disadvantages, of which a
detailed discussion is out of the scope of this article \citep[see
however][]{nordhquantitative}.

\section{General structure of smoothing algorithms}\label{sec:smoothing_generic}

In this section, we decompose each smoothing algorithm into two separate parts: the backward kernel (which determines its theoretical properties such as the convergence and the stability) and the execution mode (which is either online or offline and determines its implementation). This has two advantages: first, it induces an easy-to-navigate categorization of algorithms (see Table~\ref{table}); and second, it allows to prove the convergence and the stability for each of them by verifying sufficient conditions on the backward kernel component only.

\begin{table}
	\centering
	\begin{tabular}{|l|l|l|l|l|}
		\hline 
		Mode \textbackslash\  Kernel& FFBS kernel  & PaRIS kernel  & MCMC kernels  & Intract.  \\ 
		\hline 
		Offline& \splitcellbis{(*) FFBS}{(+) Thm. \ref{thm:ffbs_exec_infinite_general}, Thm. \ref{thm:ffbs_exec_infinite_gaussian}}{(+) Thm. \ref{thm:ffbs_hybrid_exec}, Cor. \ref{cor:gaussian_hybrid_ffbs}} &&
		\splitcell{(*) FFBS-MCMC}{(+) Prop. \ref{prop:mcmc_validity}}
		& (**)  \\ 
		\hline 
		Online& (*) Forward-additive & 
		\splitcellbis{(*) PaRIS}{(+) Thm. \ref{thm:stability}, Prop. \ref{prop:inf_expectation}}{(+) Thm. \ref{thm:intermediate_perf}, Thm. \ref{thm:near_linear}}
		& (**)  & (**)  \\ 
		\hline 
	\end{tabular}
	\caption{Summary of smoothing algorithms considered in this paper (classified by the backward kernel and the execution mode) and our contributions. (*) means an existing algorithm, (+) means a novel theoretical result and (**) means a novel algorithm}
	\label{table}
\end{table}

\subsection{Notations}

\textit{Measure-kernel-function notations.} Let $\mathcal X$ and $\mathcal Y$
be two measurable spaces with respective $\sigma$-algebras $\mathcal B(\mathcal
X)$ and $\mathcal B(\mathcal Y)$. The following definitions involve integrals
and only make sense when they are well-defined. For a measure $\mu$ on
$\mathcal X$ and a function $f: \mathcal X \to \mathbb R$, the notations $\mu
f$ and $\mu(f)$ refer to $\int f(x) \mu(\dd x)$.  A kernel (resp.\ Markov
kernel) $K$ is a mapping from $\mathcal X \times \mathcal B(\mathcal Y)$ to
$\mathbb R$ (resp.\ $[0,1]$) such that, for $B\in \mathcal B(\mathcal Y)$
fixed, $x \mapsto K(x, B)$ is a measurable function on $\mathcal X$; and for
$x$ fixed, $B \mapsto K(x, B)$ is a measure (resp.\ probability measure) on $\mathcal Y$. For a real-valued function $g$ defined on $\mathcal Y$, let $Kg: \mathcal X \to \mathbb R$ be the function $Kg(x) \eqdef \int g(y) K(x, \dd y)$. We sometimes write $K(x, g)$ for the same expression. The product of the measure $\mu$ on $\mathcal X$ and the kernel $K$ is a measure on $\mathcal Y$, defined by $\mu K(B)\eqdef \int K(x, B) \mu(\dd x)$. \textit{Other notations.}\notsep The notation $X_{0:t}$ is a shorthand for $(X_0, \ldots, X_t)$\notsep We denote by $\mathcal M(W^{1:N})$ the multinomial distribution supported on $\px{1, 2, \ldots, N}$. The respective probabilities are $W_1, \ldots, W_N$. If they do not sum to $1$, we implicitly refer to the normalised version obtained by multiplication of the weights with the appropriate constant\notsep The symbol $\cvprob$ means convergence in probability and $\Rightarrow$ means convergence in distribution\notsep The geometric distribution with parameter $\lambda$ is supported on $\mathbb Z_{\geq 1}$, has probability mass function $f(n) = \lambda (1-\lambda)^{n-1}$ and is noted by $\operatorname{Geo}(\lambda)$\notsep Let $\mathcal X$ and $\mathcal Y$ be two measurable spaces. Let $\mu$ and $\nu$ be two probability measures on $\mathcal X$ and $\mathcal Y$ respectively. The o-times product measure $\mu \otimes \nu$ is defined via $(\mu \otimes \nu)(h) := \iint h(x,y)\mu(\dd x)\nu(\dd y)$ for bounded functions $h: \mathcal X \times \mathcal Y \to \mathbb R$. If $X \sim \mu$ and $Y \sim \nu$, we sometimes note $\mu \otimes \nu$ by $X \otimes Y$.

\subsection{Feynman-Kac formalism and the bootstrap particle filter}\label{sec:fk_bootstrap} 

Let $\mathcal X_{0:T}$ be a sequence of measurable spaces and $M_{1:T}$ be a
sequence of Markov kernels such that $M_t$ is a kernel from $\mathcal X_{t-1}$
to $\mathcal X_t$. Let $X_{0:T}$ be an unobserved inhomogeneous Markov chain
with starting distribution $X_0 \sim
\mathbb M_0(\dd x_0)$ and Markov kernels $M_{1:T}$; i.e.\ $X_t|X_{t-1}\sim
M_t(X_{t-1}, \dx_t)$ for $t\geq 1$. We aim to study the distribution of
$X_{0:T}$ given observed data $Y_{0:T}$. Conditioned on $X_{0:T}$, the data
$Y_0, \ldots, Y_T$ are independent and 
$ Y_t | X_{0:T} \equiv Y_t | X_t \sim \boldsymbol f_t(\cdot | X_t) $
for a certain emission distribution
$\boldsymbol f_t(\dd y_t | x_t)$. Assume that there exists dominating measures 
$\tilde \lambda_t$ not depending on $x_t$ such that
\[\boldsymbol{f}_t(\dd y_t| x_t) = f_t(y_t|x_t) \tilde \lambda_t(\dd y_t).\]
The distribution of $X_{0:t}|Y_{0:t}$ is then given by
\begin{equation}
    \label{eq:fkmodel}
    \Q_t(\dd x_{0:t}) = \frac{1}{L_t} \mathbb{M}_0 (\dd x_0) \prod_{s=1}^t
    M_s(x_{s-1}, \dd x_s) G_s(x_s)
\end{equation}
where $G_s(x_s) := f(y_s | x_s)$ and $L_t > 0$ is the normalising constant.
Moreover, $\Q_{-1} \eqdef \mathbb M_0$ and $L_{-1}\eqdef 1$ by convention.
Equation~\eqref{eq:fkmodel} defines a Feynman-Kac model \citep{DelMoral:book}.
It does not require $M_t$ to admit a transition density, although herein we
only consider models where this assumption holds. Let $\lambda_t$ be a
dominating measure on $\mathcal X_t$ in the sense that there exists a function
$m_t$ (not necessarily tractable) such that
\begin{equation}
    \label{eq:density_mt}
    M_t(x_{t-1}, \dd x_t) = m_t(x_{t-1}, x_t) \lambda_t(\dd x_t).
\end{equation}

A special case of the current framework are linear Gaussian state space models.
They will serve as a running example for the article, and some of the results
will be specifically demonstrated for models of this class. The rationale is
that many real-world dynamics are partly, or close to, Gaussian. The notations
for linear Gaussian models are given in
Supplement~\ref{apx:linear_gaussian_models} and we will refer to them whenever
this model class is discussed.

Particle filters are algorithms that sample from $\Q_t(\dd x_t)$ in an on-line
manner.  In this article, we only consider the bootstrap particle filter
\citep{Gordon} and we detail its notations in Algorithm~\ref{algo:bootstrap}.
Many results in the following do apply to the auxiliary filter \citep{PittShep}
as well, and we shall as a rule indicate explicitly when it is \textit{not} the
case.

\begin{algo}{Bootstrap particle filter}\label{algo:bootstrap}
\KwIn{Feynman-Kac model~\eqref{eq:fkmodel}}
Simulate $X_0^{1:N} \overset{\text{i.i.d.}}{\sim} \mathbb M_0$\;
Set $\omega_0^n \gets G_0(X_0^n)$ for $n=1,\ldots, N$\;
Set $\ell_0^N \gets \sum_{n=1}^N \omega_0^n/N$\;
Set $W_0^n \gets \omega_0^n / N\ell_0^N$ for $n=1,\ldots, N$\;
\For{$t \gets 1$ \KwTo T}{
	\underline{Resample}. Simulate $A_t^{1:N} \overset{\text{i.i.d.}}{\sim} \mathcal{M}(W_{t-1}^{1:N})$ \;
	\underline{Move}. Simulate $X_t^n \sim M_t(X_{t-1}^{A_t^n}, \dd x_t)$ for $n=1,\ldots, N$\;
	\reweight
}
\KwOut{For all $t\geq 0$ and function $\varphi: \mathcal X_t \to \mathbb R$, the quantity $\sum_{n=1}^N W_t^n \varphi(X_t^n)$ approximates $\int \Q_t(\dd x_{0:t}) \varphi(x_t)$ and the quantity $\ell_t^N$ approximates $L_t/L_{t-1}$}
\end{algo}

We end this subsection with the definition of two sigma-algebras that will be referred to throughout the paper. Using the notations of Algorithm~\ref{algo:bootstrap}, let
\begin{equation}
\begin{split}
\label{eq:def:sigma_algebra_ft}
	\mathcal F_t &\eqdef \sigma(X_{0:t}^{1:N}, A_{1:t}^{1:N}),\\
	\mathcal F_t^- &\eqdef \sigma(X_{0:t}^{1:N}).
\end{split}
\end{equation}

\subsection{Backward kernels and off-line smoothing}\label{sec:examples_bw_ker}

In this subsection, we first describe three examples of backward kernels, in
which we emphasise both the random measure and the random matrix viewpoints. We
then formalise their use by stating a generic off-line smoothing algorithm.

\begin{example}[FFBS algorithm, \citealp{GodsDoucWest}] \label{exp:FFBS}

Once Algorithm~\ref{algo:bootstrap} has been run, the FFBS procedure generates
a trajectory approximating the smoothing distribution in a backward manner.
More precisely, it starts by simulating index $\mathcal{I}_T \sim
\mathcal{M}(W_T^{1:N})$ at time $T$. Then, recursively for $t= T, \ldots,
1$, given indices $\mathcal{I}_{t:T}$, it generates $\mathcal{I}_{t-1} \in
\px{1, \ldots, N}$ with probability proportional to $W_{t-1}^n m_t(X_{t-1}^n,
X_t^{\mathcal{I}_t})$. The smoothing trajectory is returned as $(X_0^{\mathcal
I_0}, \ldots, X_T^{\mathcal I_T})$. Formally, given
$\mathcal{F}_T$, the indices $\mathcal{I}_{0:T}$ are generated according to the
distribution
\[\mathcal{M}(W_t^{1:N})(\dd i_T) \ps{B_T^{N,\mathrm{FFBS}} (i_T, \dd i_{T-1}) B_{T-1}^{N, \mathrm{FFBS}}(i_{T-1}, \dd i_{T-2}) \ldots B_1^{N, \mathrm{FFBS}}(i_1, \dd i_0)}
\]
where the (random) backward kernels $B_t^{N, \mathrm{FFBS}}$ are defined by
\begin{equation} 
    \label{eq:ffbs_bw}
    B_t^{N, \mathrm{FFBS}} (i_t, \dd i_{t-1}) \eqdef \sum_{n=1}^N 
    \frac{W_{t-1}^n m_t(X_{t-1}^n, X_t^{i_t})}{\sum_{k=1}^N W_{t-1}^k
    m_t(X_{t-1}^k, X_t^{i_t})} \delta_{n}(\dd i_{t-1}).
\end{equation}

More simply, we can also look at these random kernels as random $N \times N$
matrices of which entries are given by
\begin{equation}
    \label{eq:ffbs_mt}
    \hat B_t^{N, \mathrm{FFBS}}[i_t, i_{t-1}] \eqdef \frac{W_{t-1}^{i_{t-1}}
        m_t(X_{t-1}^{i_{t-1}}, X_t^{i_t})}{\sum_{k=1}^N W_{t-1}^{k}
    m_t(X_{t-1}^{k}, X_t^{i_t})}.
\end{equation}

We will need both the kernel viewpoint~\eqref{eq:ffbs_bw} and the matrix
viewpoint~\eqref{eq:ffbs_mt} in this paper as the better choice depends on the
context.
\end{example}

\begin{example}[Genealogy tracking, \citealp{Kitagawa1996,delmoral_genealogy}]\label{exp:GT} 

It is well known that Algorithm~\ref{algo:bootstrap} already gives as a
by-product an approximation of the smoothing distribution. This information
can be extracted from the genealogy, by first simulating index
$\mathcal{I}_T \sim \mathcal{M}(W_T^{1:N})$ at time $T$, then successively
appending ancestors until time $0$ (i.e.\ setting sequentially
$\mathcal{I}_{t-1} \gets A_t^{\mathcal I_t}$). The smoothed trajectory is
returned as $(X_0^{\mathcal I_0}, \ldots, X_T^{\mathcal I_T})$. More
formally, conditioned on $\mathcal{F}_{T}$, we simulate the indices
$\mathcal I_{0:T}$ according to
\[\mathcal{M}(W_t^{1:N})(\dd i_T) \ps{B_T^{N,\mathrm{GT}} (i_T, \dd i_{T-1})
    B_{T-1}^{N, \mathrm{GT}}(i_{T-1}, \dd i_{T-2}) \ldots B_1^{N, \mathrm{GT}}(i_1,
\dd i_0)}
\]
where GT stands for ``genealogy tracking'' and the kernels 
$B_t^{N, \mathrm{GT}}$ are simply
\begin{equation}
    \label{eq:bwk:pm}
    B_t^{N, \mathrm{GT}}(i_t, \dd i_{t-1}) := \delta_{A_t^{i_t}}(\dd i_{t-1}).
\end{equation}

Again, it may be more intuitive to view this random kernel as a random 
$N \times N$ matrix, the elements of which are given by
\[\hat B_t^{N, \mathrm{GT}}[i_t, i_{t-1}] \eqdef \mathbbm{1}\px{i_{t-1} = A_t^{i_t}}.
\]
\end{example}

\begin{example}[MCMC backward samplers, \citealp{godsillmcmc}]\label{eg:mcmc_smoother}

In Example~\ref{exp:GT}, the backward variable $\mci_{t-1}$ is simply set to
$A_t^{\mci_t}$. On the contrary, in Example~\ref{exp:FFBS}, we need to launch a
simulator for the discrete measure $W_{t-1}^n m_t(X_{t-1}^n, X_t^{\mci_t})$.
Interestingly, the current value of $A_t^{\mci_t}$ is not taken into account in
that simulator. Therefore, a natural idea to combine the two previous examples
is to apply one (or several) MCMC steps to $A_t^{\mci_t}$ and assign the result
to $\mci_{t-1}$. The MCMC algorithm operates on the space $\px{1, 2, \ldots,
N}$ and targets the invariant measure $W_{t-1}^n m_t(X_{t-1}^n, X_t^{\mci_t})$. If only one independent Metropolis-Hastings (MH) step is used and the proposal
is $\mathcal M(W_{t-1}^{1:N})$, the corresponding random matrix $\hat B_t^{N,
\mathrm{IMH}}$ has values 
\[ \hat B_t^{N, \mathrm{IMH}}[i_t, i_{t-1}] =
    W_{t-1}^{i_{t-1}} \min\pr{1, {m_t(X_{t-1}^{i_{t-1}},
    X_t^{i_t})}/{m_t(X_{t-1}^{A_t^{i_t}}, X_t^{i_t})}}
\]
if $i_{t-1} \neq A_t^{i_t}$, and
\[
    \hat B_t^{N, \mathrm{IMH}}[i_t, A_t^{i_t}] = 1 - \sum_{n \neq A_t^{i_t}}
\hat B_t^{N, \mathrm{IMH}}[i_t, n]. 
\]

This third example shows that some elements of the matrix $\bwm{IMH}$ might be
expensive to calculate. If several MCMC steps are performed, \textnormal{all}
elements of $\bwm{IMH}$ will have non-trivial expressions. Still, simulating
from $\bw{IMH}(i_t, \dd i_{t-1})$ is easy as it amounts to running a standard
MCMC algorithm. MCMC backward samples are studied in more details in 
Section~\ref{sec:mcmc_kernel}.
\end{example}

We formalise how off-line smoothing can be done given random matrices $\hat
B_{1:T}^N$; see Algorithm~\ref{algo:offline_generic}. Note that in the above
examples, our matrices $\hat B_t^N$ are $\mathcal{F}_t$-measurable (i.e.\ they
depend on particles and indices up to time $t$), but this is not necessarily
the case in general (i.e.\ they may also depend on additional random variables,
see Section~\ref{sec:generic_online}). Furthermore, 
Algorithm~\ref{algo:offline_generic} describes how to perform smoothing using the
matrices $\hat B_{1:T}^N$, but does not say where they come from. At this
point, it is useful to keep in mind the above three examples. In 
Section~\ref{sec:validity}, we will give a general recipe for constructing valid
matrices $\hat B_t^N$ (i.e.\ those that give a consistent algorithm).
\begin{algo}{Generic off-line smoother}\label{algo:offline_generic}
	\KwIn{Filtering results $X_{0:T}^{1:N}$, $W_{0:T}^{1:N}$, and $A_{1:T}^{1:N}$ from Algorithm~\ref{algo:bootstrap}; random matrices $\hat B_{1:T}^N$ (see Section~\ref{sec:examples_bw_ker} for two examples of such matrices and Section~\ref{sec:validity} for a general recipe to construct them)}
	\For{$n \gets 1$ \KwTo $N$}{
		Simulate $\mathcal I_T^n \sim \mathcal{M}(W_T^{1:N})$\;
		\For{$t \gets T$ \KwTo 1}{
			Simulate $\mathcal I_{t-1}^n \sim B_t^N(\mathcal I_t^n, \dd i_{t-1})$ (the kernel $B_t^N(\mathcal I_t, \cdot)$ is defined by the $\mathcal I_t$-th row of the input matrix $\hat B_t^N$) \;
		}
	}
	\KwOut{The $N$ smoothed trajectories $(X_0^{\mathcal I_0^n}, \ldots, X_T^{\mathcal I_T^n})$ for $n = 1, \ldots, N$}
\end{algo}

Algorithm~\ref{algo:offline_generic} simulates, given $\mathcal{F}_T$
\textit{and} $\hat B_{1:T}^N$, $N$ i.i.d.\ index sequences $\mathcal I_{0:T}^n$, each distributed according to
\[\mathcal{M}(W_T^{1:N})(\dd i_T) \prod_{t=T}^1 B_t^N(i_t, \dd i_{t-1}).\]
Once the indices $\mathcal I_{0:T}^{1:N}$ are simulated, the $N$ smoothed
trajectories are returned as $(X_0^{\mathcal I_0^n}, \ldots, X_T^{\mathcal
I_T^n})$. Given $\mathcal F_T$ and $\hat B_{1:T}^N$, they are thus
conditionally i.i.d.\ and their conditional distribution is described by the $x_{0:T}$ component of the joint distribution
\begin{equation}
\label{eq:joint_empirical_smoothing}
\bar \Q_T^N(\dd x_{0:T}, \dd i_{0:T}) \eqdef \mathcal{M}(W_T^{1:N})(\dd i_T) \ps{\prod_{t=T}^1 B_t^N(i_t, \dd i_{t-1})} \ps{\prod_{t=T}^0 \delta_{X_t^{i_t}}(\dd x_t)}.
\end{equation}
Throughout the paper, the symbol $\bar \Q_T^N$ will refer to this joint distribution, while the symbol $\Q_T^N$ will refer to the $x_{0:T}$-marginal of $\bar \Q_T^N$ only. This allows the notation $\Q_T^N \varphi$ to make sense, where $\varphi = \varphi(x_0, \ldots, x_T)$ is a real-valued function defined on the hidden states.

\subsection{Validity and convergence}\label{sec:validity}

The kernels $B_t^{N, \text{FFBS}}$ and $B_t^{N, \text{GT}}$ are both valid
backward kernels to generate convergent approximation of the smoothing
distribution \citep{DelMoral:book, Douc2011}. This subsection shows that they
are not the only ones and gives a sufficient condition for a backward kernel to
be valid. It will prove a necessary tool to build more efficient $B_t^N$ later
in the paper.

Recall that Algorithm~\ref{algo:bootstrap} outputs particles $X_{0:T}^{1:N}$,
weights $W_{0:T}^{1:N}$ and ancestor variables $A_{1:T}^{1:N}$. Imagine that
the $A_{1:T}^{1:N}$ were discarded after filtering has been done and we wish to
simulate them back. We note that, since the $X_{0:T}^{1:N}$ are given, the $T
\times N$ variables $A_{1:T}^{1:N}$ are conditionally i.i.d. We can thus
simulate them back from
\[p(a_t^n | x_{0:T}^{1:N}) 
    = p(a_t^n | x_{t-1}^{1:N}, x_t^n) 
    \propto w_{t-1}^{a_t^n} m_t(x_{t-1}^{a_t^n}, x_t^n). 
\]
This is precisely the distribution of $B_t^{N, \text{FFBS}}(n, \cdot)$. It
turns out that any other invariant kernel that can be used for simulating back
the discarded $A_{1:T}^{1:N}$ will lead to a convergent algorithm as well. For
instance,  $\bw{GT}(n, \cdot)$ (Example~\ref{exp:GT}) simply returns back the
old $A_t^{n}$, unlike $\bw{FFBS}(n, \cdot)$ which creates a new version. The
kernel $\bw{IMH}(n, \cdot)$ (Example~\ref{eg:mcmc_smoother}) is somewhat an
intermediate between the two. We formalise these intuitions in the following
theorem. It is stated for the bootstrap particle filter, but as a matter of
fact, the proof can be extended straightforwardly to auxiliary particle 
filters as well.

\begin{assumption}\label{asp:Gbound}
	For all $0 \leq t \leq T$, $G_t(x_t) > 0$ and $\infnorm{G_t} < \infty$.
\end{assumption}

\begin{thm}\label{thm:convergence_mcmc}

We use the same notations as in Algorithms~\ref{algo:bootstrap} and
\ref{algo:offline_generic} (in particular, $\hat B_t^N$ denotes the transition
matrix that corresponds to the considered kernel $B_t^N$). Assume that for any
$1\leq t\leq T$, the random
matrix $\hat B_t^N$ satisfies the following conditions:
\begin{itemize}
    \item given $\mathcal{F}_{t-1}$ and $\hat B_{1:t-1}^N$, the variables
        $(X_t^n, A_t^n, \hat B_t^N(n, \cdot))$ for $n=1, \ldots, N$ are i.i.d.\ 
        and their distribution only depends on $X_{t-1}^{1:N}$, where $\hat
        B_t^N(n, \cdot)$ is the $n$-th row of matrix $\hat B_t^N$;
    \item if $J_t^n$ is a random variable such that \[J_t^n\  |\ 
        X_{t-1}^{1:N}, X_t^n, \hat B_{t}^N(n, \cdot) \sim B_t^N(n, \cdot) \],
        then $(J_t^n, X_t^n)$ has the same distribution as $(A_t^n, X_t^n)$
        given $X_{t-1}^{1:N}$.
\end{itemize}

Then under Assumption~\ref{asp:Gbound}, there exists constants $C_T>0$ and
$S_T < \infty$ such that, for any $\delta > 0$ and function $\varphi =
\varphi(x_0, \ldots, x_T)$:
\begin{equation} \label{eq:in_thm_convg}
\P\pr{\abs{\Q_T^N \varphi - \Q_T \varphi} \geq \frac{\sqrt{-2\log (\delta/2C_T)} S_T \norminf{\varphi}}{\sqrt N}} \leq \delta
\end{equation}
where $\Q_T^N$ is defined by \eqref{eq:joint_empirical_smoothing}.
\end{thm}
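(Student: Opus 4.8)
The plan is to reduce the claim to the classical Hoeffding-type concentration inequality for the genealogical (path-space) particle approximation of a \FK\ model, using the validity conditions to identify the law of a single backward-sampled trajectory with that of a genealogical trajectory of the bootstrap filter. The central device is a \emph{regeneration} argument, suggested by the discussion preceding the theorem: augment the filter by drawing, for every $n$ and every $t$, an auxiliary index $J_t^n \sim B_t^N(n, \cdot)$, conditionally independently across $n$ given the matrices $\hat B_{1:T}^N$. Condition~(2) states that $(J_t^n, X_t^n)$ and $(A_t^n, X_t^n)$ share the same law given $X_{t-1}^{1:N}$, while condition~(1) guarantees that, given $\mathcal F_{t-1}$ and $\hat B_{1:t-1}^N$, the triples $(X_t^n, A_t^n, \hat B_t^N(n,\cdot))$ are i.i.d.\ with a law depending on $X_{t-1}^{1:N}$ alone.

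First I would establish the regeneration lemma: the laws of $(X_{0:T}^{1:N}, J_{1:T}^{1:N})$ and of $(X_{0:T}^{1:N}, A_{1:T}^{1:N})$ coincide. The argument is an induction on $t$ exploiting the Markov structure that conditions~(1)--(2) encode. Given $X_{t-1}^{1:N}$, the pairs $(X_t^{1:N}, A_t^{1:N})$ and $(X_t^{1:N}, J_t^{1:N})$ have the same conditional distribution: each is a collection of pairs that are i.i.d.\ across $n$ (the independence of the $(X_t^n, J_t^n)$ following from condition~(1), since $J_t^n$ is a function of the row $\hat B_t^N(n,\cdot)$ and independent sampling noise), with common marginal law fixed by condition~(2); and by condition~(1) this conditional law is a function of $X_{t-1}^{1:N}$ only, hence does not depend on the rest of the past nor on the auxiliary matrices. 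Consequently both augmented systems are Markov chains in $X_{0:T}^{1:N}$ (carrying their ancestor labels), driven by the same initial law $\mathbb M_0^{\otimes N}$ and the same one-step kernels, so their full joint laws agree.

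Next I would rewrite the estimator. Tracing the backward kernels along a single index path is the same as first drawing all the $J_t^n$ and then following the induced genealogy $\mathcal I_{t-1} = J_t^{\mathcal I_t}$; summing over $\mathcal I_T \sim \mathcal M(W_T^{1:N})$ yields the identity $\Q_T^N\varphi = \E[\,\widehat{\Q}_T^{N}\varphi \mid \mathcal F_T, \hat B_{1:T}^N\,]$, where $\widehat{\Q}_T^{N}\varphi := \sum_{i_T} W_T^{i_T}\varphi(\text{genealogy of } i_T \text{ under } J)$ is precisely the genealogy-tracking smoother of Example~\ref{exp:GT} computed on the regenerated system $(X_{0:T}^{1:N}, J_{1:T}^{1:N})$. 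Since $\widehat{\Q}_T^{N}\varphi$ is a deterministic function of $(X_{0:T}^{1:N}, J_{1:T}^{1:N})$ (the bootstrap weights being functions of the $X_T^{1:N}$), the regeneration lemma shows it has the same distribution as the ordinary bootstrap-filter path estimator of $\Q_T$, for which Assumption~\ref{asp:Gbound} yields the classical Hoeffding inequality $\P(|\widehat{\Q}_T^N\varphi - \Q_T\varphi| \geq \epsilon) \leq 2C_T \exp(-N\epsilon^2/(2 S_T^2 \norminf{\varphi}^2))$, with constants $C_T, S_T$ possibly depending on $T$ (their behaviour as $T \to \infty$ is the separate concern of the stability results).

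Finally I would transfer this tail bound through the conditional expectation. Writing $Z := \widehat{\Q}_T^N\varphi - \Q_T\varphi$, we have $\Q_T^N\varphi - \Q_T\varphi = \E[Z \mid \mathcal F_T, \hat B_{1:T}^N]$, so by conditional Jensen the moment generating function of the left-hand side is dominated by that of $Z$ for every real $\lambda$; the sub-Gaussian bound on $Z$ underlying the Hoeffding inequality therefore passes to $\Q_T^N\varphi - \Q_T\varphi$ with identical parameters, and a Chernoff argument gives exactly~\eqref{eq:in_thm_convg} after solving $\delta = 2C_T\exp(-N\epsilon^2/(2S_T^2\norminf{\varphi}^2))$ for $\epsilon$. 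I expect the main obstacle to be the regeneration lemma itself: the hypotheses are stated one time step at a time and treat the matrices $\hat B_{1:T}^N$ as hidden auxiliary variables, so the delicate point is to verify rigorously that conditioning on the \emph{regenerated} past, rather than on the genuine $\mathcal F_{t-1}$, leaves the one-step law unchanged --- which is exactly what the clause ``depends only on $X_{t-1}^{1:N}$'' in condition~(1) is engineered to supply.
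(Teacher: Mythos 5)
Your proposal is correct, and it takes a genuinely different route from the paper's proof. The paper generalises the inductive scheme of \citet[Theorem 5]{Douc2011}: Proposition~\ref{prop:fundamental} gives a mixture representation of $\Q_t^N$ whose one-step conditional expectation given $\FTP{t-1}$ equals $\Q_{t-1}^N M_t$, and the bound \eqref{eq:in_thm_convg} then follows by forward induction on $T$, combining Hoeffding's inequality for the conditionally i.i.d.\ innovation terms (Theorem~\ref{thm:hoeffing_apx}), sub-Gaussian addition, and the ratio inequality of Proposition~\ref{prop:apx:hoeffding}. You instead promote the ``ancestor regeneration'' intuition of Section~\ref{sec:validity} into a distributional identity --- $(X_{0:T}^{1:N}, J_{1:T}^{1:N})$ and $(X_{0:T}^{1:N}, A_{1:T}^{1:N})$ have the same law --- then observe that $\Q_T^N\varphi$ is the Rao--Blackwellisation $\E[\,\cdot \mid \mathcal F_T, \hat B_{1:T}^N]$ of the genealogy-tracking estimator built from the regenerated indices, and transfer a path-space concentration bound through conditional Jensen and Chernoff. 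Both pillars are sound: the regeneration lemma is exactly what the two hypotheses deliver (your tower-property argument over the auxiliary sampling noise goes through, since the fresh uniforms can be taken independent of all filter variables), and the conditional-expectation identity holds because each backward step consumes a fresh time index, so no row of $J$ is used twice along one trajectory. What your route buys is modularity and a conceptual reading (backward sampling is a Rao--Blackwellised genealogy tracker, hence concentrates at least as well); what it costs is that the ``classical'' ingredient must be made precise: concentration for Feynman--Kac particle systems (e.g.\ \citealp{DelMoral:book}, applied to the path-space model, for which Assumption~\ref{asp:Gbound} still yields bounded positive potentials) is typically stated for unweighted or unnormalised empirical measures, so you still need a self-normalisation step of the type of Proposition~\ref{prop:apx:hoeffding} to cover the weighted estimator $\sum_n W_T^n \varphi(\cdot)$; moreover, converting a tail bound into an MGF bound and back loses universal constants, so ``identical parameters'' should read ``parameters of the same order'' --- both points are harmless here since $C_T$ and $S_T$ are unconstrained in $T$. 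A side benefit of the paper's longer route is that its intermediate results (Proposition~\ref{prop:fundamental}, Corollary~\ref{corol:fundamental}) are reused elsewhere, e.g.\ in the FFBS execution-time analysis of Proposition~\ref{prop:ffbs_exec_induction}, whereas your argument is tailored to the concentration statement alone.
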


A typical relation between variables defined in the statement of the theorem is
illustrated by a graphical model in Figure~\ref{fig:thm1:variables}. (See
\citealt[Chapter 8]{Bishop:book} for the formal definition of graphical models
and how to use them.) By ``typical'', we mean that 
Theorem~\ref{thm:convergence_mcmc} technically allows for more complicated
relations, but the aforementioned figure captures the most essential cases.

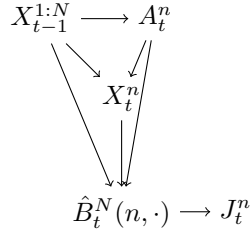
\begin{figure}
\centering
\begin{tikzpicture}[node distance={15mm}]
\node (xtm1) {$X_{t-1}^{1:N}$};
\node (atn) [right of=xtm1] {$A_t^n$};
\node (xtn) [below right of=xtm1] {$X_t^n$};
\node (btn) [below of=xtn] {$\hat B_t^N(n, \cdot)$};
\node (jtn) [right of=btn] {$J_t^n$};
\draw[->] (xtm1) -- (atn);
\draw[->] (xtm1) -- (xtn);
\draw[->] (atn) -- (xtn);
\draw[->] (xtm1) -- (btn);
\draw[->] (atn) -- (btn);
\draw[->] (xtn) -- (btn);
\draw[->] (btn) -- (jtn);
\end{tikzpicture}
\caption{Relation between variables described in
Theorem~\ref{thm:convergence_mcmc}.\label{fig:thm1:variables}}
\end{figure}

Theorem~\ref{thm:convergence_mcmc} is a generalisation of \citet[Theorem
5]{Douc2011}. Its proof thus follows the same lines
(Supplement~\ref{ap:proof:cvg}). However, in our case the measure $\Q_T^N(\dd
x_{0:T})$ is no longer Markovian. This is because the backward kernel
$B_t^N(i_t, \dd i_{t-1})$ does not depend on $X_t^{i_t}$ alone, but also
possibly on its ancestor  and extra random variables. This small difference has a big consequence: compared to \citet[Theorem 5]{Douc2011}, Theorem~\ref{thm:convergence_mcmc} has a much broader applicability and encompasses, for instance, the MCMC-based algorithms presented in Section~\ref{sec:mcmc_kernel} and novel kernels presented in Section~\ref{sec:intractable} for intractable densities.

As we have seen in~\eqref{eq:joint_empirical_smoothing}, $\Q_T^N$ is fundamentally a discrete measure of which the support contains $N^{T+1}$ elements. As such, $\Q_T^N \varphi$ cannot be computed exactly in general and must be approximated using $N$ trajectories $(X_0^{\mathcal I_0^n}, \ldots, X_T^{\mathcal I_T^n})$ simulated via Algorithm~\ref{algo:offline_generic}. Theorem~\ref{thm:convergence_mcmc} is thus completed by the following corollary, which is an immediate consequence of Hoeffding inequality (Supplement~\ref{ap:hoeffding}).

\begin{corollary}
Under the same setting as Theorem~\ref{thm:convergence_mcmc}, we have
\[
\P\pr{\abs{\frac 1N \sum_n \varphi(X_0^{\mathcal I_0^n}, \ldots, X_T^{\mathcal I_T^n})  - \Q_T \varphi} \geq \frac{\sqrt{-2\log \pr{\frac{\delta}{2(C_T + 1)}}} (S_T + 1) \infnorm{\varphi}}{\sqrt N} } \leq \delta. 
\]
\end{corollary}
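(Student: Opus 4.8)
The plan is to split the total error into a Monte Carlo sampling error and the approximation error that Theorem~\ref{thm:convergence_mcmc} already controls, and then to apply Hoeffding's inequality to the former conditionally on $\mathcal F_T$ and $\hat B_{1:T}^N$. Write $\bar\varphi_N \eqdef \frac 1N \sum_n \varphi(X_0^{\mathcal I_0^n}, \ldots, X_T^{\mathcal I_T^n})$ for the estimator. By the triangle inequality, for any thresholds $a, b > 0$,
\[
\P\pr{\abs{\bar\varphi_N - \Q_T\varphi} \geq a + b} \leq \P\pr{\abs{\bar\varphi_N - \Q_T^N\varphi} \geq a} + \P\pr{\abs{\Q_T^N\varphi - \Q_T\varphi} \geq b},
\]
so it suffices to bound each term on the right and then choose $a,b$ (equivalently, the failure budgets $\delta_1,\delta_2$) so that $a+b$ matches the stated threshold.

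First I would handle the sampling term. As recalled just after Algorithm~\ref{algo:offline_generic}, conditionally on $\mathcal F_T$ and $\hat B_{1:T}^N$ the $N$ trajectories are i.i.d.\ with common distribution $\Q_T^N$, hence $\CE{\bar\varphi_N}{\mathcal F_T, \hat B_{1:T}^N} = \Q_T^N\varphi$, and each summand lies in $[-\norminf{\varphi}, \norminf{\varphi}]$. Hoeffding's inequality then gives
\[
\CProb{\abs{\bar\varphi_N - \Q_T^N\varphi} \geq a}{\mathcal F_T, \hat B_{1:T}^N} \leq 2\exp\pr{-\frac{N a^2}{2\norminf{\varphi}^2}}.
\]
Since this bound does not depend on the conditioning variables, taking expectations yields the same unconditional bound; choosing $a = \norminf{\varphi}\sqrt{-2\log(\delta_2/2)}/\sqrt N$ makes its right-hand side equal to $\delta_2$. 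Meanwhile, Theorem~\ref{thm:convergence_mcmc} directly controls the approximation term: with $b = S_T\norminf{\varphi}\sqrt{-2\log(\delta_1/2C_T)}/\sqrt N$ its exceedance probability is at most $\delta_1$.

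The one step requiring care is the bookkeeping of the budget split. I would set $\delta_1 = \delta C_T/(C_T+1)$ and $\delta_2 = \delta/(C_T+1)$, so that $\delta_1 + \delta_2 = \delta$ while simultaneously $\delta_1/(2C_T) = \delta_2/2 = \delta/(2(C_T+1))$. With this choice the logarithmic factors in $a$ and $b$ become identical, so that
\[
a + b = \frac{\sqrt{-2\log\pr{\frac{\delta}{2(C_T+1)}}}\,(S_T+1)\norminf{\varphi}}{\sqrt N},
\]
which is exactly the threshold in the statement. A union bound over the two events then gives total failure probability at most $\delta_1 + \delta_2 = \delta$, completing the argument. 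I do not anticipate any genuine obstacle: the result is a routine conditional application of Hoeffding combined with Theorem~\ref{thm:convergence_mcmc}, and the only nontrivial point is selecting the split $(\delta_1,\delta_2)$ precisely so that the constants aggregate into the clean $C_T+1$ and $S_T+1$ appearing in the corollary.
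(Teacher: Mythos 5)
Your proof is correct and takes essentially the same route as the paper: the paper also splits the error into the conditional-Hoeffding sampling term and the Theorem~\ref{thm:convergence_mcmc} approximation term, combining them via the sub-Gaussian machinery of Supplement~\ref{ap:hoeffding} (Theorem~\ref{thm:hoeffing_apx} plus the additivity Lemma~\ref{lem:hoeffding_apx}). Your explicit union bound with the budget split $\delta_1 = \delta C_T/(C_T+1)$, $\delta_2 = \delta/(C_T+1)$ is exactly that additivity lemma unrolled by hand (a common tail level $t$ for both terms), and it reproduces the constants $C_T+1$ and $S_T+1$ precisely.
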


\subsection{Generic on-line smoother}
\label{sec:generic_online}

As we have seen in Section~\ref{sec:examples_bw_ker} and
Section~\ref{sec:validity}, in general, the expectation $\Q_T^N \varphi$, for a
real-valued function $\varphi=\varphi(x_0, \ldots, x_T)$ of the hidden states,
cannot be computed exactly due to the large support ($N^{T+1}$ elements) of
$\Q_T^N$. Moreover, in certain settings we are interested in the quantities
$\Q_t^N \varphi_t$ for different functions $\varphi_t$. They cannot be
approximated in an on-line manner without more assumptions on the connection
between $\varphi_{t-1}$ and $\varphi_t$. If the family $(\varphi_t)$ is
additive, i.e.\ there exists functions $\psi_t$ such that
\begin{equation}
    \label{eq:add_func}
    \varphi_t(x_{0:t}) \eqdef \psi_0(x_0) + \psi_1(x_0, x_1) + \cdots +
    \psi_t(x_{t-1}, x_t)
\end{equation}
then we can calculate $\Q_t^N \varphi_t$ both exactly \textit{and} on-line. The
procedure was first described in \citet{DelMoral2010} for the kernel  $\Q_t^{N,
\text{FFBS}}$ (i.e.\ the measure defined by 
\eqref{eq:joint_empirical_smoothing} and the random kernels $B_t^{N,
\text{FFBS}}$), but we will use the idea for other kernels as well. In this
subsection, we first explain the principle of the method, then discuss its
computational complexity and the link to the PaRIS algorithm
\citep{Olsson2017}. 

\subsubsection*{Principle} 

For simplicity, we start with the special case $\varphi_t(x_{0:t}) =
\psi_0(x_0)$. Equation~\eqref{eq:joint_empirical_smoothing} and the matrix
viewpoint of Markov kernels then give
\[\Q_t^N \varphi_t =  \fastmatrix{W_t^1 \ldots W_t^N} \hat B_t^N \hat B_{t-1}^N
\ldots \hat B_1^N \fastmatrix{\psi_0(X_0^1) \\ \vdots \\ \psi_0(X_0^N)}.
\]
This naturally suggests the following recursion formula to compute 
$\Q_t^N \varphi_t$: 
\[\Q_t^N \varphi_t =  \fastmatrix{W_t^1 \ldots W_t^N} \hat S_t^N\]
with $\hat S_0^N = [\psi_0(X_0^1) \ldots \psi_0(X_0^N)]^\top$ and
\begin{equation}
    \label{eq:generic_online_recursion_easy}
    \hat S_t^N \eqdef \hat B_t^N \hat S_{t-1}^N.
\end{equation}
In the general case where functions $\varphi_t$ are given by 
\eqref{eq:add_func}, simple calculations
(Supplement~\ref{proof:full_online_recursion}) show that 
\eqref{eq:generic_online_recursion_easy} is replaced by
\begin{equation}
    \label{eq:generic_online_recursion_hard}
    \hat S_{t}^N \eqdef \hat B_t^N \hat S_{t-1}^N + \operatorname{diag}(\hat B_t^N \hat \psi_{t}^N)
\end{equation}
where the $N \times N$ matrix $\hat \psi_t^N$ is defined by  
\[\hat \psi_t^N [i_{t-1}, i_t] \eqdef \psi_t(X_{t-1}^{i_{t-1}}, X_t^{i_t})\] 
and the operator
$\operatorname{diag}: \mathbb R^{N \times N} \rightarrow \mathbb R^N$ extracts
the diagonal of a matrix. This is exactly what is done in 
Algorithm~\ref{algo:online_generic}.
\begin{algo}{Generic on-line smoother for additive functions (one step)}
	\label{algo:online_generic}
	\KwIn{Particles $X_{t-1}^{1:N}$ and weights $W_{t-1}^{1:N}$ at time $t-1$;
    the $N \times 1 $ vector $\hat S_{t-1}^N$ (see text); additive function~\eqref{eq:add_func}}
	Generate $X^{1:N}_t$ and $W_t^{1:N}$ according to the particle filter
    (Algorithm~\ref{algo:bootstrap}) \; Calculate the random matrix $\hat B_t^N$ (see Section~\ref{sec:examples_bw_ker} and Section~\ref{sec:validity})\;
	Create the $N\times 1$ vector $\hat S_t^N$ according to~\eqref{eq:generic_online_recursion_hard}. More precisely:\;
	\For{$i_t \gets 1$ \KwTo $N$}{
		$\hat S_t^N[i_t] \gets \sum_{i_{t-1}} \hat B_t^N[i_t, i_{t-1}] \pr{\hat S_{t-1}^N[i_{t-1}] + \psi_t(X_{t-1}^{i_{t-1}}, X_t^{i_t})}$
	}
	\KwOut{Quantity $\sum_n W_t^n \hat S_t^N[n]$ which is equal to $\Q_t^N \varphi_t$ and is an esimate of $\Q_t(\varphi_t)$; particles $X_t^{1:N}$, weights $W_t^{1:N}$ and vector $S_t^N$ for the next step}
\end{algo}

\subsubsection*{Computational complexity and the PaRIS algorithm}

Equations~\eqref{eq:generic_online_recursion_easy} and
\eqref{eq:generic_online_recursion_hard} involve a matrix-vector multiplication
and thus require, in general, $\mathcal{O}(N^2)$ operations to be evaluated.
When $\hat B_t^N \equiv \hat B_t^{N, \text{FFBS}}$, 
Algorithm~\ref{algo:online_generic} becomes the $\mathcal O(N^2)$ on-line
smoothing
algorithm  of~\cite{DelMoral2010}. 
The $\mathcal{O}(N^2)$ complexity can however be lowered to $\mathcal{O}(N)$ if
the matrices $\hat B_t^N$ are \textit{sparse}. This is the idea behind the
PaRIS algorithm \citep{Olsson2017}, where the full matrix $\hat B_t^{N,
\text{FFBS}}$ is unbiasedly estimated by a sparse matrix $\hat B_t^{N,
\text{PaRIS}}$. More specifically, for any integer $\tilde N > 1$, for any $n
\in 1, \ldots, N$, let $J_t^{n, 1}, \ldots, J_t^{n, \tilde N}$ be conditionally
i.i.d.\ random variables simulated from $B_t^{N, \text{FFBS}}(n, \cdot)$. The
random matrix $\hat B_t^{N, \text{PaRIS}}$ is then defined as
\[\hat B_t^{N, \text{PaRIS}}[n,m] \eqdef 
    \frac{1}{\tilde N} \sum_{\tilde n=1}^{\tilde N} \mathbbm{1}\px{J_t^{n,\tilde n} = m}
\]
and the corresponding random kernel is
\begin{equation}
    \label{eq:bwk:paris}
    B_t^{N, \text{PaRIS}}(n, \dd m) = 
    \frac{1}{\tilde N} \sum_{\tilde n=1}^{\tilde N} \delta_{J_t^{n, \tilde n}}(\dd m).
\end{equation}
The following straightforward proposition establishes the validity of the
$B_t^{N, \text{PaRIS}}$ kernel. Together with Theorem~\ref{thm:convergence_mcmc}, it can be thought of as a reformulation of the consistency of the PaRIS algorithm \citep[Corollary 2]{Olsson2017} in the language of our framework.
\begin{prop}
    The matrix $\hat B_t^{N, \mathrm{PaRIS}}$ has only $\mathcal{O}(N \tilde
    N)$ non-zero elements out of $N^2$. It is an unbiased estimate of $\hat
    B_t^{N, \mathrm{FFBS}}$ in the sense that
    \[\CE{\hat B_t^{N, \mathrm{PaRIS}}}{\mathcal F_t} = \hat B_t^{N,
    \mathrm{FFBS}}.  \] Moreover, the sequence of matrices $B_{1:T}^{N,
\mathrm{PaRIS}}$ satisfies the two conditions of
Theorem~\ref{thm:convergence_mcmc}.
\end{prop}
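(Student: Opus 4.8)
The plan is to dispatch the three claims separately; the first two are immediate, and the verification of the two conditions of Theorem~\ref{thm:convergence_mcmc} is where the (mild) work lies.

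For sparsity, I would note that for each fixed row $n$ the entry $\bwm{PaRIS}[n,m]$ can be non-zero only when $m$ equals one of the $\tilde N$ draws $J_t^{n,1}, \ldots, J_t^{n,\tilde N}$; thus each row carries at most $\tilde N$ non-zero entries and the matrix carries at most $N\tilde N$ of them out of $N^2$. For unbiasedness, I would use that, conditionally on $\mathcal F_t$, the $J_t^{n,\tilde n}$ are i.i.d.\ with law $\bw{FFBS}(n,\cdot)$, so that
\[\CE{\bwm{PaRIS}[n,m]}{\mathcal F_t} = \frac{1}{\tilde N}\sum_{\tilde n=1}^{\tilde N}\CProb{J_t^{n,\tilde n}=m}{\mathcal F_t} = \bw{FFBS}(n,\px m) = \bwm{FFBS}[n,m],\]
which is exactly the claimed identity.

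For the first condition of Theorem~\ref{thm:convergence_mcmc}, I would observe that the triple $(X_t^n, A_t^n, \bwm{PaRIS}(n,\cdot))$ is a deterministic function of $A_t^n$, the move noise generating $X_t^n$, and the auxiliary draws $J_t^{n,1:\tilde N}$. Conditionally on $X_{t-1}^{1:N}$ (which determines $W_{t-1}^{1:N}$ through $G_{t-1}$), these three sources of randomness are sampled independently across $n$ from laws depending only on $X_{t-1}^{1:N}$; hence the triples are i.i.d.\ across $n$ with a law depending only on $X_{t-1}^{1:N}$, as required.

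The heart of the argument is the second condition. Given the row $\bwm{PaRIS}(n,\cdot)$ we draw $J_t^n \sim \bw{PaRIS}(n,\cdot)$, i.e.\ $J_t^n$ is a uniform pick among $J_t^{n,1:\tilde N}$. Since each $J_t^{n,\tilde n}$ has conditional law $\bw{FFBS}(n,\cdot)$ given $(X_{t-1}^{1:N}, X_t^n)$, and this kernel is $\sigma(X_{t-1}^{1:N}, X_t^n)$-measurable, the tower property yields $\CProb{J_t^n=a}{X_{t-1}^{1:N}, X_t^n} = \bw{FFBS}(n,\px a)$; that is, after integrating out the PaRIS construction the variable $J_t^n$ behaves exactly as a single FFBS draw. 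I would then compute the joint law of $(J_t^n, X_t^n)$ given $X_{t-1}^{1:N}$: the $X_t^n$-marginal has density $\sum_k W_{t-1}^k m_t(X_{t-1}^k, x_t)$ with respect to $\lambda_t$, and multiplying by the FFBS conditional $\pr{W_{t-1}^a m_t(X_{t-1}^a, x_t)}/\pr{\sum_k W_{t-1}^k m_t(X_{t-1}^k, x_t)}$ cancels the denominator and leaves the joint density $W_{t-1}^a m_t(X_{t-1}^a, x_t)\,\lambda_t(\dd x_t)$. This coincides with the joint law of $(A_t^n, X_t^n)$ given $X_{t-1}^{1:N}$ in the bootstrap filter, which establishes the second condition. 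The only delicate point is this last marginalisation: one must track carefully which variables are conditioned on and confirm that averaging over the auxiliary draws $J_t^{n,1:\tilde N}$ --- a step that rests on the unbiasedness proved above --- really does recover the FFBS kernel rather than some perturbation of it.
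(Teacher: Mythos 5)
Your proof is correct: the sparsity count, the conditional-expectation computation for unbiasedness, and the verification of the two conditions of Theorem~\ref{thm:convergence_mcmc} (in particular the tower-property argument showing that a uniform pick among the conditionally i.i.d.\ draws $J_t^{n,1:\tilde N}$ is again $B_t^{N,\mathrm{FFBS}}(n,\cdot)$-distributed given $(X_{t-1}^{1:N}, X_t^n)$, whence the joint density $W_{t-1}^{a} m_t(X_{t-1}^{a}, x_t)\,\lambda_t(\dd x_t)$ matches that of $(A_t^n, X_t^n)$) are all sound. The paper labels this proposition ``straightforward'' and provides no written proof, and your argument is exactly the natural fleshing-out of what the authors leave implicit.
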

 The proposition also justifies the $\mathcal O(N)$
complexity of~\eqref{eq:generic_online_recursion_easy} and
\eqref{eq:generic_online_recursion_hard}, as long as $\tilde N$ is fixed as
$N \to \infty$.
But it is important to remark that the preceding $\mathcal O(N)$ complexity does not include the cost of
generating the matrices $\hat B_t^{N, \text{PaRIS}}$ themselves, i.e., the
operations required to simulate the indices $J_t^{n, \tilde n}$. In
\citet{Olsson2017} it is argued that such simulations have an $\mathcal{O}(N)$
cost using the rejection sampling method whenever the transition density is
both upper and lower bounded. Section~\ref{sec:ffbs_kernel} investigates the
claim when this hypothesis is violated.

\subsection{Stability}
\label{sec:stability}
When $\hat B_t^N \equiv \hat B_t^{N, \text{GT}}$,
Algorithms~\ref{algo:offline_generic} and \ref{algo:online_generic} reduce to the genealogy tracking smoother \citep{Kitagawa1996}. The matrix 
$\hat B_t^{N, \text{GT}}$ is indeed sparse, leading to the well-known
$\mathcal{O}(N)$ complexity of this on-line procedure. As per
Theorem~\ref{thm:convergence_mcmc}, smoothing via genealogy tracking is
convergent at rate $\mathcal{O}(N^{-1/2})$ if $T$ is \textit{fixed}. When $T
\to \infty$ however, all particles will eventually share the same ancestor at
time $0$ (or any fixed time $t$). Mathematically, this phenomenon is manifested
in two ways: (a) for fixed $t$ and function $\phi_t: \mathcal X_t \to \mathbb
R$, the error of estimating $\E[\phi_t(X_t) | Y_{0:T}]$ grows linearly with
$T$; and (b) the error of estimating $\CE{\sum_{t=0}^T \psi_t(x_{t-1},
x_t)}{Y_{0:T}}$ grows quadratically with $T$. These correspond respectively to
the degeneracy for the fixed marginal smoothing and the additive smoothing
problems; see also the introductory section of \citet{Olsson2017} for a
discussion. The random matrices $\hat B_t^{N, \text{GT}}$ are therefore said to
be \textit{unstable} as $T \to \infty$, which is not the case for $\hat B_t^{N,
\text{FFBS}}$ or $\hat B_t^{N, \text{PaRIS}}$. This subsection gives sufficient
conditions to ensure the stability of a general $\hat B_t^N$.

The essential point behind smoothing stability is simple: the support of
$B_t^{N, \mathrm{FFBS}}(n, \cdot)$ or $B_t^{N, \mathrm{PaRIS}}(n, \cdot)$ for
$\tilde N \geq 2$ contains more than one element, contrary to that of $B_t^{N,
\mathrm{GT}}(n, \cdot)$. This property is formalised by~\eqref{eq:support_cond}. 
To explain the intuitions, we use the notations of
Algorithm~\ref{algo:offline_generic} and consider the estimate 
\begin{equation*}
    N^{-1}
    \pr{\psi_0(X_0^{\mathcal I_0^1}) + \cdots + \psi_0(X_0^{\mathcal I_0^N})} 
\end{equation*}
of $\CE{\psi_0(X_0)}{Y_{0:T}}$ when $T \to \infty$. The variance of the quantity
above is a sum of $\Cov(\psi_0(X_0^{\mathcal I_0^i}),
\psi_0(X_0^{\mathcal I_0^j}))$ terms. It can therefore be understood by looking
at a pair of trajectories simulated using Algorithm~\ref{algo:offline_generic}.

At final time $t=T$, $\mathcal I_T^1$ and $\mathcal I_T^2$ both follow the
$\mathcal{M}(W_T^{1:N})$ distribution. Under regularity conditions (e.g.\ no
extreme weights), they are likely to be different, i.e., $\P(\mathcal I_T^1 =
\mathcal{I}_T^2) = \mathcal O(1/N)$. This property can be propagated backward:
as long as $\mathcal I_t^1 \neq \mathcal I_t^2$, the two variables $\mathcal
I_{t-1}^1$ and $\mathcal{I}_{t-1}^2$ are also likely to be different, with
however a small $\mathcal{O}(1/N)$ chance of being equal. Moreover, as long as
the two trajectories have not met, they can be simulated independently given
$\mathcal F_T^-$ (the sigma algebra defined
in~\eqref{eq:def:sigma_algebra_ft}). In mathematical terms, under the two
hypotheses of Theorem~\ref{thm:convergence_mcmc}, given $\mathcal{F}_T^-$ and
$\mathcal{I}_{t:T}^{1,2}$, it can be proved that the two variables
$\mathcal{I}_{t-1}^1$ and $\mathcal I_{t-1}^2$ are independent if $\mathcal
I_t^1 \neq \mathcal I_t^2$ (Lemma~\ref{lem:two_backward_traj},
Supplement~\ref{proof:thm:stability}).

Since there is an $\mathcal{O}(1/N)$ chance of meeting at each time step, if $T \gg N$, it is likely that the two paths will meet at some point $t \gg 0$. When $\mathcal{I}_t^1 = \mathcal I_t^2$, the two indices $\mathcal I_{t-1}$ and $\mathcal I_{t-2}$ are both simulated according to $B_t^N(\mathcal I_t^1, \cdot)$. In the genealogy tracking algorithm, $B_t^{N, \text{GT}}(i, \cdot)$ is a Dirac measure, leading to $\mathcal I_{t-1}^1 = \mathcal{I}_{t-1}^2$ almost surely. This spreads until time $0$, so $\operatorname{Corr}(\psi_0(X_0^{\mathcal I_0^1}), \psi_0(X_0^{\mathcal I_0^2}))$ is almost $1$ if $T \gg N$.

Other kernels like $B_t^{N, \text{FFBS}}$ or $B_t^{N, \text{PaRIS}}$ do not suffer from the same problem. For these, the  support size of $B_t^{N}(\mathcal I_t^1, \cdot)$ is greater than one and thus there is some real chance that $\mathcal I_{t-1}^1 \neq \mathcal{I}_{t-1}^2$. If that does happen, we are again back to the regime where the next states of the two paths can be simulated independently. Note also that the support of $B_t^{N}(\mathcal I_t^1, \cdot)$ does not need to be large and can contain as few as $2$ elements. Even if $\mathcal I_{t-1}^1$ might still be equal to $\mathcal I_{t-1}^2$ with some probability, the two paths will have new chances to diverge at times $t-2$, $t-3$ and so on. Overall, this makes $\operatorname{Corr}(\psi_0(X_0^{\mathcal I_0^1}), \psi_0(X_0^{\mathcal I_0^2}))$ quite small (Lemma~\ref{lem:cond_covar_two_traj}, Supplement~\ref{proof:thm:stability}).

We formalise these arguments in the following theorem, whose proof
(Supplement~\ref{proof:thm:stability}) follows them very closely. The price for
proof intuitiveness is that the theorem is specific to the bootstrap filter,
although numerical evidence (Section~\ref{sec:numexp}) suggests that other filters
are stable as well.

\begin{assumption}\label{asp:mt_2ways_bound}
The transition densities $m_t$ are upper and lower bounded: 
\[\bar M_\ell \leq m_t(x_{t-1}, x_t) \leq \bar M_h\]
for constants $0 < \bar M_\ell < \bar M_h < \infty$.
\end{assumption}

\begin{assumption}\label{asp:g_2ways_bound}
The potential functions $G_t$ are upper and lower bounded:
\[ \bar G_\ell \leq G_t(x_t) \leq \bar G_h \]
for constants $0 < \bar G_\ell < \bar G_h < \infty$.
\end{assumption}
\textbf{Remark.} Since Assumption~\ref{asp:mt_2ways_bound} implies that the
$\mathcal X_t$'s are compact, Assumption~\ref{asp:Gbound} automatically implies 
Assumption~\ref{asp:g_2ways_bound} as soon as the $G_t$'s' are continuous functions.

\begin{thm}\label{thm:stability} 
    We use the notations of Algorithms~\ref{algo:bootstrap}
    and~\ref{algo:offline_generic}. Suppose that
    Assumptions~\ref{asp:mt_2ways_bound} and~\ref{asp:g_2ways_bound} hold and
    the random kernels $B_{1:T}^{N}$ satisfy the conditions of
    Theorem~\ref{thm:convergence_mcmc}. If, in addition, for the pair of random
    variables $(J_t^{n, 1}, J_t^{n, 2})$ whose distribution given
    $X_{t-1}^{1:N}$, $X_t^n$ and $\hat B_t^N(n, \cdot)$ is defined by $B_t^N(n,
    \cdot) \otimes B_t^N(n, \cdot)$, we have
    \begin{equation}
        \label{eq:support_cond}
        \CProb{J_t^{n,1}\neq J_t^{n,2}}{ X_{t-1}^{1:N}, X_t^n} \geq \epss
    \end{equation}
    for some $\epss > 0$ and all $t$, $n$; then there exists a constant
    $C$ not depending on $T$ such that:
    \begin{itemize}
        \item fixed marginal smoothing is stable, i.e.\ for $s \in \px{0, \ldots, T}$
            and a real-valued function $\phi_s: \mathcal{X}_s \to \mathbb R$ of the
            hidden state $X_s$, we have
            \begin{equation}
                \label{eq:thm_stability_fixed} \E\ps{\pr{\int \Q_T^N(\dd x_s)
                \phi_s(x_s) - \CE{\phi_s(X_s)}{Y_{0:T}}}^2} \leq \frac{C
            \infnorm{\phi_s}^2}{N};
        \end{equation}
    \item additive smoothing is stable, i.e.\ for $T \geq 2$ and the function
        $\varphi_T$ defined in \eqref{eq:add_func}, we have
        \begin{equation}
            \label{eq:thm_stability_additive}
            \E\ps{\pr{\Q_T^N(\varphi_T) - \Q_T(\varphi_T)}^2} \leq
            \frac{C\sum_{t=0}^T \infnorm{\psi_t}^2}{N} \pr{1 + \sqrt{\frac
            TN}}^2.
        \end{equation}
\end{itemize}
\end{thm}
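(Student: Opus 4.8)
The plan is to study two backward index trajectories jointly and to reduce every mean-square error to a covariance between them. Running Algorithm~\ref{algo:offline_generic} produces index paths $(\mathcal I_t^1)_{t=0}^{T}$ and $(\mathcal I_t^2)_{t=0}^{T}$ that are conditionally i.i.d.\ given $\mathcal F_T$ and $\hat B_{1:T}^N$. Since $\int \Q_T^N(\dd x_s)\phi_s(x_s)$ is exactly the conditional expectation of $\phi_s(X_s^{\mathcal I_s})$ given the filter and the matrices, squaring gives $\E\ps{\pr{\int \Q_T^N(\dd x_s)\phi_s(x_s)}^2} = \E\ps{\phi_s(X_s^{\mathcal I_s^1})\phi_s(X_s^{\mathcal I_s^2})}$, and likewise for $\Q_T^N(\varphi_T)$; both targets thus become statements about the joint law of the two paths. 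I would split each mean-square error into variance plus squared bias. The bias $\E\Q_T^N\varphi - \Q_T\varphi$ is the ordinary smoothing bias of the particle approximation, which under the strong-mixing Assumptions~\ref{asp:mt_2ways_bound}--\ref{asp:g_2ways_bound} is $\mathcal O(1/N)$ uniformly in $T$ for each marginal or pairwise component by classical Feynman--Kac arguments. The difficulty lies in the variance, which I analyse by conditioning on $\FTM$, under which the particle positions, weights, the densities $m_t$ and the potentials $G_t$ are frozen and only the ancestors, the auxiliary variables defining $\hat B_t^N$, and the two samplers remain random.

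Two structural facts drive the variance bound. First, Lemma~\ref{lem:two_backward_traj}: whenever $\mathcal I_t^1 \neq \mathcal I_t^2$, the draws $\mathcal I_{t-1}^1$ and $\mathcal I_{t-1}^2$ are conditionally independent given $\FTM$ and $\mathcal I_{t:T}^{1,2}$. This holds because, by the first hypothesis of Theorem~\ref{thm:convergence_mcmc}, the per-particle bundles $(A_t^n, \hat B_t^N(n,\cdot))$ are independent across $n$ given $X_{t-1}^{1:N}$ and remain so after conditioning on each $X_t^n$, so two distinct current indices use disjoint rows and disjoint sampling noise. Second, I would set up a two-state (coalesced versus separated) recursion for $p_t \eqdef \CProb{\mathcal I_t^1 = \mathcal I_t^2}{\FTM}$, read backward from $T$. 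Assumptions~\ref{asp:mt_2ways_bound}--\ref{asp:g_2ways_bound} bound every backward-kernel atom by $C/N$ (since $\mblow \leq m_t \leq \mbhigh$ and $\gblow \leq G_t \leq \gbhigh$ force $W_{t-1}^m = \mathcal O(1/N)$), so from a separated state the chance of coalescing at the next step is $\leq C/N$, whereas the support condition~\eqref{eq:support_cond} guarantees re-separation with probability $\geq \epss$ from a coalesced state. Hence $p_{t-1} \leq (1-\epss) p_t + C/N$, which iterates from $p_T = \sum_n (W_T^n)^2 = \mathcal O(1/N)$ to the uniform bound $p_t \leq C/(\epss N)$, independent of $T$.

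For the fixed-marginal bound~\eqref{eq:thm_stability_fixed} I would bound the conditional two-path covariance $\ccovftm{\phi_s(X_s^{\mathcal I_s^1})}{\phi_s(X_s^{\mathcal I_s^2})}$ by a backward recursion in $s$, which is the content of Lemma~\ref{lem:cond_covar_two_traj}. Conditioning on $(\mathcal I_{s+1}^1, \mathcal I_{s+1}^2)$ and invoking Lemma~\ref{lem:two_backward_traj}, this covariance vanishes on $\px{\mathcal I_{s+1}^1 \neq \mathcal I_{s+1}^2}$ and is at most $4\infnorm{\phi_s}^2$ otherwise, while the covariance of the conditional means reproduces the same quantity one step higher, now applied to the backward-kernel image of $\phi_s$, whose oscillation has contracted by a factor $\rho < 1$ under Assumptions~\ref{asp:mt_2ways_bound}--\ref{asp:g_2ways_bound}. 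Unfolding this recursion against $p_t = \mathcal O(1/N)$ yields a geometric series of size $\mathcal O(\infnorm{\phi_s}^2/N)$; combined with the residual filter-level fluctuation (a classically stable marginal-smoothing variance) and the $\mathcal O(\infnorm{\phi_s}^2/N^2)$ squared bias, this gives~\eqref{eq:thm_stability_fixed}.

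For the additive bound~\eqref{eq:thm_stability_additive} I would expand $\Var(\Q_T^N\varphi_T) = \sum_{t=0}^{T}\sum_{t'=0}^{T} \Cov\pr{\addpsi{t}{1}, \addpsi{t'}{2}}$ over the two paths, with $\varphi_T$ as in~\eqref{eq:add_func}. Lemmas~\ref{lem:two_backward_traj}--\ref{lem:cond_covar_two_traj} show that each summand is of order $\frac{1}{N}\infnorm{\psi_t}\infnorm{\psi_{t'}}\rho^{\abs{t-t'}}$ for some $\rho<1$, the factor $1/N$ reflecting the need for the two paths to interact and the geometric decay reflecting the forgetting of the backward dynamics; the double sum then collapses to the band estimate $\frac{C}{N}\sum_t \infnorm{\psi_t}^2$, a variance stable in $T$. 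The extra $\sqrt{T/N}$ comes entirely from the bias: summing the $\mathcal O(\infnorm{\psi_t}/N)$ marginal biases and applying Cauchy--Schwarz gives $\abs{\E\Q_T^N\varphi_T - \Q_T\varphi_T} \leq \frac{C}{N}\sum_t \infnorm{\psi_t} \leq \frac{C}{N}\sqrt{(T+1)\sum_t \infnorm{\psi_t}^2}$, so the squared bias is $\mathcal O\pr{\frac{T}{N^2}\sum_t \infnorm{\psi_t}^2}$; adding it to the variance and writing the result as a perfect square produces the factor $\pr{1 + \sqrt{T/N}}^2$. The main obstacle is the variance estimate, precisely the quantitative control in Lemma~\ref{lem:cond_covar_two_traj} of how fast the influence of a coalescence decays backward in time, where the re-separation probability $\epss$ and the mixing forced by Assumptions~\ref{asp:mt_2ways_bound}--\ref{asp:g_2ways_bound} must be combined to turn the naive $\mathcal O(T/N)$ accumulation of coalescences into the stable $\mathcal O(1/N)$ per-component bound.
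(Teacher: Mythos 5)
Your two-path machinery is exactly the engine of the paper's proof: conditional independence of the backward draws once the paths are separated (the paper's Lemma~\ref{lem:two_backward_traj}), re-separation with probability $\geq \epss$ via \eqref{eq:support_cond}, coalescence atoms of size $\bigO(1/N)$ under Assumptions~\ref{asp:mt_2ways_bound}--\ref{asp:g_2ways_bound}, and Doeblin contraction of the FFBS rows. Your recursion $p_{t-1}\leq(1-\epss)p_t + C/N$ and your law-of-total-covariance recursion for $\ccovftm{\phi_s(X_s^{\mci_s^1})}{\phi_s(X_s^{\mci_s^2})}$ are in fact a cleaner route to the paper's Lemma~\ref{lem:cond_covar_two_traj} than its explicit shadow-trajectory couplings (Algorithms~\ref{algo:four_trajs}--\ref{algo:three_trajs} and Lemmas~\ref{lem:I_s_star_likely_same}--\ref{lem:stability_pseudo_markov}). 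The fixed-marginal argument, which combines this with a classical FFBS marginal bound, is essentially the paper's proof.

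The gap is in the additive case. You expand $\Var(\Q_T^N\varphi_T)=\sum_{t,t'}\Cov\pr{\psi_t(\text{path }1),\psi_{t'}(\text{path }2)}$ — an \emph{unconditional} covariance — and assert each summand is $\bigO\pr{N^{-1}\rho^{|t-t'|}\infnorm{\psi_t}\infnorm{\psi_{t'}}}$ by your lemmas. But those lemmas only control the covariance \emph{conditional on} $\FTM$. By the law of total covariance, each summand equals $\E\ps{\ccovftm{\cdot}{\cdot}}$ plus the covariance of the conditional means; since each path given $\FTM$ has the FFBS law (the paper's Corollary~\ref{cor:one_backward_traj}), these second terms sum over $(t,t')$ to $\Var\pr{\qtnffbs(\varphi_T)}$, the variance of the FFBS additive smoother itself. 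Your outline silently drops this term, and no result you can invoke makes it $\bigO\pr{N^{-1}\sum_t\infnorm{\psi_t}^2}$ uniformly in $T$: the only available non-asymptotic control is the Dubarry--Le Corff MSE bound, which carries precisely the $\pr{1+\sqrt{T/N}}^2$ factor, and the paper states explicitly that it does not know whether this factor can be removed. For the same reason, your attribution of the $\sqrt{T/N}$ term "entirely to the bias" is unjustified — in the paper it comes from the cited FFBS MSE bound \eqref{eq:previous_ffbs_bound_2}, not from a bias computation. The repair is the paper's decomposition \eqref{eq:stb_err_decomp}: by the law of total variance, the MSE of $\Q_T^N$ equals the MSE of $\qtnffbs$ (handled by citing \citealp{Douc2011} and \citealp{DubarryLeCorff2011}) plus the expected conditional variance given $\FTM$, and only the latter is attacked with the two-path lemmas; with that assembly your argument closes.
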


In particular, when $B_t^N$ is the PaRIS kernel with $\tilde N \geq 2$, Theorem~\ref{thm:stability} implies a novel non-asymptotic bound for the PaRIS algorithm. \citet{Olsson2017} first established a central limit theorem as $N\to\infty$ and $T$ fixed, then showed that the asymptotic variance is controlled as $T \to \infty$. In contrast, we follow an original approach (whose intuition is explained at the beginning of this subsection) in order to derive a finite sample size bound.

The main technical difficulty is to prove the fast mixing of the Markov kernel product $B_t^{N} B_{t-1}^{N}\ldots B_{t'}^{N}$ in terms of $t-t'$. For the original FFBS kernel,
the stability proof by \citet{Douc2011} relies on the uniform Doeblin property of each of the term $B_s^{N, \mrffbs}$
(page 2136, towards the end of their proof of Lemma 10) and from there, deduces
the exponentially fast mixing of the product. When $B_s^{N,\mrffbs}$ is
approximated by a sparse matrix $B_s^N$ (which is the case for PaRIS, but also
for certain MCMC-based and coupling-based smoothers that we shall see later),
the aforementioned property no longer holds for each individual term $B_s^N$.
Interestingly however, the good mixing of $B_t^{N,\mrffbs} \ldots B_{t'}^{N,
\mrffbs}$ is still conserved in the product $B_t^N \ldots B_{t'}^N$. In
Lemma~\ref{lem:cond_covar_two_traj}, we show that two trajectories generated
via the latter kernel have such a small correlation that they are virtually indistinguishable from two independent trajectories generated via the former one.

Theorem~\ref{thm:stability} is stated under strong assumptions (similar to
those used in \citealt[Chapter 11.4]{SMCbook}, and slightly stronger than
\citealt[Assumption 4]{Douc2011}). On the other hand, it applies to a large class
of backward kernels (rather than only FFBS), including the new ones introduced
in the forthcoming sections.

In the proof of this theorem, we proceed in two steps: first, we apply existing bounds (\citealp[Theorem 3.1]{DubarryLeCorff2011} and  \citealp[Chapter 17]{delmoral_mean_field}) for the error between the $\btn{FFBS}$-induced distribution and the true target; and second, we use our own techniques to control the error when $\btn{FFBS}$ is replaced by any other kernel $B_t^n$ satisfying \eqref{eq:support_cond}. The $(1+\sqrt{T/N})^2$ term in \eqref{eq:thm_stability_additive} comes from the first part and we
do not know whether it can be dropped. However, it does not affect the scaling
of the algorithm. Indeed, with or without it, the inequality implies that in
order to have a constant error in the additive smoothing problem, one only has
to take $N=\mathcal O(T)$ (instead of $N = \mathcal O(T^2)$ without backward
sampling). Moreover, from an asymptotic point of view, we always have
$\sigma^2(T) = \mathcal O(T)$ regardless of the presence of the $(1 +
\sqrt{T/N})^2$ term, where
$ \sigma^2(T) \eqdef \lim_{N\to\infty} N \E\ps{\pr{\Q_T^N(\varphi_T) -
\Q_T(\varphi_T)}^2}$.

\section{Sampling from the FFBS Backward Kernels}
\label{sec:ffbs_kernel}
Sampling from the FFBS backward kernel lies at the heart of both the FFBS
algorithm (Example~\ref{exp:FFBS}) and the PaRIS one
(Section~\ref{sec:generic_online}). Indeed, at time $t$, they require
generating random variables distributed according to $\btn{FFBS}(i_t, \dd
i_{t-1})$ for $i_t$ running from $1$ to $N$. Since sampling from a discrete
measure on $N$ elements requires $\mathcal O(N)$ operations (e.g.\ via CDF inversion), the total computational cost becomes $\mathcal{O}(N^2)$. To reduce this, we start by considering the subclass of models satisfying the following assumption, which is much weaker than Assumption~\ref{asp:mt_2ways_bound}.
\begin{assumption}
	\label{asp:Ct}
	The transition density $m_t(x_{t-1}, x_t)$ is strictly positive and upper bounded, i.e.\ there exists $\mbhigh > 0$ such that $0 < m_t(x_{t-1}, x_t) \leq \mbhigh, \forall \ (x_{t-1}, x_t)$.
\end{assumption}
The motivation for the first condition $0 < m_t(x_{t-1}, x_t)$ will be clear
after Assumption~\ref{asp:space} is defined. For now, we see that it is
possible to sample from $\btn{FFBS}(i_t, \dd i_{t-1})$ using rejection sampling
via the proposal distribution $\mathcal{M}(W_{t-1}^{1:N})$. After an $\mathcal
O(N)$-cost initialisation, new draws can be simulated from the proposal in
amortised $\mathcal O(1)$ time; see \citet[Python Corner, Chapter 9]{SMCbook},
see also \citet[Appendix B.1]{Douc2011} for an alternative algorithm with an
$\mathcal O(\log N)$ cost per  draw. The resulting procedure is summarised in
Algorithm~\ref{algo:pure_rejection_sampler}. Compared to traditional FFBS or
PaRIS implementations, these rejection--based variants have a random
execution time that is more difficult to analyse. Under
Assumption~\ref{asp:mt_2ways_bound}, \citet{Douc2011} and \citet{Olsson2017}
derive an $\mathcal O(N \mbhigh/\mblow)$ expected complexity. However, the
general picture, where the state space is not compact and only
Assumption~\ref{asp:Ct} holds, is less clear.
\begin{algo}{Pure rejection sampler for simulating from $\bw{FFBS}(i_t, \dd i_{t-1})$}
	\label{algo:pure_rejection_sampler}
	\KwIn{Particles $X_{t-1}^{1:N}$ and weights $W_{t-1}^{1:N} $ at time $t-1$; particle $X_t^{i_t}$ at time $t$; constant $\mbhigh$; pre-initialised $\mathcal{O}(1)$ sampler for $\mathcal{M}(W_{t-1}^{1:N})$}
	\Repeat{$U\leq m_t(X_{t-1}^{\mathcal I_{t-1}}, X_t^{i_t})/\mbhigh$}{
		$\mathcal I_{t-1} \sim \mathcal{M}(W_{t-1}^{1:N})$ using the pre-initialised $\mathcal{O}(1)$ sampler \;
		$U \sim \operatorname{Unif}[0,1]$\;
	}
	\KwOut{$\mathcal I_{t-1}$, which is distributed according to $\bw{FFBS} (i_t, \dd i_{t-1})$. }
\end{algo}

The present subsection intends to fill this gap. Our main focus is the PaRIS
algorithm of which the presentation is simpler. Results for the FFBS algorithm
can be found in Supplement~\ref{apx:ffbs_exec_result}. We restrict ourselves to
the case where $\mathcal X_t = \mathbb R^{d_t}$, although extensions to other
non compact state spaces are possible. Only the bootstrap particle filter is
considered, and results from this section do \textit{not} extend trivially to
other filtering algorithms. In Section~\ref{sec:numexp}, we shall employ
different types of particle filters and see that the performance could change
from one type to another, which is an additional weak point of rejection-based
algorithms.

\begin{assumption}
	\label{asp:space}
	The hidden state $X_t$ is defined on the space $\mathcal{X}_t = \mathbb{R}^{d_t}$. The measure $\lambda_t(\dd x_t)$ with respect to which the transition density $m_t(x_{t-1}, x_t)$ is defined (cf. \eqref{eq:density_mt}) is the Lebesgue measure on $\mathbb R^{d_t}$.
\end{assumption}

This assumption together with the condition $m_t(x_{t-1}, x_t) > 0$ of
Assumption~\ref{asp:Ct} ensures that the state space model is ``truly
non-compact''. Indeed, if $m_t(x_{t-1}, x_t)$ is zero whenever $x_{t-1} \notin
\mathcal C_{t-1}$ or $x_t \notin \mathcal C_t$, where $\mathcal C_{t-1}$ and
$\mathcal C_t$ are respectively two compact subsets of $\mathbb R^{d_{t-1}}$
and $\mathbb R^{d_t}$, then we are basically reduced to a state space model
where $\mathcal X_{t-1} = \mathcal C_{t-1}$ and $\mathcal X_t = \mathcal C_t$.

\subsection{Complexity of PaRIS algorithm with pure rejection sampling}

We consider the PaRIS algorithm (i.e.\ Algorithm~\ref{algo:online_generic} using the $\bw{PaRIS}$ kernels). Algorithm~\ref{algo:paris_concrete} provides a concrete description of the resulting procedure, using the bootstrap particle filter. At each time $t$, let $\tau_t^{n, \mrp}$ be the number of rejection trials required to sample from $\btn{FFBS}(n, \dd m)$. We then have
\begin{equation}
\label{eq:dist_tau_N}
\tau_t^{n, \mrp} \textrm{ } | \textrm{ } \mathcal{F}_{t-1}, X_t^n \sim \operatorname{Geo}\pr{\frac{\sum_i W_{t-1}^i m_t(X_{t-1}^i, X_t^n)}{\mbhigh}}
\end{equation}
with $\mbhigh$ defined in Assumption~\ref{asp:Ct}.

\begin{algo}{Concrete implementation of PaRIS algorithm (i.e.\ Algorithm~\ref{algo:online_generic} with the $\bw{PaRIS}$ backward kernel) using the bootstrap particle filter}
	\label{algo:paris_concrete}
	\KwIn{Particles $X_{t-1}^{1:N}$; weights $W_{t-1}^{1:N}$; vector $S_{t-1}^N$ in $\mathbb{R}^N$; pre-initialised sampler for $\mathcal{M}(W_{t-1}^{1:N})$; function $\psi_t$ (cf. \eqref{eq:add_func}); user-specified parameter $\tilde N$}
	\For{$n \gets 1$ \KwTo $N$ }{
		$A_t^n \sim \mathcal{M}(W_{t-1}^{1:N})$ $(\star)$\;
		$X_t^n \sim M_t(X_{t-1}^{A_t^n}, \dd x_t)$ \;
		Simulate $J_t^{n, 1:\tilde N} \overset{\textrm{i.i.d.}}{\sim} \bw{FFBS}(n, \dd n') $ using either the pure rejection sampler (Algorithm~\ref{algo:pure_rejection_sampler}) or the hybrid rejection sampler (Algorithm~\ref{algo:hybrid_rejection_sampler})\;
		$S_t^N[n] \gets \tilde N^{-1} \sum_{\tilde n=1}^{\tilde N} \px{ S_{t-1}^N[J_t^{n, \tilde n}] + \psi_t(X_{t-1}^{J_t^{n, \tilde n}}, X_t^n)}$\;
	}
	\For{$n \gets 1$ \KwTo $N$}{
		$W_t^n \gets G_t(X_t^n)/\sum_{i} G_t(X_t^i)$\;
	}
	$\mu_t^N \gets \sum_{n=1}^N W_t^n S_t^N(n)$\;
	Initialise a sampler for $\mathcal{M}(W_t^{1:N})$\;
	\KwOut{Estimate $\mu_t^N$ of $\CE{\varphi(X_{0:t})}{Y_{0:t}}$; particles $X_t^{1:N}$; weights $W_t^{1:N}$; vector $S_t^N$ in $\mathbb{R}^N$ and pre-initialised sampler $\mathcal{M}(W_t^{1:N})$ for the next iteration}
\end{algo}

By exchangeability of particles, the expected cost of the PaRIS algorithm at
step $t$ is proportional to $N\tilde N \E[\tau_t^{1, \mrp}]$, where $\tilde N$
is a fixed user-chosen parameter. Occasionally, $X_t^1$ falls into an unlikely
region of $\mathbb R^d$ and the acceptance rate becomes low. In other words,
$\tau_t^{1, \mrp}$ is a mixture of geometric distribution, some components of
which might have a large expectation. Unfortunately, these inefficiencies add up
and produce an unbounded execution time in expectation, as shown in the
following proposition.

\begin{prop}
	\label{prop:inf_expectation}
	Under Assumptions~\ref{asp:Ct} and~\ref{asp:space}, the version of Algorithm~\ref{algo:paris_concrete} using the pure rejection sampler satisfies $\E[\tau_t^{1, \mrp}] = \infty$, where $\tau_t^{1,\mrp}$ is defined in \eqref{eq:dist_tau_N}.
\end{prop}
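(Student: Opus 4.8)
The plan is to compute $\E[\tau_t^{1,\mrp}]$ by successive conditioning, exploiting an exact cancellation between the geometric acceptance probability and the sampling density of $X_t^1$. First I would recall that a $\operatorname{Geo}(\lambda)$ variable has mean $1/\lambda$, so from \eqref{eq:dist_tau_N},
\[
\CE{\tau_t^{1,\mrp}}{\mathcal F_{t-1}, X_t^1} = \frac{\mbhigh}{\sum_{i} W_{t-1}^i m_t(X_{t-1}^i, X_t^1)}.
\]
The key observation is that the denominator is precisely the conditional density of $X_t^1$ given $\mathcal F_{t-1}$. Indeed, in Algorithm~\ref{algo:paris_concrete} one draws $A_t^1 \sim \mathcal M(W_{t-1}^{1:N})$ and then $X_t^1 \sim M_t(X_{t-1}^{A_t^1}, \dd x_t)$, so by \eqref{eq:density_mt} the law of $X_t^1$ given $\mathcal F_{t-1}$ admits, with respect to the Lebesgue measure $\lambda_t$ (Assumption~\ref{asp:space}), the density $x_t \mapsto \sum_i W_{t-1}^i m_t(X_{t-1}^i, x_t)$.

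Next I would integrate the previous conditional expectation against this density. Since the integrand is nonnegative, Tonelli's theorem lets me write
\[
\CE{\tau_t^{1,\mrp}}{\mathcal F_{t-1}} = \int_{\mathbb R^{d_t}} \frac{\mbhigh}{\sum_i W_{t-1}^i m_t(X_{t-1}^i, x_t)} \pr{\sum_i W_{t-1}^i m_t(X_{t-1}^i, x_t)} \lambda_t(\dd x_t) = \int_{\mathbb R^{d_t}} \mbhigh \, \lambda_t(\dd x_t).
\]
The cancellation, which is legitimate precisely because the condition $m_t > 0$ of Assumption~\ref{asp:Ct} keeps the density strictly positive (so the integrand is well defined $\lambda_t$-a.e.), reduces the problem to integrating the positive constant $\mbhigh$ over the whole of $\mathbb R^{d_t}$. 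As $\lambda_t$ is Lebesgue measure on an unbounded space, this integral equals $+\infty$, deterministically in $\mathcal F_{t-1}$. Taking the outer expectation over $\mathcal F_{t-1}$ then yields $\E[\tau_t^{1,\mrp}] = +\infty$ via the tower property.

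There is no genuine technical obstacle here; the entire argument rests on recognising the cancellation and on the non-compactness afforded by Assumption~\ref{asp:space}. The only points requiring minor care are to confirm that $p = \mbhigh^{-1}\sum_i W_{t-1}^i m_t(X_{t-1}^i, X_t^1)$ lies in $(0,1]$, so that \eqref{eq:dist_tau_N} describes a valid geometric law — the upper bound $m_t \leq \mbhigh$ of Assumption~\ref{asp:Ct} gives $p \leq 1$, while $m_t > 0$ gives $p > 0$ — and to invoke Tonelli in order to interchange the inner conditional expectation with the integral against the sampling density. This also makes transparent \emph{why} the heavy tail arises: the occasional placement of $X_t^1$ in the far tails of the proposal density, where the acceptance probability decays like the density itself, is not integrable once the state space is genuinely non-compact.
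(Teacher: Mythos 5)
Your proof is correct and is essentially identical to the paper's own argument: both compute $\CE{\tau_t^{1,\mrp}}{\mathcal F_{t-1}, X_t^1}$ as $\mbhigh/\sum_i W_{t-1}^i m_t(X_{t-1}^i, X_t^1)$, recognise that this denominator is exactly the conditional density of $X_t^1$ given $\mathcal F_{t-1}$, and exploit the resulting cancellation to reduce the expectation to $\mbhigh \int_{\mathcal X_t} \lambda_t(\dd x_t) = \infty$ under Assumption~\ref{asp:space}. Your added remarks on the geometric parameter lying in $(0,1]$ and on invoking Tonelli are harmless refinements of the same computation.
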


\begin{proof}
	We have
	\begin{align*}
	\E[\tau_t^{1,\mrp}] &= \mbhigh \E\ps{\frac{1}{\sum_n m_t(X_{t-1}^n, X_t^1)
    W_{t-1}^n}} \quad \textrm{via \eqref{eq:dist_tau_N}} \\
	&= \mbhigh \E\ps{\CE{\frac{1}{\sum_n m_t(X_{t-1}^n, X_t^1) W_{t-1}^n}}{\mathcal{F}_{t-1}}} \\
	&= \mbhigh \E\ps{\int_{\mathcal{X}_t} \frac{1}{\sum_n m_t(X_{t-1}^n, x) W_{t-1}^n}\pr{\sum m_t(X_{t-1}^n, x) W_{t-1}^n} \lambda_t(\dd x)} \\
	&= \mbhigh \E\ps{\int_{\mathcal{X}_t} 1 \times \lambda_t(\dd x)} = \infty
    \quad\textrm{by Assumption~\ref{asp:space}}.
	\end{align*}
\end{proof}

In
highly parallel computing architectures, each processor only handles one or a
small number of particles. As such, the heavy-tailed nature of the execution
time means that a few machines might prevent the whole system from moving
forward.
In \textit{all} computing architectures, an execution time without expectation is essentially
unpredictable. A common practice to estimate execution time is to run a certain
algorithm with a small number $N$ of particles, then ``extrapolate'' to the
$N_\mathrm{final}$ of the definitive run. However, as $\E[\tau_t^{1, \mrp}]$ is
infinite for any $N$, it is unclear what kind of information we might get from 
preliminary runs. In  Supplement~\ref{apx:ffbs_exec_result}, besides studying the
execution time of rejection-based implementations of the FFBS algorithm, we
will delve deeper into the difference between the non-parallel and parallel
settings.

From the proof of Proposition~\ref{prop:inf_expectation}, it is clear that the
quantity $\sum_n W_{t-1}^n m_t(X_{t-1}^n, x_t)$ will play a key role in the
upcoming developments. We thus define it formally.
\begin{defi}
	\label{def:rt}
	The true predictive density function $r_t$ and its approximation $r_t^N$ are defined as
	\begin{align*}
	r_t(x_t) &\eqdef \frac{(\Q_{t-1}M_t)(\dd x_t)}{\lambda_t(\dd x_t)} \\
	r_t^N(x_t) &\eqdef \sum W_{t-1}^n m_t(X_{t-1}^n, x_t)
	\end{align*}
	where the first equation is understood in the sense of the Radon-Nikodym derivative and the density $m_{t-1}(x_{t-1}, x_t)$ is defined with respect to the dominating measure $\lambda_t(\dd x_t)$ on $\mathcal X_t$ (cf. \eqref{eq:density_mt}).
\end{defi}

\subsection{Hybrid rejection sampling} \label{subsect:hybrid} 

To solve the aforementioned issues of the pure rejection sampling procedure, we
propose a hybrid rejection sampling scheme. The basic observation is that, for
a single
$m$, direct simulation (e.g.\ via CDF inversion) of $\bw{FFBS}(i_t, \dd
i_{t-1})$ costs $\mathcal{O}(N)$. Thus, once $K = \mathcal O(N)$ rejection
sampling trials have been attempted, one should instead switch to a direct
simulation method. In other words, it does not make sense (at least
asymptotically) to switch to direct sampling after $K$ trials if $K \ll
\mathcal{O}(N)$ or $K \gg \mathcal{O}(N)$. The validity of this method is
established in the following proposition, where we actually allow $K$ to depend
on trials drawn so far. The proof, which is \textit{not} an immediate
consequence of the validity of ordinary rejection sampling, is given in
Supplement~\ref{apx:proof:hybrid_validity}.

\begin{prop}
    \label{prop:hybrid_validity}
    Let $\mu_0(x)$ and $\mu_1(x)$ be two probability densities defined on some
    measurable space $\mathcal X$ with respect to a dominating measure
    $\lambda(\dd x)$. Suppose that there exists $C>0$ such that $\mu_1(x) \leq
    C\mu_0(x)$. Let $(X_1, U_1), (X_2, U_2), \ldots$ be a sequence of i.i.d.
    random variables distributed according to $\mu_0 \otimes
    \operatorname{Unif}[0,1]$ and let $X^* \sim \mu_1$ be independent of that
    sequence. Put
    \[K^* \eqdef \inf \px{n \in \mathbb Z_{\geq 1} \textrm{ such that } U_n
    \leq \frac{\mu_1(X_n)}{C\mu_0(X_n)}} \]
    and let $K$ be \textnormal{any} stopping time with respect to the natural
    filtration associated with the sequence $\px{(X_n, U_n)}_{n=1}^\infty$. Let $Z$
    be defined as $X_{K^*}$ if $K^* \leq K$ and $X^*$ otherwise. Then $Z$ is $\mu_1$-distributed.
\end{prop}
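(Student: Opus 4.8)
The plan is to verify that $Z \sim \mu_1$ by showing $\E[\varphi(Z)] = \mu_1(\varphi) \eqdef \int \varphi(x)\,\mu_1(x)\,\lambda(\dd x)$ for every bounded measurable $\varphi \colon \mathcal X \to \R$. Write $\mathcal G_n \eqdef \sigma\pr{(X_1, U_1), \ldots, (X_n, U_n)}$ for the natural filtration and split the expectation along the decisive event:
\[
\E[\varphi(Z)] = \E\ps{\varphi(X_{K^*})\, \mathbbm{1}\px{K^* \leq K}} + \E\ps{\varphi(X^*)\, \mathbbm{1}\px{K^* > K}}.
\]
The second term is the easy one: since $X^*$ is independent of the whole sequence $\px{(X_n, U_n)}$, and both $K^*$ and $K$ are measurable functions of that sequence, $X^*$ is independent of $\mathbbm{1}\px{K^* > K}$, so this term equals $\mu_1(\varphi)\,\P(K^* > K)$. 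It then remains to prove that $\E\ps{\varphi(X_{K^*})\,\mathbbm{1}\px{K^* \leq K}} = \mu_1(\varphi)\,\P(K^* \leq K)$, where the whole difficulty is concentrated.

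For this first term I would partition on the value of $K^*$ and write, for each $k \geq 1$, the identity of events $\px{K^* = k,\, K^* \leq K} = \px{K^* = k} \cap \px{K \geq k}$. With $A_n \eqdef \mathbbm{1}\px{U_n \leq \mu_1(X_n)/(C\mu_0(X_n))}$, the event $\px{K^* = k}$ states that $A_1 = \cdots = A_{k-1} = 0$ and $A_k = 1$. The key structural observation --- and the reason the claim is not a mere restatement of ordinary rejection sampling --- is that because $K$ is a \emph{stopping time}, $\px{K \geq k} = \px{K \leq k-1}^c \in \mathcal G_{k-1}$. Hence both $\px{K \geq k}$ and $\px{A_1 = \cdots = A_{k-1} = 0}$ depend only on $(X_1, U_1), \ldots, (X_{k-1}, U_{k-1})$, which are independent of the single pair $(X_k, U_k)$ carrying the factor $\varphi(X_k)\, A_k$. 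Factoring along this independence gives
\[
\E\ps{\varphi(X_{K^*})\, \mathbbm{1}\px{K^* = k,\, K \geq k}} = \E\ps{\varphi(X_k)\, A_k} \cdot \P\pr{A_1 = \cdots = A_{k-1} = 0,\, K \geq k}.
\]
Using $\E[A_k \mid X_k] = \mu_1(X_k)/(C\mu_0(X_k))$ (well-defined in $[0,1]$ thanks to $\mu_1 \leq C\mu_0$) and $X_k \sim \mu_0$, the first factor evaluates to $\mu_1(\varphi)/C$, independently of $k$.

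Summing over $k$ then yields $\E\ps{\varphi(X_{K^*})\,\mathbbm{1}\px{K^* \leq K}} = \pr{\mu_1(\varphi)/C}\sum_{k\geq 1}\P\pr{A_1 = \cdots = A_{k-1} = 0,\, K \geq k}$. Rather than evaluating this series directly, I would specialise the identity to $\varphi \equiv 1$, which (since $\mu_1(1) = 1$) identifies the remaining sum exactly as $C\,\P(K^* \leq K)$; substituting back gives $\E\ps{\varphi(X_{K^*})\,\mathbbm{1}\px{K^* \leq K}} = \mu_1(\varphi)\,\P(K^* \leq K)$ for every $\varphi$. Combining with the easy term produces $\E[\varphi(Z)] = \mu_1(\varphi)\,\P(K^* \leq K) + \mu_1(\varphi)\,\P(K^* > K) = \mu_1(\varphi)$, as desired. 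The only delicate point --- the expected main obstacle --- is the measurability step $\px{K \geq k} \in \mathcal G_{k-1}$: it is precisely the stopping-time hypothesis on $K$ that decouples the ``stop or continue'' decision from the freshly drawn pair $(X_k, U_k)$, thereby preserving the correct acceptance law. Without it the factorisation, and hence the conclusion, would fail.
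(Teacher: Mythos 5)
Your proof is correct and follows essentially the same route as the paper's: both decompose on $\{K^* = k\} \cap \{K \geq k\}$, use the stopping-time property to make this event (apart from the $k$-th acceptance) $\mathcal{G}_{k-1}$-measurable and hence independent of the fresh pair $(X_k, U_k)$, and identify the resulting series by normalisation rather than direct computation (your $\varphi \equiv 1$ specialisation plays exactly the role of the paper's choice $B = \mathcal{G}_1$ in its proportionality chain). The only differences are presentational: the paper phrases the computation via the points $(X_n, U_n C \mu_0(X_n))$, uniform under the graph of $C\mu_0$, whereas you work directly with test functions and acceptance indicators, and you also spell out the easy branch ($Z = X^*$ when $K^* > K$), which the paper leaves implicit.
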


Proposition~\ref{prop:hybrid_validity} thus allows users to pick $K = \alpha N$, where $\alpha > 0$ might be chosen somehow adaptively from earlier trials. In the following, we only consider the simple rule $K = N$, which does not induce any loss of generality in terms of the asymptotic behaviour and is easy to implement. The resulting iteration is described in Algorithm~\ref{algo:hybrid_rejection_sampler}.
\begin{algo}{Hybrid rejection sampler for simulating from $\bw{FFBS}(i_t, \dd i_{t-1})$}
	\label{algo:hybrid_rejection_sampler}
	    \SetKw{KwNot}{not}
	\SetKw{KwBreak}{break}
	\KwIn{Particles $X_{t-1}^{1:N}$ and weights $W_{t-1}^{1:N} $ at time $t-1$; particle $X_t^{i_t}$ at time $t$; constant $\mbhigh$; pre-initialised $\mathcal{O}(1)$ sampler for $\mathcal{M}(W_{t-1}^{1:N})$}
	$accepted \gets \operatorname{False}$ \;
	\For{$i\gets 1$ \KwTo $N$}{
		$\mathcal I_{t-1} \sim \mathcal{M}(W_{t-1}^{1:N})$ using the pre-initialised $\mathcal{O}(1)$ sampler \;
		$U \sim \operatorname{Unif}[0,1]$ \;
		\If{$U \leq m_t(X_{t-1}^{\mathcal I_{t-1}}, X_t^{i_t})/ \mbhigh$}{
			$accepted \gets \operatorname{True}$ \;
			\KwBreak \;
		}
	}
	\If{\KwNot $accepted$}{
		$\mathcal I_{t-1} \sim \mathcal{M}(W_{t-1}^n m(X_{t-1}^n, X_t^{i_t}))$\;
	}
	\KwOut{$\mathcal I_{t-1}$, which is distributed according to $\bw{FFBS}(i_t, \dd i_{t-1})$. }
\end{algo}

When applied in the context of Algorithm~\ref{algo:paris_concrete},
Algorithm~\ref{algo:hybrid_rejection_sampler} gives a smoother of expected
complexity proportional to 
\[N\tilde N \E[\min(\tau_t^{1, \mrp}, N)]
\] 
at time $t$, where $\tau_t^{1, \mrp}$ is defined in \eqref{eq:dist_tau_N}). 
This quantity is no longer infinite, but its growth when $N \to \infty$ might
depend on the model. Still, in all cases, it remains strictly larger than
$\mathcal{O}(N)$ and strictly smaller than $\mathcal{O}(N^2)$. Perhaps more
surprisingly, in linear Gaussian models (see
Supplement~\ref{apx:linear_gaussian_models} for detailed notations), the smoother
is of near-linear complexity (up to log factors). The following two theorems
formalise these claims.

\begin{assumption}
    \label{asp:continuous} The predictive density $r_t$ of $X_t$ given
    $Y_{0:t-1}$ and the potential function $G_t$ are continuous functions on
    $\mathbb{R}^{d_t}$. The transition density $m_t(x_{t-1}, x_t)$ is a
    continuous function on $\mathbb{R}^{d_{t-1}} \times \mathbb{R}^{d_t}$.
\end{assumption}

\begin{thm}
	\label{thm:intermediate_perf}
		Under Assumptions~\ref{asp:Gbound}, \ref{asp:Ct}, \ref{asp:space} and \ref{asp:continuous}, the version of Algorithm~\ref{algo:paris_concrete} using the hybrid rejection sampler (Algorithm~\ref{algo:hybrid_rejection_sampler}) satisfies $\lim_{N \to \infty} \E[\min(\tau_t^{1, \mrp}, N)] = \infty$ and $\lim_{N \to \infty} {\E[\min(\tau_t^{1,\mrp}, N)]}/{N} = 0$, where $\tau_t^{1,\mrp}$ is defined in \eqref{eq:dist_tau_N}.
\end{thm}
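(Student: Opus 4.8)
The plan is to reduce both limits to the behaviour of a single integral. Starting from \eqref{eq:dist_tau_N}, a geometric variable with (random) success probability $p = r_t^N(X_t^1)/\mbhigh$ satisfies $\CE{\min(\operatorname{Geo}(p), N)}{p} = \sum_{k=1}^N (1-p)^{k-1} = \pr{1 - (1-p)^N}/p$. Conditioning first on $\mathcal F_{t-1}$ and $X_t^1$, and then, exactly as in the proof of Proposition~\ref{prop:inf_expectation}, using that $X_t^1$ has conditional density $r_t^N$ with respect to $\lambda_t$, the factor $1/p = \mbhigh/r_t^N(X_t^1)$ cancels against that density. This gives $\E[\tnn] = \mbhigh\,\E[I_N]$, where
\[
I_N \eqdef \intxt{g_N(x)}\,\lambda_t(\dd x), \qquad g_N(x) \eqdef 1 - \pr{1 - r_t^N(x)/\mbhigh}^N \in [0,1].
\]
(The bound $0 \le r_t^N \le \mbhigh$ holds because $m_t \le \mbhigh$ and the $W_{t-1}^n$ sum to one.) Everything then reduces to the asymptotics of $\E[I_N]$ under Assumption~\ref{asp:space}, that $\lambda_t$ is Lebesgue measure on $\mathbb R^{d_t}$.

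For the divergence $\E[\tnn]\to\infty$, I would use Tonelli's theorem to write $\E[I_N] = \intxt{\E[g_N(x)]}\,\lambda_t(\dd x)$ and show $\E[g_N(x)] \to 1$ for every $x$. For fixed $x$, the function $x_{t-1}\mapsto m_t(x_{t-1}, x)$ is bounded (Assumption~\ref{asp:Ct}) and continuous (Assumption~\ref{asp:continuous}), so the particle-filter law of large numbers gives $r_t^N(x) = \sum_n W_{t-1}^n m_t(X_{t-1}^n, x) \cvprob r_t(x)$. Since $m_t > 0$ forces $r_t(x) = \int m_t(x_{t-1}, x)\,\Q_{t-1}(\dd x_{t-1}) > 0$ for every $x$, this makes $\pr{1 - r_t^N(x)/\mbhigh}^N \cvprob 0$, hence $g_N(x) \cvprob 1$; boundedness of $g_N(x)$ upgrades this to $\E[g_N(x)] \to 1$. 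As $\px{r_t > 0} = \mathbb R^{d_t}$ has infinite Lebesgue measure, Fatou's lemma yields $\liminf_N \E[I_N] \ge \intxt{1}\,\lambda_t(\dd x) = \infty$.

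For the sublinearity $\E[\tnn]/N \to 0$, I would start from Bernoulli's inequality $1 - (1-a)^N \le \min(1, Na)$, valid for $a = r_t^N(x)/\mbhigh \in [0,1]$, to obtain $I_N/N \le J_N \eqdef \intxt{\min\pr{1/N,\ r_t^N(x)/\mbhigh}}\,\lambda_t(\dd x)$. Crucially, $J_N$ also obeys the deterministic bound $J_N \le \mbhigh^{-1}\intxt{r_t^N(x)}\,\lambda_t(\dd x) = \mbhigh^{-1}$, since each $m_t(x_{t-1}, \cdot)$ integrates to one. Being uniformly bounded, it then suffices to prove $J_N \cvprob 0$ and apply bounded convergence to conclude $\E[I_N]/N \le \E[J_N] \to 0$. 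To show $J_N \cvprob 0$, fix $\varepsilon > 0$ and a ball $B = \px{|x| \le R}$ large enough that $\mbhigh^{-1}\int_{B^c} r_t(x)\,\lambda_t(\dd x) < \varepsilon$ (possible as $\int r_t = 1$). Over $B$ the integrand is at most $1/N$, so that part is bounded by $\mathrm{Leb}(B)/N \to 0$; over $B^c$ I would bound the $\min$ by $r_t^N/\mbhigh$ and write $\mbhigh^{-1}\int_{B^c} r_t^N(x)\,\lambda_t(\dd x) = \mbhigh^{-1}\sum_n W_{t-1}^n \phi_R(X_{t-1}^n)$, with $\phi_R(x_{t-1}) \eqdef 1 - \int_B m_t(x_{t-1}, x)\,\lambda_t(\dd x)$. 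The law of large numbers applied to the bounded continuous function $\phi_R$ gives $\mbhigh^{-1}\sum_n W_{t-1}^n \phi_R(X_{t-1}^n) \cvprob \mbhigh^{-1}\Q_{t-1}(\phi_R) = \mbhigh^{-1}\int_{B^c} r_t(x)\,\lambda_t(\dd x) < \varepsilon$, whence $J_N < 2\varepsilon$ with probability tending to one.

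The main obstacle is precisely this tail control: because $\mathcal X_t$ is non-compact, one cannot pass $r_t^N \to r_t$ through an integral over all of $\mathbb R^{d_t}$. The device that resolves it is to dominate the truncated integrand by $r_t^N/\mbhigh$ outside a large ball and to reinterpret the resulting tail mass $\int_{B^c} r_t^N$ as the ordinary particle-filter estimate $\sum_n W_{t-1}^n \phi_R(X_{t-1}^n)$ of a single bounded test function, for which the law of large numbers applies, while the explicit $1/N$ factor annihilates the finite-measure contribution inside the ball. A small technical point is the continuity of $\phi_R$: writing it as $1 - \int_B m_t(\cdot, x)\,\lambda_t(\dd x)$ over the compact set $B$, it follows from the continuity of $m_t$ and dominated convergence.
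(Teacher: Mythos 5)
Your proof is correct, but it takes a genuinely different route from the paper's. The paper first establishes a distributional limit (Proposition~\ref{prop: tau_N_to_tau_inf}): $\tau_t^{1,\mrp} \Rightarrow \tau_t^\pinfty$, where the oracle time $\tau_t^\pinfty$ is geometric with parameter $r_t(X_t^\pinfty)/\mbhigh$; this rests on a propagation-of-chaos argument (Lemma~\ref{lem:prop_chao}, via characteristic functions), Slutsky's theorem, and the continuity of $r_t$ and $m_t$ from Assumption~\ref{asp:continuous}. Both conclusions then follow by soft arguments: $\E[\tau_t^\pinfty]=\infty$ together with Fatou's lemma applied to $\sum_k k\,\P(\tnn=k)$ gives the divergence, while the sublinearity comes from writing $\E[\tnn]/N=\E[\min(\tau_t^{1,\mrp}/N,1)]$ and noting that $\min(\tau_t^{1,\mrp}/N,1)\Rightarrow 0$ while staying bounded. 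You bypass the weak convergence entirely: your identity $\E[\tnn]=\mbhigh\,\E[I_N]$ is precisely the $z^N$-integral representation that the paper introduces only later, for Theorems~\ref{thm:near_linear} and~\ref{thm:ffbs_hybrid_exec} (cf.\ \eqref{eq:def_zn}, Proposition~\ref{prop:real_tnn_vs_oracle_tnn}, Lemma~\ref{lem:magic_jensen}), and you then analyse $I_N$ directly — Tonelli, the pointwise law of large numbers $r_t^N(x)\cvprob r_t(x)>0$ and Fatou over space for the divergence; Bernoulli's inequality, a ball truncation and the law of large numbers applied to the single bounded test function $\phi_R$ for the sublinearity. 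Each approach buys something. Yours is more elementary and slightly more general: it needs only particle-filter consistency for bounded test functions, so the chaos-propagation lemma disappears and Assumption~\ref{asp:continuous} becomes essentially superfluous, at the cost of the explicit tail-control step that the paper's bounded-transformation trick sidesteps. The paper's route, in exchange, identifies the limiting law of the per-particle cost itself rather than merely its expectation, which has independent interpretive value as the ``oracle'' benchmark used throughout Section~\ref{sec:ffbs_kernel} and Supplement~\ref{apx:ffbs_exec_result}.
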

\begin{thm}
	\label{thm:near_linear}
	We assume the same setting as Theorem~\ref{thm:intermediate_perf}. In linear Gaussian state space models (Supplement~\ref{apx:linear_gaussian_models}), we have $\E[\min(\tau_t^{1,\mrp}, N)] = \mathcal{O}((\log N)^{d_t/2})$.
\end{thm}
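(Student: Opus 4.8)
The plan is to turn the expected running cost into a deterministic integral of the (averaged) one-step predictive density, and then read off the $(\log N)^{d_t/2}$ rate from the Gaussian tails of that density. First I would condition. Given $\mathcal F_{t-1}$ and $X_t^1$, the variable $\tau_t^{1,\mrp}$ is geometric with success probability $p = r_t^N(X_t^1)/\mbhigh \in (0,1]$ by~\eqref{eq:dist_tau_N}, so $\E[\tnn \mid \mathcal F_{t-1}, X_t^1] = \sum_{k=1}^N (1-p)^{k-1} = \bigl(1-(1-p)^N\bigr)/p$. Conditionally on $\mathcal F_{t-1}$ the particle $X_t^1$ has Lebesgue density $r_t^N$ --- the very identity already used in the proof of Proposition~\ref{prop:inf_expectation} --- so integrating out $X_t^1$ and then $\mathcal F_{t-1}$ gives the exact formula
\[
\E[\tnn] = \mbhigh\, \E\!\left[\,\intxt{\pr{1 - \pr{1 - \tfrac{r_t^N(x)}{\mbhigh}}^N}\dd x}\,\right].
\]
This is the truncated analogue of Proposition~\ref{prop:inf_expectation}: the integrand $1$ there, which produced $\infty$, is now replaced by the bounded quantity $1-(1-r_t^N/\mbhigh)^N$.

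Next I would de-randomise. Since $m_t \le \mbhigh$ we always have $r_t^N(x)/\mbhigh \in [0,1]$, and $u\mapsto 1-(1-u)^N$ is concave on $[0,1]$. After swapping $\E$ and $\intxt{}$ (Tonelli), Jensen's inequality applied pointwise in $x$ over the law of $\mathcal F_{t-1}$ lets me replace $r_t^N$ by its mean $\bar r_t^N(x) \eqdef \E[r_t^N(x)]$, which is itself a probability density bounded by $\mbhigh$ (because $\int r_t^N\,\dd x = 1$ almost surely). Combining this with the elementary bound $1-(1-u)^N \le \min(Nu,1)$ yields
\[
\E[\tnn] \;\le\; \mbhigh \intxt{\pr{1 - \pr{1 - \tfrac{\bar r_t^N(x)}{\mbhigh}}^N}\dd x} \;\le\; \intxt{\min\!\pr{N \bar r_t^N(x),\, \mbhigh}\dd x}.
\]

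Third, I would deduce the rate from a tail estimate: in a linear Gaussian model $\bar r_t^N$ admits a fixed Gaussian envelope $\bar r_t^N(x) \le C_1 \exp(-c_1\norm{x}^2)$ with $C_1,c_1$ independent of $N$. Granting this, set $B_N \eqdef \{x : \bar r_t^N(x) \ge \mbhigh/N\}$, which is contained in a ball of radius $\bigO(\sqrt{\log N})$; then $\mbhigh\abs{B_N} = \bigO((\log N)^{d_t/2})$, while on the complement the standard Gaussian tail bound gives $N\int_{B_N^c}\bar r_t^N\,\dd x = \bigO((\log N)^{(d_t-2)/2})$. Adding the two pieces closes the proof, $\E[\tnn] = \bigO((\log N)^{d_t/2})$.

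I expect the tail estimate to be the real obstacle, and it is precisely the ``subtle interaction between the tail of Gaussian densities and the finite \FK{} approximation'' alluded to earlier. The difficulty is that $r_t^N = \sum_n W_{t-1}^n m_t(X_{t-1}^n,\cdot)$ is a \emph{self-normalised} estimator whose weights $W_{t-1}^n \propto G_{t-1}(X_{t-1}^n)$ are random; far out in the tail only a handful of particles contribute, so a crude ``$r_t^N\approx r_t$'' substitution fails and one genuinely needs a finite-sample argument rather than an asymptotic one. The linear Gaussian structure is what makes it work: $m_t(\cdot,x)$ is a Gaussian kernel of fixed covariance whose mean depends linearly on $X_{t-1}^n$, and the Gaussian likelihood $G_{t-1}$ \emph{down}-weights exactly the outlying particles that could otherwise fatten the tail of $\bar r_t^N$. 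The technical work is to show that this weighting, together with control of the random normalising denominator $\sum_n G_{t-1}(X_{t-1}^n)$ uniformly over $x$, keeps $\bar r_t^N$ dominated by a fixed Gaussian envelope.
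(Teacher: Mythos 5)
Your first two steps are sound, and they are in fact identical to the opening of the paper's own argument: your quantity $1-(1-u)^N$ is exactly $\lambda z^N(\lambda)$ in the paper's notation, and your Tonelli-plus-Jensen step (concavity of $u \mapsto 1-(1-u)^N$, pointwise in $x$) is precisely the first inequality inside Lemma~\ref{lem:magic_jensen}. The gap is your third step. The uniform-in-$N$ Gaussian envelope $\bar r_t^N(x) \leq C_1 \exp(-c_1 \norm{x}^2)$ is asserted, not proved, and it is not a routine fact: it is the entire content of the theorem. Writing $\bar r_t^N(x) = \E\ps{m_t(X_{t-1}^{A_t^1}, x)}$, the envelope amounts to showing that the expected self-normalised weight mass which the filter places at distance $R$ from the centre decays like $e^{-cR^2}$, uniformly in $N$. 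That requires ruling out, with Gaussian-small probability uniform in $N$, rare ``lost filter'' events in which essentially all particles sit in the tail (so that self-normalisation gives them full weight); this needs its own induction over the time steps, and your heuristic that $G_{t-1}$ down-weights outlying particles breaks down whenever $\dimy < \dimx$, since $G_{t-1}$ is then constant along $\operatorname{Ker}(F_Y)$ and provides no penalty at all in those directions.

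Worse, the gap cannot be closed with the estimate that \emph{is} available, which shows the de-randomised bound you chose is too lossy. The standard propagation-of-chaos bias bound — the one the paper actually uses — gives only $\bar r_t^N(x) \leq r_t(x) + b_t \mbhigh / N$, uniformly in $x$. Plugging this into your bound and using $\min(a+b, c) \leq \min(a,c) + \min(b,c)$ gives
\begin{equation*}
\intxt{\min\pr{N \bar r_t^N(x), \mbhigh} \dd x} \;\leq\; \intxt{\min\pr{N r_t(x), \mbhigh} \dd x} \;+\; \intxt{\min\pr{b_t \mbhigh, \mbhigh} \dd x},
\end{equation*}
and the second integral is infinite: the additive $\mathcal O(1/N)$ bias, once multiplied by $N$ and capped at $\mbhigh$, is a positive constant integrated over all of $\mathbb R^{d_t}$. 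So your approach stalls unless the unproven pure-Gaussian envelope is established. The paper's Lemma~\ref{lem:magic_jensen} is engineered precisely to avoid this: the bias is absorbed \emph{multiplicatively} into $z^N$ (the two-case argument $r_t/\mbhigh \lessgtr 1/2$ giving $z^N(r_t/\mbhigh - b_t/N) \leq (e^{2b_t}+3)\, z^N(r_t/\mbhigh)$, which exploits the geometric-sum structure of $z^N$), and the only additive error, $b_t \norminf{z^N}/N = b_t$, is incurred against the probability measure $r_t^N(x)\lambda_t(\dd x)$ rather than against Lebesgue measure with a cap, hence stays finite. Everything is thereby reduced to the oracle quantity $\E\ps{\min(\tau_t^{\infty, \mrp}, N)}$, whose $(\log N)^{d_t/2}$ rate follows from the Gamma law of $-\log r_t(X_t^\pinfty)$ in Proposition~\ref{prop:oracle_linear} — the analogue of your third step, but performed on the deterministic density $r_t$, for which no envelope needs to be proven.
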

\newcommand{\temptaup}{\tau_t^{1, \mrp}}
\newcommand{\tempxsbt}{{x^2/2}}
While Proposition~\ref{prop:inf_expectation} shows that $\temptaup$ has infinite expectation, Theorem~\ref{thm:near_linear} implies that its $N$-thresholded version only displays a slowly increasing mean. To give a very rough intuition on the phenomenon, consider  $X \sim \mathcal N(0,1)$. Then
\[ \E\ps{e^{X^2/2}} = \int_\mathbb R e^{x^2/2} \frac{e^{-x^2/2}}{\sqrt{2\pi}} = +\infty \]
whereas
\begin{equation}
\label{eq:near_linear_simplified}
	\begin{split}
	\E\ps{\min(e^{X^2/2}, N)} &= \int_\mathbb R \min(e^\tempxsbt,N) \frac{1}{\sqrt{2\pi}} e^{-\tempxsbt} \dd x \\
	&= \int_{\abs{x} \leq \sqrt{2 \log N}} \frac{1}{\sqrt{2\pi}} \dd x + N \int_{\abs{x} > \sqrt{2\log N}} \frac{1}{\sqrt{2\pi}} e^{-\tempxsbt}\dd x \\
	&\leq \sqrt{\frac{4\log N}{\pi}} + \frac{1}{\sqrt{\pi \log N}}
	\end{split}
\end{equation}
using the bound $\P(X > x) \leq \frac{e^{-x^2/2}}{x\sqrt{2\pi}}$ for $x > 0$. The main technical difficulty of the proof of Theorem~\ref{thm:near_linear} (see Supplement~\ref{proof:near_linear}) is to perform this kind of argument under the error induced by the finite sample size particle approximation. In the language of this oversimplified example, we want~\eqref{eq:near_linear_simplified} to hold when $X$ does not follow $\mathcal N(0,1)$ any more, but only an $N$-\textit{dependent} approximation of it.

\section{Efficient backward kernels}
\label{sec:new_kernels}

\subsection{MCMC Backward Kernels}
\label{sec:mcmc_kernel}

This subsection analyses and extends the MCMC backward kernel defined in
Example~\ref{eg:mcmc_smoother}. As we remarked there, the matrix $\bwm{IMH}$ is
not sparse and even has some expensive-to-evaluate entries. We thus reserve it
for use in the off-line smoother (Algorithm~\ref{algo:offline_generic}) whereas
in the on-line scenario (Algorithm~\ref{algo:online_generic}), we use its
PaRIS-like counterpart
\begin{equation}
	\label{eq:mcmc_paris_kernel}
	\bwm{IMHP}[i_t, i_{t-1}] \eqdef \frac{1}{\tilde N} \sum_{\tilde n=1}^{\tilde N} \mathbbm 1\px{i_{t-1} = \tilde J_{t}^{i_t, \tilde n}}
\end{equation}
where $\tilde J_t^{i_t, 1:\tilde N}$ is an independent Metropolis-Hastings
chain started at $J_t^{i_t, 1} \eqdef A_t^{i_t}$, targeting the measure
$\btn{FFBS}(i_t, \dd i_{t-1})$ and using the proposal distribution $\mathcal
M(W_{t-1}^{1:N})$. Thus, the parameter $\tilde N$ signifies that $\tilde N - 1$ MCMC steps are applied to $A_t^{i_t}$, and we shall use the same convention for the kernel $ B_t^{N, \operatorname{IMH}}$. In both cases, the complexity of the corresponding algorithms are $\mathcal O((\tilde N - 1)N)$ which is equivalent to $\mathcal O(N)$ as long as $\tilde N$ remains fixed when $N \to \infty$.

The validity and the stability of $\bwm{IMH}$ and
$\bwm{IMHP}$ are established in the following proposition (proved in
Supplement~\ref{apx:proof:mcmc_validity}). For simplicity, only the case $\tilde
N=2$ is examined, but as a matter of fact, the proposition remains true for
$\tilde N \geq 2$. 

\begin{prop}
	\label{prop:mcmc_validity}
	The kernels $\btn{IMH}$ and $\btn{IMHP}$ with $\tilde N=2$ satisfy the hypotheses of Theorem~\ref{thm:convergence_mcmc} and, under Assumptions~\ref{asp:mt_2ways_bound} and~\ref{asp:g_2ways_bound}, those of Theorem~\ref{thm:stability}. Hence, their respective uses in Algorithms~\ref{algo:offline_generic} and~\ref{algo:online_generic} guarantee a convergent and stable smoother.
\end{prop}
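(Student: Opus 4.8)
The plan is to reduce the proposition to checking the abstract hypotheses of Theorems~\ref{thm:convergence_mcmc} and~\ref{thm:stability}, after which convergence and stability follow immediately. Concretely, for each of the two kernels I would verify (i) the two bullet conditions of Theorem~\ref{thm:convergence_mcmc}, and (ii), under Assumptions~\ref{asp:mt_2ways_bound} and~\ref{asp:g_2ways_bound}, the support condition~\eqref{eq:support_cond}. The single observation that drives the whole argument is that in the bootstrap filter the ancestor $A_t^n$ is \emph{already} distributed according to the FFBS target given $(X_t^n, X_{t-1}^{1:N})$: writing $\pi_x$ for the mass function proportional to $W_{t-1}^{\cdot} m_t(X_{t-1}^{\cdot}, x)$, i.e.\ the row of~\eqref{eq:ffbs_bw} at $X_t^n = x$, the joint law of $(A_t^n, X_t^n)$ given $X_{t-1}^{1:N}$ has ``density'' $W_{t-1}^a m_t(X_{t-1}^a, x)$, so the conditional law of $A_t^n$ given $X_t^n = x$ is exactly $\pi_x$.

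For the first bullet of Theorem~\ref{thm:convergence_mcmc}, I would note that the triple $(X_t^n, A_t^n, \hat B_t^N(n,\cdot))$ is constructed per particle: $(A_t^n, X_t^n)$ are i.i.d.\ given $\mathcal F_{t-1}$ with law depending on $X_{t-1}^{1:N}$ only, and the row $\hat B_t^N(n,\cdot)$ is obtained from $(A_t^n, X_t^n, X_{t-1}^{1:N})$ using independent Metropolis--Hastings randomness, so the i.i.d.\ structure is preserved. The crux is the second bullet. Since the independent MH kernel with proposal $\mathcal M(W_{t-1}^{1:N})$ and acceptance ratio $\min(1, m_t(X_{t-1}^{\cdot}, x)/m_t(X_{t-1}^{A_t^n}, x))$ is $\pi_x$-reversible, one MH step applied to the $\pi_x$-distributed starting point $A_t^n$ returns a $\pi_x$-distributed variable; hence for $\btn{IMH}$ the draw $J_t^n$ has conditional law $\pi_x$ given $(X_t^n = x, X_{t-1}^{1:N})$, so $(J_t^n, X_t^n) \overset{d}{=} (A_t^n, X_t^n)$. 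For $\btn{IMHP}$ with $\tilde N = 2$ the row is the two-atom measure $\tfrac12(\delta_{A_t^n} + \delta_{\tilde J_t^{n,2}})$ from~\eqref{eq:mcmc_paris_kernel}; both atoms are marginally $\pi_x$-distributed (the first trivially, the second by the same invariance), so averaging gives $\CProb{J_t^n = j}{X_t^n = x, X_{t-1}^{1:N}} = \tfrac12(\pi_x(j) + \pi_x(j)) = \pi_x(j)$, and the second bullet holds here too.

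It remains to verify~\eqref{eq:support_cond} under Assumptions~\ref{asp:mt_2ways_bound} and~\ref{asp:g_2ways_bound}. Here I would use two ingredients: the acceptance probability of the MH step is bounded below by $\mblow/\mbhigh$ (Assumption~\ref{asp:mt_2ways_bound}), and each weight satisfies $W_{t-1}^n \le \gbhigh/(N\gblow)$ (Assumption~\ref{asp:g_2ways_bound}), so a fresh proposal lands on the current index $A_t^n$ with probability at most $\gbhigh/(N\gblow)$. For $\btn{IMHP}$, two i.i.d.\ draws from $\tfrac12(\delta_{A_t^n} + \delta_{\tilde J_t^{n,2}})$ differ exactly when the two atoms differ \emph{and} the draws pick different atoms, giving $\CProb{J_t^{n,1}\neq J_t^{n,2}}{X_{t-1}^{1:N}, X_t^n} = \tfrac12\,\CProb{\text{MH accepts a move}}{X_{t-1}^{1:N}, X_t^n} \ge \tfrac12\pr{1 - \tfrac{\gbhigh}{N\gblow}}\tfrac{\mblow}{\mbhigh}$; for $\btn{IMH}$ I would bound $1 - \sum_m \hat B_t^N(n,m)^2 \ge 1 - \max_m \hat B_t^N(n,m)$ and control the largest atom (which sits at $A_t^n$, where rejected mass accumulates) by the same two estimates. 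In both cases this is at least a positive constant once $N \ge 2\gbhigh/\gblow$, yielding a uniform $\epss$; the finitely many small values of $N$ are absorbed into the stability constant $C$. The main obstacle will be precisely this last step --- extracting a bound uniform in $t$, $n$ and (large) $N$ --- together with the bookkeeping of what is conditioned on when the random row $\hat B_t^N(n,\cdot)$ for $\btn{IMHP}$ is itself a function of the MCMC draws. With~\eqref{eq:support_cond} in hand, Theorems~\ref{thm:convergence_mcmc} and~\ref{thm:stability} apply verbatim and give the stated convergence and stability.
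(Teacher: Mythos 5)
Your proposal is correct and follows essentially the same route as the paper's own proof: the second bullet of Theorem~\ref{thm:convergence_mcmc} is verified exactly as the paper does it (the independent MH kernel leaves $\btn{FFBS}(n,\cdot)$ invariant and is applied to the ancestor $A_t^n$, which is already $\btn{FFBS}(n,\cdot)$-distributed given $X_{t-1}^{1:N}$ and $X_t^n$), and \eqref{eq:support_cond} is deduced from the same two estimates the paper uses, namely the acceptance probability being at least $\mblow/\mbhigh$ under Assumption~\ref{asp:mt_2ways_bound} and the weight bound $W_{t-1}^m \leq \gbhigh/(N\gblow)$ under Assumption~\ref{asp:g_2ways_bound}, valid for $N$ large enough with small $N$ absorbed into constants. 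The only cosmetic difference is in the $\btn{IMH}$ case, where you bound the self-collision probability of the row via $1 - \sum_m \hat B_t^N(n,m)^2 \geq 1 - \max_m \hat B_t^N(n,m)$, whereas the paper lower-bounds the probability that two independent MH proposals differ and are both accepted; both arguments rest on the same ingredients and yield a uniform $\epss > 0$.
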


From a theoretical viewpoint, Proposition~\ref{prop:mcmc_validity} is the first result
establishing the stability for the use of MCMC moves inside backward sampling. It relies on technical innovations that we have explained in Section~\ref{sec:stability}, in particular after the statement of Theorem~\ref{thm:stability}.

From a practical viewpoint, the advantages of independent Metropolis-Hastings MCMC kernels compared to the
rejection samplers of Section~\ref{sec:ffbs_kernel} are the dispensability of
specifying an explicit $\mbhigh$ and the deterministic $\mathcal O(N)$ nature of the execution
time. In practice, we observe that the MCMC smoothers are usually 10-20 times faster than the rejection sampling--based counterparts (see e.g. Figure~\ref{fig:lg_online_large_exec}) while producing essentially the same sample quality. Finally, it is not hard to imagine situations where some proposal smarter than
$\mathcal M(W_{t-1}^{1:N})$ would be beneficial. However, we only consider that
one here, mainly because it already performs satisfactorily in our numerical
examples.

\subsection{Dealing with intractable transition densities}
\label{sec:intractable}

\subsubsection{Intuition and formulation} The purpose of backward sampling is to re-generate, for each particle, a new ancestor that is different from that of the filtering step. However, backward sampling is infeasible if the transition density $m_t(x_{t-1}, x_t)$ cannot be calculated. To get around this, we modify the particle filter so that each particle might, in some sense, have two ancestors right from the \textit{forward} pass.

Consider the standard PF (Algorithm~\ref{algo:bootstrap}). Among the $N$ resampled particles $X_{t-1}^{A_t^{1:N}}$, let us track two of them, say $x_{t-1}$ and $x_{t-1}'$ for simplicity. The move step of Algorithm~\ref{algo:bootstrap} will push them through $M_t$ using independent noises, resulting in $x_t$ and $x_t'$ (that is, given $x_{t-1}$ and $x_{t-1}'$, we have $x_t \sim M_t(x_{t-1}, \cdot)$ and $x_{t}' \sim M_t(x_{t-1}', \cdot)$ such that $x_t$ and $x_t'$ are independent). Thus, for e.g.\ linear Gaussian models, we have $\P(x_t = x_t') = 0$. However, if the two simulations $x_t \sim M_t(x_{t-1}, \cdot)$ and $x_t' \sim M_t(x_{t-1}', \cdot)$ are done with specifically correlated noises, it can happen that $\P(x_t = x_t') > 0$. The joint distribution $(x_t, x_t')$ given $(x_{t-1}, x_{t-1}')$ is called a \textit{coupling} of $M_t(x_{t-1}, \cdot)$ and $M_t(x_{t-1}', \cdot)$; the event $x_t = x_t'$ is called the \textit{meeting} event and we say that the coupling is \textit{successful} when it occurs. In that case, the particle $x_t$ automatically has two ancestors $x_{t-1}$ and $x_{t-1}'$ at time $t-1$ without needing any backward sampling.

The precise formulation of the modified forward pass is detailed in
Algorithm~\ref{algo:intractable}. It consists of building in an on-line manner
the backward kernels $\btn{ITR}$ (where ITR stands for ``intractable''). The
main interest of this algorithm lies in the fact that while the function $m_t$
may prove impossible to evaluate, it is usually possible to make $x_t$ and
$x_t'$ meet by correlating somehow the random numbers used in their
simulations. One typical example which this article focuses on is the coupling
of continuous-time processes, but it is useful to keep in mind that
Algorithm~\ref{algo:intractable} is conceptually more general than that.

\begin{algo}{Modified forward pass for smoothing of intractable models (one
    time step)}
	\label{algo:intractable}
	\KwIn{Feynman-Kac model \eqref{eq:fkmodel}, particles $X_{t-1}^{1:N}$ and weights $W_{t-1}^{1:N}$} that approximate the filtering distribution at time $t-1$ \;
	\For{$n \gets 1$ \KwTo $N$}{
		\underline{Resample}. Simulate $(A_t^{n,1}, A_t^{n,2})$ such that marginally each component is distributed according to $\mathcal{M}(W_{t-1}^{1:N})$\;
		\underline{Move}. Simulate $(X_t^{n,1}, X_t^{n,2})$ such that marginally the two components are distributed respectively according to $M_t(X_{t-1}^{A_t^{n,1}}, \dd x_t)$ and $M_t(X_{t-1}^{A_t^{n,2}}, \dd x_t)$\;
		Choose $L \sim \operatorname{Uniform}(\{1, 2\})$ \;
		Set $X_t^n \gets X_t^{n, L}$\;
		\underline{Calculate backward kernel}.\;
		\If{$X_t^{n,1} = X_t^{n,2}$}{
			$\btn{ITR}(n, \dd i_{t-1}) \gets \pr{\delta\px{ A_t^{n,1} } + \delta\px{ A_t^{n,2} }}/2$ \;
		}
		\Else
		{
			$\btn{ITR}(n, \dd i_{t-1}) \gets \delta\px{{A_t^{n,L}}}$  \;
	}
	}
	\reweight
	\KwOut{Particles $X_t^{1:N}$ and weights $W_t^{1:N}$ that approximate the filtering distribution at time $t$; backward kernel $\btn{ITR}$ that can be used in either Algorithm~\ref{algo:offline_generic} or \ref{algo:online_generic}}
\end{algo}

\subsubsection{Validity and stability}

The consistency of Algorithm~\ref{algo:intractable} follows straightforwardly from Theorem~\ref{thm:convergence_mcmc}. To produce a stable routine however, some conditions must be imposed on the couplings $(A_t^{n,1}, A_t^{n, 2})$ and $(X_t^{n,1}, X_t^{n,2})$. We want $A_t^{n,1}$ to be different from $A_t^{n,2}$ as frequently as possible. On the contrary, we aim for a coupling of the two distributions $M_t(X_{t-1}^{A_t^{n, 1}}, \cdot)$ and $M_t(X_{t-1}^{A_t^{n,2}}, \cdot)$ with high success rate so as to maximise the probability that $X_t^{n,1} = X_t^{n,2}$.
\begin{assumption}
	\label{asp:coupling:ancestors}
	There exists an $\epsa > 0$ such that
	$ \P(A_t^{n, 1} \neq A_t^{n, 2} | X_{t-1}^{1:N}) \geq \epsa. $
\end{assumption}
\begin{assumption}
	\label{asp:coupling:dynamics}
	There exists an $\epsd > 0$ such that
	\[\P(X_t^{n,2} = X_t^{n,1} | X_{t-1}^{1:N}, A_t^{n,1}, A_t^{n,2}, X_t^{n,1}) \geq \epsd \pr{1 \land \frac{m_t(\xpt, X_t^{n,1})}{m_t(\xpo, X_t^{n,1})}}. \]
\end{assumption}

The letters A and D in $\epsa$ and $\epsd$ stand for ``ancestors'' and
``dynamics''. Assumption~\ref{asp:coupling:dynamics} means that the user-chosen
coupling of $M_t(\xpo, \cdot)$ and $M_t(\xpt, \cdot)$ must be at least as
$\epsd$ times as efficient as their maximal couplings. For details on this
interpretation, see Proposition~\ref{conditional_proba_maximal_coupling_prop}
in the Supplement. In Lemma~\ref{lem:coupling_efficiency_symmetric}, we also show
that in spite of its appearance, Assumption~\ref{asp:coupling:dynamics} is
actually symmetric with regards to $X_t^{n,1}$ and $X_t^{n,2}$.

We are now ready to state the main theorem of this subsection (see
Supplement~\ref{proof:thm:intractable} for a proof).

\begin{thm}
	\label{thm:intractable}
	The kernels $\btn{ITR}$ generated by Algorithm~\ref{algo:intractable}
    satisfy the hypotheses of Theorem~\ref{thm:convergence_mcmc}. Thus, under
    Assumption~\ref{asp:Gbound}, Algorithm~\ref{algo:intractable} provides a
    consistent smoothing estimate. If, in addition, the Feynman-Kac model
    \eqref{eq:fkmodel} satisfies Assumptions~\ref{asp:mt_2ways_bound} and
    \ref{asp:g_2ways_bound} and the user-chosen couplings satisfy
    Assumptions~\ref{asp:coupling:ancestors} and~\ref{asp:coupling:dynamics},
    the kernels $\btn{ITR}$ also fulfil \eqref{eq:support_cond} and the
    smoothing estimates generated by Algorithm~\ref{algo:intractable} are stable.
\end{thm}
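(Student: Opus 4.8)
The plan is to establish the two conclusions separately: consistency through the two conditions of Theorem~\ref{thm:convergence_mcmc}, and stability through the support condition~\eqref{eq:support_cond} feeding into Theorem~\ref{thm:stability}. Throughout, the \emph{effective} filtering ancestor of the kept particle $X_t^n = X_t^{n,L}$ is $A_t^n \eqdef A_t^{n,L}$, so that it plays exactly the role of $A_t^n$ in Algorithm~\ref{algo:bootstrap}.

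For consistency, the first condition of Theorem~\ref{thm:convergence_mcmc} is immediate: the loop over $n$ in Algorithm~\ref{algo:intractable} processes each particle independently using only $X_{t-1}^{1:N}$ (and the weights $W_{t-1}^{1:N}$, themselves deterministic functions of $X_{t-1}^{1:N}$), so the triples $(X_t^n, A_t^n, \btn{ITR}(n,\cdot))$ are i.i.d.\ across $n$ with a law depending on $X_{t-1}^{1:N}$ only. The second (invariance) condition is the heart of this part, and I would verify it by a case split conditioned on the forward variables. On the failure event $\{X_t^{n,1}\neq X_t^{n,2}\}$ the kernel is $\delta_{A_t^{n,L}}$, so a draw $J_t^n$ equals $A_t^n$ deterministically and the equality of laws is trivial. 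On the success event $\{X_t^{n,1}=X_t^{n,2}\}$ the common value \emph{is} $X_t^n$ irrespective of $L$, the kernel is $(\delta_{A_t^{n,1}}+\delta_{A_t^{n,2}})/2$, and a draw $J_t^n$ is uniform on $\{A_t^{n,1},A_t^{n,2}\}$; since $L\sim\mathrm{Unif}\{1,2\}$ is independent of the coupled pair, $A_t^n=A_t^{n,L}$ is \emph{also} uniform on $\{A_t^{n,1},A_t^{n,2}\}$ given $X_t^n$. Hence $(J_t^n,X_t^n)\overset{d}{=}(A_t^n,X_t^n)$ given $X_{t-1}^{1:N}$, and Theorem~\ref{thm:convergence_mcmc} applies.

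For stability, I would first reduce~\eqref{eq:support_cond} to a meeting-plus-distinct-ancestor estimate. Two independent draws $(J_t^{n,1},J_t^{n,2})$ from $\btn{ITR}(n,\cdot)$ can differ only when the kernel has support size two, that is on $\{X_t^{n,1}=X_t^{n,2},\,A_t^{n,1}\neq A_t^{n,2}\}$, in which case they differ with conditional probability exactly $1/2$. Taking expectation over the kernel gives
\[\CProb{J_t^{n,1}\neq J_t^{n,2}}{X_{t-1}^{1:N},X_t^n}=\tfrac12\,\CProb{X_t^{n,1}=X_t^{n,2},\,A_t^{n,1}\neq A_t^{n,2}}{X_{t-1}^{1:N},X_t^n}.\]
I would then bound the right-hand conditional probability by a Bayes (sub-density) argument. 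The denominator is the marginal density of $X_t^n$ given $X_{t-1}^{1:N}$, which is exactly $r_t^N(X_t^n)=\sum_i W_{t-1}^i m_t(X_{t-1}^i,X_t^n)\leq\mbhigh$ by Assumption~\ref{asp:mt_2ways_bound}; the numerator is the $\lambda_t$-sub-density of $\{X_t^n=x,\ \text{meeting},\ A_t^{n,1}\neq A_t^{n,2}\}$. Writing $\pi(i,j)$ for the joint law of $(A_t^{n,1},A_t^{n,2})$, a coupling of two $\mathcal M(W_{t-1}^{1:N})$ marginals with $\sum_{i\neq j}\pi(i,j)\geq\epsa$ by Assumption~\ref{asp:coupling:ancestors}, and invoking Assumption~\ref{asp:coupling:dynamics} together with its symmetry (Lemma~\ref{lem:coupling_efficiency_symmetric}), the meeting sub-density given $A_t^{n,1}=i,\,A_t^{n,2}=j$ is at least $\epsd\min(m_t(X_{t-1}^i,x),m_t(X_{t-1}^j,x))\geq\epsd\mblow$. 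Summing over $i\neq j$ bounds the numerator below by $\epsa\epsd\mblow$, so the ratio is at least $\epsa\epsd\mblow/\mbhigh$, yielding $\epss=\epsa\epsd\mblow/(2\mbhigh)$ in~\eqref{eq:support_cond}; Theorem~\ref{thm:stability} then delivers both stability claims.

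The main obstacle is the conditioning in the last step: the support condition~\eqref{eq:support_cond} conditions on $X_t^n=X_t^{n,L}$, where the branch $L$ is random and the two coupled moves $X_t^{n,1},X_t^{n,2}$ need not agree, whereas Assumption~\ref{asp:coupling:dynamics} conditions on $(A_t^{n,1},A_t^{n,2},X_t^{n,1})$. Reconciling the two requires (i) making precise the sub-density, with respect to $\lambda_t$, of the mixed coupling law of $(X_t^{n,1},X_t^{n,2})$ on the diagonal $\{X_t^{n,1}=X_t^{n,2}\}$, so that the Bayes ratio above is well defined; and (ii) using the symmetry of $L$ and Lemma~\ref{lem:coupling_efficiency_symmetric} to argue that conditioning on $X_t^{n}$ rather than on a fixed branch costs nothing. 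The boundedness Assumptions~\ref{asp:mt_2ways_bound}--\ref{asp:g_2ways_bound} are precisely what convert the model-dependent ratio $1\land m_t(\xpt,\cdot)/m_t(\xpo,\cdot)$ into the uniform constant $\mblow/\mbhigh$, which is the step that prevents the meeting rate from degenerating as $t$ varies.
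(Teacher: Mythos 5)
Your proposal is correct, and while the consistency half coincides with the paper's argument (the paper compresses your case split into the phrase ``by construction of Algorithm~\ref{algo:intractable}''; your explicit treatment of the meeting/non-meeting cases and the independence of $L$ is exactly what that phrase hides), the stability half takes a genuinely different route. Both proofs start from the same reduction $\CProb{J_t^{n,1}\neq J_t^{n,2}}{X_{t-1}^{1:N},X_t^n}=\tfrac12\CProb{X_t^{n,1}=X_t^{n,2},\,A_t^{n,1}\neq A_t^{n,2}}{X_{t-1}^{1:N},X_t^n}$, but then diverge. The paper exploits the symmetry of $L$ to replace conditioning on $X_t^n$ by conditioning on the fixed branch $X_t^{n,1}$, applies Assumption~\ref{asp:coupling:dynamics} pointwise, and must then invoke a separate lemma (Lemma~\ref{lem:modified_epsa}) because conditioning on $X_t^{n,1}$ tilts the ancestor-coupling law by $m_t(X_{t-1}^{a},X_t^{n,1})$; controlling that tilt costs an extra factor $\mblow/\mbhigh$, giving $\epss=\tfrac12\epsa\epsd(\mblow/\mbhigh)^2$. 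You instead disintegrate directly with respect to $X_t^n$: the numerator is the diagonal (meeting) sub-density integrated against the \emph{untilted} ancestor law, bounded below by $\epsa\epsd\mblow$ since Assumption~\ref{asp:coupling:dynamics} multiplied by the branch-one transition density yields the symmetric bound $\epsd\min\bigl(m_t(X_{t-1}^i,x),m_t(X_{t-1}^j,x)\bigr)$, while the denominator is $r_t^N(x)\leq\mbhigh$. This bypasses Lemma~\ref{lem:modified_epsa} entirely and produces the sharper constant $\epss=\epsa\epsd\mblow/(2\mbhigh)$, better than the paper's by a factor $\mbhigh/\mblow$; since Theorem~\ref{thm:stability} only needs some $\epss>0$, both constants are equally serviceable, but your bookkeeping pays the $\mblow/\mbhigh$ price once rather than twice. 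Two small remarks: the obstacles you flag at the end are more routine than you suggest — obstacle (ii) dissolves automatically because on the meeting event $X_t^n$ equals the common value irrespective of $L$, so no symmetrization is needed, and consequently Lemma~\ref{lem:coupling_efficiency_symmetric} is not actually required anywhere in your argument (Assumption~\ref{asp:coupling:dynamics} read from branch one already gives the symmetric minimum); and obstacle (i) is the standard fact that the diagonal sub-measure is dominated by the marginal law of $X_t^{n,1}$, hence admits a $\lambda_t$-density.
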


\subsubsection{Good ancestor couplings}
\label{subsect:good_ancestor_couplings}
It is notable that Assumption~\ref{asp:coupling:ancestors} only considers the
event $A_t\eno \neq A_t\ent$, which is a pure index condition that does not
take into account the underlying particles $X_{t-1}^{A_t\eno}$ and
$X_{t-1}^{A_t\ent}$. Indeed, if smoothing algorithms prevent degeneracy by
creating multiple ancestors for a particle, we would expect that their
separation (i.e.\ that they are far away in the state space $\mathcal X_{t-1}$,
e.g.\ $\mathbb R^d$) is critical to the performance. Surprisingly, it is
unnecessary: two very close particles (in $\mathbb R^d$) at time $t-1$ may have
ancestors far away at time $t-2$ thanks to the mixing of the model.

We advise choosing an ancestor coupling  $(A_t\eno, A_t\ent)$ such that the distance between $X_{t-1}^{A_t\eno}$ and $X_{t-1}^{A_t\ent}$ is small. It will then be easier to design a dynamic coupling of $M_t(X_{t-1}^{A_t\eno}, \cdot)$ and $M_t(X_{t-1}^{A_t\ent}, \cdot)$ with a high success rate. Furthermore, simulating the dynamic coupling with two close rather than far away starting points can also take less time  when, for instance, the dynamic involves multiple intermediate steps, but the two processes couple early. One way to achieve an ancestor coupling with the aforementioned property is to first simulate $A_t\eno \sim \mathcal M(W_{t-1}^{1:N})$, then move $A_t\eno$ through an MCMC algorithm which keeps invariant $\mathcal M(W_{t-1}^{1:N})$ and set the result to $A_t\ent$. It suffices to use a proposal looking at indices whose underlying particles are close (in $\mathbb R^d$) to $X_{t-1}^{A_t\eno}$. Finding nearby particles are efficient if they are first sorted using the Hilbert curve, hashed using locality-sensitive hashing or put in a KD-tree \citep[see][for a comprehensive review]{samet2006foundations}. In the context of particle filters, such techniques have been studied for different purposes in \citet{MR3351446}, \citet{jacob2019smoothing} and \citet{sen2018coupling}.

\subsubsection{Conditionally-correlated version} In Algorithm~\ref{algo:intractable}, the ancestor pairs $(A_t\eno, A_t\ent)_{n=1}^N$ are conditionally independent given $\mathcal F_t^-$ and the same holds for the particles $(X_t^n)_{n=1}^N$. These conditional independences allow easier theoretical analysis, in particular, the casting of Algorithm~\ref{algo:intractable} in the framework of Theorems~\ref{thm:convergence_mcmc} and~\ref{thm:stability}. However, they are not optimal for performance in two important ways: (a) they do not allow keeping both $X_t\eno$ and $X_t\ent$ when the two are not equal, and (b) the set of ancestor variables $(A_t^{n,1})_{n=1}^N$ is multinomially resampled from $\px{1, 2, \ldots, N}$ with weights $W_{t-1}^{1:N}$. We know that multinomial resampling is not the ideal scheme, see Supplement~\ref{apx:alternative_resampling_schemes} for discussion.

Consequently, in practice, we shall allow ourselves to break free from
conditional independence. The resulting procedure is described in
Algorithm~\ref{algo:intractable_practice} 
(Supplement~\ref{apx:intractable_pratice}). Despite a lack of rigorous
theoretical support, this is the algorithm that we will use in
Section~\ref{sec:numexp} since it enjoys better performance and it constitutes
a fair comparison with practical implementations of the standard particle
filter, which are mostly based on alternative resampling schemes.

\section{Numerical experiments}
\label{sec:numexp}
\subsection{Linear Gaussian state-space models} \label{numexp_linear_gaussian} Linear Gaussian models constitute a particular class of state space models. They are characterised by Markov dynamics that are Gaussian and observations that are projection of hidden states plus some Gaussian noises. Supplement~\ref{apx:linear_gaussian_models} defines, for different components of these models, the notations that we shall use here. In this section, we consider an instance described in \citet{Guarniero2017}, where the matrix $F_X$ satisfies $F_X[i,j] = \alpha^{1 + |i-j|}$ for some $\alpha$. We consider the problem with $\operatorname{dim}_X = \operatorname{dim}_Y = 2$ and the observations are noisy versions of the hidden states with $C_Y$ being $\sigma_Y^2$ times the identity matrix of size $2$. Unless otherwise specified, we take $\alpha=0.4$ and $\sigma_Y^2 = 0.5$. 

In this section, we focus on the performance of different online smoothers
based on either genealogy tracking, pure/hybrid rejection sampling or MCMC. 
Rejection-based online smoothing amounts to the PaRIS algorithm, for which we
use $\tilde N=2$ for the $\btn{PaRIS}$ kernel. We take $T = 3000$ and simulate
the data from the model. The benchmark additive function is simply
    $\varphi_t(x_{0:t}) = \sum_{s=0}^t x_s(0)$
where $x_s(0)$ is the first coordinate of the $\mathbb R^2$ vector $x_s =
[x_s(0), x_s(1)]$. For a study of offline smoothers (including FFBS), see
Supplement~\ref{numexp_linear_gaussian_offline}. In all programs here and there,
we choose $N=1000$ and use systematic resampling for the forward particle
filters (see section~\ref{apx:alternative_resampling_schemes}). Regarding MCMC
smoothers, we employ the kernels $\btn{IMH}$ or $\btn{IMHP}$ consisting of only
one MCMC step. All results are based on $150$ independent runs.

Although our theoretical results are only proved for the bootstrap filter, we
stress throughout that some of them extend to other filters as well. Therefore,
we will also consider guided particle filters in the simulations. An
introduction to this topic can be found in \citet[Chapter 10.3.2]{SMCbook},
where the expression for the optimal proposal is also provided. In linear
Gaussian models, this proposal is fully tractable and is the one we use. 

To present efficiently the combination of two different filters (bootstrap and
guided) and four different algorithms (naive genealogy tracking, pure/hybrid
rejection and MCMC) we use the following abbreviations: ``B''
for bootstrap, ``G'' for guided, ``N'' for naive genealogy tracking, ``P'' for
pure rejection, ``H'' for hybrid rejection and ``M'' for MCMC. For instance, the
algorithm referred to as ``BM'' uses the bootstrap filter for the forward pass
and the MCMC backward kernels to perform smoothing. Furthermore, the letter
``R'' will refer to the rejection kernel whenever the distinction between pure
rejection and hybrid rejection is not necessary. (Recall that the two rejection
methods produce estimators with the same distribution.)

Figure~\ref{fig:lg_online_small_iq} shows the squared interquartile range for the online smoothing estimates $\Q_t(\varphi_t)$ with respect to $t$. It verifies the rates of Theorem~\ref{thm:stability}, although linear Gaussian models are not strongly mixing in the sense of Assumptions~\ref{asp:mt_2ways_bound} and \ref{asp:g_2ways_bound}: the grid lines hint at a variance growth rate of $\mathcal O(T)$ for the MCMC and reject-based smoothers and of $\mathcal O(T^2)$ for the genealogy tracking ones. Unsurprisingly guided filters have better performance than bootstrap.

\begin{figure}
	\centering
	\includegraphics[scale=0.5]{./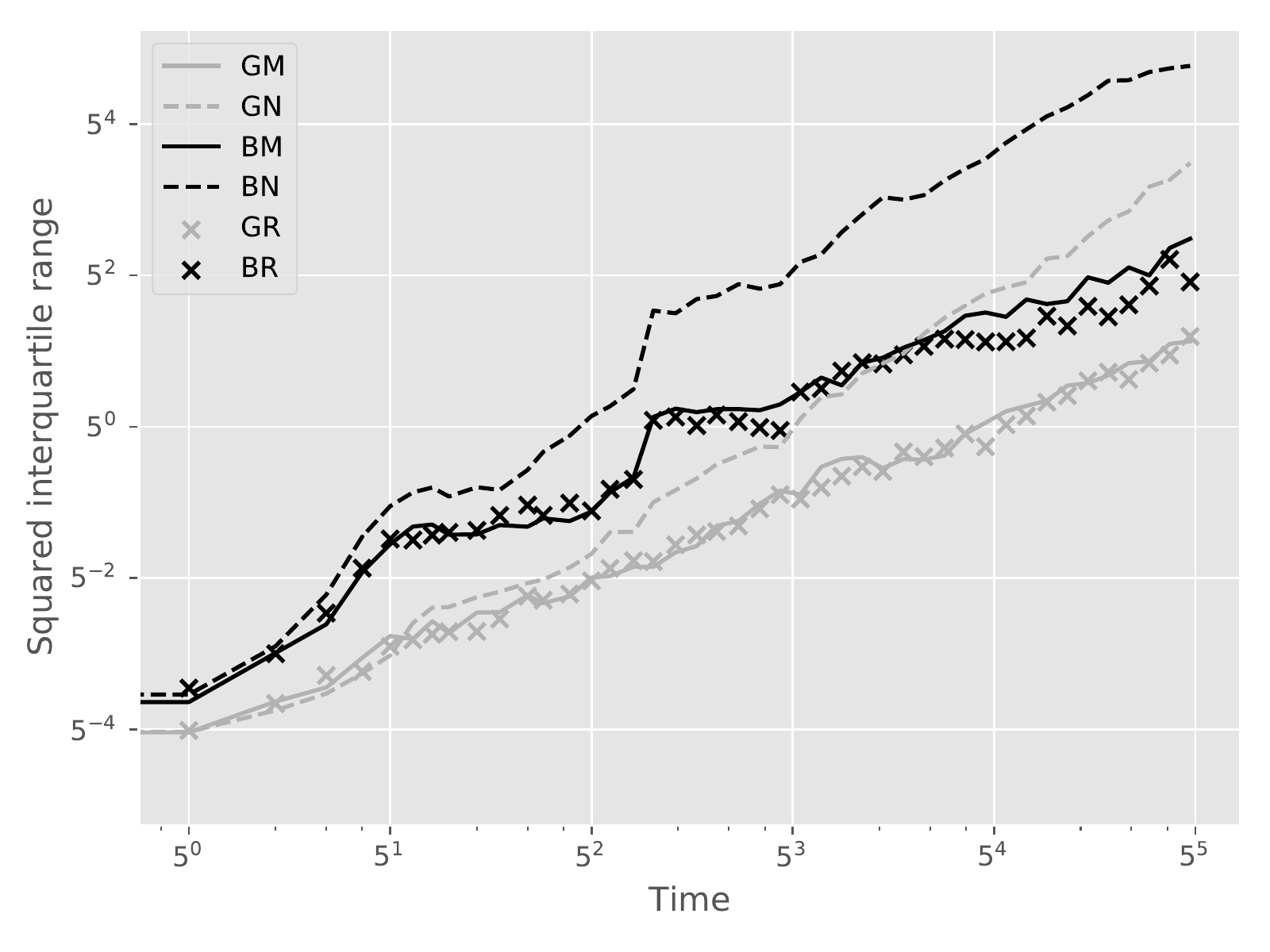}
	\caption{Squared interquartile range of the estimators $\Q_t(\varphi_t)$ with respect to $t$, for different online smoothing algorithms. The model is linear Gaussian with parameters specified in section~\ref{numexp_linear_gaussian}. See text for full explanation of the legend. For readability, the curves are down-sampled to $50$ points before being drawn.}
	\label{fig:lg_online_small_iq}
\end{figure}

Figure~\ref{fig:lg_online_small_exec} show box-plots of the execution
time (divided by $NT$) for different algorithms over $150$ runs. By execution
time, we mean the number of Markov kernel transition density evaluations. We
see that the bootstrap particle filter coupled with pure rejection sampling has
a very heavy-tailed execution time. This behaviour is expected as per
Proposition~\ref{prop:inf_expectation}. Using the guided particle filter seems
to fare better, but Figure~\ref{fig:lg_online_large_exec} (for the same model
but with $\sigma_Y^2 = 2$) makes it clear that this cannot be relied on either.
Overall, these results highlight two fundamental problems with pure rejection
sampling: the computational time has heavy tails and depends on the type of
forward particle filter being used.

\begin{figure}
	\centering
	\includegraphics[scale=0.435]{./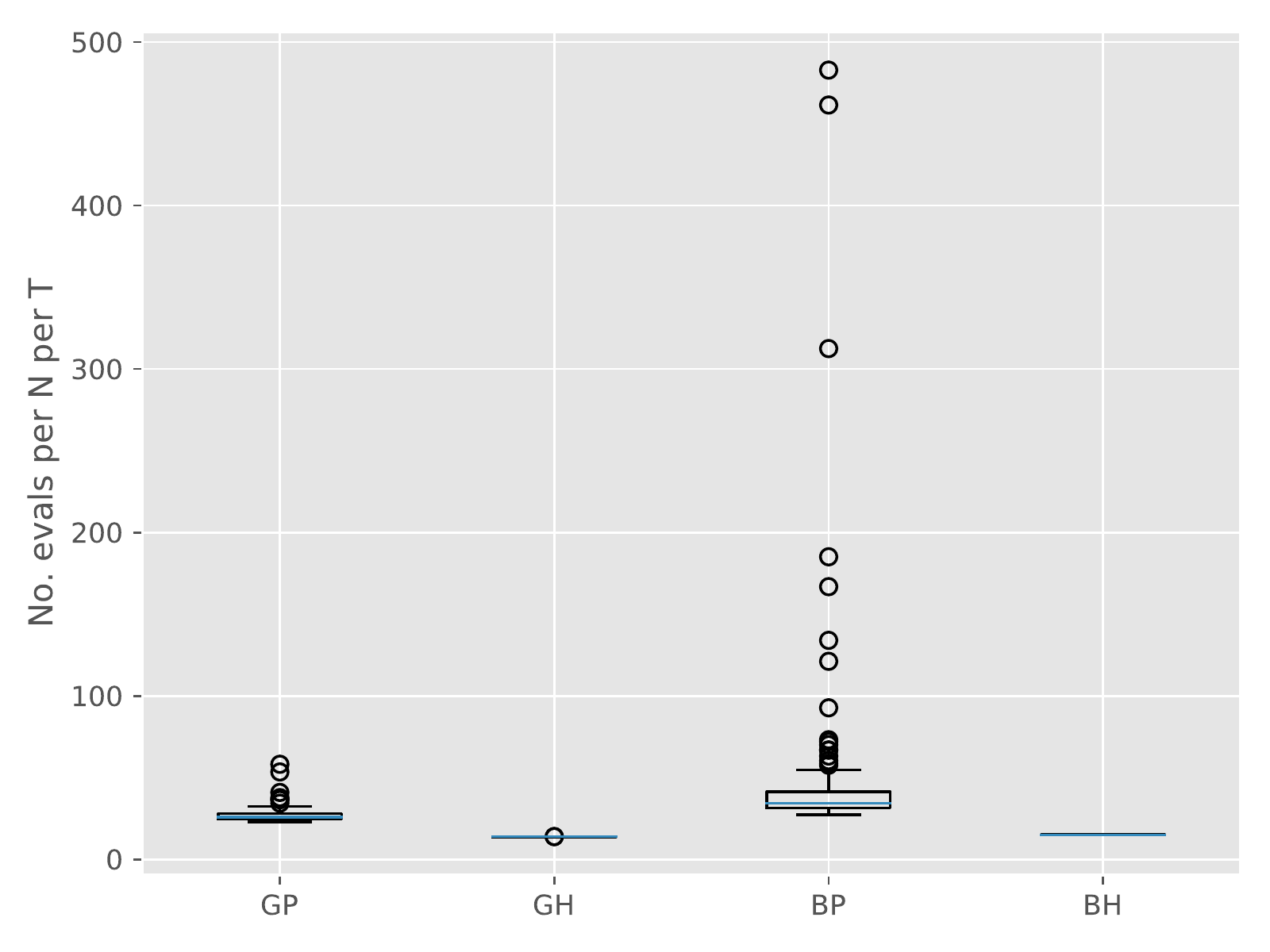}
	\includegraphics[scale=0.435]{./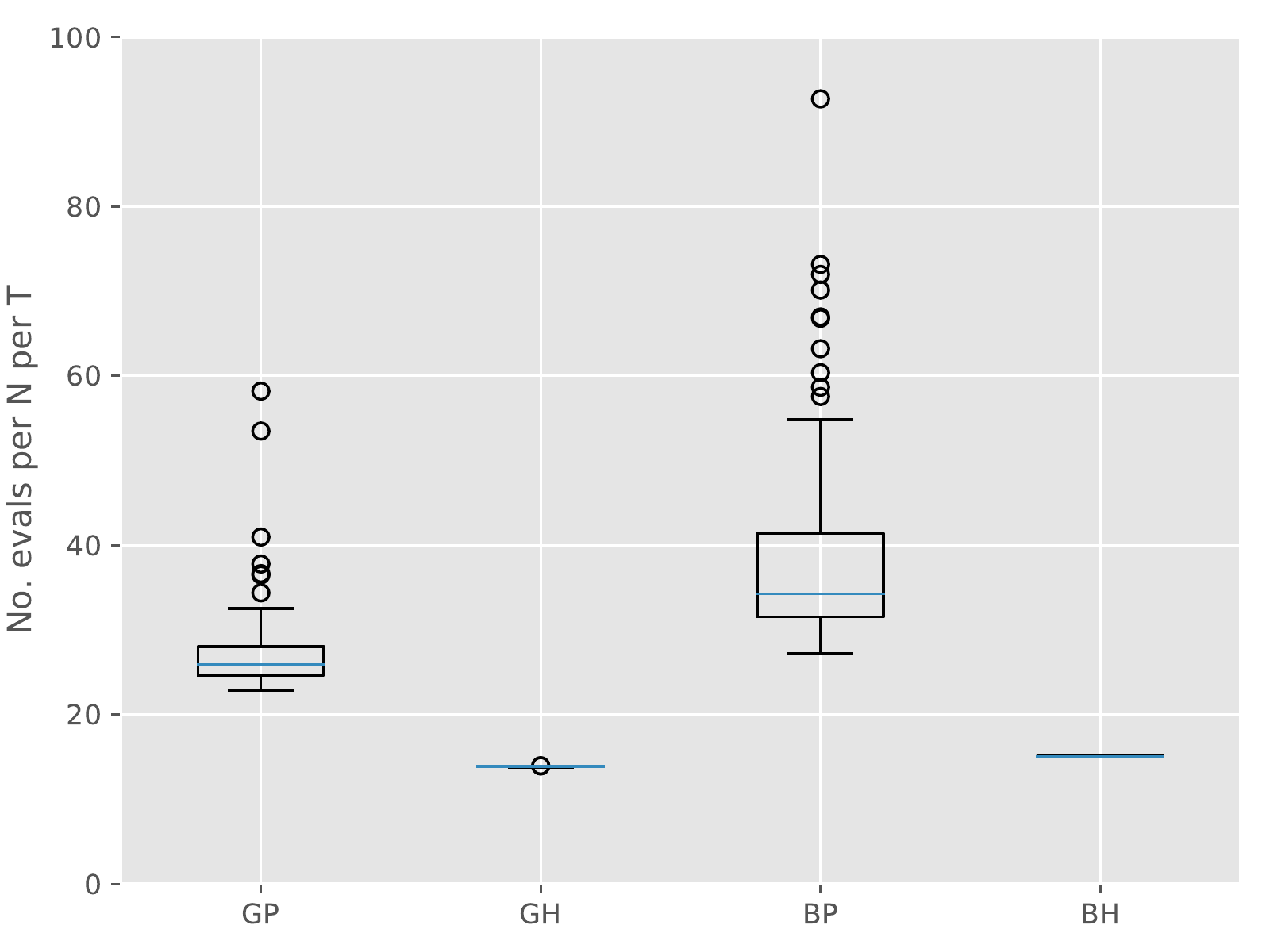}
	\caption{Box plots (based on $150$ runs) of averaged execution times (numbers of transition density evaluations divided by $NT$) for different algorithms on the linear Gaussian model of section~\ref{numexp_linear_gaussian}. Left: original figure, right: zoomed-in version.}
	\label{fig:lg_online_small_exec}
\end{figure}

\begin{figure}
	\centering
	\includegraphics[scale=0.5]{./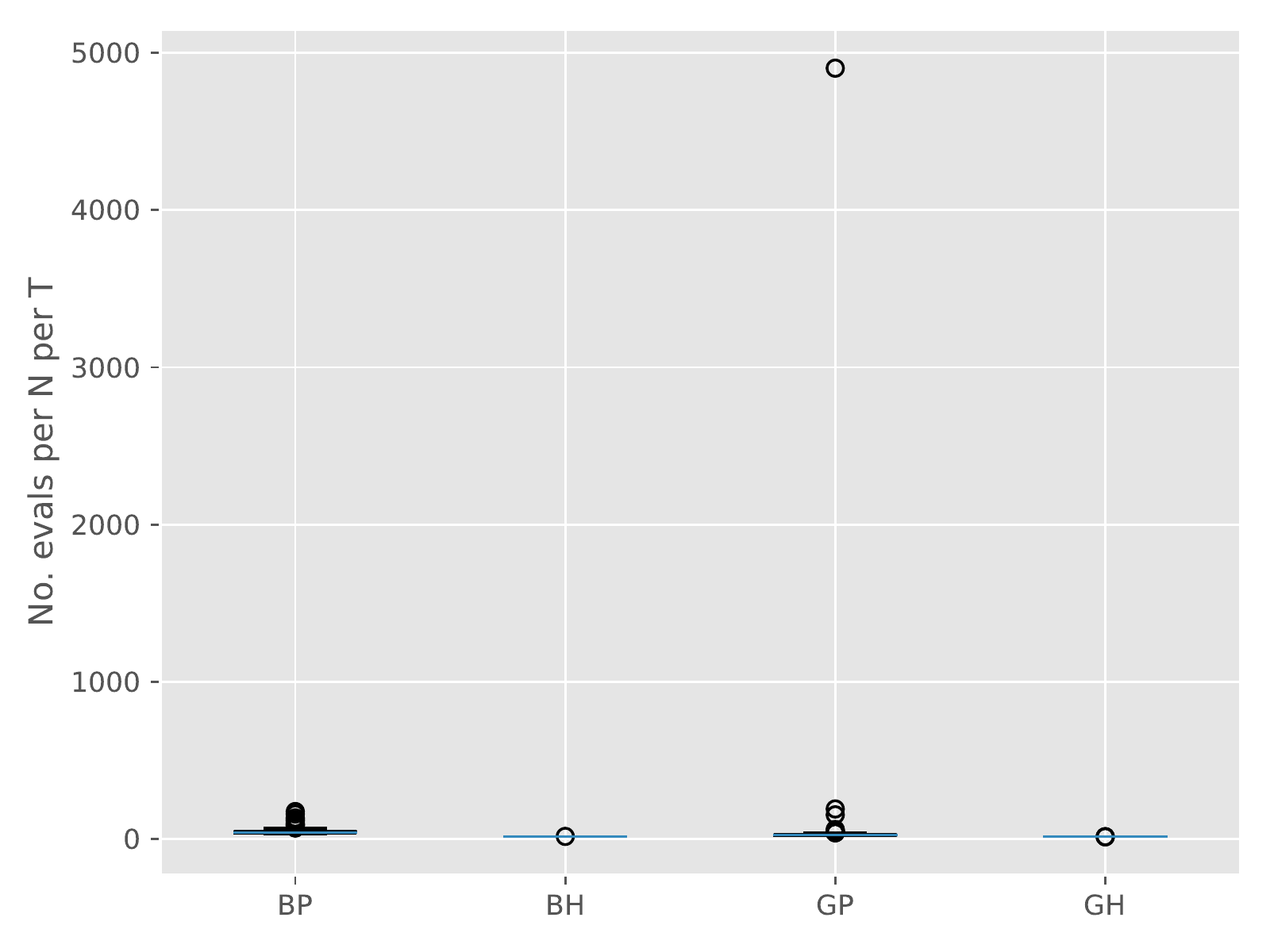}
	\caption{Same as Figure~\ref{fig:lg_online_small_exec}, but for the modified model where $\sigma_Y^2 = 2$.}
	\label{fig:lg_online_large_exec}
\end{figure}

On the other hand, hybrid rejection sampling, despite having random execution time in principle, displays a very consistent number of transition density evaluations over different independent runs. Thus it is safe to say that the algorithm has a virtually deterministic execution time. The catch is that the average computational load (which is around $16$ in Figure~\ref{fig:lg_online_small_exec}) cannot be easily calculated beforehand. In any case, it is much larger than the value $1$ of MCMC smoothers (since only $1$ MCMC step is performed in the kernel $\btn{IMHP}$); whereas the performance (Figure~\ref{fig:lg_online_small_iq}) is comparable.

The bottom line is that MCMC smoothers should be the default option, and one MCMC step seems to be enough. If for some reason one would like to use rejection-based methods, using hybrid rejection is a must.

\subsection{Lotka-Volterra SDE} \label{numexp_lk_sde} Lotka-Voleterra models
(originated in \citealp{lotka1925elements} and
\citealp{volterra1928variations}) describe the population fluctuation of
species due to natural birth and death as well as the consumption of one
species by others. The emblematic case of two species is also known as the
predator-prey model. In this subsection, we study the stochastic differential
equation (SDE) version that appears in \citet{hening2018foodchains}. Let $X_t =
(X_t(0), X_t(1))$ represent respectively the populations of the prey and the
predator at time $t$ and let us consider the dynamics
\begin{equation}
    \label{eq:lotka_sde}
    \left\{
        \begin{aligned}
            \dd X_t(0) &= \bigl[ \beta_0 X_t(0) - \frac 12 \tau_0 [X_t(0)]^2 &&- \tau_1 X_t(0) X_t(1) \bigr] \dd t + X_t(0) \dd E_t(0) \\
            \dd X_t(1) &= \bigl[ \qquad \quad -\beta_1 X_t(1) &&+ \tau_1 X_t(0) X_t(1) \bigr] \dd t + X_t(1) \dd E_t(1)
        \end{aligned}
    \right.
\end{equation}
where $E_t = \Gamma W_t$ with $W_t$ being the standard Brownian motion in
$\mathbb R^2$ and $\Gamma$ being some $2 \times 2$ matrix. The parameters
$\beta_0$ and $\beta_1$ are the natural birth rate of the prey and death rate
of the predator. The predator interacts with (eats) the prey at rate $\tau_1$.
The quantity $\tau_0$ encodes intra-species competition in the prey population.
The $\frac 12$ in its parametrisation is to line up with the Lotka Volterra
jump process in $\mathbb Z^2$ where the population sizes are integers and the interaction term
becomes $\tau_0 X_t(0) [X_t(0) - 1]/2$.

The state space model is comprised of the process $X_t$ and its noisy
observations $Y_t$ recorded at integer times. The Markov dynamics cannot be
simulated exactly, but can be approximated through (Euler) discretisation. Nevertheless, the Euler transition density $m_t^\mathrm E(x_{t-1}, x_t)$ remains intractable (unless the step size is exactly $1$). Thus, the algorithms presented in Subsection~\ref{sec:intractable} are useful. The missing bit is a method to efficiently couple $m_t^\mathrm E(x_{t-1}, \cdot)$ and $m_t^\mathrm E(x_{t-1}', \cdot)$, which we carefully describe in Supplement~\ref{apx:coupling_euler}.

We consider the model with $\tau_0 = 1/800$, $\tau_1 = 1/400$, $\beta_0 = 0.3125$ and $\beta_1 = 0.25$. The matrix $\Gamma$ is such that the covariance matrix of $E_1$ is $\begin{bmatrix}
1/100 & 1/200 \\
1/200 & 1/100
\end{bmatrix}$. The observations are recorded on the log scale with Gaussian error of covariance matrix $\begin{bmatrix}
0.04 & 0.02\\
0.02 & 0.04
\end{bmatrix}$. The distribution of $X_0$ is two-dimensional normal with mean $[100, 100]$ and covariance matrix $\begin{bmatrix}
100 & 50\\
50 & 100
\end{bmatrix}$. This choice is motivated by the fact that the preceding parameters give the stationary population vector $[100, 100]$. According to \citet{hening2018foodchains}, they also guarantee that neither animal goes extinct almost surely as $t \to \infty$.

By discretising \eqref{eq:lotka_sde} with time step $\delta = 1$, one can get some very rough intuition on the dynamics. For instance, per second there are about $31$ preys born. Approximately the same number die (to maintain equilibrium), of which $6$ die due to internal competition and $25$ are eaten by the predator. The duration between two recorded observations corresponds more or less to one-third generation of the prey and one-fourth generation of the predator. The standard deviation of the variation due to environmental noise is about $10$ individuals per observation period, for each animal.

Again, these intuitions are highly approximate. For readers wishing to get more familiar with the model, Supplement~\ref{more_figures_sde} contains real plots of the states and the observations; as well as data on the performance of different smoothing algorithms for moderate values of $T$. We now showcase the results obtained in a large scale problem where $T=3000$ and the data is simulated from the model.

We consider the additive function 
$ \varphi_t(x_{0:t}) \eqdef \sum_{s=0}^t \ps{x_s(0) - 100}$. 
Figure~\ref{fig:sde_big_final_box} represents using box
plots the distributions of the estimators for $\Q_T(\varphi_T)$ using either
the genealogy tracking smoother (with systematic resampling; see
Supplement~\ref{apx:alternative_resampling_schemes}) or
Algorithm~\ref{algo:intractable_practice}. Our proposed smoother greatly
reduces the variance, at a computational cost which is empirically $1.5$ to $2$
times greater than the naive method. Since we used Hilbert curve to design good
ancestor couplings (see Section~\ref{subsect:good_ancestor_couplings}),
coupling of the dynamics succeeds $80 \%$ of the time. As discussed in the
aforementioned section, starting two diffusion dynamics from nearby points make
them couple earlier, which reduces the computational load afterwards.

\begin{figure}
	\centering
	\includegraphics[scale=0.5]{./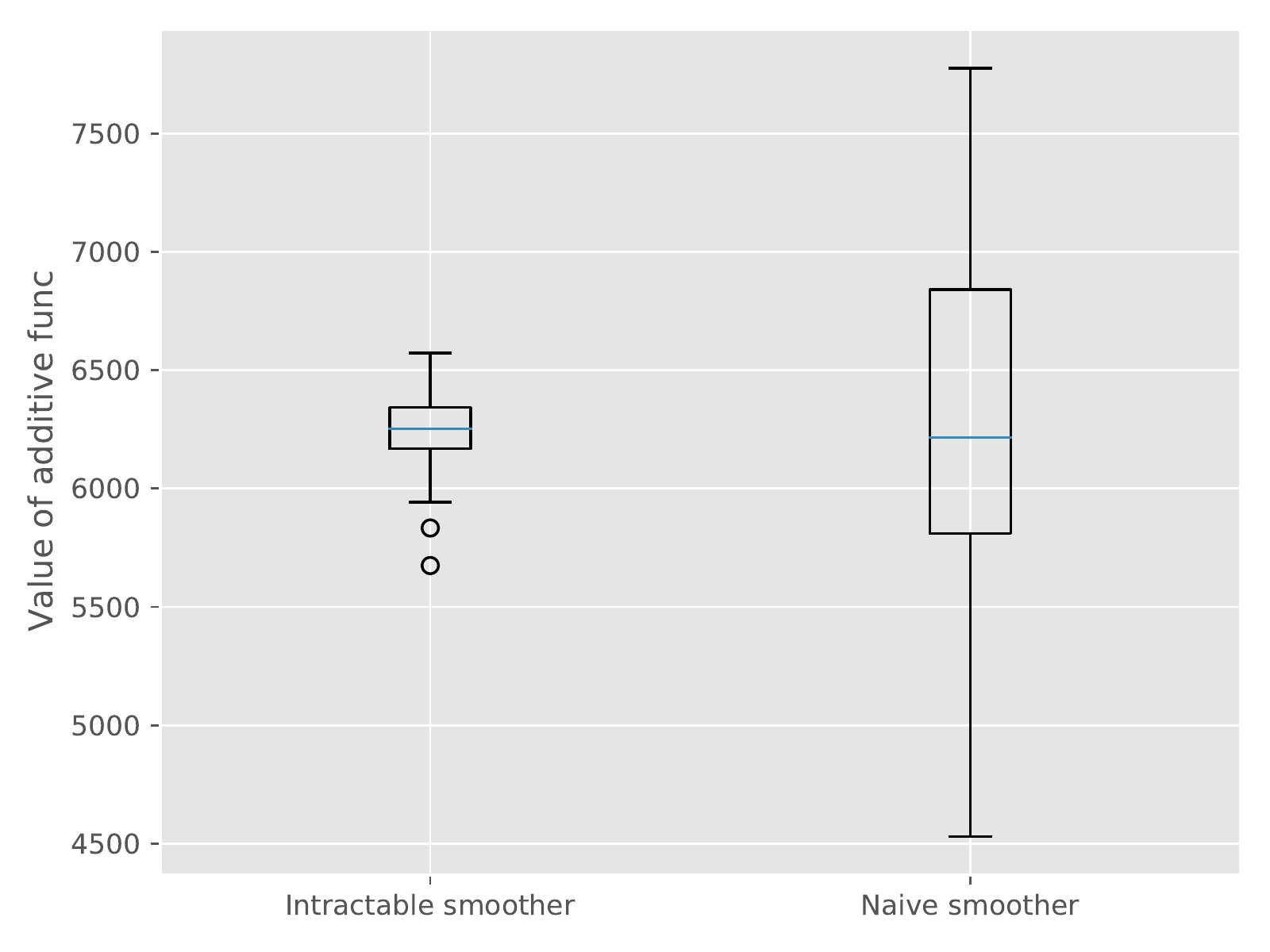}
	\caption{Box plot of estimators (over $50$ independent runs with $N=1000$ particles) for $\Q_T(\varphi_T)$ in the Lotka-Volterra SDE model with $T = 3000$. They are calculated using either the naive genealogy tracking smoother or our smoother developed for intractable models (Algorithm~\ref{algo:intractable_practice}).}
	\label{fig:sde_big_final_box}
\end{figure}

Figure~\ref{fig:sde_big_iq} plots with respect to $t$ the squared interquartile range of the two methods for the estimation of $\Q_t(\varphi_t)$. Grid lines hint at a quadratic growth for the genealogy tracking smoother (as analysed in \citealp[Sect. 1]{Olsson2017}) and a linear growth for the kernel $\btn{ITRC}$ (as described in Theorem~\ref{thm:stability}).

\begin{figure}
	\centering
	\includegraphics[scale=0.5]{./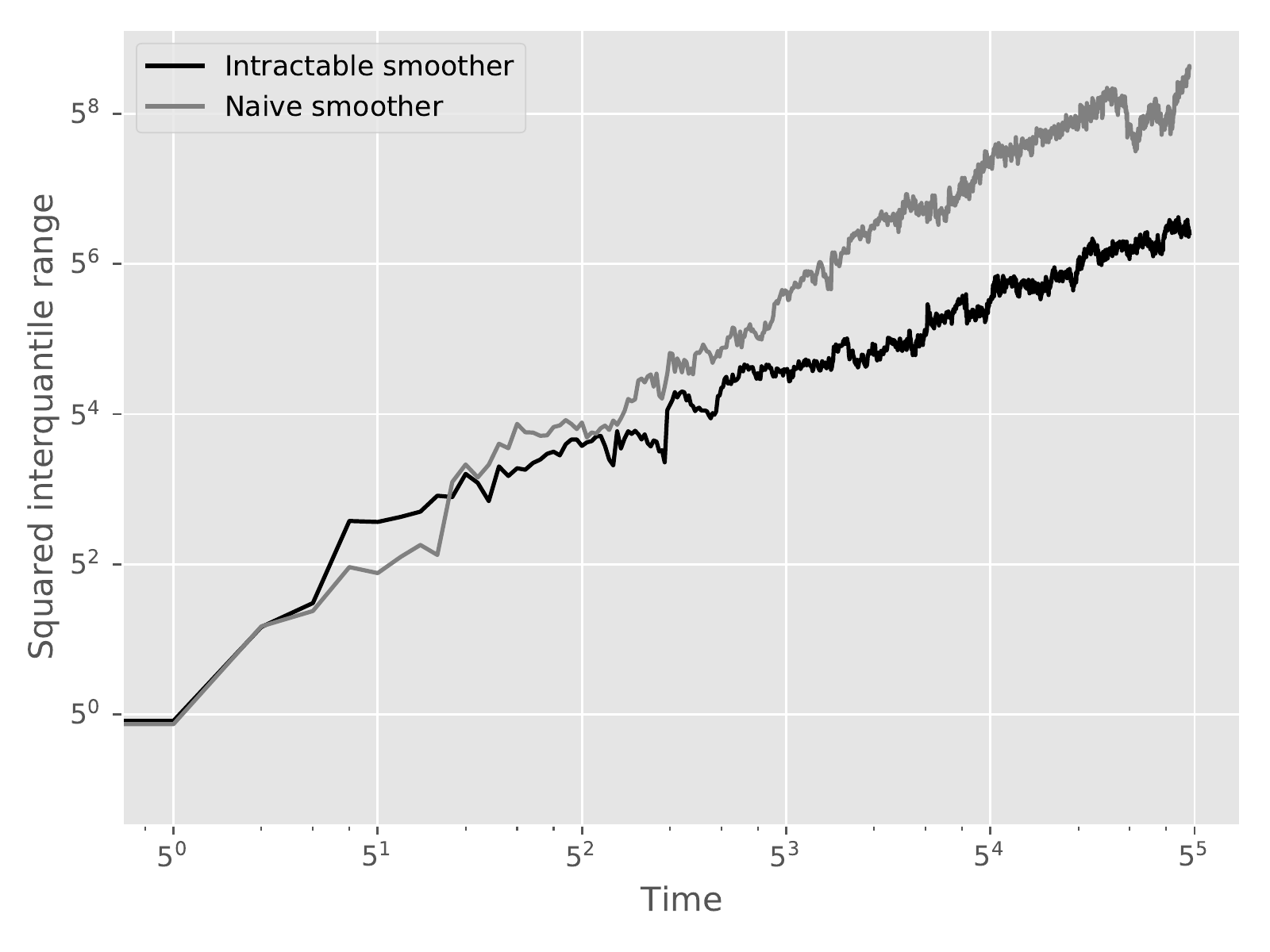}
	\caption{Squared interquartile range for the genealogy tracking smoother and our proposed one. Same context as in Figure~\ref{fig:sde_big_final_box}}
	\label{fig:sde_big_iq}
\end{figure}

Finally, Figure~\ref{fig:sde_big_ess} (Supplement~\ref{more_figures_sde}) shows properties of the effective sample size (ESS) ratio for this model. In a nutshell, while being globally stable (between $40\%$ and $70\%$), it has a tendency to drift towards near $0$ from time to time due to unusual data points. At these moments, resampling kills most of the particles and aggravates the degeneracy problem for the naive smoother. As we have seen in the above figures, systematic resampling is not enough to mitigate this in the long run.

\section{Conclusion}
\label{sec:conclusion}
\subsection{Practical recommendations}

Our first recommendation does not concern the smoothing algorithm per se. It is of
paramount importance that the particle filter used in in the preliminary
filtering step performs reasonably well, since its output defines the support
of the approximations generated by the subsequent smoothing algorithm. 
(Standard recommendations to obtain good performance from a particle filter
are to increase $N$, or to use better proposal distributions, or both.)

When the  transition density is tractable, we recommend the MCMC smoother by
default (rather than even the standard, $\bigO(N^2)$ approach). It has a
deterministic, $\bigO(N)$ complexity, it does not require the transition
density to be bounded, and it seems to perform well even with one or two MCMC
steps.  If one still wants to use the rejection smoother instead,  it is safe
to say that there is no reason not to use the hybrid method. 

Although the assumptions under which we prove the stability of the smoothing
estimates are strong, the general message still holds. The Markov kernel and
the potential functions must make the model forget its past in some ways.
Otherwise, we get an unstable model for which no smoothing methods can work.
The rejection sampling -- based smoothing algorithms can therefore serve as the
ultimate test. Since they simulate exactly independent trajectories given the
skeleton, there is no hope to perform better, unless one switches to another
family of smoothing algorithms.

For intractable models, the key issue is to design couplings with high meeting
probability. Fortunately, the inherent chaos of the model makes it possible to
choose two very close starting points for the dynamics and thus easy to obtain
a reasonable meeting probability. If further difficulties persist, there is a practical (and very heuristic)
recipe to test whether one coupling of $M_t(x, \cdot)$ and $M_t(x', \cdot)$ is
close to optimal. It consists in approximating $M_t(x, \cdot)$ and $M_t(x',
\cdot)$ by Gaussian distributions and deduce the optimal coupling rate from
their total variation distance. There is no closed formula for the total
variation distance between two Gaussian distributions in high dimensions.
However, it can be reliably estimated using the geometric interpretation of the
total variation distance being one minus the area of the intersection created
by the corresponding density graphs. In this way, one can get a very rough idea
of to what extent a certain coupling realises the meeting potential that the
two distributions have. If the coupling seems good and the trajectories still
look degenerate, it can very well be that the model is unstable.

\subsection{Further directions}
The major limitation of our work is the exclusive theoretical analysis under the bootstrap particle filter. Moreover, we require that the $N$ new particles generated at step $t$ are conditionally independent given previous particles at time $t-1$. This excludes practical optimisations like systematic resampling and  Algorithm~\ref{algo:intractable_practice}. Finally, the backward sampling step is also used in other algorithms (in
particular Particle Markov Chain Monte Carlo, see \citealp{PMCMC}) and it would
be interesting to see to what extent our techniques can be applied there.

\subsection{Data and code}
The code used to run numerical experiments is available at
\url{https://github.com/hai-dang-dau/backward-samplers-code}. Some of the algorithms
are already available in an experimental branch of the \verb+particles+ Python
package at \url{https://github.com/nchopin/particles}.

\section*{Acknowledgements}

The first author acknowledges a CREST PhD scholarship via AMX funding, and
wish to thank the members of his PhD defense jury (St\'ephanie Allassoni\`ere,
Randal Douc, Arnaud Doucet, Anthony Lee, Pierre del Moral, Christian Robert)
for helpful comments on the corresponding chapter in his thesis. 

We also thank Adrien Corenflos, Samuel Duffield, the associate  editor, 
and the referees for their comments on a preliminary version of the paper.

\bibliography{complete.bib,new_entries.bib}
\bibliographystyle{apalike}

\appendix
\newpage
\renewcommand{\appendixname}{Supplement}

\section{Additional notations}
\label{sec:apx_notations}
This section defines new notations that do not appear in the main text (except notations for linear Gaussian models) but are used in the Supplement.

\subsection{Linear Gaussian models} \label{apx:linear_gaussian_models} 

Let $\dimx$ and $\dimy$ be two strictly positive integers and $F_X$ and $F_Y$
be two full-rank matrices of sizes $\dimx \times \dimx$ and $\dimy \times
\dimx$ respectively. Let $C_X$ and $C_Y$ be two symmetric positive definite
matrices of respective sizes $\dimx \times \dimx$ and $\dimy \times \dimy$. A
linear Gaussian state space model has the underlying Markov process defined by
\[ X_t | X_{0:t-1} \sim \mathcal N(F_X X_{t-1}, C_X), \] 
where $X_0$ also
follows a Gaussian distribution; and admits the observation process 
\[Y_t | X_t \sim \mathcal N(F_Y X_t, C_Y). \] 
The predictive ($X_t$ given $Y_{0:t-1}$),
filtering ($X_t$ given $Y_{0:t}$) and smoothing ($X_t$ given $Y_{0:T}$)
distributions are all Gaussian and their parameters can be explicitly
calculated via recurrence formulas \citep{Kalman1960, KalBuc}. We shall denote
their respective mean vectors and covariance matrices by $(\mu_t^\mathrm{pred},
\Sigma_t^\mathrm{pred})$,
$(\mu_t^\mathrm{filt}, \Sigma_t^\mathrm{filt})$ and $(\mu_t^\mathrm{smth},
\Sigma_t^\mathrm{smth})$. In particular, the starting distribution $X_0$ is
$\mathcal N(\mu_0^\mathrm{pred}, \Sigma_0^\mathrm{pred})$.

\subsection{Total variation distance}\label{apx:tv}

Let $\mu$ and $\nu$ be two probability measures on $\mathcal X$. The total variation distance between $\mu$ and $\nu$, sometimes also denoted $\operatorname{TV}(\mu, \nu)$, is defined as $\norm{\mu - \nu}_{\operatorname{TV}}:= \sup_{f: \mathcal X \to [0,1]} \abs{\mu(f) - \nu(f)}$. The definition remains valid if $f$ is restricted to the class of indicator functions on measurable subsets of $\mathcal X$. It implies in particular that $\abs{\mu(f) - \nu(f)} \leq \norm{f}_{\mathrm{osc}}\operatorname{TV}(\mu, \nu)$.

Next, we state a lemma summarising basic properties of the total variation distance and defining coupling-related notions (see, e.g.\ Proposition 3 and formula (13) of \citet{roberts2004general}). While the last property (covariance bound) is not in the aforementioned reference and does not seem popular in the literature, its proof is straightforward and therefore omitted.
\begin{lem}
	\label{lem:properties_TV}
	The total variation distance has the following properties:
	\begin{itemize}
		\item (Alternative expressions.) If $\mu$ and $\nu$ admit densities
            $f(x)$ and $g(x)$ respectively with reference to a dominating measure $\lambda$, we have
		\[\operatorname{TV}(\mu, \nu) = \frac 12 \int \abs{f(x)-g(x)} \lambda (\dd x) = 1 - \int \min(f(x), g(x)) \lambda (\dd x). \]
		\item (Coupling inequality \& maximal coupling.) For any pair of random variables $(M, N)$ such that $M\sim \mu$ and $N \sim \nu$, we have
		\[\P(M \neq N) \geq \operatorname{TV}(\mu, \nu). \]
		There exist pairs $(M^*, N^*)$ for which equality holds. They are called \textnormal{maximal couplings} of $\mu$ and $\nu$.
		\item (Contraction property.) Let $(X_n)$ be a Markov chain with invariant measure $\mu^\star$. Then
		\[\operatorname{TV}(X_n, \mu^*) \geq \operatorname{TV}(X_{n+1}, \mu^*). \]
		\item (Covariance bound.) For any pair of random variables $(M,N)$ such that $M \sim \mu$ and $N \sim \nu$ and real-valued functions $h_1$ and $h_2$, we have
		\[\abs{\Cov(h_1(M), h_2(N))} \leq 2\infnorm{h_1} \infnorm{h_2} \operatorname{TV}\pr{(M,N), \mu \otimes \nu}. \]
	\end{itemize}
\end{lem}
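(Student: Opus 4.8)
The plan is to treat the four items in turn, using throughout the variational definition $\operatorname{TV}(\mu,\nu) = \sup_{f:\mathcal X\to[0,1]}\abs{\mu(f)-\nu(f)}$ together with the oscillation bound $\abs{\mu(f)-\nu(f)} \le \norm{f}_{\mathrm{osc}}\operatorname{TV}(\mu,\nu)$ already recorded in this subsection. For the alternative expressions, I would write $\mu(f)-\nu(f) = \int f(x)\pr{p(x)-q(x)}\lambda(\dd x)$ for densities $p,q$ and observe that, over $f$ valued in $[0,1]$, the supremum is attained at $f=\ind\{p>q\}$; since $\int\pr{p-q}\,\lambda(\dd x)=0$, the positive and negative parts of $p-q$ carry equal mass, giving $\operatorname{TV}(\mu,\nu)=\int_{\{p>q\}}\pr{p-q}\,\lambda(\dd x)=\tfrac12\int\abs{p-q}\,\lambda(\dd x)$. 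The $\min$-identity then follows from $\min(p,q)=p-(p-q)_+$ and $\int p\,\lambda(\dd x)=1$.

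For the coupling inequality, given any coupling $(M,N)$ and any measurable set $A$, I would bound $\mu(A)-\nu(A)=\P(M\in A)-\P(N\in A)\le\P(M\in A,\,N\notin A)\le\P(M\neq N)$; taking the supremum over $A$ (which computes $\operatorname{TV}$ by the indicator form of the definition) yields $\P(M\neq N)\ge\operatorname{TV}(\mu,\nu)$. Equality comes from the standard maximal coupling that places mass $\min(p,q)$ on the diagonal and couples the two residual measures independently; I would check the marginals and compute $\P(M^*\neq N^*)=1-\int\min(p,q)\,\lambda(\dd x)$, matching the first item. The contraction property follows because a Markov kernel $P$ is non-expansive in total variation, as $f\mapsto Pf$ sends $[0,1]$-valued functions to $[0,1]$-valued functions; writing the law of $X_{n+1}$ as that of $X_n$ pushed through $P$ and using $\mu^*P=\mu^*$ then gives $\operatorname{TV}(X_{n+1},\mu^*)\le\operatorname{TV}(X_n,\mu^*)$.

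The only genuinely new item is the covariance bound, which I would reduce directly to the oscillation inequality. Writing $H(x,y)\eqdef h_1(x)h_2(y)$ and letting $\pi$ denote the joint law of $(M,N)$ on the product space, I would note that $\Cov(h_1(M),h_2(N))=\pi(H)-(\mu\otimes\nu)(H)$, since $\E\ps{h_1(M)}\E\ps{h_2(N)}=\mu(h_1)\nu(h_2)=(\mu\otimes\nu)(H)$. Applying the oscillation bound to the two probability measures $\pi$ and $\mu\otimes\nu$ then gives $\abs{\pi(H)-(\mu\otimes\nu)(H)}\le\norm{H}_{\mathrm{osc}}\,\operatorname{TV}\pr{(M,N),\mu\otimes\nu}$, and $\norm{H}_{\mathrm{osc}}\le 2\infnorm{H}\le 2\infnorm{h_1}\infnorm{h_2}$ finishes the bound.

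There is no serious obstacle: the first three items are classical and the fourth is a one-line consequence of the oscillation bound, which is itself immediate from the definition of $\operatorname{TV}$. The only point requiring a little care is the maximal-coupling construction, where one must verify measurability of the residual (sub-)probability measures and confirm that the constructed pair $(M^*,N^*)$ has the prescribed marginals $\mu$ and $\nu$; everything else is routine bookkeeping.
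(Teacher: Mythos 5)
Your proofs are correct, but note that the paper does not actually prove this lemma at all: it cites \citet{roberts2004general} (Proposition 3 and formula (13)) for the first three items and dismisses the covariance bound as ``straightforward and therefore omitted.'' Your write-up therefore supplies what the paper outsources. The arguments you give for the first three items are the standard ones found in the cited reference (variational definition, indicator-set bound plus the $\min(p,q)$-on-the-diagonal construction, non-expansiveness of Markov kernels), and your covariance-bound derivation --- writing $\Cov(h_1(M),h_2(N))=\pi(H)-(\mu\otimes\nu)(H)$ with $H(x,y)=h_1(x)h_2(y)$ and applying the oscillation inequality $\abs{\pi(H)-(\mu\otimes\nu)(H)}\leq \norm{H}_{\mathrm{osc}}\operatorname{TV}(\pi,\mu\otimes\nu)\leq 2\infnorm{h_1}\infnorm{h_2}\operatorname{TV}(\pi,\mu\otimes\nu)$ --- is exactly the one-line argument the authors evidently had in mind, using the oscillation bound they record just before the lemma. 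One small point of care: the coupling item of the lemma is stated without assuming densities, while your maximal-coupling construction uses them; this is harmless since any two probability measures $\mu,\nu$ admit densities with respect to the dominating measure $\lambda=\mu+\nu$, but it is worth saying explicitly, and it also settles the measurability of the residual measures that you flag, since the residual densities $(p-q)_+$ and $(q-p)_+$ are measurable and have disjoint supports (which is also what makes the meeting probability of the constructed pair exactly $\int\min(p,q)\,\dd\lambda$).
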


\subsection{Cost-to-go function}
In the context of the Feynman-Kac model \eqref{eq:fkmodel}, define the associated \textit{cost-to-go} function $H_{t:T}$ as (see e.g.\ \citet[Chapter 5]{SMCbook})
\begin{equation}
\label{eq:def-cost-to-go}
H_{t:T}(x_t) \eqdef \prod_{s=t+1}^T M_{s-1}(x_{s-1}, \dd x_s) G_s(x_s).
\end{equation}
This function bridges $\Q_t(\dd x_t)$ and $\Q_T(\dd x_t)$, since $\Q_T(\dd x_t) \propto \Q_t(\dd x_t) H_{t:T}(x_t)$.

\subsection{The projection kernel}
Let $\mathcal X$ and $\mathcal Y$ be two measurable spaces. The projection kernel $\Projeg$ is defined by 
\[\Projeg\pr{(x,y), \dd x^*} \eqdef \delta_x(\dd x^*). \] 
In particular, for any function $g: \mathcal X \to \mathbb R$ and measure $\mu(\dd x, \dd y)$ defined on $\mathcal X \times \mathcal Y$, we have
\begin{align*}
(\Projeg g)(x,y) &= g(x) \\
(\mu \Projeg)(g) &= \iint g(x) \mu(\dd x, \dd y) = \int g(x)\mu(\dd x)
\end{align*}
where the second identity shows the marginalising action of $\Projeg$ on $\mu$. In the context of state space models, we define the shorthand 
\[\Proj^{0:T}_t \eqdef \Proj^{(\mathcal X_0, \ldots, \mathcal X_T)}_{\mathcal X_t}.\]

\subsection{Other notations}\label{sec:apx_other_notations}

For a real number $x$, let $\lfloor x \rfloor$ be the largest integer not
exceeding $x$. The mapping $x \mapsto \lfloor x \rfloor$ is called the floor
function\notsep The Gamma function $\Gamma(a)$ is defined for $a > 0$ and is
given by $\Gamma(a)\eqdef \int_{\mathbb R_+} e^{-x} x^{a-1} \dd x$\notsep Let
$\mathcal X$ and $\mathcal Y$ be two measurable spaces. Let $K(x, \dd y)$ be a
(not necessarily probability) kernel from $\mathcal X$ to $\mathcal Y$. The
norm of $K$ is defined by
$\infnorm{K}:= \sup_{f: \mathcal X \to \mathcal Y, f \neq 0} \infnorm{Kf}/\infnorm{f}$. In particular, for any function $f:\mathcal X \to \mathcal Y$, we have $\infnorm{Kf} \leq \infnorm{K} \infnorm{f}$\notsep Let $X_n$ be a sequence of random variables. We say that $X_n = \bigOproba (1)$ if for any $\varepsilon > 0$, there exists $M > 0$ and $N_0$, both depending on $\varepsilon$, such that $\P(|X_n| \geq M) \leq \varepsilon$ for all $n \geq N_0$. For a strictly positive deterministic sequence $a_n$, we say that $X_n = \bigOproba(a_n)$ if $X_n/a_n = \bigOproba(1)$. See \citet{JansonBigO} for discussions\notsep We use the notation $\mathcal N(x | \mu, \Sigma)$ to refer to the value at $x$ of the density function of the normal distribution $\mathcal N(\mu, \Sigma)$\notsep Let $f:  U \to V$ be a function from some space $U$ to another space $V$. Let $S$ be a subset of $U$. The restriction of $f$ to S, written $f|_S$, is the function from $S$ to $V$ defined by $f|_S(x) = f(x)$, $\forall x \in S$.

\section{FFBS complexity for different rejection schemes}\label{apx:ffbs_exec_result}

\subsection{Framework and notations}\label{apx:ffbs_exec_result_framework}

 The FFBS algorithm is a particular instance of Algorithm~\ref{algo:offline_generic} where $\btn{FFBS}$ kernels are used. If backward simulation is done using pure rejection sampling (Algorithm~\ref{algo:pure_rejection_sampler}), the computational cost to simulate the $t-1$-th index of the $n$-th trajectory has conditional distribution
\begin{equation}
	\label{eq:cond_dist_tau_ffbs}
	\tau_t^{n, \mrffbs} |\ \mathcal F_T, \mathcal I_{t:T}^n \sim \operatorname{Geo}\pr{\frac{\sum_i W_{t-1}^i m_t(X_{t-1}^i, X_t^{\mci_t^n})}{\mbhigh}}.
\end{equation}
At this point, it would be useful to compare this formula with \eqref{eq:dist_tau_N} of the PaRIS algorithm. The difference is subtle but will drive interesting changes to the way rejection-based FFBS behaves.

If hybrid rejection sampling (Algorithm~\ref{algo:hybrid_rejection_sampler}) is
to be used instead, we are interested in the distribution of $\min(\tau_t^{n,
\mrffbs}, N)$, for reasons discussed in Subsection~\ref{subsect:hybrid}. In a
highly parallel setting, it is preferable that the distribution of
\textit{individual} execution times, i.e.\ $\tau_t^{n, \mrffbs}$ or
$\min(\tau_t^{n, \mrffbs},N)$, are not heavy-tailed. In contrast, for
non-parallel hardware, only \textit{cumulative} execution times, i.e.
$\sum_{n=1}^N \tau_{t}^{n, \mrffbs}$ or $\sum_{n=1}^N \min(\tau_t^{n, \mrffbs},
N)$, matter. Even though the individual times might behave badly, the
cumulative times could be much more regular thanks to effect of the central
limit theorem, whenever applicable. Nevertheless, studying the finiteness of
the $k$-th order moment of $\tau_t^{1, \mrffbs}$ is still a good way to get
information about both types of execution times, since it automatically implies
$k$-th order moment (in)finiteness for both of them.

\subsection{Execution time for pure rejection sampling}

We show that under certain circumstances, the execution time of the pure
rejection procedure has infinite expectation. Proposition 1 in \citet{Douc2011}
hints that the cost per trajectory for FFBS-reject might tend to infinity when
$N \to \infty$. In contrast, we show that infinite expectation might very well
happen for \textit{finite} sample sizes. We first give the statement for
general state space models, then focus on their implications for Gaussian ones.
In particular, while infinite expectations occur only under certain
configurations, infinite higher moments happen in \textit{all} linear Gaussian
models with non-degenerate dynamics.
\begin{thm}\label{thm:ffbs_exec_infinite_general} Using the setting and
    notations of Supplement~\ref{apx:ffbs_exec_result_framework}, under
    Assumptions~\ref{asp:Gbound} and~\ref{asp:Ct} , we have $\E[\tau_t^{1,
    \mrffbs}] = \infty$ whenever 
    \[\int_{\mathcal X_t} G_t(x_t) H_{t:T}(x_t) \lambda_t(\dd x_t) = \infty \]
    where the cost-to-go function $H_{t:T}$ is defined in
    \eqref{eq:def-cost-to-go} and the measure $\lambda_t$ is defined in
    \eqref{eq:density_mt}.
\end{thm}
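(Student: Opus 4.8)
The plan is to reduce the claim to a statement about the predictive density $r_t^N$ of Definition~\ref{def:rt}, and then to exploit the exact cancellation between $r_t^N$ in the denominator and the conditional law of the particles in order to recover the cost-to-go integral. Since $\tau_t^{1,\mrffbs}$ is, conditionally on $\mathcal F_T$ and $\mci_{t:T}^1$, geometric with success probability $r_t^N(X_t^{\mci_t^1})/\mbhigh$ by \eqref{eq:cond_dist_tau_ffbs}, taking the mean of the geometric law and then the tower expectation gives
\[\E[\tau_t^{1,\mrffbs}] = \mbhigh\, \E\ps{\frac{1}{r_t^N(X_t^{\mci_t^1})}} = \mbhigh\, \E\ps{\sum_{i=1}^N \frac{\beta_t^i}{r_t^N(X_t^i)}},\]
where $\beta_t^i := \CProb{\mci_t^1 = i}{\mathcal F_T}$ is the marginal law of the FFBS backward index at time $t$. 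It therefore suffices to show that the rightmost expectation, which I will call $Q$, is infinite.

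Next I would expand $\beta_t^i$ by unrolling the backward recursion from time $T$ through the matrices $\bwm{FFBS}$ of \eqref{eq:ffbs_mt}, substituting $r_s^N(X_s^{i_s}) = \sum_k W_{s-1}^k m_s(X_{s-1}^k, X_s^{i_s})$ into each denominator. Writing out the resulting telescoping product turns $Q$ into the expectation of a sum over index paths $(i_t,\ldots,i_T)$ of
\[\pr{\prod_{s=t}^T W_s^{i_s}}\frac{\prod_{s=t+1}^T m_s(X_{s-1}^{i_{s-1}}, X_s^{i_s})}{\prod_{s=t}^T r_s^N(X_s^{i_s})}.\]
Because every summand is nonnegative, I can lower-bound $Q$ by keeping only the single diagonal path $i_t=\cdots=i_T=1$ and by replacing each self-normalised weight by the crude bound $W_s^{i_s}\ge G_s(X_s^{i_s})/(N\infnorm{G_s})$, which is licit and strictly positive under Assumptions~\ref{asp:Gbound} and~\ref{asp:Ct}. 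This leaves a positive, $N$-dependent but finite constant times $\E\ps{\prod_{s=t}^T G_s(X_s^1)\,\prod_{s=t+1}^T m_s(X_{s-1}^1, X_s^1)\,/\,\prod_{s=t}^T r_s^N(X_s^1)}$.

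The heart of the argument is to evaluate this last expectation by integrating out the particles $X_T^1, X_{T-1}^1, \ldots, X_t^1$ one at a time in \emph{decreasing} order of time, using the tower property. Conditionally on $\mathcal F_{s-1}$, the move step of the bootstrap filter gives $X_s^1$ the density $r_s^N$ with respect to $\lambda_s$, so integrating $X_s^1$ cancels the denominator $r_s^N(X_s^1)$ exactly. Starting at $s=T$ this produces $H_{T-1:T}(X_{T-1}^1)$, and each subsequent step advances the accumulated factor from $H_{s:T}(X_s^1)$ to $H_{s-1:T}(X_{s-1}^1)$ through the recursion $H_{s-1:T}(y)=\int m_s(y,x)G_s(x)H_{s:T}(x)\lambda_s(\dd x)$ that follows from \eqref{eq:def-cost-to-go}. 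After integrating $X_{t+1}^1$ we are left with $G_t(X_t^1)H_{t:T}(X_t^1)/r_t^N(X_t^1)$; integrating $X_t^1$ against its density $r_t^N$ cancels the final denominator and yields the deterministic integral $\int_{\mathcal X_t}G_t(x)H_{t:T}(x)\lambda_t(\dd x)=+\infty$. Hence $Q=\infty$ and the theorem follows.

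The step I expect to be delicate is the backward integration: one must verify that at the moment $X_s^1$ is integrated, its only surviving occurrences are in the explicit factors $G_s$, $m_s$, $1/r_s^N$ and in the cost-to-go term produced by the already-integrated later times. This holds precisely because $X_s^1$ enters the remaining denominators only through $r_{s+1}^N$, which has been eliminated at the previous (later-time) step; integrating in the opposite order would leave $X_s^1$ entangled inside $r_{s+1}^N$ and destroy the telescoping. Controlling this nesting of the self-normalised weights and the predictive densities is exactly the finite-sample subtlety that is absent from the asymptotic argument of \citet[Prop. 1]{Douc2011}.
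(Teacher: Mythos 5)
Your proof is correct, and it reaches the paper's conclusion by a genuinely more elementary route than the one the paper takes. Both arguments begin identically: by \eqref{eq:cond_dist_tau_ffbs} and the tower property, $\E[\tau_t^{1,\mrffbs}] = \E\ps{\int \Q_T^{N}(\dd x_t)\, \mbhigh\, r_t^N(x_t)^{-1}}$, so everything hinges on showing that this expectation is infinite. From there the paper proceeds through the general induction of Proposition~\ref{prop:ffbs_exec_induction}(a): it works with the full smoothing marginal, proving $\E\ps{\Q_s^N L_{s:T}\Proj^{0:T}_t f_t^N}=\infty$ forward from $s=t-1$ to $s=T$, each step resting on the importance-sampling representation of Corollary~\ref{corol:fundamental} (a by-product of the convergence analysis of Section~\ref{sec:smoothing_generic}) together with the bound $\ell_s^N \leq \infnorm{G_s}$. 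You instead unroll the FFBS backward matrices into a sum over index paths, discard all but the diagonal path $i_t=\cdots=i_T=1$, lower-bound each self-normalised weight by $G_s(X_s^1)/(N\infnorm{G_s})$ (the same use of Assumption~\ref{asp:Gbound} as the paper's bound on $\ell_s^N$), and then telescope by conditioning backward in time: each factor $r_s^N(X_s^1)^{-1}$ cancels exactly against the conditional density $r_s^N$ of $X_s^1$ given $\mathcal F_{s-1}$ under the bootstrap filter, so the later times collapse into the cost-to-go factor $H_{s-1:T}(X_{s-1}^1)$, and the final integration at time $t$ produces the divergent integral $\int_{\mathcal X_t} G_t H_{t:T}\,\dd\lambda_t$. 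Your ordering observation --- that $X_s^1$ survives in the remaining denominators only through $r_{s+1}^N$, which has already been eliminated at the previous step --- is precisely what makes the telescoping legitimate, and the tower property for non-negative variables handles the infinite values correctly. What the paper's heavier formulation buys is reusability: Proposition~\ref{prop:ffbs_exec_induction} also delivers part (b), needed for Theorem~\ref{thm:ffbs_hybrid_exec}, and accommodates $f_t^N = (r_t^N)^{-k}$ for the higher-moment result of Theorem~\ref{thm:ffbs_exec_infinite_gaussian}, where the cancellation is no longer exact; your single-path argument, by contrast, is self-contained, bypasses the framework machinery entirely, and makes the finite-sample cancellation mechanism explicit.
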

\begin{thm}\label{thm:ffbs_exec_infinite_gaussian} Using the setting and
    notations of Supplement~\ref{apx:ffbs_exec_result_framework}, we consider
    linear Gaussian models and their notations defined in
    Supplement~\ref{apx:linear_gaussian_models}. Then we have $\E[(\tau_t^{1,
    \mrffbs})^k] = \infty$ whenever $k$ is greater than a certain $k_0$ being
    the smallest eigenvalue of the matrix $\operatorname{Id} +
    \covx^{1/2}\pr{\invp{\sigmasmth_t} - \invp{\sigmapred_t}}\covx^{1/2}$.
\end{thm}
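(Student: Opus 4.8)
The plan is to reduce the claim to the divergence of a tail integral and then exploit a purely finite-sample feature: the tail of the particle predictive density $r_t^N$ (Definition~\ref{def:rt}) is governed by $\covx$, not by $\sigmapred_t$. First note that $k_0 \geq 1$: since more conditioning shrinks a Gaussian covariance, $\sigmasmth_t \preceq \sigmapred_t$, so $\covx^{1/2}(\invp{\sigmasmth_t}-\invp{\sigmapred_t})\covx^{1/2}$ is positive semidefinite and $k_0=\lambda_{\min}(\operatorname{Id}+\covx^{1/2}(\invp{\sigmasmth_t}-\invp{\sigmapred_t})\covx^{1/2}) \geq 1$. Hence only $k>1$ is relevant, and Jensen's inequality applied to the conditional geometric law \eqref{eq:cond_dist_tau_ffbs} gives $\E[(\tau_t^{1,\mrffbs})^k \mid \mathcal F_T, \mci_{t:T}^1] \geq (\mbhigh/r_t^N(X_t^{\mci_t^1}))^k$, whence $\E[(\tau_t^{1,\mrffbs})^k] \geq \mbhigh^k\,\E[(r_t^N(X_t^{\mci_t^1}))^{-k}]$. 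It therefore suffices to show that the right-hand expectation is infinite. By exchangeability of the particles it equals $N\,\E[\Lambda_t^1\,(r_t^N(X_t^1))^{-k}]$, where $\Lambda_t^1\eqdef\P(\mci_t^1=1\mid\mathcal F_T)$ is the marginal FFBS smoothing weight of particle $1$ at time $t$, which the backward recursion factorises as $\Lambda_t^1 = W_t^1 h_t^1$ with $W_t^1\propto G_t(X_t^1)$ and $h_t^1$ a particle cost-to-go assembled from the future particles $X_{t+1:T}^{1:N}$.

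The decisive step is to integrate out the future while holding $X_t^1=x$ fixed. This is where the dependence is subtle: for $x$ deep in the tail the offspring of $X_t^1$ at time $t+1$ are created around $F_X x$, so $h_t^1(x)$ need not vanish, and the normalisers $r_{t+1}^N$ occurring inside $h_t^1$ must be cancelled against the sampling densities of those very offspring. Carrying this out, the conditional expectation $\E[h_t^1(x)\mid\mathcal F_t]$ reproduces, up to a normalising constant that concentrates by the law of large numbers, the \emph{true} cost-to-go $H_{t:T}(x)$ of \eqref{eq:def-cost-to-go}. Combining this with the fact that $X_t^1\sim r_t^N$ given $\mathcal F_{t-1}$, and lower-bounding the (concentrating) normalisers on a high-probability event, bounds the moment from below by a constant multiple of $\E[\int_{\mathcal X_t}(r_t^N(x))^{1-k}\,G_t(x)\,H_{t:T}(x)\,\lambda_t(\dd x)]$. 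Reassuringly, at $k=1$ this is exactly the criterion $\int G_t H_{t:T}\lambda_t(\dd x)$ of Theorem~\ref{thm:ffbs_exec_infinite_general}.

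It remains to analyse this integral in the linear Gaussian model. Since the smoothing density is proportional to (predictive)$\times G_t\times H_{t:T}$, the product $G_t H_{t:T}$ is proportional to the ratio of the smoothing to the predictive Gaussian density, so the quadratic part of its logarithm is $-\tfrac12(\invp{\sigmasmth_t}-\invp{\sigmapred_t})$. The finite-sample effect enters through $r_t^N(x)=\sum_n W_{t-1}^n\, m_t(X_{t-1}^n,x)$, a finite mixture of Gaussians \emph{all} having covariance $\covx$, so in any fixed direction it decays only at the rate of a single component. Concretely, on the positive-probability event that the time-$(t-1)$ particles lie in a fixed ball one has $r_t^N(x) \leq \mbhigh\exp(-\tfrac12\min_n\norm{\covx^{-1/2}(x-F_X X_{t-1}^n)}^2)$, so $(r_t^N(x))^{1-k}$ is bounded below by a quantity growing like $\exp(\tfrac{k-1}{2}\,x^\top\covx^{-1}x)$ as $\norm{x}\to\infty$ --- with rate $\covx^{-1}$, not $\invp{\sigmapred_t}$. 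The resulting integrand carries the quadratic exponent $\tfrac12[(k-1)\covx^{-1}-(\invp{\sigmasmth_t}-\invp{\sigmapred_t})]$, and the Gaussian integral diverges exactly when this matrix is not negative definite, i.e. when $(k-1)\covx^{-1}$ is not dominated by $\invp{\sigmasmth_t}-\invp{\sigmapred_t}$. Conjugating by $\covx^{1/2}$ this reads $k-1 > \lambda_{\min}(\covx^{1/2}(\invp{\sigmasmth_t}-\invp{\sigmapred_t})\covx^{1/2})$, that is $k>k_0$.

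The main obstacle is the second paragraph: rigorously replacing the random particle cost-to-go $h_t^1$ by the true $H_{t:T}$ inside the joint expectation. One must control the self-dependence created by the offspring of $X_t^1$, track the $N$-dependent normalisation of $h_t^1$, and ensure these manipulations yield a genuine lower bound rather than only an asymptotic identity. Once this is in place, converting the heuristic tail rate of $r_t^N$ from the third paragraph into a rigorous pointwise lower bound valid on a fixed positive-probability configuration event is comparatively routine, and lets the divergence of the limiting Gaussian integral transfer to the finite-$N$ expectation.
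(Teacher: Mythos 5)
Your reductions at both ends are sound and in fact coincide with the paper's own: Jensen's inequality (in place of the paper's Lemma~\ref{lem:higher_moment_geometric}, which gives an equivalence you do not need) reduces the problem to showing $\E\ps{(r_t^N(X_t^{\mci_t^1}))^{-k}} = \infty$; the intermediate criterion $\E\ps{\int_{\mathcal X_t} (r_t^N)^{1-k}\, G_t\, H_{t:T}\,\dd\lambda_t} = \infty$ is exactly the one the paper passes through; and your closing Gaussian computation --- $r_t^N$ is a mixture of Gaussians all with covariance $\covx$, hence the quadratic exponent $\tfrac 12\ps{(k-1)\covx^{-1} - \pr{\invp{\sigmasmth_t} - \invp{\sigmapred_t}}}$ and divergence precisely when $k > k_0$ --- is the paper's final ``elementary argument''.

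The genuine gap is the step you flag yourself: passing from $\E\ps{\Lambda_t^1\,(r_t^N(X_t^1))^{-k}}$, equivalently $\E\ps{\int \Q_T^N(\dd x_t)\,(r_t^N(x_t))^{-k}}$, to that criterion, i.e.\ replacing the particle cost-to-go $h_t^1$ by the true $H_{t:T}$. This is not a routine technicality but the actual content of the paper's Proposition~\ref{prop:ffbs_exec_induction}(a), and your proposed route (law-of-large-numbers concentration of the normalisers plus a high-probability event) does not work as stated. The theorem is a claim for every \emph{fixed} $N$, whereas LLN concentration is an $N \to \infty$ device; worse, restricting to an event $A$ controlling the future normalisers only bounds $\E[Y\ind_A]$ from below, and divergence of $\E[Y]$ does not imply divergence of $\E[Y\ind_A]$ when $A$ is correlated with the tail of $X_t^1$ --- which it is, through the offspring of particle $1$ (the self-dependence you yourself mention). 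The paper's resolution needs neither concentration nor events: by Corollary~\ref{corol:fundamental}, $\Q_s^N\varphi = \pr{N^{-1}\sum_n \tilde K_s^N(n,\varphi)}/\ell_s^N$ with the \emph{deterministic} bound $\ell_s^N \leq \infnorm{G_s}$ (Assumption~\ref{asp:Gbound}), so an induction on $s = t-1, \ldots, T$, taking one conditional expectation per time step, propagates divergence exactly:
\begin{equation*}
\E\ps{\int \Q_T^N(\dd x_t)\, f_t^N(x_t)} \;\geq\; \pr{\prod_{s=t}^{T}\infnorm{G_s}}^{-1} \E\ps{\int_{\mathcal X_t} r_t^N\, f_t^N\, G_t\, H_{t:T}\,\dd\lambda_t},
\qquad f_t^N \eqdef (r_t^N)^{-k},
\end{equation*}
valid for every $N$, with the cancellation of backward-kernel normalisers against the sampling densities of the offspring carried out one step at a time inside the induction. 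Until your second paragraph is replaced by an argument of this type, the proof is incomplete.
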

The proofs of the two assertions are given in
Supplement~\ref{ap:proof:ffbs_exec_infinite}. We now look at how they are
manifested in concrete examples. The first remark is that for technical
reasons, Theorem~\ref{thm:ffbs_exec_infinite_gaussian} gives no information on
the finiteness of $E[(\tau_t^{1, \mrffbs})^k]$ for $k=1$ (since $k_0$ is
already greater than or equal to $1$ by definition). To study the finiteness of
$\E[\tau_t^{1, \mrffbs}]$, we thus turn to
Theorem~\ref{thm:ffbs_exec_infinite_general}.

\begin{example}
	In linear Gaussian models, the integral of Theorem~\ref{thm:ffbs_exec_infinite_general} is equal to
	\[\int \mathcal N(y_t|F_Y x_t, \covy) \prod_{s=t+1}^T \mathcal N(x_s|F_X x_{s-1}, \covx) \mathcal N(y_s|F_Y x_s, \covy) \dd x_{t:T} \]
	where the notation $\mathcal N(\mu, \Sigma)$ refers to the density of the normal distribution. The integrand is proportional to $\exp[-0.5 (Q(x_{t:T}) - R(x_{t:T}))]$ for some quadratic form $Q(x_{t:T})$ and linear form $R(x_{t:T})$. The integral is finite if and only if $Q$ is positive definite. In our case, this means that there is no non-trivial root for the equation $Q(x_{t:T})=0$, which is equivalent to
	\begin{equation*}
		\begin{cases}
		F_Y x_s &= 0, \forall s = t, \ldots T \\
		F_X x_{s-1} &= x_s, \forall s = t+1, \ldots, T.
		\end{cases}
	\end{equation*}
	Put another way, $\E[\tau_t^{1, \mrffbs}]$ is infinite whenever the intersection
	\[\bigcap_{k=0}^{T-t} \operatorname{Ker}(F_Y F_X^k) = \bigcap_{k=0}^{T-t} F_X^{-k}(\operatorname{Ker}(F_Y)) \]
	contains other things than the zero vector. A common and particularly troublesome situation is when $F_X = c \operatorname{Id}$ for some $c > 0$ (but $\covx$ can be arbitrary) and the dimension of the states ($\dimx$) is greater than that of the observations ($\dimy$). Then the above intersection remains non-trivial no matter how big $T-t$ is. Thus, $\E[\tau_t^{1, \mrffbs}]$ has no expectation for any $t$. In general, the problem is less severe as successive intersections will shrink the space quickly to $\{0\}$. Consequently, Theorem~\ref{thm:ffbs_exec_infinite_general} only points out infiniteness of $\E[\tau_t^{1, \mrffbs}]$ for $t$ close to $T$. The bad news however will come from higher moments, as seen in the below example.
\end{example}
We will now focus on a simple but particularly striking example. Our purpose here is to illustrate the concepts as well as to show that their implications are relevant even in small, familiar settings. More advanced scenarios are presented in Section~\ref{sec:numexp} devoted to numerical experiments.
\begin{example}
	\label{eg:simple_ffbs}
	We consider two one-dimensional Gaussian state-space models: they both have
    $F_X = 0.5$, $\covx = 1$, $X_0 \sim \mathcal N(0, \covx^2/(1-F_X^2))$ and
    $T=3$. The only difference between them is that one has
    $\sigma_y^2\eqdef\covy= 0.5^2$ and another has $ \sigma_y^2 = 3^2$. We are
    interested in the execution times $\tau_1^{n, \mrffbs}$ at time $t=1$ (i.e.
    the rejection-based simulation of indices $\mathcal I_0^n$ at time $t=0$).
    Theorem~\ref{thm:ffbs_exec_infinite_gaussian} then gives $k_0 \approx 1.14$
    for $\sigma_y = 3$ and $k_0 \approx 5$ for $\sigma_y = 0.5$. The first
    implication is that in both cases, $\tau_1^{n, \mrffbs}$ is a heavy-tailed
    random variable and therefore FFBS-reject is not a viable option in a
    highly parallel setting. But an interesting phenomenon happens in the
    sequential hardware scenario where one is rather interested in the
    cumulative execution time, i.e.\ $\sum_{n=1}^N \tau_1^{n, \mrffbs}$, or
    equivalently, the \textnormal{mean} number of trials per
    \textnormal{particle}. In the $\sigma_y = 3$ case, non-existence of
    \text{second} moment prevents the cumulative regularisation effect of the
    central limit theorem. This is not the case for $\sigma_y = 0.5$, in which
    the cumulative execution time actually behaves nicely
    (Figures~\ref{fig:small_eg_ffbs_reject}
    and~\ref{fig:small_eg_ffbs_reject_closeup}). However, the most valuable
    message from this example is perhaps that the performance of FFBS-reject
    depends in a non-trivial (hard to predict) way on the model parameters. 
\end{example}
\begin{figure}
	\centering
	\includegraphics[scale=0.50]{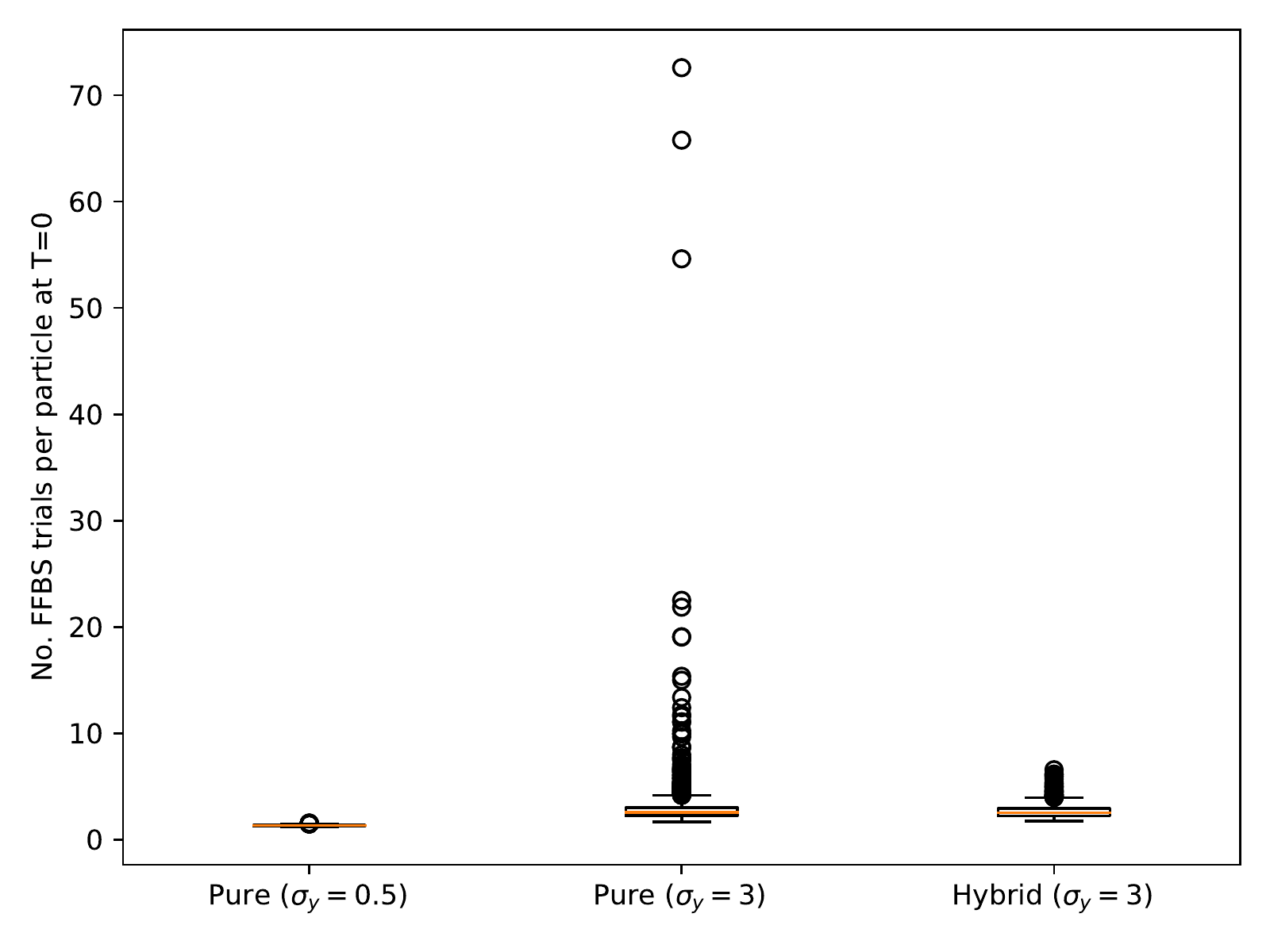}
	\caption{Box plots for the mean number of trials per particles to simulate indices at time $0$, for models described in Example~\ref{eg:simple_ffbs} and for FFBS algorithms based on pure and hybrid rejection sampling. The figure is obtained by running bootstrap particle filters with $N=500$ over $1500$ independent executions.}
	\label{fig:small_eg_ffbs_reject}
\end{figure}
\begin{figure}
	\centering
	\includegraphics[scale=0.50]{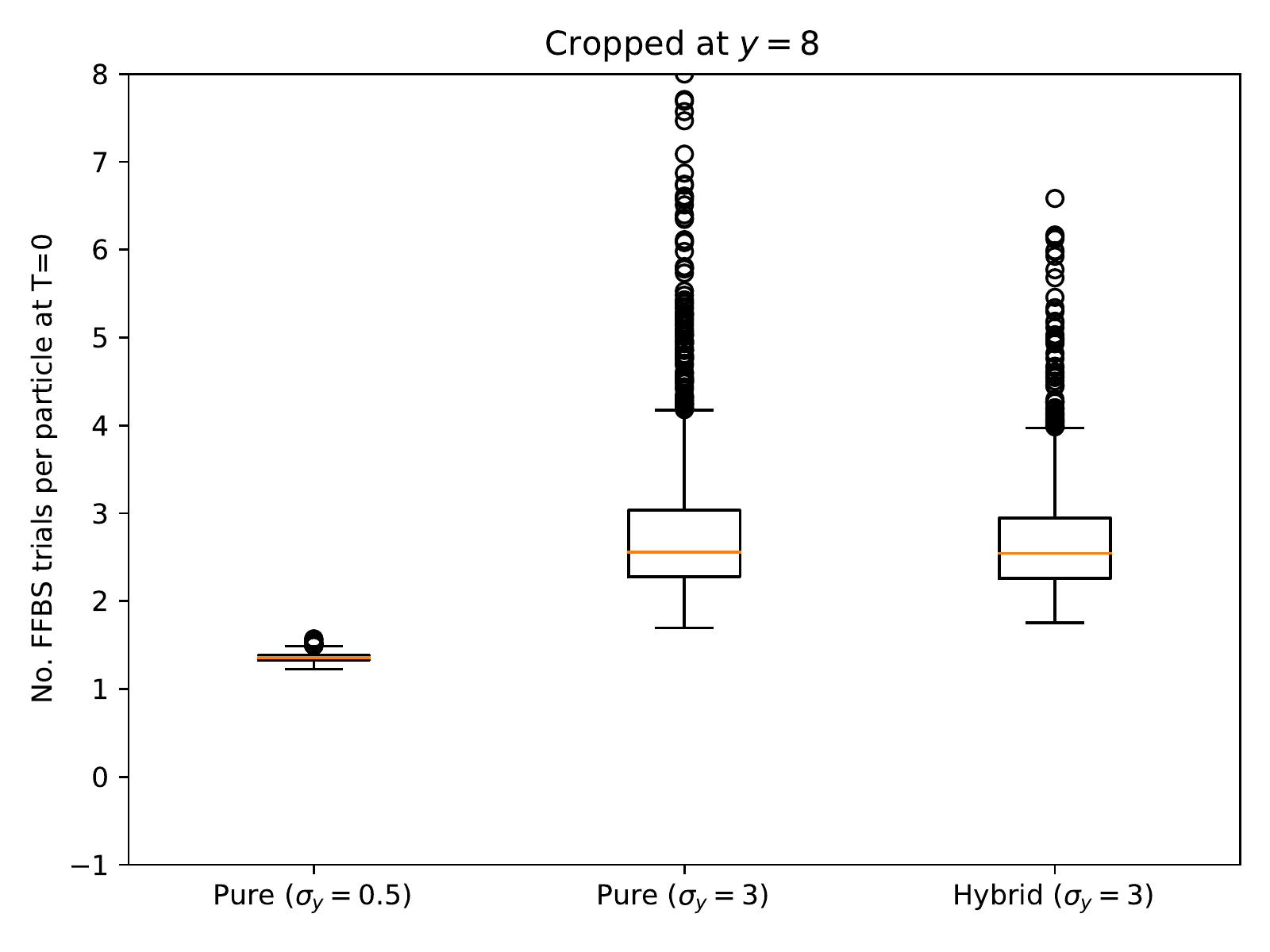}
	\caption{Zoom of Figure~\ref{fig:small_eg_ffbs_reject} to $0\leq y\leq8$}
	\label{fig:small_eg_ffbs_reject_closeup}
\end{figure}

\subsection{Execution time for hybrid rejection sampling}
Formula \eqref{eq:cond_dist_tau_ffbs} suggests defining the limit distribution $\tauinfffbs$ as
\[ \tauinfffbs\ | \ \xinfffbs \sim \operatorname{Geo}\pr{\frac{r_t(\xinfffbs)}{\mbhigh}}\]
where $\xinfffbs \sim \Q_T(\dd x_t)$ and $r_t$ given in
Definition~\ref{def:rt}. These quantities provide the following
characterisation of the cumulative execution time for the hybrid FFBS algorithm (proved in Section~\ref{ap:proof:collective_ffbs_hybrid}).
\begin{thm}
	\label{thm:ffbs_hybrid_exec}
	Under Assumptions~\ref{asp:Gbound} and~\ref{asp:Ct} and the setting of Section~\ref{apx:ffbs_exec_result_framework}, we have
	\[\frac{\sum_{n=1}^N\min(\tau_t^{n, \mrffbs}, N)}{N} = \bigOproba \pr{\E[\min(\tauinfffbs, N)]} \]
	where the notation $\bigOproba$ is defined in Supplement~\ref{sec:apx_notations}.
\end{thm}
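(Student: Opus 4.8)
The plan is to reduce the claim to a single comparison of expectations and then to close the gap with a finite-sample importance-sampling argument. By exchangeability of the $N$ backward trajectories, $\E\ps{N^{-1}\sum_{n=1}^N \min(\tau_t^{n,\mrffbs},N)} = \E\ps{\min(\tau_t^{1,\mrffbs},N)}$. Writing $h_N(p)\eqdef \E[\min(\operatorname{Geo}(p),N)] = (1-(1-p)^N)/p$ and conditioning on $\mathcal F_T$ and $\mci_{t:T}^1$, formula~\eqref{eq:cond_dist_tau_ffbs} together with the fact that $X_t^{\mci_t^1}$ is distributed (given $\mathcal F_T$) according to the random smoothing marginal $\Q_T^N(\dd x_t)$ gives $\E[\min(\tau_t^{1,\mrffbs},N)] = \E[\Q_T^N h_N(r_t^N/\mbhigh)]$, the outer expectation being over the filter. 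The same computation applied to the limiting law yields $\E[\min(\tauinfffbs,N)] = \Q_T h_N(r_t/\mbhigh)$. Since the $\bigOproba$ conclusion only asks for a probabilistic upper bound, Markov's inequality shows it suffices to prove $\E[\Q_T^N h_N(r_t^N/\mbhigh)] \le C\,\Q_T h_N(r_t/\mbhigh)$ for all large $N$, with $C$ independent of $N$.

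Next I would rewrite both sides through the cost-to-go change of measure. Since $\Q_t(\dd x_t)\propto r_t(x_t)G_t(x_t)\lambda_t(\dd x_t)$ and $\Q_T(\dd x_t)\propto \Q_t(\dd x_t)H_{t:T}(x_t)$, and using $p\,h_N(p)=1-(1-p)^N$, the target becomes $\Q_T h_N(r_t/\mbhigh) = (\mbhigh/Z)\int (1-(1-r_t/\mbhigh)^N)G_t H_{t:T}\,\lambda_t(\dd x_t)$ with $Z\eqdef\int r_t G_t H_{t:T}\lambda_t$; this is exactly the $N$-regularised version of the (possibly infinite) integral $\int G_t H_{t:T}\lambda_t$ of Theorem~\ref{thm:ffbs_exec_infinite_general}, to which it converges as $N\to\infty$. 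On the particle side I would use the FFBS backward recursion, which expresses the smoothing weight of particle $j$ at time $t$ as $\Lambda_t^{N,j}=W_t^j\,\hat H_{t:T}^N(X_t^j)$, where $\hat H_{t:T}^N$ is the self-normalised cost-to-go estimate. Substituting $W_t^j = G_t(X_t^j)/(N\ell_t^N)$ and $h_N(r_t^N/\mbhigh)=\mbhigh(1-(1-r_t^N/\mbhigh)^N)/r_t^N$ turns the left-hand side into
\[
\E[\Q_T^N h_N(r_t^N/\mbhigh)] = \frac{\mbhigh}{\ell_t^N}\,\E\ps{\frac1N\sum_{j=1}^N \frac{G_t(X_t^j)\pr{1-(1-r_t^N(X_t^j)/\mbhigh)^N}\hat H_{t:T}^N(X_t^j)}{r_t^N(X_t^j)}}.
\]

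The crucial mechanism --- and the reason the $N$-cap prevents the blow-up of Proposition~\ref{prop:inf_expectation} --- is that, given $\mathcal F_{t-1}$, each particle $X_t^j$ has density $r_t^N$ with respect to $\lambda_t$ (for the bootstrap filter, mixing over $A_t^j\sim\mathcal M(W_{t-1}^{1:N})$). Hence the factor $1/r_t^N(X_t^j)$ is cancelled in conditional expectation by the very density from which $X_t^j$ is drawn, turning sums $N^{-1}\sum_j \Phi(X_t^j)/r_t^N(X_t^j)$ into integrals $\int \Phi\,\lambda_t$, exactly as in the proof of Proposition~\ref{prop:inf_expectation}. If the integrand were free of future particles, this would immediately match the target up to the convergent constants $\ell_t^N$ and $\Q_t^N H_{t:T}$. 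The genuine difficulty is that $\hat H_{t:T}^N(X_t^j)$ depends on particles at times $>t$ and is correlated with $X_t^j$, so one cannot condition on $\mathcal F_{t-1}$ alone.

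I therefore expect the main obstacle to be the control of this future dependence, and I would handle it by conditioning on $\mathcal F_t$, replacing $\hat H_{t:T}^N$ by its conditional mean (an approximation of $H_{t:T}$ up to normalisation), and bounding the residual errors with the non-asymptotic FFBS marginal-smoothing estimates of \citet{DubarryLeCorff2011} and \citet{delmoral_mean_field} already invoked for Theorem~\ref{thm:stability}. Concretely I would split the state space into a bulk region $\{r_t\ge\delta\}$, where $h_N$ is uniformly bounded by $\mbhigh/\delta$ and ordinary consistency of $\Q_T^N$ and $r_t^N$ suffices, and a tail region $\{r_t<\delta\}$, where the cap $\min(\cdot,N)$ is active; there the self-normalised identity together with the cap jointly bound the particle contribution by a constant multiple of $\int_{\{r_t<\delta\}}(1-(1-r_t/\mbhigh)^N)G_t H_{t:T}\lambda_t$, matching the tail of the target. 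This tail estimate --- performing the argument of~\eqref{eq:near_linear_simplified} under the $N$-dependent particle approximation of the underlying density rather than under the exact density --- is where the finite-sample analysis is unavoidable and constitutes the technical heart of the proof.
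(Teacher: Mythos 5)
Your identities are correct (the conditional law~\eqref{eq:cond_dist_tau_ffbs}, the representation $\E[\min(\tau_t^{1,\mrffbs},N)]=\E\ps{\int z^N(r_t^N(x_t)/\mbhigh)\,\Q_T^N(\dd x_t)}$ with your $h_N$ equal to the paper's $z^N$, and the target $\E[\min(\tauinfffbs,N)]=\int z^N(r_t/\mbhigh)\,\Q_T(\dd x_t)$), and you correctly locate the two difficulties: the smoothing weights involve future particles, and comparing $r_t^N$ with $r_t$ inside the nonlinear $z^N$ is the finite-sample heart. But the step ``by Markov it suffices to prove $\E[\Q_T^N z^N(r_t^N/\mbhigh)]\le C\,\Q_T z^N(r_t/\mbhigh)$'' commits you to an expectation bound on a \emph{self-normalised} particle quantity, and this is where the plan breaks. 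The measure $\Q_T^N$ carries the random normalisations $1/\ell_s^N$ for $s=t,\ldots,T$ (equivalently, the random denominator in Corollary~\ref{corol:fundamental}); upper-bounding its expectation requires moment control on $1/\ell_s^N$, i.e.\ lower bounds on the potentials $G_s$, which Assumption~\ref{asp:Gbound} does not provide ($G_s>0$ and bounded above only). This is precisely why Proposition~\ref{prop:ffbs_exec_induction}(b) is stated and proved \emph{in probability}: its induction handles each $1/\ell_s^N$ via Slutsky's theorem together with $\ell_s^N\cvprob\ell_s>0$, a device with no expectation analogue. (Note the asymmetry: part (a) of that proposition, the infiniteness/lower-bound direction, only needs $\ell_s^N\le\infnorm{G_s}$ and therefore does work in expectation.) Your fallback for the hard part --- conditioning on $\mathcal F_t$, replacing $\hat H_{t:T}^N$ by a conditional mean, and the bulk/tail split --- asserts rather than proves the key tail inequality; and even the bulk is not immediate, since $z^N(r_t^N/\mbhigh)\le\mbhigh/r_t^N$ is controlled by $r_t^N$, not $r_t$, so a split along $\{r_t\ge\delta\}$ does not bound the integrand there without once again comparing $r_t^N$ to $r_t$.

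The paper's route keeps every step either in probability or unnormalised: by the characterisation of $\bigOproba$ in \citet[Lemma 3]{JansonBigO} it suffices to show $(N\alpha_N)^{-1}\sum_n\min(\tau_t^{n,\mrffbs},N)\cvprob 0$ whenever $\alpha_N/\E[\min(\tauinfffbs,N)]\to\infty$; Lemma~\ref{lem:cvg_conde_to_convg_prob} reduces this to the conditional expectation $\alpha_N^{-1}\int z^N(r_t^N/\mbhigh)\,\Q_T^N(\dd x_t)\cvprob 0$; Proposition~\ref{prop:ffbs_exec_induction}(b) trades the self-normalised $\Q_T^N$ for the unnormalised integral $\alpha_N^{-1}\int z^N(r_t^N/\mbhigh)\,r_t^N\, G_t\, H_{t:T}\,\dd\lambda_t$; and only then is an expectation taken, with Lemma~\ref{lem:magic_jensen} (Fubini, concavity of $\lambda\mapsto\lambda z^N(\lambda)$, and the $\mathcal O(1/N)$ particle bias bound) performing the finite-sample replacement of $r_t^N$ by $r_t$ --- the role your bulk/tail argument was meant to play. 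To salvage your expectation-based reduction you would need either to strengthen the assumptions (lower-bounded $G_t$), or to prove a separate deviation bound showing that the event $\{\Q_T^N z^N(r_t^N/\mbhigh)>C\,\Q_T z^N(r_t/\mbhigh)\}$ has probability $o\pr{\E[\min(\tauinfffbs,N)]/N}$; neither is in your proposal, so as written the argument has a genuine gap.
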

This theorem admits the following corollary for linear Gaussian models (also proved in Section~\ref{ap:proof:collective_ffbs_hybrid}).
\begin{corollary}
	\label{cor:gaussian_hybrid_ffbs}
	For linear Gaussian models (Supplement~\ref{apx:linear_gaussian_models}), if smoothing is performed using the hybrid rejection version of the FFBS algorithm, the mean execution time per particle at time step $t$ is $\bigOproba(\log^{d_t/2} N)$ where $d_t$ is the dimension of $\mathcal X_t$.
\end{corollary}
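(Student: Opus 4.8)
The plan is to invoke Theorem~\ref{thm:ffbs_hybrid_exec}, which already reduces the mean execution time per particle to the single deterministic quantity $\E[\min(\tauinfffbs, N)]$; it then suffices to prove that in a linear Gaussian model this expectation is $\bigO((\log N)^{d_t/2})$, after which the $\bigOproba$ conclusion is immediate. First I would condition on $\xinfffbs = x$: since $\tauinfffbs$ is then geometric with parameter $r_t(x)/\mbhigh$, the elementary identity $\E[\min(\operatorname{Geo}(p), N)] = (1-(1-p)^N)/p \leq \min(1/p, N)$ gives
\[
\E[\min(\tauinfffbs, N)] \leq \int \min\!\pr{\frac{\mbhigh}{r_t(x)}, N}\, \Q_T(\dd x_t).
\]
The difficulty is that $\xinfffbs$ is drawn from the \emph{smoothing} marginal $\Q_T(\dd x_t)$, whereas $r_t$ is the \emph{predictive} density (Definition~\ref{def:rt}); the two are Gaussians with different means and covariances, and integrating $\mbhigh/r_t$ — which grows like $\exp(\tfrac12\|x\|^2)$ — against a mismatched Gaussian cannot be controlled by a crude tail split (done naively one even picks up spurious $\exp(c\sqrt{\log N})$ factors).

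The key step, which I expect to be the crux, is to dominate the smoothing density by the predictive one. Using the cost-to-go factorisation $\Q_T(\dd x_t) \propto r_t(x_t)\, G_t(x_t)\, H_{t:T}(x_t)\, \lambda_t(\dd x_t)$ together with $\Q_{t-1}M_t(\dd x_t) = r_t(x_t)\,\lambda_t(\dd x_t)$, the Radon--Nikodym derivative of the smoothing law with respect to the predictive law is proportional to $G_t(x_t)\, H_{t:T}(x_t)$. Both factors are bounded above: $G_t$ by Assumption~\ref{asp:Gbound}, and $H_{t:T}(x_t) = \E[\prod_{s>t} G_s(X_s) \mid X_t = x_t] \leq \prod_{s=t+1}^T \infnorm{G_s} < \infty$ since there are only finitely many bounded potentials. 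Hence there is a constant $B < \infty$ with $\Q_T(\dd x_t) \leq B\, r_t(x_t)\,\lambda_t(\dd x_t)$, and this single inequality transfers the awkward smoothing integral into the tractable predictive one,
\[
\E[\min(\tauinfffbs, N)] \leq B \int \min\!\pr{\frac{\mbhigh}{r_t(x)}, N}\, r_t(x)\, \lambda_t(\dd x),
\]
i.e.\ exactly the PaRIS-type integral in which the integrand weight and the reference density coincide.

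Finally I would evaluate the right-hand side explicitly. In the linear Gaussian case $r_t = \mathcal N(\cdot \mid \mu_t^{\mathrm{pred}}, \Sigma_t^{\mathrm{pred}})$ and $\mbhigh = (2\pi)^{-d_t/2}|\covx|^{-1/2}$, so the superlevel set $E_N = \{x : \mbhigh/r_t(x) \leq N\}$ is the ellipsoid $\{(x-\mu_t^{\mathrm{pred}})^\top (\Sigma_t^{\mathrm{pred}})^{-1}(x-\mu_t^{\mathrm{pred}}) \leq 2\log(N/c)\}$ with $c = |\Sigma_t^{\mathrm{pred}}|^{1/2}/|\covx|^{1/2}$. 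Splitting at $E_N$, the inner part contributes $\mbhigh\,\mathrm{vol}(E_N) \propto (\log N)^{d_t/2}$, because on $E_N$ the integrand $\min(\mbhigh/r_t, N)\, r_t = \mbhigh$ is constant, while the outer part contributes $N\,\P(\chi^2_{d_t} > 2\log(N/c)) = \bigO((\log N)^{d_t/2 - 1})$ by the standard chi-squared tail bound (under the predictive law $(x-\mu_t^{\mathrm{pred}})^\top(\Sigma_t^{\mathrm{pred}})^{-1}(x-\mu_t^{\mathrm{pred}})$ is exactly $\chi^2_{d_t}$). Summing the two pieces gives the claimed $\bigO((\log N)^{d_t/2})$. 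This mirrors the elementary computation in~\eqref{eq:near_linear_simplified}, with the cost-to-go domination supplying precisely the ingredient that lets the smoothing marginal be replaced by the predictive one — the reason the argument here is considerably shorter than the proof of Theorem~\ref{thm:near_linear}, where no such clean replacement is available and the finite-sample particle approximation must be handled directly.
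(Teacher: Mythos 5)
Your proposal is correct and follows essentially the same route as the paper: reduce to $\E[\min(\tauinfffbs,N)]$ via Theorem~\ref{thm:ffbs_hybrid_exec}, dominate the smoothing marginal $\Q_T(\dd x_t)$ by the predictive measure $r_t(x_t)\lambda_t(\dd x_t)$ using the boundedness of $G_t H_{t:T}$, and then bound the resulting predictive integral by $\bigO((\log N)^{d_t/2})$. The only cosmetic difference is that the paper concludes by citing Proposition~\ref{prop:oracle_linear} (proved via the $\operatorname{Gamma}(d_t/2,1)$ law of the quadratic form and incomplete-gamma asymptotics), whereas you re-derive that same estimate geometrically through the ellipsoid/chi-squared tail split — the two computations are equivalent.
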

The bound $\bigOproba(\log^{d_t/2} N)$ is actually quite conservative. For instance, with either $\sigma_y=0.5$ or $\sigma_y=3$, the model considered in Example~\ref{eg:simple_ffbs} admits $\E[\tauinfffbs] < \infty$. (Gaussian dynamics can be handled using exact analytic calculations and enables to verify the claim straightforwardly.) Theorem~\ref{thm:ffbs_hybrid_exec} then gives an execution time per particle of order $\bigOproba(1)$ for hybrid FFBS, which is better than the $\bigOproba(\sqrt{\log N})$ predicted by Corollary~\ref{cor:gaussian_hybrid_ffbs}. Yet another unsatisfactory point of the result is its failure to make sense of the spectacular improvement brought by hybrid rejection sampling over the ordinary procedure in the $\sigma_y=3$ case (see Figure~\ref{fig:small_eg_ffbs_reject}). As explained in Example~\ref{eg:simple_ffbs}, this is connected to the variance of $\E[\tau_t^{1, \mrffbs}]$ and not merely the expectation; so a study of second order properties of $N^{-1}\sum_n \min(\tau_t^{n, \mrffbs}, N)$ would be desirable.

\section{Conditionally-correlated versions of particle algorithms}
\label{apx:intractable_pratice}
\subsection{Alternative resampling schemes}
\label{apx:alternative_resampling_schemes}
In Algorithm~\ref{algo:bootstrap}, the indices $A_t^{1:N}$ are drawn conditionally i.i.d. from the multinomial distribution $\mathcal M(W_{t-1}^{1:N})$. They satisfy
\[\CE{\sum_{j=1}^N \ind_{A_t^j = i}}{\mathcal F_{t-1}} = NW_{t-1}^i \]
for any $i = 1, \ldots, N$. There are other ways to generate $A_t^{1:N}$ from $W_{t-1}^{1:N}$ that still verify this identity. We call them \textit{unbiased resampling schemes}, and the natural one used in Algorithm~\ref{algo:bootstrap} \textit{multinomial resampling}.

The main motivation for alternative resampling schemes is performance. We refer
to \cite{Chopin:CLT, Douc2005, gerber2019negative_association} for more
details, but would like to mention that the theoretical studies of particle
algorithms using other resampling schemes are more complicated since
$X_{t}^{1:N}$ are no longer i.i.d. given $\mathcal F_{t-1}$. We use systematic
resampling \citep{CarClifFearn} in our experiments. See Algorithm~\ref{algo:systematic_resampling} for a succinct description and \citet[Chapter 9]{SMCbook} for efficient implementations in $\mathcal O(N)$ running time.

\begin{algo}{Systematic resampling}\label{algo:systematic_resampling}
	\KwIn{Weights $W_{t-1}^{1:N}$ summing to $1$}
	Generate $U \sim \operatorname{Uniform}[0,1]$\;
	\For{$n \gets 1$ \KwTo $N$}{
		Set $A_t^n$ to the unique index $k$ satisfying
		\[W_1 + \cdots + W_{k-1} \leq \frac{n-1+u}{N} < W_1 + \cdots + W_k \]
	}
	\KwOut{Resampled indices $A_t^{1:N}$}
\end{algo}

\subsection{Conditionally-correlated version of Algorithm~\ref{algo:intractable}}

In this part, we present an alternative version of
Algorithm~\ref{algo:intractable} that does not create conditionally i.i.d.\
particles at each time step. The procedure is detailed in
Algorithm~\ref{algo:intractable_practice}. It creates on the fly backward
kernels $\btn{ITRC}$ (for ``intractable, conditionally correlated''). It
involves a resampling step which can be done in principle using any unbiased
resampling scheme. Following the intuitions of
Subsection~\ref{subsect:good_ancestor_couplings} and the notations of
Algorithm~\ref{algo:intractable_practice}, we want a scheme such that in most
cases, $A_t^{2k-1} \neq A_t^{2k}$ but the Euclidean distance between
$X_{t-1}^{A_t^{2k-1}}$ and $X_{t-1}^{A_t^{2k}}$ is small.
Algorithm~\ref{algo:adjacent_resampler} proposes such a method (which we name
the Adjacent Resampler). It can run in $\mathcal O(N)$ time using a suitably
implemented linked list.

\begin{algo}{Conditionally-correlated version of Algorithm~\ref{algo:intractable}}
	\label{algo:intractable_practice}
	\KwIn{Feynman-Kac model~\eqref{eq:fkmodel}, particles $X_{t-1}^{1:N}$ and weights $W_{t-1}^{1:N}$ that approximate $\Q_{t-1}(\dd x_{t-1})$}
	\underline{Resample} $A_t^{1:N}$ from $\px{1, 2, \ldots, N}$ with weights $W_{t-1}^{1:N}$ using any resampling scheme (such as the Adjacent Resampler in Algorithm~\ref{algo:adjacent_resampler}) \;
	\For{$k \gets 1$ \KwTo $N/2$}{
		\underline{Move}. Simulate $X_t^{2k-1}$ and $X_t^{2k}$ such that marginally, $X_t^{2k-1} \sim M_t(X_{t-1}^{A_t^{2k-1}}, \cdot)$ and $X_t^{2k} \sim M_t(X_{t-1}^{A_t^{2k}}, \cdot)$\;
		\underline{Calculate backward kernel}.\;
		\If{$X_t^{2k-1} = X_t^{2k}$}{
			Set $\btn{ITRC}(2k-1, \cdot) \gets \pr{\delta\px{A_t^{2k-1}} + \delta\px{A_t^{2k}}}/2$\;
			Set $\btn{ITRC}(2k, \cdot) \gets \pr{\delta\px{A_t^{2k-1}} + \delta\px{A_t^{2k}}}/2$
		}
		\Else{
			Set $\btn{ITRC}(2k-1, \cdot) \gets \delta\px{A_t^{2k-1}}$\;
			Set $\btn{ITRC}(2k, \cdot) \gets \delta\px{A_t^{2k}}$\;
		}
	}
	\reweight
	\KwOut{Particles $X_t^{1:N}$ and weights $W_t^{1:N}$ that approximate $\Q_t(\dd x_t)$; backward kernel $\btn{ITRC}$ for use in Algorithms~\ref{algo:offline_generic} and~\ref{algo:online_generic}}
\end{algo}

\begin{algo}{The Adjacent Resampler}
	\label{algo:adjacent_resampler}
	\KwIn{Particles $X_{t-1}^{1:N}$, weights $W_{t-1}^{1:N}$}
	Sort the particles $X_{t-1}^{1:N}$ using the Hilbert curve. Let $s\gets [s_1 \ldots s_N]$ be the corresponding \textit{indices}\;
	Resample from $\px{1, \ldots, N}$ with weights $W_{t-1}^{1:N}$ using systematic resampling \citep{CarClifFearn, gerber2019negative_association}, then let $f: \px{1,\ldots, N} \to \mathbb Z$ be the function defined by $f(i)$ being the number of times the index $s_i$ was resampled. Obviously $\sum_{i=1}^N f(i) = N$\;
	Initialise $i \gets 1$\;
	\For{$n\gets 1$ \KwTo $N$}{
		Set $A_t^n \gets s_i$\;
		Update $f(i) \gets f(i) - 1$\;
		Let $\Omega_1$ be the set $\px{\min\px{\ell > i \mid f_\ell > 0}}$ (which has one element if the minimum is well-defined and zero element otherwise)\;
		Let $\Omega_2$ be the set $\px{\max\px{\ell < i \mid f_\ell > 0}}$ (which has one element if the maximum is well-defined and zero element otherwise)\;
		If $\Omega_1 \cup \Omega_2$ is not empty, update $i \gets \operatorname{argmax} f|_{\Omega_1 \cup \Omega_2}$ (see section~\ref{sec:apx_other_notations} for the restriction notation). If there is more than one argmax, pick one randomly\;
	}
	\KwOut{Resampled indices $A_t^{1:N}$}
\end{algo}

\section{Additional information on numerical experiments}

\subsection{Offline smoothing in linear Gaussian models}\label{numexp_linear_gaussian_offline}

In this section, we study offline smoothing for the linear Gaussian model
specified in Section~\ref{numexp_linear_gaussian}. Since offline processing
requires storing particles at all times $t$ in the memory, we use $T=500$ here
instead of $T=3000$. Apart from that, the algorithmic and benchmark settings
remain the same.

Figure~\ref{fig:lg_offline_small_rate} plots the squared interquartile range of
the estimators $\Q_T(\varphi_t)$ with respect to $t$, for different algorithms.
For small $t$, the function $\varphi_t$ only looks at states close to time $0$,
whereas for bigger $t$, recent states less affected by degeneracy are also
taken into account. In all cases though, we see that MCMC and rejection-based
smoothers have superior performance.

\begin{figure}
	\centering
	\includegraphics[scale=0.5]{./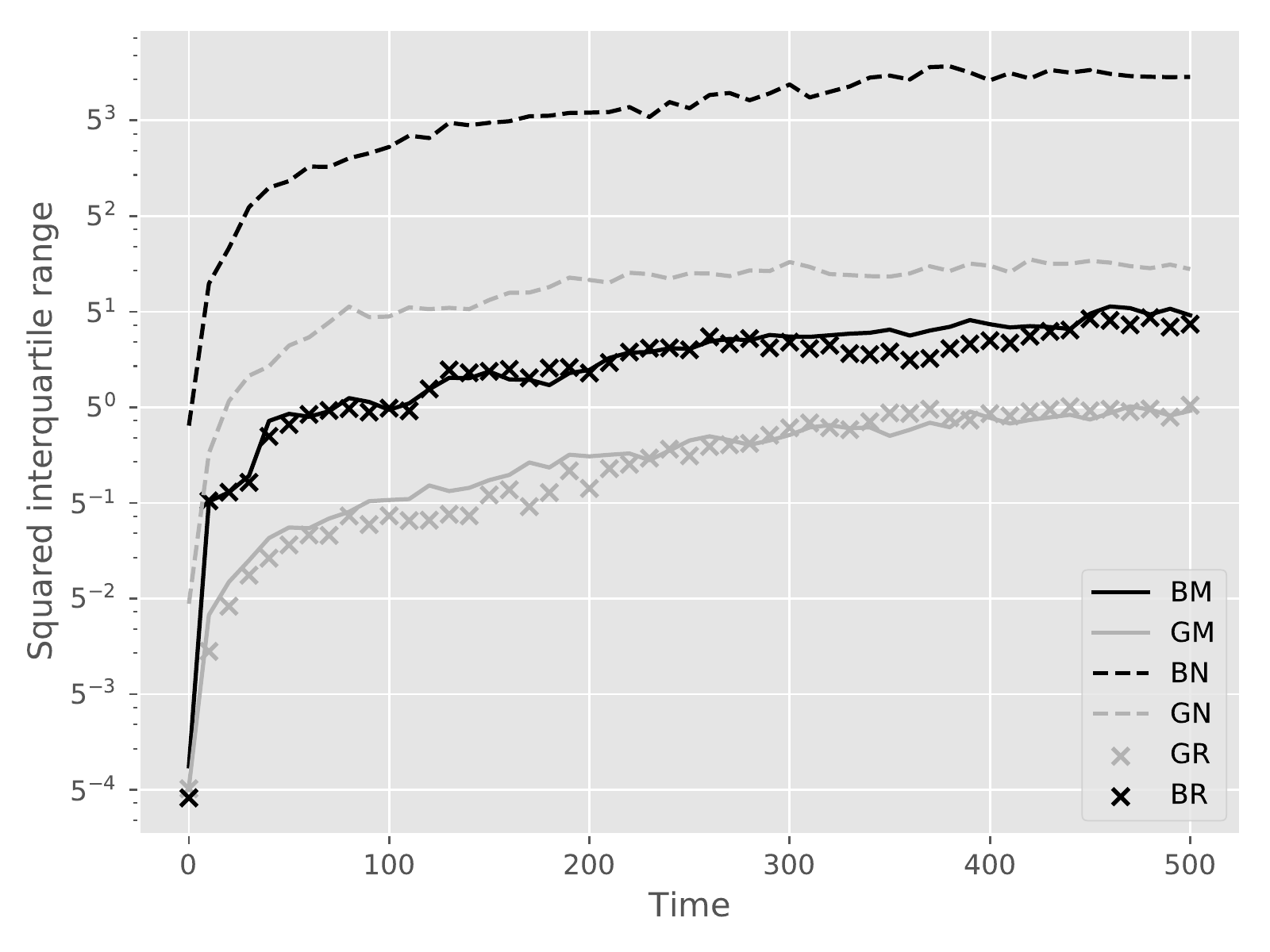}
	\caption{Squared interquartile range of the estimators of $\Q_T(\varphi_t)$ with respect 
    to $t$, for different algorithms applied to the model of 
    Section~\ref{numexp_linear_gaussian_offline}. See Section~\ref{numexp_linear_gaussian} 
   for the meaning of the acronyms in the legend.}
	\label{fig:lg_offline_small_rate}
\end{figure}

Figure~\ref{fig:lg_offline_small_exec} shows box plots of the averaged execution times (per particle $N$ per time $t$) based on $150$ runs. The observations are comparable to those in Section~\ref{numexp_linear_gaussian}. We see a performance difference between the rejection-based smoothers using the bootstrap and the guided filters. Both have an execution time that is much more variable than hybrid rejection algorithms. The latter still need around $10$ times more CPU load than MCMC smoothers, for essentially the same precision.

\begin{figure}
	\centering
	\includegraphics[scale=0.5]{./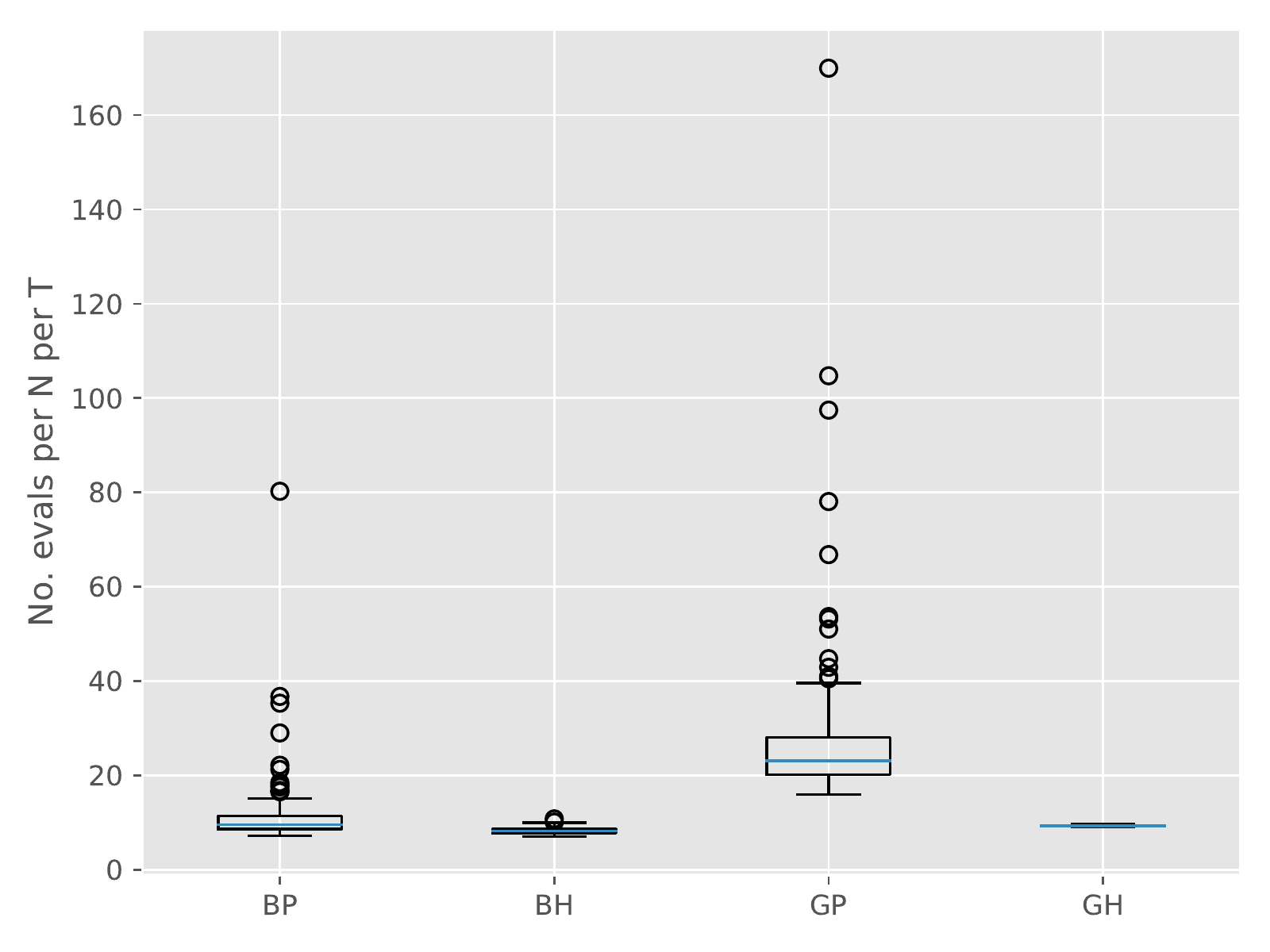}
	\caption{Box plots of the number of transition density evaluations divided by $NT$ for different algorithms in the offline linear Gaussian model of Section~\ref{numexp_linear_gaussian_offline}.}
	\label{fig:lg_offline_small_exec}
\end{figure}

We now take a closer look at the reason behind the performance difference between the bootstrap filter and the guided one when pure rejection sampling is used. Figure~\ref{fig:lg_offline_small_ess} shows the effective sample size (ESS) of both filters as a function of time. We can see that there is an outlier in the data around time $t=40$. Figure~\ref{fig:lg_offline_strange_t40} box-plots the execution times divided by $N$ at $t=40$ for the pure rejection sampling algorithm, whereas Figures~\ref{fig:lg_offline_strange_t40_before} and~\ref{fig:lg_offline_strange_t40_after} do the same for $t = 38$ and $t=42$. The root of the problem is now clear: at most times $t$ there is very few difference between the execution times of the bootstrap and the guided filters. However, if an outlier is present in the data, the guided filter suddenly requires a very high number of transition density evaluation in the rejection sampler. This gives yet another reason to avoid using pure rejection sampling.

\begin{figure}
	\centering
	\includegraphics[scale=0.5]{./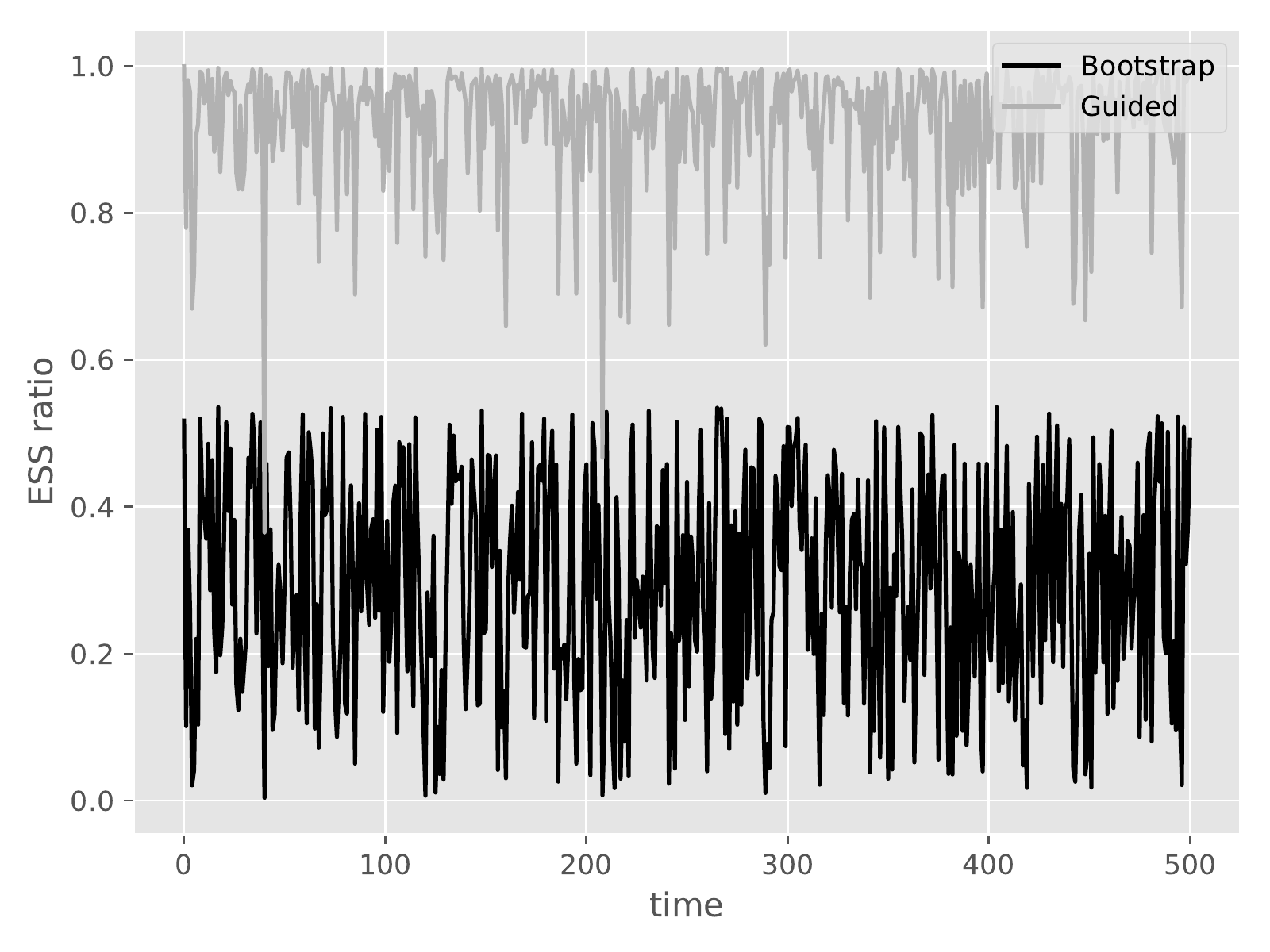}
	\caption{Evolution of the ESS for the linear Gaussian model of Section~\ref{numexp_linear_gaussian_offline}.}
	\label{fig:lg_offline_small_ess}
\end{figure}

\begin{figure}
	\centering
	\includegraphics[scale=0.5]{./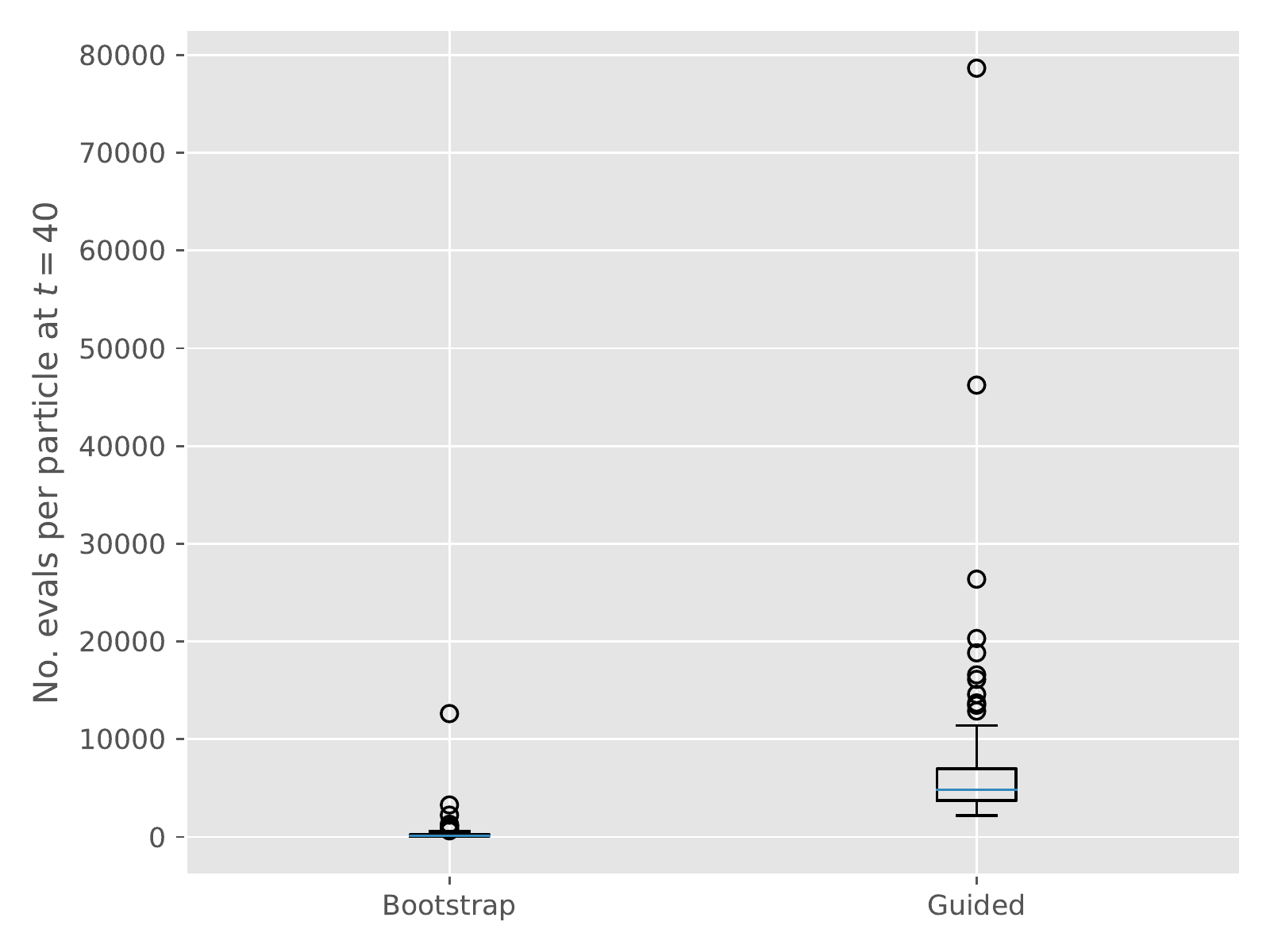}
	\caption{Box plots of the average execution time per particle for the pure rejection algorithm at time $t = 40$. Figure produced based on $150$ independent runs of the model described in Section~\ref{numexp_linear_gaussian_offline}.}
	\label{fig:lg_offline_strange_t40}
\end{figure}

\begin{figure}
	\centering
	\includegraphics[scale=0.5]{./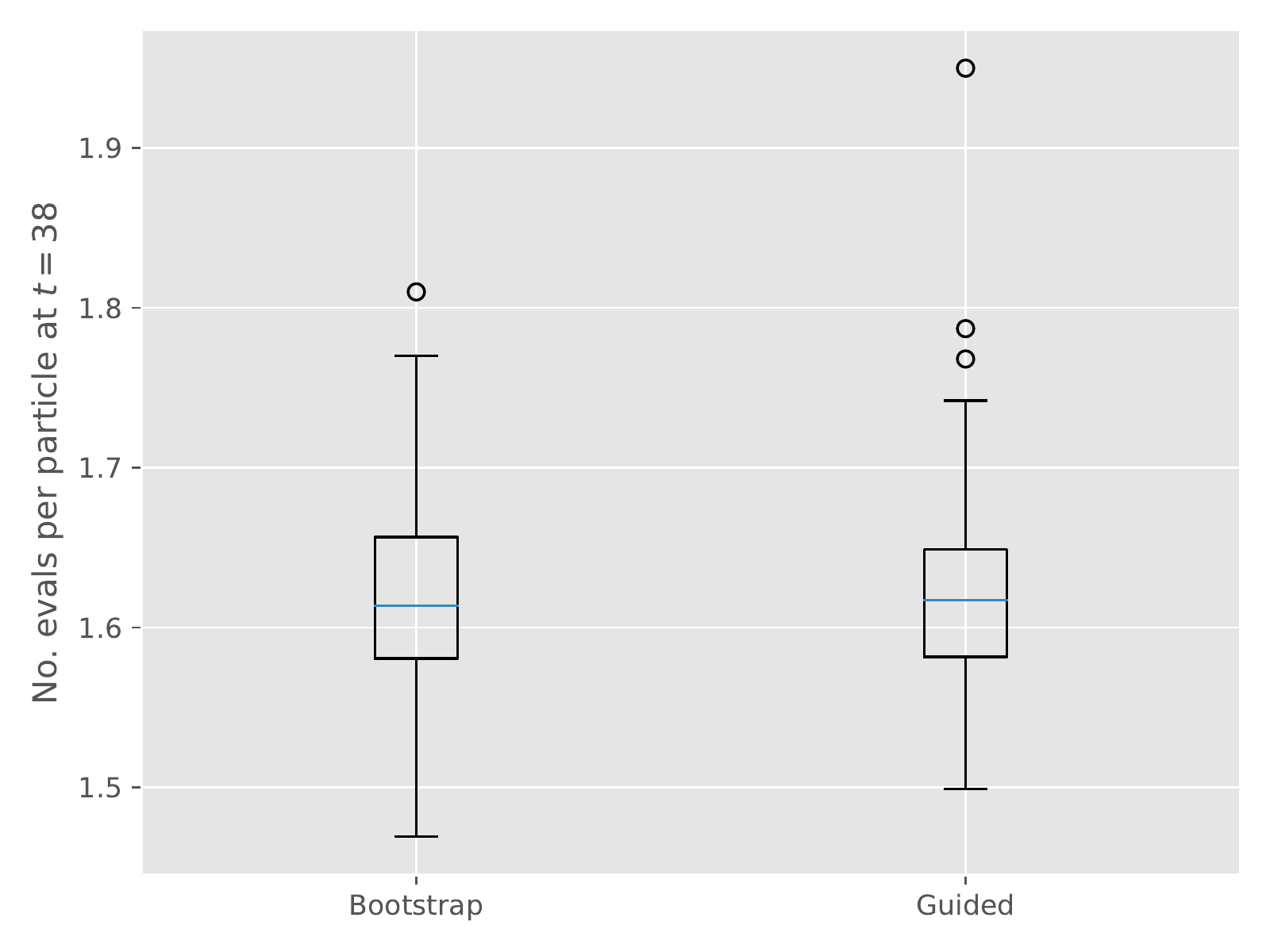}
	\caption{Same as Figure~\ref{fig:lg_offline_strange_t40}, but for $t=38$.}
	\label{fig:lg_offline_strange_t40_before}
\end{figure}

\begin{figure}
	\centering
	\includegraphics[scale=0.5]{./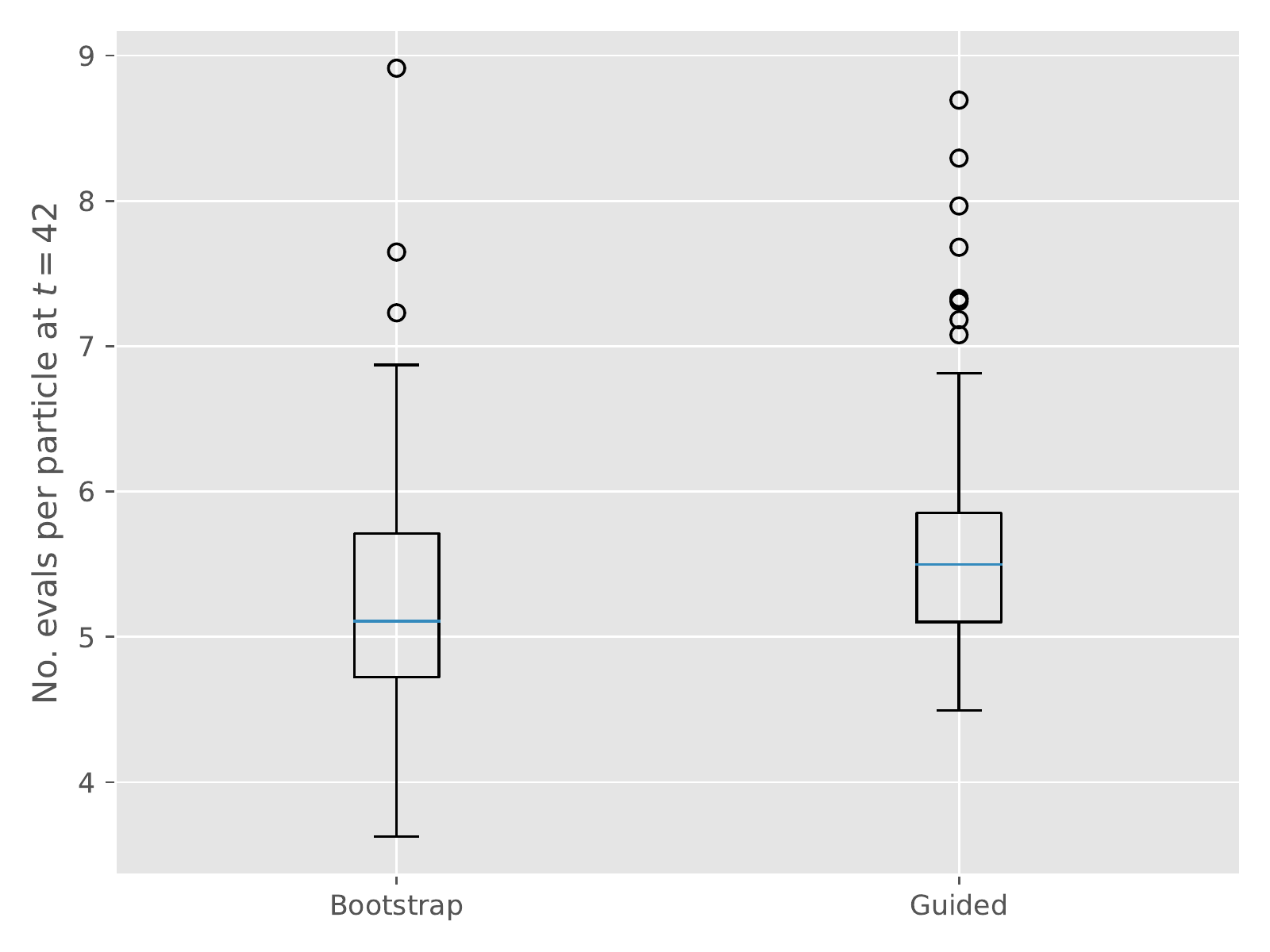}
	\caption{Same as Figure~\ref{fig:lg_offline_strange_t40}, but for $t=42$.}
	\label{fig:lg_offline_strange_t40_after}
\end{figure}

\subsection{Lotka-Volterra SDE}
\subsubsection{Coupling of Euler discretisations}
\label{apx:coupling_euler}
Consider the SDE
\begin{equation}
\label{eq:sde_sample}
\dd X_t = b(X_t) \dd t + \sigma(X_t) \dd W_t
\end{equation}
and two starting points $X_0\expa$ and $X_0\expb$ in $\mathbb R^d$. We wish to simulate $X_1\expa$ and $X_1\expb$ such that the transitions from $X_0\expa$ to $X_1\expa$ and $X_0\expb$ to $X_1\expb$ both follow the Euler-discretised version of the equation, but $X_1\expa$ and $X_1\expb$ are correlated in a way that increases, as much as we can, the probability that they are equal. Algorithm~\ref{algo:coupling_two_euler} makes it clear that it all boils down to the coupling of two Gaussian distributions.
\begin{algo}{Coupling of two Euler discretisations}
	\label{algo:coupling_two_euler}
	\KwIn{Functions $b: \mathbb R^d \to \mathbb R^d$ and $\sigma: \mathbb R^d \to \mathbb R^{d \times d}$, two starting points $X_0\expa$ and $X_0\expb$ at time $0$, number of discretisation step $\ndist$}
	Initialise $X\expa \gets X_0\expa$\;
	Initialise $X\expb \gets X_0\expb$\;
	Set $\delta \gets 1/\ndist$\;
	\For{$i \gets 1$ \KwTo $\ndist$}{
		Simulate $(\tilde X\expa, \tilde X\expb)$ from a coupling of
		\[\mathcal N( X\expa + \delta b( X\expa), \delta \sigma(X\expa) \sigma( X\expa)^\top) \]
		and
		\[\mathcal N( X\expb + \delta b( X\expb), \delta \sigma(X\expb) \sigma( X\expb)^\top), \]
		such as Algorithm~\ref{algo:coupling_two_gaussian}\;
		Update $(X\expa, X\expb) \gets (\tilde X\expa, \tilde X\expb)$\;
	}
	Set $(X_1\expa, X_1\expb) \gets (X\expa, X\expb)$\;
	\KwOut{Two endpoints $X_1\expa$ and $X_1\expb$ at time $1$, obtained by passing $X_0\expa$ and $X_0\expb$ in a correlated manner through a discretised version of \eqref{eq:sde_sample}}  
\end{algo}

\citet{lindvall1986coupling} propose the following construction: if two diffusions $X_t\expa$ and $X_t\expb$ both follow the dynamics of \eqref{eq:sde_sample}, that is,
\begin{align*}
\dd X_t\expa &= b(X_t\expa) \dd t + \sigma(X_t\expa) \dd W_t\expa\\
\dd X_t\expb &= b(X_t\expb) \dd t + \sigma(X_t\expb) \dd W_t\expb
\end{align*}
and the two Brownian motions are correlated via
\begin{equation}
\label{eq:lindvall}
\dd W_t\expb = [\operatorname{Id} - 2u(X\expa, X\expb) u(X\expa, X\expb)^\top] \dd W_t\expa 
\end{equation}
where $\operatorname{Id}$ is the identity matrix and the vector $u$ is defined by
\[u(x,x') = \frac{\sigma(x')^{-1}(x-x')}{\norm{\sigma(x')^{-1}(x-x')}_2}, \]
then under some regularity conditions, the two diffusions meet almost surely. (Note two special features of \eqref{eq:lindvall}: it is valid because the term in the square bracket is an orthogonal matrix; and it ceases to be well-defined once the two trajectories have met.) Simulating the meeting time $\tau$ turns out to be very challenging. The Euler discretisation (Algorithm~\ref{algo:coupling_two_euler} + Algorithm~\ref{algo:coupling_two_gaussian_lindvall}) has a fixed step size $\delta$, and there is zero probability that $\tau$ is of the form $k \delta$ for some integer $k$. Since the coupling transform is deterministic, the two Euler-simulated trajectories will \textit{never} meet. Figure~\ref{fig:lindvall_naive_brownian} depicts this difficulty in the special case of two Brownian motions in dimension $1$ (i.e.\ $b(x) \equiv 0$ and $\sigma \equiv 1$). Under this setting, \eqref{eq:lindvall} means that the two Brownian increments are symmetric with respect to the midpoint of the segment connecting their initial states. Note that the two dashed lines do cross at two points, but using them as meeting points is invalid: since they are not part of the discretisation but the result of some heuristic ``linear interpolation'', it would change the distribution of the trajectories.

\begin{algo}{Lindvall-Rogers coupling of two Gaussian distributions}
	\label{algo:coupling_two_gaussian_lindvall}
	\KwIn{Two vectors $\mu\expa, \mu\expb$ in $\mathbb R^d$ and two $d \times d$ matrices $\sigma\expa$ and $\sigma\expb$}
	Calculate $u \gets (\sigma\expb)^{-1} (\mu\expa - \mu\expb)$\;
	Normalise $u \gets u/\norm{u}_2$\;
	Simulate $W\expa \sim \mathcal N(0, \operatorname{Id})$\;
	Set $W\expb \gets (\operatorname{Id} - 2uu^\top) W\expa$\;
	Set $X\expa \gets \mu\expa + \sigma\expa W\expa$\;
	Set $X\expb \gets \mu\expb + \sigma\expb W\expb$\;
	\KwOut{Two correlated points $X\expa$ and $X\expb$ marginally distributed according to $\mathcal N(\mu\expa, \sigma\expa (\sigma\expa)^\top)$ and $\mathcal N(\mu\expb, \sigma\expb (\sigma\expb)^\top)$ respectively}
\end{algo}

\begin{figure}
	\includegraphics[scale=0.5]{./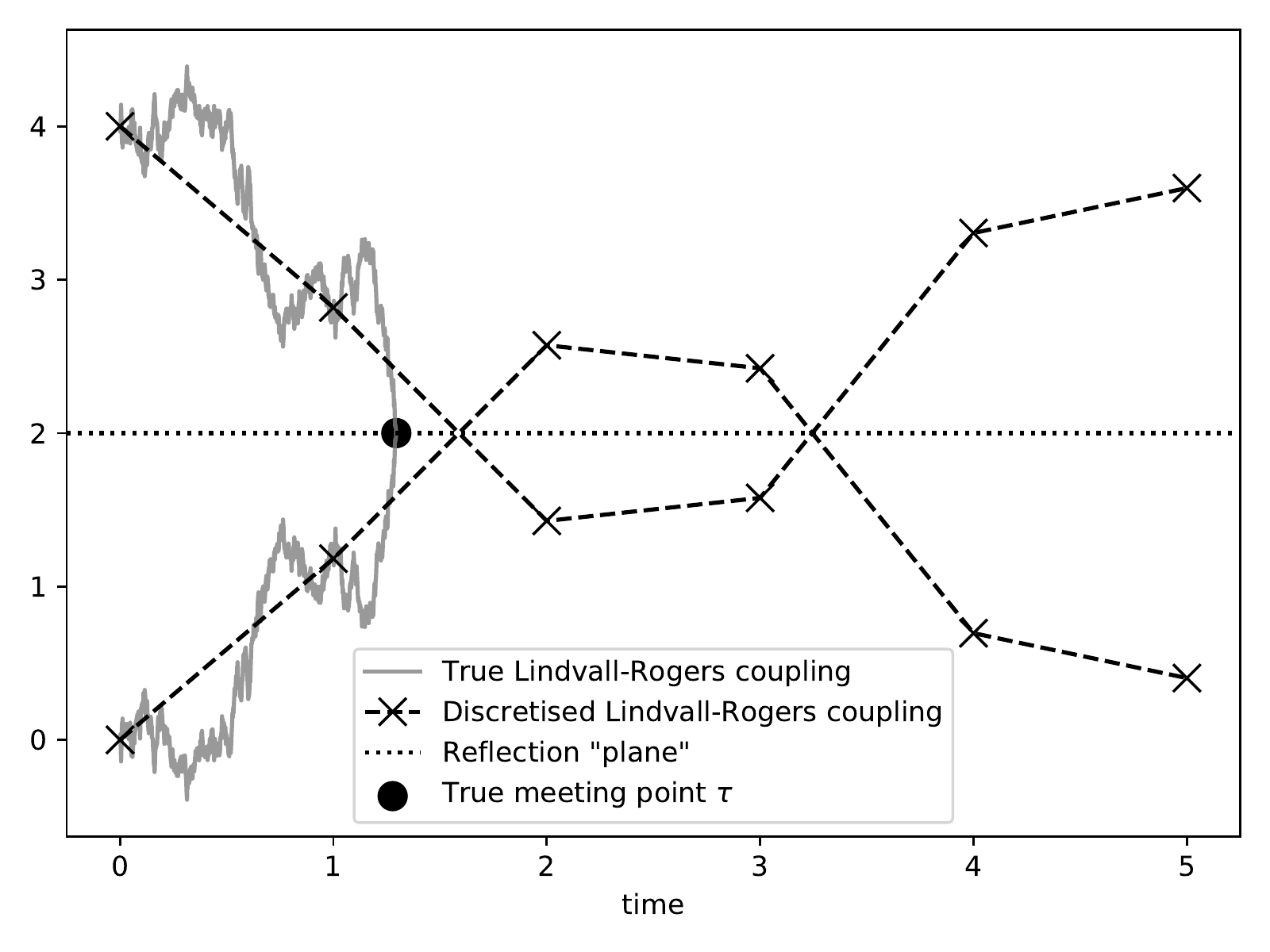}
	\caption{Coupling of two Brownian motions in $\mathbb R$ starting from $0$ and $4$ respectively. The true Lindvall-Rogers coupling \eqref{eq:lindvall} is represented by the continuous grey lines. The dicretised simulation (Algorithm~\ref{algo:coupling_two_euler} + Algorithm~\ref{algo:coupling_two_gaussian_lindvall}) is shown by the dashed lines. The discretised trajectories not only miss the true meeting point $\tau$ but also never meet afterwards (see text).}
	\label{fig:lindvall_naive_brownian}
\end{figure}

We therefore need some coupling that has a non-zero meeting probability at each $\delta$-step. This can be achieved by the rejection maximal coupling (Algorithm~\ref{algo:maximal_coupling}, see also, e.g.\ \citealp{roberts2004general}) as well as the recently proposed coupled rejection sampler \citep{corenflos2022coupledrejection}. However, they all make use of rejection sampling in one way or another, which renders the execution time random. We wish to avoid this if possible. The reflection-maximal coupling \citep{bourabee2020coupling, jacob_unbiased_mcmc} has deterministic cost and optimal meeting probability, but is only applicable for two Gaussian distributions of the same covariance matrix, which is not our case.

\begin{algo}{Rejection maximal coupler for two distributions}
	\label{algo:maximal_coupling}
	\KwIn{Two probability distributions $f\expa$ and $f\expb$}
	Simulate $X\expa \sim f\expa$\;
	Simulate $U\expa \sim \operatorname{Uniform}[0, f\expa(X\expa)]$\;
	\If{$U\expa \leq f\expb(X\expa)$}{
		Set $X\expb \gets X\expa$\;
	}
	\Else{
		\Repeat{$U\expb > f\expa(X\expb)$}{
			Simulate $X\expb \sim f\expb$\;
			Simulate $U\expb \sim \operatorname{Uniform}[0, f\expb(X\expb)]$\;
		}
	}
	\KwOut{Two maximally-coupled realisations $X\expa$ and $X\expb$, marginally $f\expa$-distributed and $f\expb$-distributed respectively}
\end{algo}

As suggested by Figure~\ref{fig:lindvall_naive_brownian}, the discretised Lindvall-Rogers coupling (Algorithm~\ref{algo:coupling_two_gaussian_lindvall}) is actually great for bringing together two faraway trajectories. Only when they start getting closer that it misses out. At that moment, the two distributions corresponding to the next $\delta$-step have non-negligible overlap and would preferably be coupled in the style of Algorithm~\ref{algo:maximal_coupling}. We propose a modified coupling scheme that acts like Algorithm~\ref{algo:coupling_two_gaussian_lindvall} when the two trajectories are at a large distance and behaves as Algorithm~\ref{algo:maximal_coupling} otherwise.

The idea is to preliminarily generate a uniform draw in the ``overlapping
zone'' of the two distributions (if they are close enough to make that easy). Next, we perform Algorithm~\ref{algo:coupling_two_gaussian_lindvall} and then, any of the two simulations belonging to the overlapping zone will be replaced by the aforementioned preliminary draw (if it is available). The precise mathematical formulation is given in Algorithm~\ref{algo:coupling_two_gaussian} and the proof in Supplement~\ref{subsect:validity_lindvall_rogers_coupler}.

\begin{algo}{Modified Lindvall-Rogers (MLR) coupler of two Gaussian distributions}
	\label{algo:coupling_two_gaussian}
	\KwIn{Two vectors $\mu\expa$ and $\mu\expb$ in $\mathbb R^d$, two $d\times d$ matrices $\sigma\expa$ and $\sigma\expb$}
	Let $f\expa$ and $f\expb$ be respectively the probability densities of $\mathcal N(\mu\expa, \covmata)$ and $\mathcal N(\mu\expb, \covmatb)$\;
	Simulate $X\expa$ and $X\expb$ from Algorithm~\ref{algo:coupling_two_gaussian_lindvall}\;
	Simulate $U \sim \operatorname{Uniform}[0,1]$\;
	Set $U\expa\gets Uf\expa(X\expa)$ and $U\expb\gets Uf\expb(X\expb)$\;
	Simulate $Y \sim f\expa$ and $V \sim \operatorname{Uniform}[0, f\expa(Y)]$\;
	\If{$V\leq f\expb(Y)$}{
		if $U\expa \leq f\expb(X\expa)$ then update $(X\expa, U\expa) \gets (Y, V)$\;
		if $U\expb \leq f\expa(X\expb)$ then update $(X\expb, U\expb) \gets (Y, V)$\; 
	}
	\KwOut{Two correlated random vectors $X\expa$ and $X\expb$, distributed marginally according to $\mathcal N(\mu\expa, \covmata)$ and $\mathcal N(\mu\expb, \covmatb)$}
\end{algo}

Algorithm~\ref{algo:coupling_two_gaussian} has a deterministic execution time,
but it does not attain the optimal coupling rate. Yet, as $\delta \to 0$, we
see empirically that it still recovers the oracle coupling time defined 
by~\eqref{eq:lindvall} (although we did not try to prove this formally). In
Figure~\ref{fig:brownian_delta0_meeting}, we couple two standard Brownian
motions starting from $a=0$ and $b=1.5$ using
Algorithm~\ref{algo:coupling_two_gaussian} with different values of $\delta$.
It is known, by a simple application of the reflection principle
(\citealp{levy1940certains}; see also Chapter 2.2 of
\citealp{morters2010brownian}), that the reflection
coupling~\eqref{eq:lindvall} succeeds after a
$\operatorname{Levy}(0,(b-a)^2/4)$-distributed time. We therefore have to deal
with a heavy-tailed distribution and restrict ourselves to the interval
$[0,5]$. We see that the law of the meeting time is stable and convergent as
$\delta \to 0$. Thus, at least empirically,
Algorithm~\ref{algo:coupling_two_gaussian} does not suffer from the instability
problem as $\delta \to 0$, contrary to a naive path space augmentation approach
(see \citealp{yonekura2022online_smoothing} for a discussion).

\begin{figure}
	\centering
	\includegraphics[scale=0.5]{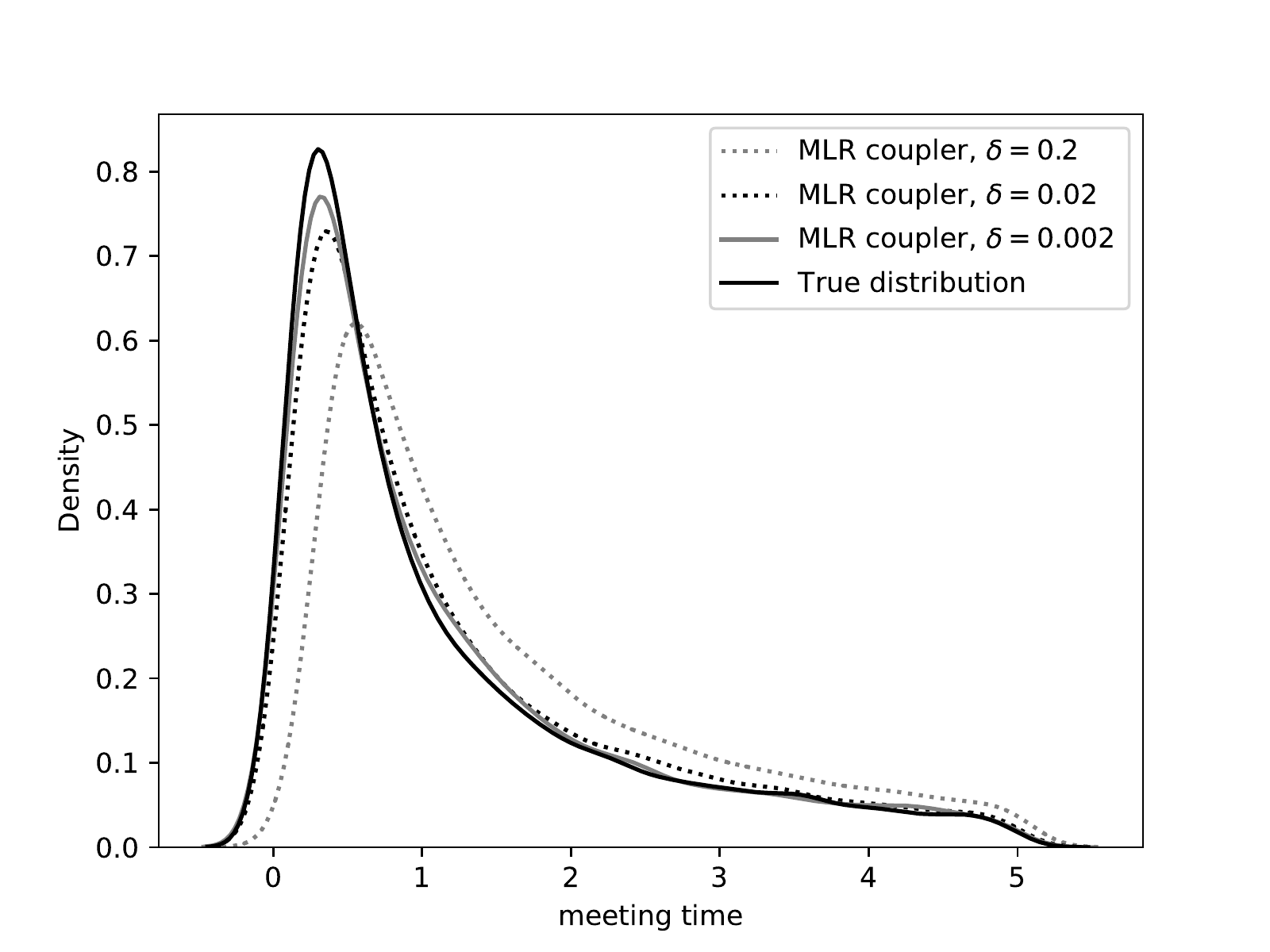}
	\caption{Densities of the meeting times restricted to $[0,5]$ for two
    Brownian motions started from $0$ and $1.5$. The curves are drawn using
$20\ 000$ simulations from either a Levy distribution (for the ``True
distribution'' curve) or Algorithm~\ref{algo:coupling_two_gaussian} (for the MLR ones). The boundary effect of kernel density estimators causes spills beyond $0$ and $5$.\label{fig:brownian_delta0_meeting}}
\end{figure}

\subsubsection{Supplementary figures}\label{more_figures_sde}

Figure~\ref{fig:sde_small_realisation} plots a realisation of the states and data with parameters given in Subsection~\ref{numexp_lk_sde}, for a relatively small scale dataset ($T=50$). While the periodic trait seen in classical deterministic Lotka-Volterra equations is still visible (with a period of around $20$), it is clear that here random perturbations have added considerable chaos to the system. Figures~\ref{fig:sde_small_naive} and~\ref{fig:sde_small_intractable} show respectively the performances of the naive genealogy tracking smoother and ours (Algorithm~\ref{algo:intractable_practice}) on the dataset of Figure~\ref{fig:sde_small_realisation}. Our smoother has successfully prevented the degeneracy phenomenon, particularly for times close to $0$. Figure~\ref{fig:sde_big_ess} shows, in two different ways, the properties of effective sample sizes (ESS) in the $T=3000$ scenario (see Section~\ref{numexp_lk_sde}). 
\begin{figure}
	\centering
	\includegraphics[scale=0.5]{./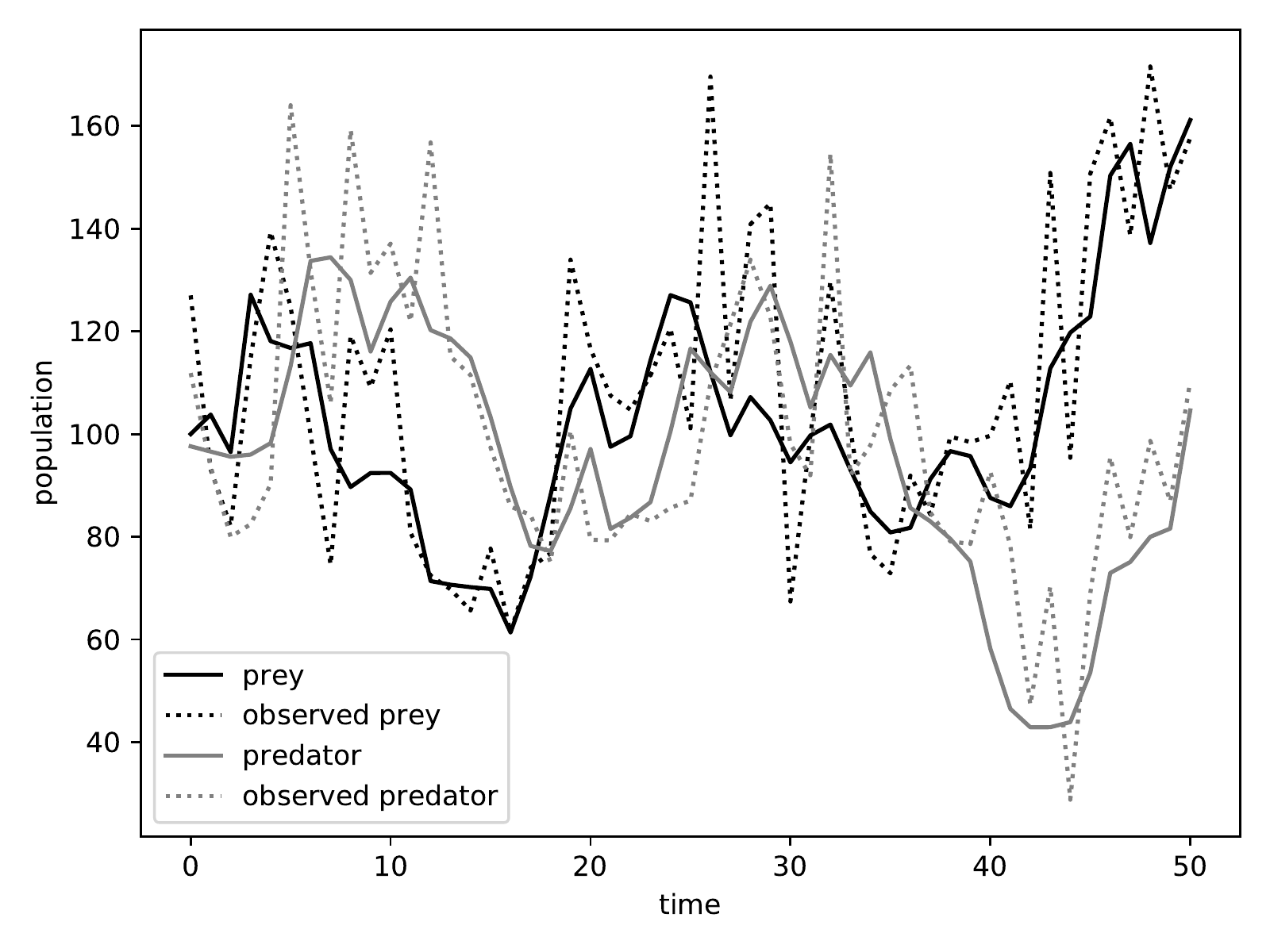}
	\caption{A realisation of the Lotka-Volterra SDE with parameters described in Section~\ref{numexp_lk_sde}. The stationary point of the system is $[100, 100]$.}
	\label{fig:sde_small_realisation}
\end{figure}

\begin{figure}
	\centering
	\includegraphics[scale=0.5]{./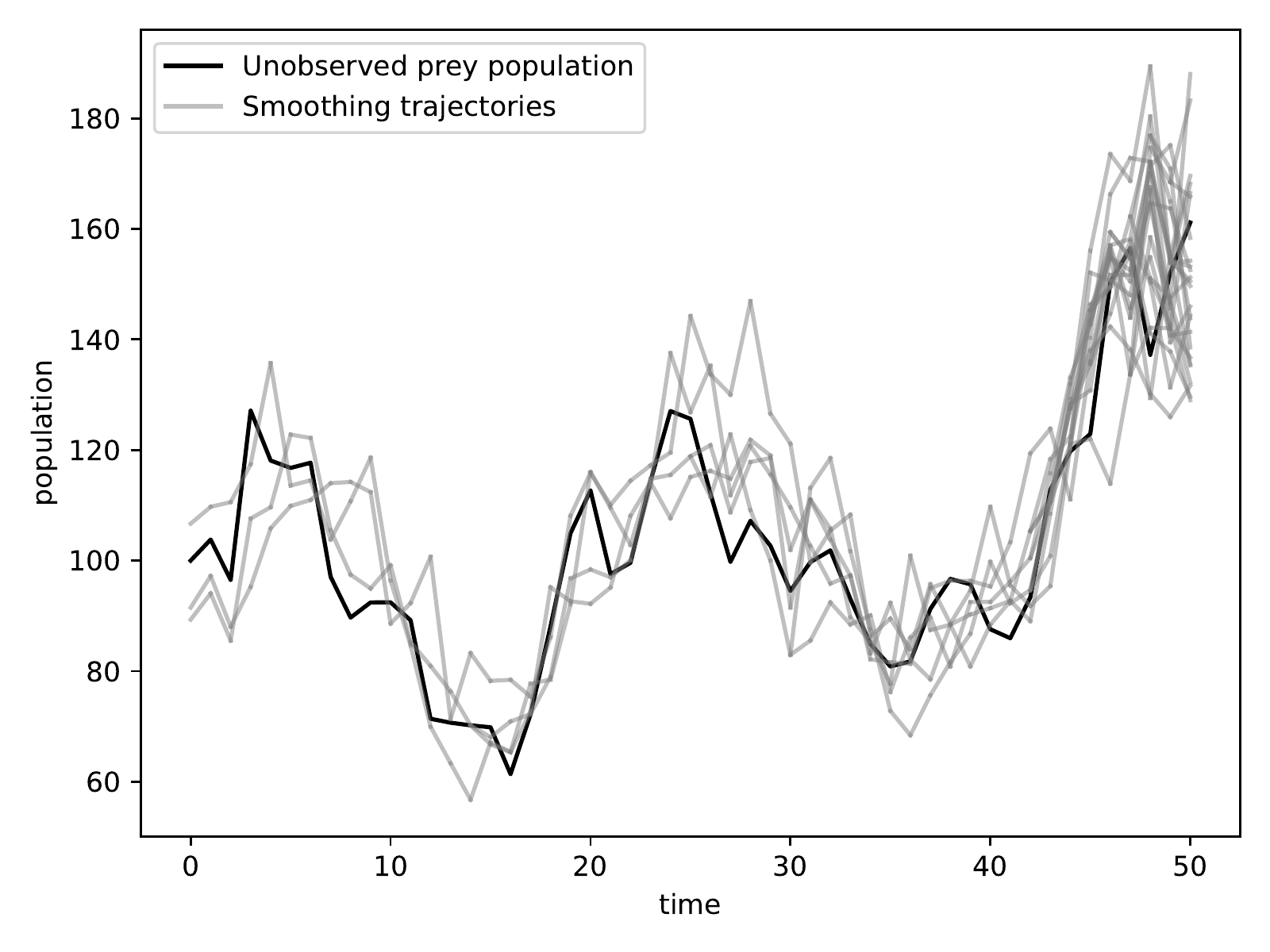}
	\caption{Smoothing trajectories for the dataset of Figure~\ref{fig:sde_small_realisation} using the naive genealogy tracking smoother ($\btn{GT}$ kernels) with systematic resampling (see Section~\ref{apx:alternative_resampling_schemes}). We took $N=100$ and randomly plotted $30$ smoothing trajectories.}
	\label{fig:sde_small_naive}
\end{figure}

\begin{figure}
	\centering
	\includegraphics[scale=0.5]{./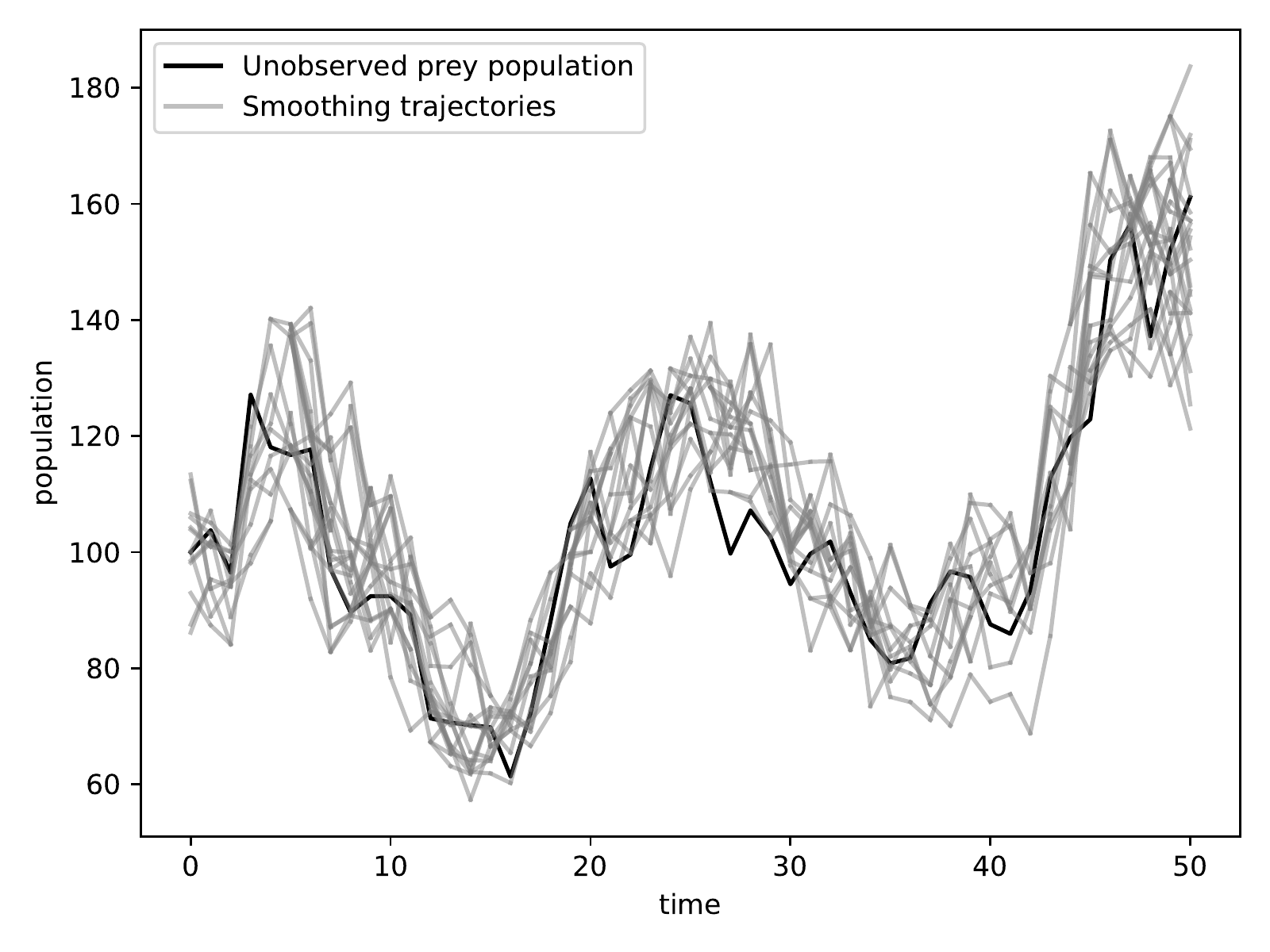}
	\caption{Same as Figure~\ref{fig:sde_small_naive}, but smoothing was done using Algorithm~\ref{algo:intractable_practice} instead.}
	\label{fig:sde_small_intractable}
\end{figure}

\begin{figure}
	\centering
	\includegraphics[scale=0.5]{./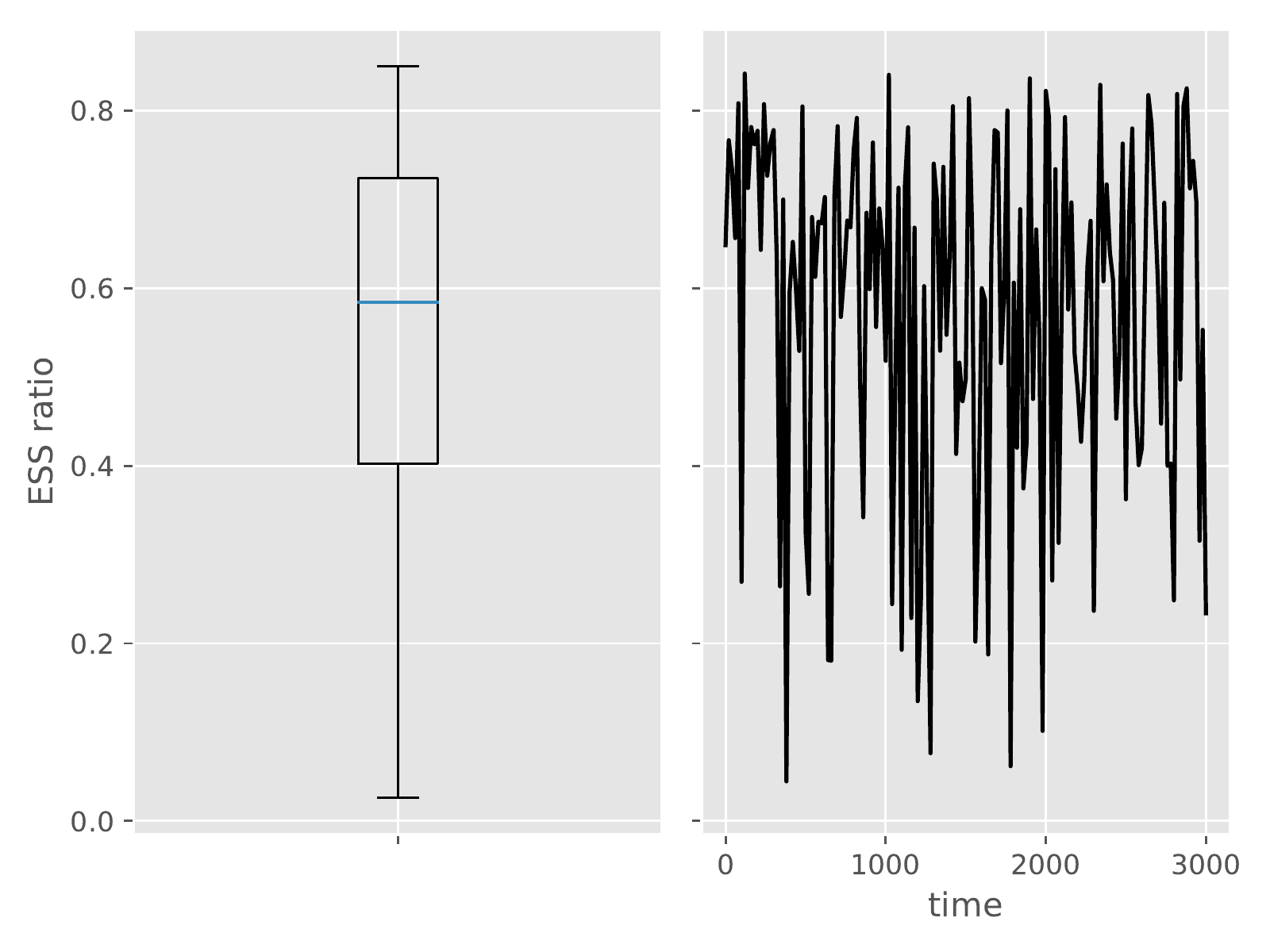}
	\caption{Effective sample size (ESS) for the Lotka-Voltterra SDE model with $T = 3000$ (Section~\ref{numexp_lk_sde}). The left pane draws the box plot of the collection of all estimated ESS for $t=0, \ldots, 3000$. The right pane plots the evolution of ESS with time. The quantity changes so chaotically that the curve only plots one value every $20$ time steps for readability.}
	\label{fig:sde_big_ess}
\end{figure}

\section{Proofs}

\subsection{\prot~\ref{thm:convergence_mcmc} (general convergence theorem)}
\label{ap:proof:cvg}
In line with~\eqref{eq:joint_empirical_smoothing}, we define the distribution $\Q_t^N(\dd x_{0:t})$ for $t<T$ as the $x_{0:t}$ marginal of the joint distribution
\begin{equation}
\label{eq:def_joint_qt}
\bar \Q_t^N(\dd x_{0:t}, \dd i_{0:t}) \eqdef \mathcal{M}(W_t^{1:N})(\dd i_t) \ps{\prod_{s=t}^1 B_s^N(i_s, \dd i_{s-1})} \ps{\prod_{s=t}^0 \delta_{X_s^{i_s}}(\dd x_s)}.
\end{equation}
The proof builds up on an inductive argument which links $\Q_t^N$ with $\Q_{t-1}^N$ through new innovations at time $t$. More precisely, we have the following fundamental proposition, where $\FTP t$ is defined as the smallest $\sigma$-algebra containing $\mathcal F_t$ and $\hat B_{1:t}^N$.

\begin{prop}\label{prop:fundamental}
	$\Q_t^N$ is a mixture distribution that admits the representation
    \begin{equation}
        \label{eq:mixture_rep_qtn}
        \Q_t^N(\dd x_{0:t}) = \nmt \sum_n G_t(x_t) K_t^N(n, \dd x_{0:t}) 
    \end{equation}
	where $\ell_t^N$ is defined in Algorithm~\ref{algo:bootstrap} and $K_t^N(n, \dd x_{0:t})$ is a certain probability measure satisfying
	\begin{equation}\label{eq:link_qt_qtm1} 
        \ceftmone{K_t^N(n, \dd x_{0:t})} = \Q_{t-1}^N(\dd x_{0:t-1}) M_t(x_{t-1}, \dd x_t). 
    \end{equation}
In other words, for any (possibly random) function $\varphi_t^N: \mathcal X_0
\times \cdots \times \mathcal X_t \to \mathbb R$ such that $\varphi_t^N(x_{0:t})$
is $\FTP{t-1}$-measurable, we have
	\[\ceftmone{\int K_t^N(n, \dd x_{0:t}) \varphi_t^N(x_{0:t})} = \int \Q_{t-1}^N(\dd x_{0:t-1}) M_t(x_{t-1}, \dd x_t) \varphi_t^N(x_{0:t}). \]
	Moreover, $\int K_t^N(n, \dd x_{0:t}) \varphi_t^N(x_{0:t})$, for $n=1,
    \ldots, N$ are i.i.d.\ given $\FTP{t-1}$.
\end{prop}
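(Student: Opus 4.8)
The plan is to peel the top index $i_t$ off the joint law~\eqref{eq:def_joint_qt} and recognise what remains as the law of a backward trajectory. For each index $m$, let $L_{t-1}^N(m, \dd x_{0:t-1})$ denote the probability measure obtained by fixing $i_{t-1}=m$ in the bracketed product of~\eqref{eq:def_joint_qt} at time $t-1$; it is the conditional law of the backward path given that it passes through $i_{t-1}=m$, it charges only $x_{t-1}=X_{t-1}^m$, and it satisfies $\sum_m W_{t-1}^m L_{t-1}^N(m, \dd x_{0:t-1}) = \Q_{t-1}^N(\dd x_{0:t-1})$. Writing $i_t=n$ and using $\mathcal M(W_t^{1:N})(\px{n}) = W_t^n = G_t(X_t^n)/(N\ell_t^N)$, I would marginalise~\eqref{eq:def_joint_qt} over $i_{0:t}$ to get
\[\Q_t^N(\dd x_{0:t}) = \sum_{n=1}^N W_t^n \, \delta_{X_t^n}(\dd x_t) \int B_t^N(n, \dd m) \, L_{t-1}^N(m, \dd x_{0:t-1}).\]
Setting $K_t^N(n, \dd x_{0:t}) \eqdef \delta_{X_t^n}(\dd x_t) \int B_t^N(n, \dd m) L_{t-1}^N(m, \dd x_{0:t-1})$, each $K_t^N(n, \cdot)$ is a probability measure since every $B_s^N$ is Markov and the Diracs integrate to one. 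As $K_t^N(n, \cdot)$ charges only $x_t=X_t^n$, I replace $G_t(X_t^n)$ by $G_t(x_t)$ inside the sum, obtaining the mixture representation~\eqref{eq:mixture_rep_qtn} with weight $\nmt$.

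For the link~\eqref{eq:link_qt_qtm1}, the crucial structural fact is that $L_{t-1}^N(\cdot, \cdot)$ is $\FTP{t-1}$-measurable, so conditionally on $\FTP{t-1}$ the only randomness in $K_t^N(n, \cdot)$ lies in $X_t^n$ and the row $\hat B_t^N(n, \cdot)$. Introducing $J_t^n \sim B_t^N(n, \cdot)$ as in the second hypothesis of Theorem~\ref{thm:convergence_mcmc}, I rewrite $\int B_t^N(n, \dd m) L_{t-1}^N(m, \cdot) = \CE{L_{t-1}^N(J_t^n, \cdot)}{X_{t-1}^{1:N}, X_t^n, \hat B_t^N(n, \cdot)}$, so that $\ceftmone{K_t^N(n, \dd x_{0:t})} = \ceftmone{\delta_{X_t^n}(\dd x_t) L_{t-1}^N(J_t^n, \dd x_{0:t-1})}$. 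The first hypothesis ensures that, given $\FTP{t-1}$, the law of $(X_t^n, \hat B_t^N(n, \cdot))$ depends only on $X_{t-1}^{1:N}$; the second then lets me swap $(J_t^n, X_t^n)$ for $(A_t^n, X_t^n)$ under $\ceftmone{\cdot}$. Using $A_t^n \sim \mathcal M(W_{t-1}^{1:N})$ and $X_t^n \mid A_t^n \sim M_t(X_{t-1}^{A_t^n}, \cdot)$ gives $\sum_m W_{t-1}^m M_t(X_{t-1}^m, \dd x_t) L_{t-1}^N(m, \dd x_{0:t-1})$; since $L_{t-1}^N(m, \cdot)$ charges only $x_{t-1}=X_{t-1}^m$, I replace $M_t(X_{t-1}^m, \cdot)$ by $M_t(x_{t-1}, \cdot)$ and recognise $\sum_m W_{t-1}^m L_{t-1}^N(m, \cdot) = \Q_{t-1}^N$, which is exactly~\eqref{eq:link_qt_qtm1}.

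The identity for a general $\FTP{t-1}$-measurable $\varphi_t^N$ follows by integrating~\eqref{eq:link_qt_qtm1} against $\varphi_t^N$, which may be pulled inside $\ceftmone{\cdot}$ and treated as deterministic. The conditional i.i.d.\ claim is immediate from the first hypothesis: given $\FTP{t-1}$ the triples $(X_t^n, A_t^n, \hat B_t^N(n, \cdot))$ are i.i.d.\ across $n$, and $K_t^N(n, \cdot)$ is a fixed measurable function of $(X_t^n, \hat B_t^N(n, \cdot))$ and the $\FTP{t-1}$-fixed kernel $L_{t-1}^N$; hence the scalars $\int K_t^N(n, \dd x_{0:t}) \varphi_t^N(x_{0:t})$ are i.i.d.\ across $n$.

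The main obstacle I anticipate is justifying the two swaps — replacing the $B_t^N(n, \cdot)$-integral by $J_t^n$ and then exchanging $J_t^n$ for $A_t^n$ — at the level of the full conditional law given $\FTP{t-1}$ rather than merely given $X_{t-1}^{1:N}$. The graphical model of Figure~\ref{fig:thm1:variables} is what makes this precise: $\hat B_{1:t-1}^N$ is conditionally independent of the time-$t$ innovations given $X_{t-1}^{1:N}$, so enlarging the conditioning from $X_{t-1}^{1:N}$ to $\FTP{t-1}$ leaves intact the distributional equality $(J_t^n, X_t^n) \stackrel{d}{=} (A_t^n, X_t^n)$ furnished by the hypothesis. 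The two support arguments are routine but essential: they are precisely what turn the particle-indexed quantities $G_t(X_t^n)$ and $M_t(X_{t-1}^m, \cdot)$ into the functions $G_t(x_t)$ and $M_t(x_{t-1}, \cdot)$ that appear in the statement.
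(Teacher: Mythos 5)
Your proof is correct and follows essentially the same route as the paper's: your $L_{t-1}^N(m,\cdot)$ is exactly the conditional law $\bar\Q_{t-1}^N(\dd x_{0:t-1}\mid i_{t-1}=m)$ used there, your $K_t^N$ coincides with the paper's via the same backward recursion, and the $J_t^n \leftrightarrow A_t^n$ swap under $\ceftmone{\cdot}$, justified by the first hypothesis so that the conditioning can be enlarged from $X_{t-1}^{1:N}$ to $\FTP{t-1}$, is precisely the paper's argument. The only slip is the intermediate display conditioning the expectation of $L_{t-1}^N(J_t^n,\cdot)$ on $\sigma(X_{t-1}^{1:N}, X_t^n, \hat B_t^N(n,\cdot))$, with respect to which the random kernel $L_{t-1}^N$ is not measurable (it should be $\sigma(\FTP{t-1}, X_t^n, \hat B_t^N(n,\cdot))$), but your closing paragraph identifies and repairs exactly this point, just as the paper does.
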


The proof is postponed until the end of this subsection. This proposition gives
the expression~\eqref{eq:mixture_rep_qtn} for $\Q_t^N$, which is easier to
manipulate than~\eqref{eq:def_joint_qt} and which highlights,
through~\eqref{eq:link_qt_qtm1}, its connection to $\Q_{t-1}^N$. To further simplify
the notations, let us define, following \citet{Douc2011}, the kernel
$L_{t_1:t_2}$, for $t_1 \leq t_2$, as
\begin{equation}
\label{eq:def:l_kernels}
L_{t_1:t_2}(x^\star_{0:t_1}, \dd x_{0:t_2}) \eqdef \delta_{x^\star_{0:t_1}}(\dd x_{0:t_1}) \prod_{s=t_1+1}^{t_2} M_s(x_{s-1}, \dd x_s) G_s(x_s). 
\end{equation}
In other words, for real-valued functions $\varphi_{t_2}=\varphi_{t_2}(x_0, \ldots, x_{t_2})$, we have
\[L_{t_1:t_2}(x^\star_{0:t_1}, \varphi_{t_2}) = \int \varphi_{t_2}(x_0^\star, \ldots, x_{t_1}^\star, x_{t_1+1}, \ldots, x_{t_2}) \prod_{s=t_1+1}^{t_2} M_s(x_{s-1}, \dd x_s) G_s(x_s).\]
The usefulness of these kernels will come from the simple remark $\Q_{t_2} \propto \Q_{t_1} L_{t_1:t_2}$. We also see that
\[\infnorm{L_{t_1:t_2}\varphi_{t_2}} \leq \infnorm{\varphi_{t_2}} \prod_{s=t_1+1}^{t_2} \infnorm{G_s}, \]
which gives $\infnorm{L_{t_1:t_2}} < \infty$, where the norm of a kernel is defined in Subsection~\ref{sec:apx_notations}. We are now in a position to state an importance sampling-like representation of $\Q_t^N$.
\begin{corollary}\label{corol:fundamental}
	Let $\varphi_t^N: \mathcal X_0 \times \cdots \times \mathcal X_t \to \mathbb R$ be a (possibly random) function such that $\varphi_t^N(x_{0:t})$ is $\FTP{t-1}$-measurable. Suppose that $\varphi_t^N$ is either uniformly non-negative (i.e.\ $\varphi_t^N(x_{0:t}) \geq 0$ almost surely) or uniformly bounded (i.e.\ there exists a deterministic $C$ such that $|\varphi_t^N(x_{0:t})| \leq C$ almost surely). Then
	\[\Q_t^N \varphi_t^N = \frac{N^{-1} \sum_n \tilde K_t^N(n, \varphi_t^N)}{N^{-1} \sum_n \tilde K_t^N(n, \mathbbm 1)}, \]
	where $\tilde K_t^N(n, \cdot)$ is a certain random kernel such that
	\begin{itemize}
		\item $\ceftmone{\tilde K_t^N(n, \varphi_t^N)} = (\Q_{t-1}^N L_{{t-1}:t}) \varphi_t^N$;
		\item $N^{-1} \sum_n \tilde K_t^N(n, \mathbbm 1) = \ell_t^N$;
		\item $\left(\tilde K_t^N(n, \varphi_t^N)\right)_{n=1,\ldots,N}$ are i.i.d.\ given $\FTP{t-1}$;
		\item almost surely, $\abs{\tilde K_t^N(n, \varphi_t^N)} \leq \infnorm{\varphi_t^N} \infnorm{G_t}$ if $\varphi_t^N$ is uniformly bounded and $\tilde K_t^N(n, \varphi_t^N) \geq 0$ if $\varphi_t^N$ is uniformly non-negative.
	\end{itemize}
	These statements are valid for $t=0$ under the convention $\Q_{-1}^N L_{-1:0} = \Q_{-1}L_{-1:0} = \mathbb M_0$ and $\mathcal F_{-1}$ being the trivial $\sigma$-algebra.
\end{corollary}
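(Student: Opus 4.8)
The plan is to obtain this corollary as a direct transcription of Proposition~\ref{prop:fundamental}, the only new idea being to absorb the potential $G_t$ into the mixture kernel. Concretely, I would define the random (sub-probability) kernel
\[
\tilde K_t^N(n, \dd x_{0:t}) \eqdef G_t(x_t)\, K_t^N(n, \dd x_{0:t}),
\]
so that $\tilde K_t^N(n, \varphi_t^N) = K_t^N(n, G_t\varphi_t^N)$, where $G_t\varphi_t^N$ denotes the function $x_{0:t}\mapsto G_t(x_t)\varphi_t^N(x_{0:t})$. Since $G_t$ is deterministic and $\varphi_t^N(x_{0:t})$ is $\FTP{t-1}$-measurable by hypothesis, the product $G_t\varphi_t^N$ is again $\FTP{t-1}$-measurable, which is exactly what is needed to feed it into the various clauses of Proposition~\ref{prop:fundamental}. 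The hypothesis that $\varphi_t^N$ is either uniformly bounded or uniformly non-negative is imposed precisely to keep all integrals and conditional expectations below well defined (finite in the bounded case, valued in $[0,\infty]$ and handled by Tonelli in the non-negative case, using $G_t>0$ and $\infnorm{G_t}<\infty$ from Assumption~\ref{asp:Gbound}).

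For the master identity I would integrate the mixture representation~\eqref{eq:mixture_rep_qtn} against $\varphi_t^N$:
\[
\Q_t^N\varphi_t^N = \nmt\sum_n\int G_t(x_t)\varphi_t^N(x_{0:t})\,K_t^N(n,\dd x_{0:t}) = \frac{N^{-1}\sum_n\tilde K_t^N(n,\varphi_t^N)}{\ell_t^N}.
\]
Note this holds pathwise, as it only uses the pathwise identity~\eqref{eq:mixture_rep_qtn}. It then remains to identify the denominator with $N^{-1}\sum_n\tilde K_t^N(n,\mathbbm 1)$ (the second bullet): applying the displayed formula to the admissible choice $\varphi_t^N\equiv\mathbbm 1$ and using that $\Q_t^N$ is a genuine probability measure (so $\Q_t^N\mathbbm 1=1$) gives $\ell_t^N = N^{-1}\sum_n\tilde K_t^N(n,\mathbbm 1)$ pathwise, and substituting back yields the claimed ratio form.

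The three remaining bullets are then specialisations of Proposition~\ref{prop:fundamental} to the function $G_t\varphi_t^N$. The conditional-expectation identity~\eqref{eq:link_qt_qtm1} gives
\[
\ceftmone{\tilde K_t^N(n,\varphi_t^N)} = \int \Q_{t-1}^N(\dd x_{0:t-1})\, M_t(x_{t-1},\dd x_t)\, G_t(x_t)\varphi_t^N(x_{0:t}),
\]
and I would recognise the right-hand side as $(\Q_{t-1}^N L_{t-1:t})\varphi_t^N$ by unfolding the definition~\eqref{eq:def:l_kernels} of $L_{t-1:t}$, whose only factor (for $t_1=t-1$, $t_2=t$) is $M_t(x_{t-1},\dd x_t)G_t(x_t)$; this is the first bullet. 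The conditional-i.i.d.\ clause (third bullet) is the corresponding statement of the proposition applied verbatim to $G_t\varphi_t^N$. For the bounds (fourth bullet), I would use that each $K_t^N(n,\cdot)$ is a probability measure: in the bounded case $\abs{\tilde K_t^N(n,\varphi_t^N)}\le K_t^N(n,\abs{G_t\varphi_t^N})\le\infnorm{G_t}\infnorm{\varphi_t^N}$, while in the non-negative case the integrand $G_t\varphi_t^N$ is non-negative (as $G_t>0$) and hence so is $\tilde K_t^N(n,\varphi_t^N)$.

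The only genuinely delicate point is the base case $t=0$, which I would verify separately against the initialisation of Algorithm~\ref{algo:bootstrap} under the stated conventions ($\mathcal F_{-1}$ trivial, $\Q_{-1}^N L_{-1:0}=\mathbb M_0$). There $K_0^N(n,\cdot)=\delta_{X_0^n}$ and $\tilde K_0^N(n,\varphi_0^N)=G_0(X_0^n)\varphi_0^N(X_0^n)$, with $\ell_0^N=N^{-1}\sum_n G_0(X_0^n)$; the ratio identity and the per-particle bound are then immediate, and $\ceftmone{\cdot}$ collapses to an ordinary expectation over the i.i.d.\ draws $X_0^n\sim\mathbb M_0$. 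I expect the reconciliation of the initial potential $G_0$ with the index-$(-1)$ convention for $L_{-1:0}$ to be the main (and essentially only) place requiring care; everything else is mechanical bookkeeping once $\tilde K_t^N$ is defined.
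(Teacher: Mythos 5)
Your proposal is correct and follows essentially the same route as the paper's own proof: define $\tilde K_t^N(n,\cdot)$ by absorbing $G_t$ into the mixture kernel $K_t^N(n,\cdot)$ of Proposition~\ref{prop:fundamental}, read off the ratio form from the representation~\eqref{eq:mixture_rep_qtn} together with the normalisation trick $\Q_t^N\mathbbm 1=1$, and obtain the remaining bullets by specialising \eqref{eq:link_qt_qtm1} and the conditional-i.i.d.\ clause to $G_t\varphi_t^N$. The only difference is that you spell out the bullet-point verifications and the $t=0$ base case (including the $G_0$ bookkeeping in the $L_{-1:0}$ convention), which the paper compresses into the single sentence that these are ``simple consequences of the definition of $\tilde K_t^N$ and $L_{t-1:t}$''.
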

\begin{proof}
	Put $\tilde K_t^N(n, \varphi_t^N) := \int G_t(x_t) K_t^N(n, \ddxz t) \varphi_t^N(x_{0:t})$ where $K_t^N$ is defined in Proposition~\ref{prop:fundamental}. Then
	\begin{equation*}
	\begin{split}
	\Q_t^N(\varphi_t^N) &= \frac{N^{-1} \sum_n \int G_t(x_t) K_t^N(n, \ddxz t) \varphi_t^N(x_{0:t})}{\ell_t^N} \\
	&= \frac{N^{-1} \sum_n \tilde K_t^N(n, \varphi_t^N)}{\ell_t^N}.
	\end{split}
	\end{equation*}
	Since $\Q_t^N$ is a probability measure, applying this identity twice yields
	\[\Q_t^N(\varphi_t^N) = \frac{\Q_t^N(\varphi_t^N)}{\Q_t^N(\mathbbm 1)} = \frac{N^{-1} \sum_n \tilde K_t^N(n, \varphi_t^N)}{N^{-1} \sum_n \tilde K_t^N(n, \mathbbm 1)}. \]
	The remaining points are simple consequences of the definition of $\tilde K_t^N$ and $L_{t-1:t}$.
\end{proof}
The corollary hints at a natural induction proof for Theorem~\ref{thm:convergence_mcmc}.
\begin{proof}[Proof of Theorem~\ref{thm:convergence_mcmc}]
	The following calculations are valid for all $T \geq 0$, under the
    convention defined at the end of Corollary~\ref{corol:fundamental}. They
    will prove~\eqref{eq:in_thm_convg} for $T=0$ and, at the same time, prove
    it for any $T \geq 1$ under the hypothesis that it already holds true for
    $T-1$. Let $\varphi_T = \varphi_T(x_0, \ldots, x_T)$ be a real-valued
    function on $\mathcal X_0 \times \cdots \times \mathcal X_T$. Write 
    \begin{equation}
		\label{eq:delta_tn}
		\sqrt N (\Q_T^N \varphi_T - \Q_T \varphi_T) = \sqrt N \pr{\frac{\inv N \sum_n \tilde K_T^N(n, \varphi_T)}{\inv N \sum_n \tilde K_T^N(n, \mathbbm 1)} - \frac{\Q_{T-1}L_{T-1:T}\varphi_T}{\Q_{T-1}L_{T-1:T}\mathbbm 1}}
	\end{equation}
	where the rewriting of $\Q_T \varphi_T$ is a consequence of $Q_T \propto \Q_{T-1} L_{T-1:T}$. We will bound this difference by Hoeffding's inequalities for ratios (see Supplement~\ref{ap:hoeffding} for notations, including the definition of sub-Gaussian variables that we shall use below). We have
	\begin{itemize}
\item that $\sqrt N(\inv N \sum \tilde K_T^N(n, \varphi_T) - \Q_{T-1}^N L_{T-1:T}\varphi_T)$ is $(1, \infnorm{\varphi_T} \infnorm{G_T})$-sub-Gaussian conditioned on $\FTP{t-1}$ because of Theorem~\ref{thm:hoeffing_apx} (and thus unconditionally, by the law of total expectation);
\item and that $\sqrt N(\inv N \Q_{T-1}^N L_{T-1:T}\varphi_T - \Q_{T-1}L_{T-1:T}\varphi_T)$ is sub-Gaussian with parameters \[(C_{T-1}, S_{T-1}\infnorm{L_{T-1:T}} \infnorm{\varphi_T})\]
if $T \geq 1$ by induction hypothesis. The quantity is equal to $0$ if $T=0$.
	\end{itemize}
This permits to apply Lemma~\ref{lem:hoeffding_apx}, which results in the sub-Gaussian properties of
\begin{itemize}
	\item the quantity $\sqrt N(\inv N \sum \tilde K_T^N(n, \varphi_T) - \Q_{T-1}L_{T-1:T}\varphi_T)$, with parameters $(1+C_{T-1}, S_{T-1}' \infnorm{\varphi_T})$, for a certain constant $S_{T-1}'$;
	\item and the quantity $\sqrt N(\inv N \sum \tilde K_T^N(n, \mathbbm1) - \Q_{T-1}L_{T-1:T}\mathbbm1)$, which is a special case of the former one, with parameters $(1+C_{T-1}, S_{T-1}')$.
\end{itemize}
Finally, we invoke Proposition~\ref{prop:apx:hoeffding} and deduce the
sub-Gaussian property of~\eqref{eq:delta_tn} with parameters
\[\left(2+2C_{T-1}, 2\frac{S'_{T-1}\infnorm{\varphi_T}}{\Q_{T-1}L_{T-1:T}\mathbbm1}\right) \]
which finishes the proof.
\end{proof}

\begin{proof}[Proof of Proposition~\ref{prop:fundamental}]
	From~\eqref{eq:def_joint_qt}, we have
	\begin{equation*}
		\begin{split}
		\Q_t^N(\dd x_{0:t}) &= \sum_{i_t} \bar \Q_t^N(\dd i_t) \bar \Q_t^N(\dd x_{0:t}|i_t) \\
		&= \nmt \sum_{i_t} G_t(X_t^{i_t}) \bar \Q_t^N(\dd x_{0:t}|i_t) \\
		&= \nmt \sum_{i_t} G_t(x_t) \bar\Q_t^N(\dd x_{0:t}|i_t)
		\end{split}
	\end{equation*}
	since $\bar \Q_t^N(dx_{0:t}|i_t)$ has a $\delta_{X_t^{i_t}}(\dd x_t)$ term. In fact, the identity
	\[\qtnb t (\ddxz t, \dd i_{t-1}|i_t)= \delta_{X_t^{i_t}}(\dd x_t) B_t^N(i_t, \dd i_{t-1}) \qtnb{t-1}(\ddxz{t-1}|i_{t-1}) \]
	follows directly from the backward recursive nature of Algorithm~\ref{algo:offline_generic}, and thus
	\begin{equation}
		\label{eq:backward_qbar}
		\qtnb t (\ddxz t|i_t)= \delta_{X_t^{i_t}}(\dd x_t) \int_{i_{t-1}} B_t^N(i_t, \dd i_{t-1}) \qtnb{t-1}(\ddxz{t-1}|i_{t-1}).
	\end{equation}
	The $\Q_{t-1}^N(\ddxz{t-1}|i_{t-1})$ term is $\FTP{t-1}$-measurable. We shall calculate the expectation of $\delta_{X_t^{i_t}}(\dd x_t) B_t^N(i_t, \dd i_{t-1})$ given $\FTP{t-1}$. The following arguments are necessary for formal verification, but the result~\eqref{eq:proof_fundamental_expectation_nuclear} is natural in light of the ancestor regeneration intuition explained in Section~\ref{sec:validity}.
	
	Let $f_t^N: \px{1, \ldots, N} \times \mathcal X_t \to \mathbb R$ be a (possibly random) function such that $f_t^N(i_{t-1}, x_t)$ is $\FTP{t-1}$-measurable. Let $J_t^{i_t}$ be a random variable such that given $\FTP{t-1}$, $X_t^{i_t}$ and $\hat B_t^N(i_t, \cdot)$, $J_t^{i_t}$ is $B_t^N(i_t, \dd i_{t-1})$-distributed. This automatically makes $J_t^{i_t}$ satisfy the second hypothesis of Theorem~\ref{thm:convergence_mcmc}. Additionally, by virtue of its first hypothesis, the distribution of $(J_t^{i_t}, A_t^{i_t})$ is the same given either $\FTP{t-1}$ or $X_{t-1}^{1:N}$ (see also Figure~\ref{fig:thm1:variables}). We can now write
	\begin{equation*}
		\begin{split}
		&\ceftmone{\int f_t^N(i_{t-1}, x_t) \delta_{X_t^{i_t}}(\dd x_t) B_t^N(i_t, \dd i_{t-1})} \\
		=& \ceftmone{\int f_t^N(i_{t-1}, X_t^{i_t}) B_t^N(i_t, \dd i_{t-1})}\\
		=& \ceftmone{\CE{f_t^N(J_t^{i_t}, X_t^{i_t})}{\FTP{t-1}, X_t^{i_t}, \hat B_t^N(i_t, \cdot)}} \\
		=& \ceftmone{f_t^N(J_t^{i_t}, X_t^{i_t})} \text{ by the law of total expectation}\\
		=& \ceftmone{f_t^N(A_t^{i_t}, X_t^{i_t})} \text{ by the second hypothesis of Theorem~\ref{thm:convergence_mcmc}} \\
		=& \int f_t^N(i_{t-1}, x_t) \mathcal M(W_{t-1}^{1:N})(\dd i_{t-1}) M_t(X_{t-1}^{i_{t-1}}, \dd x_t).
		\end{split}
	\end{equation*}
	This equality means that
	\begin{equation}
		\label{eq:proof_fundamental_expectation_nuclear}
		\ceftmone{\delta_{X_t^{i_t}}(\dd x_t) B_t^N(i_t, \dd i_{t-1})} = \mathcal M(W_{t-1}^{1:N})(\dd i_{t-1}) M_t(X_{t-1}^{i_{t-1}}, \dd x_t),
	\end{equation}
	 Now, put
	\[K^N(i_t, \ddxz t)\eqdef \qtnb t(\ddxz t|i_t). \]
	From~\eqref{eq:backward_qbar} and~\eqref{eq:proof_fundamental_expectation_nuclear}, we have
	\begin{equation*}
		\begin{split}
		\ceftmone{K^N(i_t, \dd x_{0:t})} &= \int _{i_{t-1}} \mathcal M(W_{t-1}^{1:N})(\dd i_{t-1}) M_t(X_{t-1}^{i_{t-1}}, \dd x_t) \qtnb{t-1}(\ddxz{t-1}|i_{t-1}) \\
		&= M_t(x_{t-1}, \dd x_t) \int_{i_{t-1}}\mathcal M(W_{t-1}^{1:N}) (\dd i_{t-1}) \qtnb{t-1}(\ddxz{t-1}|i_{t-1})\\
		&\text{since } \qtnb{t-1}(\ddxz{t-1}|i_{t-1}) \text{ has a } \delta_{X_{t-1}^{i_{t-1}}(\dd x_{t-1})} \text{ term} \\
		&= M_t(x_{t-1}, \dd x_t) \Q_{t-1}^N(\dd x_{0:{t-1}})
		\end{split}
	\end{equation*}
	which finishes the proof.
\end{proof}

\subsection{\proe  \eqref{eq:generic_online_recursion_hard} (online smoothing recursion)}
\label{proof:full_online_recursion}

\begin{proof}
	Using~\eqref{eq:joint_empirical_smoothing} and the matrix notations, the distribution $\bar \Q_t^N(\dd i_s)$ can be represented by the $1 \times N$ vector
	\[\hat q_{s|t}^N := [W_t^1 \ldots W_t^N] \hat B_t^N \ldots \hat B_{s+1}^N.\]
	Defining the $N \times N$ matrix $\hat \psi_s^N$ as 
    \[\hat \psi_s^N [i_{s-1}, i_s] := \psi_s(X_{s-1}^{i_{s-1}}, X_s^{i_s}),\]
	we have
	\begin{align*}
		\E_{\Q_t^N}[\psi_s(X_{s-1}, X_s)] &= \sum_{i_s, i_{s-1}} \hat q_{s|t}^N[1, i_s] \hat B_s^N[i_s, i_{s-1}] \hat \psi_s^N [i_{s-1}, i_s] \\
		&= \sum_{i_s} \hat q_{s|t}^N[1, i_s] (\hat B_s^N \hat \psi_s^N) [i_s, i_s]\\
		&= \hat q_{s|t}^N \operatorname{diag}(\hat B_s^N \hat \psi_s^N).
	\end{align*}
	Therefore,
	\[\Q_t^N \varphi_t = \sum_{s=0}^t [W_t^1 \ldots W_t^N] \hat B_t^N \ldots \hat B_{s+1}^N \operatorname{diag}(\hat B_s^N \hat \psi_s^N) \]
	from which follows the recursion
	\begin{equation*}
	\begin{cases}
		\Q_t^N \varphi_t &\ = [W_t^1 \ldots W_t^N] \hat S_t^N, \\
		\hat S_t^N &:= \hat B_t^N \hat S_{t-1}^N + \operatorname{diag}(\hat B_t^N \hat \psi_t^N).
	\end{cases}
	\end{equation*}
	This is exactly~\eqref{eq:generic_online_recursion_hard}.
\end{proof}

\subsection{\prot~\ref{thm:stability} (general stability theorem)}
\label{proof:thm:stability}
The following lemma describes the simultaneous backward construction of two trajectories $\mathcal I_{0:T}^1$ and $\mathcal I_{0:T}^2$ given $\mathcal F_T^-$.
\begin{lem}\label{lem:two_backward_traj}
	We use the same notations as in Algorithms~\ref{algo:bootstrap} and~\ref{algo:offline_generic}. Suppose that the hypotheses of Theorem~\ref{thm:convergence_mcmc} are satisfied. Then, given $\mathcal I_{t:T}^1$, $\mathcal I_{t:T}^2$ and $\mathcal F_T^-$,
	\begin{itemize}
\item if $\mathcal I_t^1 \neq \mathcal I_t^2$, the two variables $\mathcal I_{t-1}^1$ and $\mathcal I_{t-1}^2$ are conditionally independent and their marginal distributions are respectively $B_t^{N, \mathrm{FFBS}} (\mathcal I_t^1, \cdot)$ and $B_t^{N, \mathrm{FFBS}}(\mathcal I_t^2, \cdot)$;
\item if $\mathcal I_t^1 = \mathcal I_t^2$, under the aforementioned
    conditioning, the two variables $\mathcal I_{t-1}^1$ and $\mathcal
    I_{t-1}^2$ are both marginally distributed according to
    $\btn{FFBS}(\mathcal I_t^1, \cdot)$. Moreover, if~\eqref{eq:support_cond} holds, we have 
\begin{equation}
\label{eq:backward_different_proba_lower_bound}
\CProb{\mathcal I_{t-1}^1 \neq \mathcal I_{t-1}^2}{\mathcal I_{t:T}^{1,2}, \mathcal F_T^-} \mathbbm 1_{\mathcal I_t^1 = \mathcal I_t^2} \geq \epss \  \mathbbm 1_{\mathcal I_t^1 = \mathcal I_t^2}.
\end{equation}
\end{itemize} 
In particular, the sequence of variables $(\mathcal I_{T-s}^1, \mathcal I_{T-s}^2)_{s=0}^T$ is a Markov chain given $\FTM$.
\end{lem}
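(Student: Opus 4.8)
The plan is to work under the enlarged conditioning $\sigma(\mathcal F_T, \hat B_{1:T}^N)$, under which Algorithm~\ref{algo:offline_generic} generates the two index sequences $\mathcal I_{0:T}^1$ and $\mathcal I_{0:T}^2$ explicitly, and then to marginalise down to $\FTM = \sigma(X_{0:T}^{1:N})$. The essential structural input, which I would establish first, is a factorisation of the $\FTM$-conditional law. Conditioning the forward pass on $\FTM$ only (fixing all particle positions but forgetting the ancestors and the matrices), the first hypothesis of Theorem~\ref{thm:convergence_mcmc} forces the joint conditional density of $\{(A_t^n, \hat B_t^N(n, \cdot))\}_{1\le t\le T,\, 1\le n\le N}$ to factorise into one factor per pair $(t,n)$, each factor involving only $(A_t^n, \hat B_t^N(n,\cdot))$ and depending on the fixed positions through $(X_{t-1}^{1:N}, X_t^n)$ alone. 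Consequently, given $\FTM$ the pairs $(A_t^n, \hat B_t^N(n,\cdot))$ are \emph{mutually independent} across all $(t,n)$, and the conditional law of any one of them reduces to its conditional law given $(X_{t-1}^{1:N}, X_t^n)$.

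This factorisation yields the marginal statements in both bullet points. For a single backward step from a fixed index $i$ at time $t$, a draw from $B_t^N(i, \cdot)$ in which the row $\hat B_t^N(i, \cdot)$ is integrated against its $\FTM$-conditional law has, by the second hypothesis of Theorem~\ref{thm:convergence_mcmc} (equivalently by \eqref{eq:proof_fundamental_expectation_nuclear}), the same distribution as the regenerated ancestor $A_t^i$ given $(X_{t-1}^{1:N}, X_t^i)$, namely $\btn{FFBS}(i, \cdot)$. Hence each of $\mathcal I_{t-1}^1$ and $\mathcal I_{t-1}^2$ is marginally $\btn{FFBS}(\mathcal I_t^{1}, \cdot)$, respectively $\btn{FFBS}(\mathcal I_t^{2}, \cdot)$, in the appropriate case.

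For the conditional independence in the first case, I would note that when $\mathcal I_t^1 \neq \mathcal I_t^2$ the two backward transitions read two \emph{distinct} rows $\hat B_t^N(\mathcal I_t^1, \cdot)$ and $\hat B_t^N(\mathcal I_t^2, \cdot)$, which are independent given $\FTM$ by the factorisation, and Algorithm~\ref{algo:offline_generic} feeds them independent fresh uniforms; marginalising over the rows preserves this conditional independence. When $\mathcal I_t^1 = \mathcal I_t^2 = n$, both transitions read the \emph{same} row $\hat B_t^N(n, \cdot)$ with independent uniforms, so $(\mathcal I_{t-1}^1, \mathcal I_{t-1}^2)$ is exactly the pair $(J_t^{n,1}, J_t^{n,2})$ of Theorem~\ref{thm:stability}. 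Since \eqref{eq:support_cond} conditions only on $(X_{t-1}^{1:N}, X_t^n)$, it transfers verbatim to the $\FTM$-conditional law and delivers \eqref{eq:backward_different_proba_lower_bound}.

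Finally, for the Markov property: tracing Algorithm~\ref{algo:offline_generic}, the pair $(\mathcal I_{t:T}^1, \mathcal I_{t:T}^2)$ is a function of the time-$>t$ rows $\hat B_{t+1:T}^N$, their uniforms, and the initial $\mathcal M(W_T^{1:N})$ draws, whereas $(\mathcal I_{t-1}^1, \mathcal I_{t-1}^2)$ is obtained from the time-$t$ rows and fresh time-$t$ uniforms given the value of $(\mathcal I_t^1, \mathcal I_t^2)$. Because the factorisation makes the time-$t$ rows independent of the time-$>t$ rows given $\FTM$, the variable $(\mathcal I_{t-1}^1, \mathcal I_{t-1}^2)$ is conditionally independent of $(\mathcal I_{t+1:T}^1, \mathcal I_{t+1:T}^2)$ given $\FTM$ and $(\mathcal I_t^1, \mathcal I_t^2)$, which is exactly the asserted backward Markov property. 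I expect the main obstacle to be the rigorous justification of the factorisation of the $\FTM$-conditional law — in particular accommodating matrices that carry extra randomness beyond $\mathcal F_T$ via regular conditional distributions — together with the bookkeeping needed to argue that conditioning additionally on the \emph{values} $\mathcal I_t^{1}, \mathcal I_t^{2}$, which are themselves functions of the time-$>t$ randomness, leaves the conditional law of the time-$t$ rows unaffected.
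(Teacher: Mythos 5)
Your proposal is correct and follows essentially the same route as the paper's own proof: the paper derives exactly your factorisation of the conditional law of the time-$t$ rows given $\FTM$ and the future indices (it organises the argument via a graphical model rather than explicit density bookkeeping from the first hypothesis of Theorem~\ref{thm:convergence_mcmc}), then identifies the marginals as $\btn{FFBS}$ via the second hypothesis, and obtains \eqref{eq:backward_different_proba_lower_bound} by recognising the two draws from a common row as the pair $(J_t^{n,1}, J_t^{n,2})$ to which \eqref{eq:support_cond} applies. The ``bookkeeping'' obstacle you flag — that conditioning on $\mathcal I_{t:T}^{1,2}$, which are functions of time-$>t$ randomness, leaves the law of the time-$t$ rows unchanged — is precisely what the paper's graphical-model step resolves, and your factorisation argument handles it in the same way.
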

\begin{proof}
	To simplify the notations, let $\tilde b_t^n$ denote the $\mathbb R^n$ vector $\hat B_t^N(n, \cdot)$. The relation between variables generated by Algorithm~\ref{algo:offline_generic} is depicted as a graphical model in Figure~\ref{fig:alg2:variables}. We consider
\begin{figure}
\centering
	\begin{tikzpicture}[node distance = {13mm}]
		\node (xtm2) {$x_{t-2}^{1:N}$};
		\node (xtm1) [below of=xtm2] {$x_{t-1}^{1:N}$};
		\node (xt) [below of=xtm1] {$x_t^{1:N}$};
		\node (xtp1) [below of=xt] {$x_{t+1}^{1:N}$};
		\node (xldots) [below of=xtp1] {$\ldots$};
		\node (xT) [below of=xldots] {$x_T^{1:N}$};
		
		\node (bldots) [right of=xtm2] {$\ldots$};
		\node (btm1) [below of=bldots] {$\tilde b_{t-1}^{1:N}$};
		\node (bt) [below of=btm1] {$\tilde b_{t}^{1:N}$};
		\node (btp1) [below of=bt] {$\tilde b_{t+1}^{1:N}$};
		\node (bldots2) [below of=btp1] {$\ldots$};
		\node (bT) [below of=bldots2] {$\tilde b_{T}^{1:N}$};
		
		\node (iT) [below of=bT] {$i_T^{1:2}$};
		
		\node (ildots) [right of=bldots] {$\ldots$};
		\node (itm2) [below of=ildots] {$i_{t-2}^{1:2}$};
		\node (itm1) [below of=itm2] {$i_{t-1}^{1:2}$};
		\node (it) [below of=itm1] {$i_{t}^{1:2}$};
		\node (ildots2) [below of=it] {$\ldots$};
		\node (iTm1) [below of=ildots2] {$i_{T-1}^{1:2}$};
		
		\draw[->] (xtm2) -- (xtm1);
		\draw[->] (xtm1) -- (xt);
		\draw[->] (xt) -- (xtp1);
		\draw[->] (xtp1) -- (xldots);
		\draw[->] (xldots) -- (xT);
		
		\draw[->] (xtm2) -- (bldots);
		\draw[->] (bldots) -- (ildots);
		\draw[->] (xtm1) -- (btm1);
		\draw[->] (btm1) -- (itm2);
		\draw[->] (xt) -- (bt);
		\draw[->] (bt) -- (itm1);
		\draw[->] (xtp1) -- (btp1);
		\draw[->] (btp1) -- (it);
		\draw[->] (xldots) -- (bldots2);
		\draw[->] (bldots2) -- (ildots2);
		\draw[->] (xT) -- (bT);
		\draw[->] (bT) -- (iTm1);
		
		\draw[->] (xtm2) -- (btm1);
		\draw[->] (xtm1) -- (bt);
		\draw[->] (xt) -- (btp1);
		\draw[->] (xtp1) -- (bldots2);
		\draw[->] (xldots) -- (bT);
		
		\draw[->] (iTm1) -- (ildots2);
		\draw[->] (ildots2) -- (it);
		\draw[->] (it) -- (itm1);
		\draw[->] (itm1) -- (itm2);
		\draw[->] (itm2) -- (ildots);
		
        \draw[->] (xT) -- (iT); \draw[->] (iT) -- (iTm1);
    \end{tikzpicture}
    \caption{Directed graph representing the relations between variables
        generated in Algorithm~\ref{algo:offline_generic}. Only those necessary
    for the proof of Lemma~\ref{lem:two_backward_traj} are included.}
    \label{fig:alg2:variables}
\end{figure}
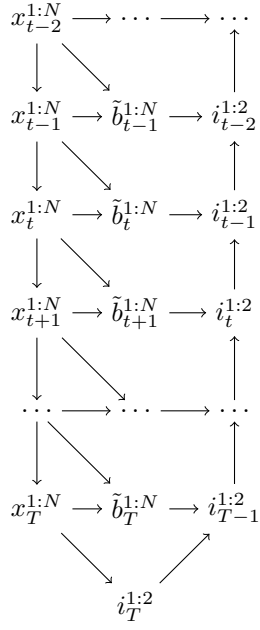
\begin{equation}
    \label{eq:2:bwk:idx}
    \begin{split}
        p(\tilde b_t^{1:N}, i_{t-1}^{1:2} | \mathcal F_T^-, i_{t:T}^{1:2}) &= p(\tilde b_t^{1:N} | \mathcal F_T^-, i_{t:T}^{1:2}) \  p(i_{t-1}^{1:2}| \tilde b_t^{1:N}, \mathcal F_T^-, i_{t:T}^{1,2})  \\
                                                                           &= p(\tilde b_t^{1:N} | x_{t-1}^{1:N}, x_t^{1:N})\  p(i_{t-1}^{1:2} | \tilde b_t^{1:N}, i_t^{1:2})  \\
                                                                           &\textrm{(by properties of graphical models, see
                                                           Figure~\ref{fig:alg2:variables})}  \\
                                                                           &= \ps{\prod_n p(\tilde b_t^n | x_{t-1}^{1:N}, x_t^n)} \tilde b_t^{i_t^1}(i_{t-1}^1) \tilde b_t^{i_t^2}(i_{t-1}^2). 
    \end{split}
\end{equation}
The distribution of $i_{t-1}^1$ given $\mathcal F_T^-$ and $i_{t:T}^{1:2}$ is
thus the $i_{t-1}^1$-marginal of
\begin{equation}
    \label{eq:2traj:hidden_jt}
    p(\tilde b_t^{i_t^1} | x_{t-1}^{1:N}, x_t^{i_t^1}) \tilde
    b_t^{i_t^1}(i_{t-1}^1),
\end{equation}
which is exactly the distribution of $p(j_t^{i_t^1}|x_{t-1}^{1:N},
x_t^{i_t^1})$ where the $J$'s are defined in the statement of
Theorem~\ref{thm:convergence_mcmc}. By the second hypothesis of that theorem,
the aforementioned distribution is equal to $p(a_t^{i_t^1} | x_{t-1}^{1:N},
x_t^{i_t^1})$, which is in turn no other than 
$B_t^{N, \mathrm{FFBS}}(i_t^1, \cdot)$. Moreover, if $i_t^1 \neq i_t^2$, \eqref{eq:2:bwk:idx}
straightforwardly implies the conditional independence of $i_{t-1}^1$ and
$i_{t-1}^2$. When $i_t^1 = i_t^2$, the distribution of $i_{t-1}^{1:2}$ given
$\FTM$ and $i_{t:T}^{1:2}$ is the $i_{t-1}^{1:2}$-marginal of
\[p(\tilde b_t^{i_t^1} | x_{t-1}^{1:N}, x_t^{i_t^1}) \tilde
b_t^{i_t^1}(i_{t-1}^1) \tilde b_t^{i_t^1}(i_{t-1}^2). \] Thus, we can 
apply~\eqref{eq:support_cond} for $n=i_t^1$, where $i_{t-1}^{1:2}$
here plays the role of $J_t^{1:2}$ there. 
Equation~\eqref{eq:backward_different_proba_lower_bound} is now proved.
\end{proof}

As Lemma~\ref{lem:two_backward_traj} describes the distribution of two
trajectories, it immediately gives the distribution of a single trajectory.
\begin{corollary}\label{cor:one_backward_traj}
    Under the same settings as in Lemma~\ref{lem:two_backward_traj}, given $\mathcal
    F_T^-$, the distribution of $\mathcal I_{0:T}^1$ is
    \[\mathcal M(W_{T}^{1:N})(\dd i_T) B_T^{N, \mathrm{FFBS}}(i_T, \dd i_{T-1})
    \ldots B_1^{N, \mathrm{FFBS}}(i_1, \dd i_0). \]
\end{corollary}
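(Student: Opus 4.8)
The plan is to read off the single-trajectory law directly from the computation already carried out in the proof of Lemma~\ref{lem:two_backward_traj}, noticing that the argument identifying the marginal transition of the first trajectory never genuinely relied on the presence of the second one. So the corollary should follow with essentially no new work, only a restriction of the earlier bookkeeping to one index sequence.

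First I would specialise the graphical-model factorisation \eqref{eq:2:bwk:idx} (cf.\ Figure~\ref{fig:alg2:variables}) to a single index sequence, obtaining for each $t$
\[ p(\tilde b_t^{1:N}, i_{t-1}^1 | \FTM, i_{t:T}^1) = \ps{\prod_n p(\tilde b_t^n | x_{t-1}^{1:N}, x_t^n)} \tilde b_t^{i_t^1}(i_{t-1}^1). \]
Marginalising out $\tilde b_t^{1:N}$ shows that the conditional law of $\mci_{t-1}^1$ given $\FTM$ and $\mci_{t:T}^1$ is the $i_{t-1}^1$-marginal of $p(\tilde b_t^{i_t^1} | x_{t-1}^{1:N}, x_t^{i_t^1}) \tilde b_t^{i_t^1}(i_{t-1}^1)$. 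This is exactly expression \eqref{eq:2traj:hidden_jt}, which the proof of Lemma~\ref{lem:two_backward_traj} already identified — via the second hypothesis of Theorem~\ref{thm:convergence_mcmc}, i.e.\ the fact that $(J_t^{i_t^1}, X_t^{i_t^1})$ and $(A_t^{i_t^1}, X_t^{i_t^1})$ share the same law given $X_{t-1}^{1:N}$ — with $\btn{FFBS}(\mci_t^1, \cdot)$.

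Next I would observe that this transition depends on the conditioning $\mci_{t:T}^1$ only through the current index $\mci_t^1$, and that the kernel $\btn{FFBS}(\mci_t^1, \cdot)$ is itself $\FTM$-measurable. Hence, given $\FTM$, the reversed-time sequence $(\mci_{T-s}^1)_{s=0}^T$ is a Markov chain whose one-step kernel from time $t$ to time $t-1$ is $\btn{FFBS}$. Since the initialisation $\mci_T^1 \sim \mathcal M(W_T^{1:N})$ is immediate from Algorithm~\ref{algo:offline_generic} and involves no ancestor variables (the weights $W_T^{1:N}$ being $\FTM$-measurable), the chain rule delivers the claimed factorised form.

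The only point worth stressing — and it is conceptual rather than technical — is that the statement is a priori non-obvious: Algorithm~\ref{algo:offline_generic} propagates indices through the possibly non-FFBS kernel $B_t^N$, yet once one conditions on the states alone ($\FTM$ rather than $\mathcal F_T$) and marginalises out both the ancestors and the extra randomness carried by the rows $\tilde b_t^{1:N}$, a single trajectory follows the FFBS kernel \emph{exactly}. The mechanism is precisely the ancestor-regeneration invariance encoded in the second hypothesis of Theorem~\ref{thm:convergence_mcmc}, already exploited in Lemma~\ref{lem:two_backward_traj}. Consequently no fresh estimate is required, and there is no substantive obstacle beyond the restriction argument sketched above.
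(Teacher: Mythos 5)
Your proof is correct and follows essentially the same route as the paper: the paper's own proof simply observes that Corollary~\ref{cor:one_backward_traj} is immediate from Lemma~\ref{lem:two_backward_traj}, whose proof already contains exactly the computation you restrict to a single trajectory (the factorisation \eqref{eq:2:bwk:idx}, marginalisation of the rows $\tilde b_t^{1:N}$, and the ancestor-regeneration hypothesis of Theorem~\ref{thm:convergence_mcmc}). Your closing remark also matches the paper's own comment that the corollary holds even when the kernel actually used in Algorithm~\ref{algo:offline_generic} is not the FFBS one, precisely because of the conditioning on $\mathcal F_T^-$ and the second hypothesis of Theorem~\ref{thm:convergence_mcmc}.
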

Note that the corollary applies even if the backward kernel used in
Algorithm~\ref{algo:offline_generic} is \emph{not} the FFBS one. This is due to the
conditioning on $\mathcal F_T^-$ and the second hypothesis of
Theorem~\ref{thm:convergence_mcmc}.

\begin{proof}[Proof of Theorem~\ref{thm:stability}]

First of all, we remark that as per Algorithm~\ref{algo:offline_generic}, using
index variables $\mci_{0:T}^{1:N}$ adds a level of Monte Carlo
approximation to $\Q_T^N(\dd x_{0:T})$. Therefore
\begin{align}
    \nonumber
    \E\ps{(\Q_T^N(\varphi_T) - \Q_T(\varphi_T))^2} 
    & = \E\ps{\pr{ \frac 1N \sum_{n=1}^N \varphi_T(X_0^{\mci_0^n}, \ldots, X_T^{\mci_T^n})
        - \Q_T(\varphi_T)}
    ^2} \\
    \label{eq:stb_err_decomp}
    & = \E\ps{(\Q_T^{N, \mathrm{FFBS}}(\varphi_T) - \Q_T(\varphi_T))^2} + \\
    \nonumber
    & \quad + \E\ps{\CVar{\frac 1N \sum_{n=1}^N \varphi_T(X_0^{\mci_0^n}, \ldots, X_T^{\mci_T^n})}{\FTM}}
\end{align}
where the ultimate inequality is justified by the law of total expectation and
Corollary~\ref{cor:one_backward_traj}. (Note that $(\mci_{0:T}^n)_{n=1}^N$ are
identically distributed but \textit{not} necessarily independent given $\FTM$.)
Using Lemma~\ref{lem:cond_covar_two_traj} (stated and proved below) and putting 
$\rho\eqdef 1-\mblow/\mbhigh$, we have
\begin{equation}\label{eq:stb_cond_error}
\begin{aligned}
\MoveEqLeft\CVar{\frac 1N \sum_{n=1}^N \varphi_T(X_0^{\mci_0^n}, \ldots, X_T^{\mci_T^n})}{\FTM} \\
& = \CVar{\frac 1N \sum_{n=1}^N \sum_{t=0}^T \addpsi{t}{n}}{\FTM} \\
& =	 \frac{1}{N^2} \sum_{n,m\leq N} \sum_{s,t\leq T} \CCov{\addpsi{t}{n}}{\addpsi{s}{m}}{\FTM} \\
& \leq \frac{2}{N^2} \sum_{\substack{n,m \leq N\\n=m}} \sum_{s, t \leq T} \infnorm{\psi_t} \infnorm{\psi_s} \rho^{\abs{t-s}-1} + \\
& \quad + \frac{4}{N^2} \sum_{\substack{n,m\leq N\\n\neq m}} \sum_{s,t\leq T} \frac{\tilde C}{N}\infnorm{\psi_t}\infnorm{\psi_s}\rho^{\abs{t-s}-1} \\
& = \pr{\sum_{s,t\leq T} 2\infnorm{\psi_t}\infnorm{\psi_s} \rho^{\abs{t-s} - 1}} \frac{(2\tilde C+1)N-2\tilde C}{N^2} \\
&\leq  \ps{\sum_{s,t\leq T} \pr{\infnorm{\psi_t}^2 + \infnorm{\psi_s}^2} \rho^{\abs{t-s}-1}} \frac{2\tilde C+1}{N} \leq \frac{4(2\tilde C+1)}{N\rho (1-\rho)} \sum \infnorm{\psi_t}^2.
	\end{aligned}
\end{equation}
We now look at the first term of \eqref{eq:stb_err_decomp}. In the fixed marginal smoothing case, for any $s \in \mathbb Z_+$, $s\leq T$ and any function $\phi_s: \mathcal X_s \to \mathbb R$, \citet{Douc2011} proved that
\[\P\pr{\abs{\Q_T^{N, \mathrm{FFBS}}(\varphi_T) - \Q_T(\varphi_T)} \geq \varepsilon} \leq B'e^{-C'N\varepsilon^2/\norminf{\phi_s}^2} \]
for $\varphi_T(x_{0:T}) = \phi_s(x_s)$ and constants $B'$ and $C'$ not depending on $T$. Using $\E[\Delta^2]=\int_0^\infty \P(\Delta^2 \geq t) \dd t$, the inequality implies
\begin{equation}
	\label{eq:previous_ffbs_bound_1}
	\E\ps{\abs{\qtnffbs(\varphi_T) - \Q_T(\varphi_T)}^2} \leq \frac{B'\infnorm{\phi_s}^2}{C'N}
\end{equation}
for $\varphi_T(x_{0:T})=\phi_s(x_s)$. In the additive smoothing case, \citet{DubarryLeCorff2011} proved that, for $T \geq 2$,
\begin{equation}
	\label{eq:previous_ffbs_bound_2}
	\E\ps{\abs{\qtnffbs(\varphi_T) - \Q_T(\varphi_T)}^2} \leq \frac{C'}{N}\pr{\sum_{t=0}^T \norminf{\psi_t}^2} \pr{1 + \sqrt{\frac TN}}^2.
\end{equation}
Equations~\eqref{eq:previous_ffbs_bound_1}, \eqref{eq:previous_ffbs_bound_2}, \eqref{eq:stb_cond_error} and \eqref{eq:stb_err_decomp} conclude the proof.
\end{proof}

The following lemma quantifies the backward mixing property induced by Assumption~\ref{asp:mt_2ways_bound}.
\begin{lem}\label{lem:backward_mixing}
	Under the same setting as Theorem~\ref{thm:stability}, we have
	\[\operatorname{TV}\pr{\btn{FFBS}(m, \cdot), \btn{FFBS}(n, \cdot)} \leq 1 - \frac{\mblow}{\mbhigh} \]
	for all $m, n \in \px{1, \ldots, N}$ and $t \in \px{1, \ldots, T}$.
\end{lem}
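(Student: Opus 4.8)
The plan is to exploit the two-sided bound on $m_t$ from Assumption~\ref{asp:mt_2ways_bound} together with the ``overlap'' characterisation of the total variation distance recorded in Lemma~\ref{lem:properties_TV}. Recall from the matrix viewpoint~\eqref{eq:ffbs_mt} that $\btn{FFBS}(m, \cdot)$ is the discrete distribution on $\px{1, \ldots, N}$ that assigns to index $j$ the mass
\[p_j \eqdef \frac{W_{t-1}^j m_t(X_{t-1}^j, X_t^m)}{\sum_{k=1}^N W_{t-1}^k m_t(X_{t-1}^k, X_t^m)},\]
and similarly write $q_j$ for the masses of $\btn{FFBS}(n, \cdot)$. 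Applying the alternative expression for the total variation distance from Lemma~\ref{lem:properties_TV} (with the counting measure as dominating measure), it suffices to prove that
\[\sum_{j=1}^N \min(p_j, q_j) \geq \frac{\mblow}{\mbhigh}.\]

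The key step is a term-by-term lower bound on $p_j$ and $q_j$. For the numerator, Assumption~\ref{asp:mt_2ways_bound} gives $W_{t-1}^j m_t(X_{t-1}^j, X_t^m) \geq \mblow W_{t-1}^j$; for the denominator, the same assumption together with $\sum_k W_{t-1}^k = 1$ gives $\sum_k W_{t-1}^k m_t(X_{t-1}^k, X_t^m) \leq \mbhigh$. Combining these two estimates yields $p_j \geq \mblow W_{t-1}^j / \mbhigh$, and the identical argument applied at the index $n$ gives $q_j \geq \mblow W_{t-1}^j / \mbhigh$. Hence $\min(p_j, q_j) \geq \mblow W_{t-1}^j / \mbhigh$ for every $j$.

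Summing this bound over $j$ and using once more that the weights sum to one gives $\sum_j \min(p_j, q_j) \geq (\mblow/\mbhigh) \sum_j W_{t-1}^j = \mblow/\mbhigh$, which is exactly the required inequality; feeding it back through the identity of Lemma~\ref{lem:properties_TV} produces the claimed bound $1 - \mblow/\mbhigh$. I do not anticipate any genuine obstacle here: the argument is essentially immediate once one bounds the numerator from below and the normalising constant from above, and the only point requiring a little care is that this must be done simultaneously for both $X_t^m$ and $X_t^n$, so that the \emph{same} lower bound $\mblow W_{t-1}^j / \mbhigh$ controls $p_j$ and $q_j$ (and hence their minimum) regardless of the fixed conditioning index.
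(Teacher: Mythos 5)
Your proof is correct and follows essentially the same route as the paper's: both apply the overlap identity $1-\operatorname{TV} = \sum_j \min(p_j,q_j)$ from Lemma~\ref{lem:properties_TV}, bound each numerator below by $\mblow$ times the weight and each normalising sum above by $\mbhigh$, and sum to obtain $\mblow/\mbhigh$. The only cosmetic difference is that you work with the normalised weights $W_{t-1}^j$ (which sum to one) whereas the paper writes the same quantities in terms of the unnormalised potentials $G_{t-1}(X_{t-1}^j)$; the two formulations are identical for the bootstrap filter.
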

\begin{proof} We have
	\begin{align*}
\MoveEqLeft 1 - \operatorname{TV}\pr{\btn{FFBS}(m, \cdot), \btn{FFBS}(n, \cdot)} \\
& = \left[ \sum_{i=1}^N \min \left( \bwdist{m}, \right. \right. \\
& \qquad \left. \left. \bwdist{n} \right) \right] \text{ by
Lemma~\ref{lem:properties_TV} (Supplement~\ref{apx:tv})} \\
& \geq \ps{\sum_{i=1}^N \frac{G_t(X_{t-1}^i) \mblow}{\sum_{j=1}^N G_t(X_{t-1}^j) \mbhigh}}  \text{ by Assumption~\ref{asp:mt_2ways_bound}}  \\
& = (\mblow/\mbhigh).
\end{align*}
\end{proof}

\begin{lem}\label{lem:cond_covar_two_traj}
	 Under the same settings as in Theorem~\ref{thm:stability}, for any $m,n \in \px{1, \ldots, N}$ and $s, s' \in \px{0, \ldots, T}$, we have
	 \begin{multline}\label{eq:small_corr_lemma_statement}
	\ccovftm{\psi_s(X_{s-1}^{\mci_{s-1}^m}, X_s^{\mci_s^m})}{\psi_{s'}(X_{s'-1}^{\mci_{s'-1}^n}, X_{s'}^{\mci_{s'}^n})} \\
	\leq \quad
	2\pr{1 - \frac{\mblow}{\mbhigh}}^{\abs{s-s'}-1} \infnorm{\psi_s}\infnorm{\psi_{s'}} \times
	\begin{cases}
	 	2 \tilde C / N &\text{ if } m\neq n\\
	 	1 &\text{ if } m = n
	\end{cases}
	\end{multline}
    where $\tilde C=\tilde C(\mblow, \mbhigh, \gblow, \gbhigh, \epss)$ is a
    constant that does not depend on $T$ (and which arises in the formulation
    of Lemma~\ref{lem:I_s_star_likely_same}). If $s$ or $s'$ is equal to $0$,
    we adopt the natural convention $\psi_0(x_{-1}, x_0) \eqdef \psi_0(x_0)$.
\end{lem}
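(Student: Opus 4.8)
The plan is to obtain both regimes of \eqref{eq:small_corr_lemma_statement} from a single mechanism: contract the time gap $\abs{s-s'}$ using the Dobrushin coefficient of the backward kernels, and then, in the case $m\neq n$, harvest the extra $1/N$ from the near-independence of two distinct trajectories. By symmetry of the covariance I may assume $s \geq s'$, and I would dispose of the degenerate case $s=s'$ first: there $\rho^{\abs{s-s'}-1}=\rho^{-1}\geq 1$ (writing $\rho\eqdef 1-\mblow/\mbhigh$), so for $m=n$ the claim is weaker than the trivial bound $\abs{\ccovftm{\addpsi{s}{m}}{\addpsi{s'}{m}}}\leq \infnorm{\psi_s}\infnorm{\psi_{s'}}$, while for $m\neq n$ it follows directly from the covariance form of Lemma~\ref{lem:properties_TV} together with the $\tilde C/N$ bound discussed below. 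So assume $s>s'$.

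The key structural input is Lemma~\ref{lem:two_backward_traj}: given $\FTM$, the pair $(\mci_t^m,\mci_t^n)$, indexed by $t$ running backward from $T$, is a Markov chain, and any single trajectory is Markov with the kernels $B_t^{N,\mathrm{FFBS}}$ (Corollary~\ref{cor:one_backward_traj}). First I would condition on $Z_{s-1}\eqdef(\mci_{s-1}^m,\mci_{s-1}^n)$. Since $\addpsi{s}{m}$ depends, beyond $\mci_{s-1}^m\in Z_{s-1}$, only on the earlier-generated variable $\mci_s^m$, whereas $\addpsi{s'}{n}$ depends only on variables at times $\leq s-1$, the backward Markov property renders the two quantities conditionally independent given $Z_{s-1}$. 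Hence the conditional covariance collapses to $\ccovftm{F(Z_{s-1})}{G(Z_{s-1})}$ with $F(z)\eqdef\CE{\addpsi{s}{m}}{Z_{s-1}=z,\FTM}$ and $G(z)\eqdef\CE{\addpsi{s'}{n}}{Z_{s-1}=z,\FTM}$. Next I would bound the oscillation of the downstream factor $G$: viewing it as the composition of the kernels $B_{s-1}^{N,\mathrm{FFBS}},\ldots,B_{s'}^{N,\mathrm{FFBS}}$ applied to $\psi_{s'}$ (after marginalising $\mci_{s'-1}^n$), the Dobrushin contraction supplied by Lemma~\ref{lem:backward_mixing}, each kernel having total-variation coefficient at most $\rho$, yields $\operatorname{osc}(G)\leq 2\rho^{(s-1)-s'}\infnorm{\psi_{s'}}=2\rho^{\abs{s-s'}-1}\infnorm{\psi_{s'}}$, which is exactly the geometric factor sought.

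For $m=n$ this already concludes: $F$ and $G$ are both functions of the single variable $\mci_{s-1}^1$, so using the identity $\Cov(F,G)=\tfrac12\E[(F(X)-F(X'))(G(X)-G(X'))]$ for independent copies gives $\abs{\ccovftm{F}{G}}\leq \tfrac12\operatorname{osc}(F)\operatorname{osc}(G)\leq 2\rho^{\abs{s-s'}-1}\infnorm{\psi_s}\infnorm{\psi_{s'}}$, where $\operatorname{osc}(F)\leq 2\infnorm{\psi_s}$. For $m\neq n$ I would instead produce the factor $2\tilde C/N$ from the fact that, off the diagonal $\{\mci_{s-1}^m=\mci_{s-1}^n\}$, the two trajectories evolve independently (Lemma~\ref{lem:two_backward_traj}), so that $F$ and $G$ agree with single-variable functions $F_0(\mci_{s-1}^m)$ and $G_0(\mci_{s-1}^n)$; the covariance-via-total-variation inequality of Lemma~\ref{lem:properties_TV} then gives $\abs{\ccovftm{F_0}{G_0}}\leq \tfrac12\operatorname{osc}(F_0)\operatorname{osc}(G_0)\operatorname{TV}\!\big((\mci_{s-1}^m,\mci_{s-1}^n),\,\mu_m\otimes\mu_n\big)$, with $\mu_m,\mu_n$ the marginal laws given $\FTM$. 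The remaining point is to bound this total-variation distance by $\tilde C/N$, which is precisely the content of Lemma~\ref{lem:I_s_star_likely_same}: because distinct trajectories separate with probability at least $\epss$ whenever they coincide (by \eqref{eq:support_cond}) and otherwise move independently, they are entangled at time $s-1$ only with probability $\mathcal O(1/N)$. Combining the three estimates yields $\abs{\ccovftm{F}{G}}\leq 2\rho^{\abs{s-s'}-1}\infnorm{\psi_s}\infnorm{\psi_{s'}}\tilde C/N$, which is within the claimed bound.

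The hard part will be the $m\neq n$ case, and specifically making rigorous the statement that $F$ and $G$ \emph{factor} into functions of the individual positions $\mci_{s-1}^m$ and $\mci_{s-1}^n$: this factorisation is exact only on the event $\{\mci_{s-1}^m\neq\mci_{s-1}^n\}$, and the diagonal correction must itself be controlled by the small meeting probability. This is exactly why an auxiliary construction of two independent (starred) trajectories, coupled to the true pair so as to coincide away from meetings, is needed; the bookkeeping of that coupling, and the resulting constant $\tilde C=\tilde C(\mblow,\mbhigh,\gblow,\gbhigh,\epss)$, is what I would isolate in Lemma~\ref{lem:I_s_star_likely_same}, relying on Assumption~\ref{asp:mt_2ways_bound} to keep all mixing coefficients uniform in $t$ and $T$.
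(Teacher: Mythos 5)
Your high-level architecture is sound, and your treatment of the case $m=n$ is correct and in fact more elementary than the paper's: where you use the law of total covariance at time $s-1$ together with the contraction of the FFBS kernels (Lemma~\ref{lem:backward_mixing}) to produce the factor $(1-\mblow/\mbhigh)^{\abs{s-s'}-1}$ analytically, the paper builds an auxiliary third trajectory (Algorithm~\ref{algo:three_trajs}) and extracts the same factor from a meeting-probability recursion. The genuine gap is in the case $m\neq n$, precisely at the step you flag as hard: the claim that, off the diagonal $\px{\mci_{s-1}^m = \mci_{s-1}^n}$, the function $F(z)=\CE{\addpsi{s}{m}}{\mci_{s-1}^m=z_m, \mci_{s-1}^n=z_n, \FTM}$ coincides with a function of $z_m$ alone. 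This is false, and the reason is a confusion between the two directions of conditioning. Lemma~\ref{lem:two_backward_traj} gives independence in the direction in which the backward recursion \emph{generates} the indices: given the pair at time $t$, the two indices at time $t-1$ are conditionally independent when they start from distinct states. The function $F$, however, asks for the conditional law of $\mci_s^m$ (generated earlier) given the pair at time $s-1$ (generated later); this is a Bayes inversion. Writing $\pi_s$ for the conditional joint law of $(\mci_s^m,\mci_s^n)$ given $\FTM$, one gets
\begin{equation*}
\CProb{\mci_s^m=a}{\mci_{s-1}^m=z_m,\ \mci_{s-1}^n=z_n,\ \FTM} \propto \sum_{b\neq a}\pi_s(a,b)\,\hat B_s^{N,\mathrm{FFBS}}[a,z_m]\,\hat B_s^{N,\mathrm{FFBS}}[b,z_n] \;+\; \pi_s(a,a)\,\xi_a(z_m,z_n),
\end{equation*}
where $\xi_a$ denotes the joint law of two conditionally i.i.d.\ draws from the realised kernel $B_s^N(a,\cdot)$. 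Since $\pi_s$ is not a product measure (the two indices at time $s$ coincide with probability of order $1/N$) and $\xi_a$ need not factorise, the $z_n$-dependence does not cancel against the normalising constant, whether or not $z_m=z_n$. So the factorisation fails exactly where you need it; the diagonal correction you propose to control is not where the problem lies, and patching your argument directly would require a quantitative stability estimate for Bayes conditionals under $\mathcal O(1/N)$ perturbations of the joint law, which you have not supplied.

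The repair is short, and shows the approach is viable: do not project $U\eqdef\addpsi{s}{m}$ at all. Since $U$ is measurable with respect to $\sigma\pr{\FTM, (\mci_t^m,\mci_t^n): t\geq s-1}$ and the pair process is Markov given $\FTM$ (Lemma~\ref{lem:two_backward_traj}), the tower property gives $\ccovftm{U}{V} = \ccovftm{U}{G_0(\mci_{s-1}^n)}$, where your downstream function $G_0$ and its oscillation bound $2(1-\mblow/\mbhigh)^{\abs{s-s'}-1}\infnorm{\psi_{s'}}$ are untouched (that $\CE{V}{\mci_{s-1}^m, \mci_{s-1}^n, \FTM}$ depends only on $\mci_{s-1}^n$ does hold: both bullets of Lemma~\ref{lem:two_backward_traj} give the same FFBS marginal for the $n$-component, and a short induction lets the $m$-component be integrated out). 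Then apply the covariance bound of Lemma~\ref{lem:properties_TV} to $M=(\mci_s^m,\mci_{s-1}^m)$ and $N=\mci_{s-1}^n$; the required total variation distance to the product of marginals is at most $\tilde C/N$ by the coupling inequality, using exactly the four-trajectory construction (Algorithm~\ref{algo:four_trajs}) and Lemma~\ref{lem:I_s_star_likely_same}, which, as you correctly anticipate, must be proved in any case. With this fix your proof becomes a genuine alternative to the paper's: the paper bounds the covariance by the total variation of the full quadruple $(\mcio_{s-1:s},\mcit_{s'-1:s'})$ and obtains \emph{both} the $1/N$ factor and the geometric factor from the coupling recursion \eqref{eq:two_twostar_diff_bw}, whereas you obtain the geometric factor by operator contraction and only invoke the coupling at the single time $s-1$.
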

\begin{proof}
	We first handle the case $m \neq n$. Without loss of generality, assume that $m=1$, $n=2$ and $s \geq s'$. The covariance bound of Lemma~\ref{lem:properties_TV} yields
	\begin{multline}\label{eq:small_corr_to_small_tv}
		\ccovftm{\psi_s(X_{s-1}^{\mcio_{s-1}}, X_s^{\mcio_s})}{\psi_{s'}(X_{s'-1}^{\mcit_{s'-1}}, X_{s'}^{\mcit_{s'}})}  \\
		\leq  2\infnorm{\psi_s} \infnorm{\psi_{s'}} \operatorname{TV}\pr{(\mcio_{s-1:s}, \mcit_{s'-1:s'})|\FTM, (\mcio_{s-1:s}|\FTM) \otimes (\mcit_{s'-1:s'}|\FTM)}.
	\end{multline}
	We shall bound this total variation distance via the coupling inequality of
    Lemma~\ref{lem:properties_TV} (Supplement~\ref{apx:tv}). The idea is to construct, in addition to
    $\mcio_{0:T}$ and $\mcit_{0:T}$, two trajectories $\mciso_{0:T}$ and
    $\mcist_{0:T}$ i.i.d.\ given $\FTM$ such that each of them is conditionally distributed according to $\mcio_{0:T}$ (cf. Corollary~\ref{cor:one_backward_traj}). To make the coupling inequality efficient, it is desirable to make $\mcio_{0:T}$ and $\mciso_{0:T}$ as similar as possible (same thing for $\mcit_{0:T}$ and $\mcist_{0:T}$).
	
The detailed construction of the four trajectories $\mcio_{0:T}$, $\mcit_{0:T}$, $\mciso_{0:T}$ and $\mcist_{0:T}$ \textit{given} $\FTM$ is described in Algorithm~\ref{algo:four_trajs}. In particular, we ensure that $\forall t \geq s-1$, we have $\mcio_t = \mciso_t$. For $t \leq s-1$, if $\mcit_t = \mcist_t$, it is guaranteed that $\mcit_\ell = \mcist_{\ell}$ holds $\forall \ell \leq t$. The rationale for different coupling behaviours between the times $t \geq s-1$ and $t \leq s-1$ will become clear in the proof: the former aim to control the correlation between two different trajectories $m=1$ and $n=2$ and result in the $\tilde C/N$ term of \eqref{eq:small_corr_lemma_statement}; the latter are for bounding the correlation between two times $s$ and $s'$ and result in the $(1-\mblow/\mbhigh)^{|s-s'| - 1}$ term of the same equation.

\begin{algo}{Sampler for the variables $\mcio_{0:T}$, $\mcit_{0:T}$, $\mciso_{0:T}$ and $\mcist_{0:T}$ (see proof of Lemma~\ref{lem:cond_covar_two_traj})}
		\label{algo:four_trajs}
		\KwIn{Feynman-Kac model~\eqref{eq:fkmodel}, variables $X_{0:T}^{1:N}$ from the output of Algorithm~\ref{algo:bootstrap}, integer $s \geq 0$ (see statement of Lemma~\ref{lem:cond_covar_two_traj})}
		Sample $\mcio_T, \mcit_T \iid \mathcal M(W_T^{1:N})$\;
		Set $\mciso_T \gets \mcio_T$ and $\mcist_T \gets \mcit_T$\;
		\For{$t \gets T$ \KwTo $1$}{
			\If{$\mcio_t \neq \mcit_t$}{
				\For{$k \in \px{1,2}$}{
					Sample $(\mci_{t-1}^k, \mci^{*k}_{t-1})$ from any maximal coupling of $B_t^{N, \mathrm{FFBS}}(\mci_t^k, \cdot)$ and $\btn{FFBS}(\mci_t^{*k}, \cdot)$ (cf. Lemma~\ref{lem:properties_TV})
				}
			}
			\Else{
				Sample the $\mathbb R^N$ vector $\hat B_t^N(\mci_t^1, \cdot)$ from $p(\hat b_t^N(i_t^1, \cdot) | x_{t-1}^{1:N}, x_t^{i_t^1})$\;
				Sample $\mci_{t-1}^1, \mcit_{t-1} \iid \hat B_t^N(\mcio_t, \cdot)$\;
				Set $k\gets 1, \ell \gets 2$ if $t \geq s$ and $k \gets 2, \ell \gets 1$ otherwise\;
				Sample $\mci_{t-1}^{*k} \sim \btn{FFBS}(\mci_t^{*k}, \cdot)$ such that $(\mci_{t-1}^{*k}, \mci_{t-1}^k)$ is any maximal coupling of $\btn{FFBS}(\mci_t^{*k},\cdot)$ and $\btn{FFBS}(\mci_t^k, \cdot)$ given $\mci_{t:T}^{1:2}$, $\mci_{t:T}^{*1:2}$ and $\FTM$ ($(\star)$ - see text for validity of this step)\;
				Sample $\iitmos \ell \sim \btn{FFBS}(\iits \ell, \cdot)$\;
			}
		}
	\KwOut{Four trajectories $\mci_{0:T}^1$, $\mcit_{0:T}$, $\mciso_{0:T}$, $\mcist_{0:T}$ to be used in the proof of Lemma~\ref{lem:cond_covar_two_traj}}
	\end{algo}

The correctness of Algorithm~\ref{algo:four_trajs} is asserted by
Lemma~\ref{lem:two_backward_traj}. Step $(\star)$  is valid because that
lemma states that the distribution of $\iitmo k$ given $\FTM$,
$\mci_{t:T}^{1,2}$ and $\mci_{t:T}^{*1,2}$ is $\btn{FFBS}(\iit k, \cdot)$.
Furthermore, we note that $(R_{T-t})_{t=0}^T$ where 
\[R_t \eqdef (\mci_{t}^1, \mci_{t}^2, \mci^{*1}_{t}, \mci_{t}^{*2}),\]  
is a Markov chain given $\FTM$.

From \eqref{eq:small_corr_to_small_tv}, applying the coupling inequality of Lemma~\ref{lem:properties_TV} gives
\begin{multline}\label{eq:cov_to_diff_stb_prf}
	\ccovftm{\psi_s(X_{s-1}^{\mcio_{s-1}}, X_s^{\mcio_s})}{\psi_{s'}(X_{s'-1}^{\mcit_{s'-1}}, X_{s'}^{\mcit_{s'}})} \\
	\leq 2\infnorm{\psi_s} \infnorm{\psi_{s'}} \pftm{(\mcio_{s-1:s}, \mcit_{s'-1:s'}) \neq (\mciso_{s-1:s}, \mcist_{s'-1:s'})} \\
	= 2\infnorm{\psi_s} \infnorm{\psi_{s'}} \pftm{\mcit_{s'-1:s'} \neq \mcist_{s'-1:s'}}
\end{multline}
where the last equality results from the construction of Algorithm~\ref{algo:four_trajs}. The sub-case $s=s'$ following directly from Lemma~\ref{lem:I_s_star_likely_same}, we now focus on the sub-case $s \geq s'+1$. For all $t \leq s-1$,
\begin{equation}\label{eq:two_twostar_diff_bw}
	\begin{aligned}
	\MoveEqLeft\pftm{\trajtwo_{t-1} \neq \trajtwos_{t-1}} \\
    & = \pftm{\trajtwo_{t-1} \neq \trajtwos_{t-1}, \trajtwo_t \neq \trajtwos_t}\\ &\text{ by construction of Algorithm~\ref{algo:four_trajs}} \\
	&=\CE{\prtf{\trajtwo_{t-1}\neq \trajtwos_{t-1}, \trajtwo_t \neq \trajtwos_t}}{\FTM} \\ 
	&\text{ by the law of total expectation}\\
	&= \CE{\operatorname{TV}\pr{\btn{FFBS}(\trajtwo_t, \cdot), \btn{FFBS}(\trajtwos_t, \cdot)} \ind\px{\trajtwo_t \neq \trajtwos_t}}{\FTM} \\
	&\leq \oneminusmlmh \pftm{\trajtwo_t \neq \trajtwos_t} \text{ by Lemma~\ref{lem:backward_mixing}}.
	\end{aligned}
\end{equation}

Thus \begin{align*}
	\MoveEqLeft\pftm{\trajtwo_{s'-1:s'} \neq \trajtwos_{s'-1:s'}} \\
	&=\pftm{\trajtwo_{s'} \neq \trajtwos_{s'}} \text{ by construction of Algorithm~\ref{algo:four_trajs}} \\
    & \leq \oneminusmlmh^{s-s'-1}\pftm{\trajtwo_{s-1} \neq \trajtwos_{s-1}} \text{ by applying \eqref{eq:two_twostar_diff_bw} recursively}\\
	&\leq \oneminusmlmh^{s-s'-1} \frac{\tilde C}{N} \text{ by Lemma~\ref{lem:I_s_star_likely_same},}
\end{align*}
which, combined with \eqref{eq:cov_to_diff_stb_prf} finishes the proof for the current sub-case $s \geq s'+1$. It remains to show \eqref{eq:small_corr_lemma_statement} when $m=n$. The proof follows the same lines as in the case $m\neq n$, although we shall briefly outline some arguments to show how the factor $\tilde C/N$ disappeared. The case $s=s'$ being trivial, suppose that $s\geq s'+1$ and without loss of generality that $m=n=3$. To use the coupling tools of Lemma~\ref{lem:properties_TV}, we construct trajectories $\trajthree_{0:T}$, $\trajthrees_{0:T}$ and $\trajfours_{0:T}$ via Algorithm~\ref{algo:three_trajs} and write, in the spirit of \eqref{eq:cov_to_diff_stb_prf}:

\begin{algo}{Sampler for the variables $\trajthree_{0:T}$, $\trajthrees_{0:T}$ and $\trajfours_{0:T}$ (see proof of Lemma~\ref{lem:cond_covar_two_traj})}
	\label{algo:three_trajs}
		\KwIn{Feynman-Kac model \eqref{eq:fkmodel}, variables $X_{0:T}^{1:N}$ from the output of Algorithm~\ref{algo:bootstrap}, integer $s \geq 0$ (see statement of Lemma~\ref{lem:cond_covar_two_traj})}
		Sample $\trajthrees_T, \trajfours_T \iid \mathcal{M}(W_T^{1:N})$\;
		Set $\trajthree_T \gets \trajthrees_T$\;
		\For{$t \gets T$ \KwTo 1}{
			\If{$t \geq s$}{
				Sample $\trajthrees_{t-1} \sim \btn{FFBS}(\trajthrees_t, \cdot)$ and $\trajfours_{t-1} \sim \btn{FFBS}(\trajfours_t, \cdot)$\;
				Set $\trajthree_{t-1} \gets \trajthrees_{t-1}$\;			
			}
			\Else{
				Sample $(\trajthree_{t-1}, \trajfours_{t-1})$ from a maximal coupling of $\btn{FFBS}(\trajthree_t, \cdot)$ and $\btn{FFBS}(\trajfours_t, \cdot)$\;
				Sample $\trajthrees_{t-1} \sim \btn{FFBS}(\trajthrees_t, \cdot)$
			}
		}
		\KwOut{Three trajectories $\trajthree_{0:T}$, $\trajthrees_{0:T}$ and $\trajfours_{0:T}$ to be used in the proof of Lemma~\ref{lem:cond_covar_two_traj}}
\end{algo}

\begin{multline}\label{eq:cov_via_tv_easy}
	\ccovftm{\psi_s(X_{s-1}^{\trajthree_{s-1}}, X_s^{\trajthree_s})}{\psi_{s'}(X_{s'-1}^{\trajthree_{s-1}}, X_{s'}^{\trajthree_{s'}})} \\
	\leq 2 \infnorm{\psi_s} \infnorm{\psi_{s'}} \pftm{(\trajthree_{s-1:s}, \trajthree_{s'-1:s'}) \neq (\trajthrees_{s-1:s}, \trajfours_{s'-1:s'})}\\
	= 2 \infnorm{\psi_s} \infnorm{\psi_{s'}} \pftm{\trajthree_{s'} \neq \trajfours_{s'}}
\end{multline}
where the last equality follows from the construction of Algorithm~\ref{algo:three_trajs} and the hypothesis $s \geq s'+1$. For all $t\leq s-1$, the inequality
\begin{equation}
	\label{eq:bw_easy_stb}
	\pftm{\trajthree_{t-1} \neq \trajfours_{t-1}} \leq \oneminusmlmh \pftm{\trajthree_t \neq \trajfours_t}
\end{equation}
can be proved using the same techniques as those used to prove \eqref{eq:two_twostar_diff_bw}: applying Lemma~\ref{lem:backward_mixing} given $(\trajthree_t, \trajthrees_t, \trajfours_t)$ then invoking the law of total expectation. Repeatedly instantiating \eqref{eq:bw_easy_stb} gives
\begin{align*}
	\pftm{\trajthree_{s'} \neq \trajfours_{s'}} &\leq \oneminusmlmh^{s-s'-1} \pftm{\trajthree_{s-1} \neq \trajfours_{s-1}}\\
	&\leq \oneminusmlmh^{s-s'-1}
\end{align*}
which, when plugged into \eqref{eq:cov_via_tv_easy}, finishes the proof.
\end{proof}

\begin{lem}
	\label{lem:I_s_star_likely_same}
	For $\mci_s^2$ and $\mci_s^{2*}$ defined by the output of Algorithm~\ref{algo:four_trajs}, we have
	\begin{align*} 
	\P\pr{\mcit_s \neq \mcist_s | \mathcal{F}_T^-} &\leq \tilde C/N, \text{ and } \\
	\pftm{\mcit_{s-1} \neq \mcist_{s-1}} &\leq \tilde C/N, \text{ if } s \geq 1, 
	\end{align*}
	for some constant $\tilde C=\tilde C(\mblow, \mbhigh, \gblow, \gbhigh, \epss)$.
\end{lem}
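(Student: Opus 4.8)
The plan is to reduce the statement to two backward one-step recursions, each with a $T$-independent fixed point. Throughout I work conditionally on $\FTM$ and introduce $q_t \eqdef \pftm{\mcio_t = \mcit_t}$, the probability that the two \emph{real} trajectories coincide at time $t$, and $p_t \eqdef \pftm{\mcit_t \neq \mcist_t}$, the probability that trajectory $2$ has split from its independent copy. By construction $\mcist_T = \mcit_T$, so $p_T = 0$, and the Markov property of $(R_{T-t})$ noted after Algorithm~\ref{algo:four_trajs} lets me analyse everything one backward step at a time.

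First I would establish the key auxiliary bound $q_t \le C_0/N$ for a constant $C_0 = C_0(\mblow, \mbhigh, \gblow, \gbhigh, \epss)$ \emph{uniform in $t$ and $T$}. Two elementary estimates feed into it. Using Assumptions~\ref{asp:mt_2ways_bound} and~\ref{asp:g_2ways_bound}, the filtering weights lie in $[\gblow/(N\gbhigh),\, \gbhigh/(N\gblow)]$ and $m_t \in [\mblow, \mbhigh]$, so every FFBS entry is bounded by $\btn{FFBS}(m, i) \le D/N$ with $D = \gbhigh^2\mbhigh/(\gblow^2\mblow)$, and likewise $q_T = \sum_i (W_T^i)^2 \le \gbhigh/(\gblow N)$. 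Conditioning at time $t$ via Lemma~\ref{lem:two_backward_traj}: if $\mcio_t \neq \mcit_t$ the two descendants are conditionally independent FFBS draws, hence meet at $t-1$ with probability $\sum_i \btn{FFBS}(\mcio_t, i)\btn{FFBS}(\mcit_t, i) \le D/N$; if $\mcio_t = \mcit_t$ the support condition~\eqref{eq:support_cond} forces them to \emph{separate} with probability at least $\epss$, i.e.\ they stay together with probability at most $1 - \epss$. This yields $q_{t-1} \le (1-\epss)q_t + D/N$, whose iteration from $q_T \le \gbhigh/(\gblow N)$ gives $q_t \le \gbhigh/(\gblow N) + D/(\epss N) \eqdef C_0/N$ for every $t$. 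I expect this to be the main obstacle: it is exactly here that the support condition is indispensable, since it prevents the meeting probability of the real trajectories from accumulating over the $T-t$ backward steps — precisely the accumulation that makes genealogy tracking unstable.

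Finally I would derive the recursion for $p_t$, valid for all $t \ge s$. Reading off Algorithm~\ref{algo:four_trajs}, the pair $(\mcit, \mcist)$ can split only at a step with $\mcio_t = \mcit_t$, the subcase in which trajectory $2$ (the index $\ell = 2$ when $t \ge s$) is drawn independently of its copy; in the complementary subcase $\mcio_t \neq \mcit_t$ the two are updated through a maximal coupling of the kernels $\btn{FFBS}(\mcit_t, \cdot)$ and $\btn{FFBS}(\mcist_t, \cdot)$, so they remain equal when $\mcit_t = \mcist_t$ and recouple with probability at least $\mblow/\mbhigh$ otherwise by Lemma~\ref{lem:backward_mixing}. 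Splitting $\{\mcit_{t-1}\neq\mcist_{t-1}\}$ over the two subcases gives
\[ p_{t-1} \;\le\; \oneminusmlmh\, p_t \;+\; q_t \;\le\; \oneminusmlmh\, p_t \;+\; \frac{C_0}{N}. \]
Iterating from $p_T = 0$ down to $t = s$, and one further step to $t = s-1$ (still governed by the $\ell = 2$ assignment since $t = s$ satisfies $t \ge s$), bounds both $p_s$ and $p_{s-1}$ by the geometric sum $\tfrac{C_0}{N}\sum_{k \ge 0}\oneminusmlmh^{k} = C_0\mbhigh/(\mblow N)$. Setting $\tilde C \eqdef C_0\mbhigh/\mblow$, which depends only on $\mblow, \mbhigh, \gblow, \gbhigh, \epss$ and not on $T$, completes both assertions; the remaining work is the routine verification of the subcase probabilities conditionally on $\FTM$ and on $R_t$, which is immediate from the explicit description of Algorithm~\ref{algo:four_trajs}.
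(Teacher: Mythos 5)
Your proof is correct, and it takes a genuinely different route from the paper's. The paper works with the joint Bernoulli variable $\Gamma_t = \ind\px{\mcio_t \neq \mcit_t}\,\ind\px{\mcit_t = \mcist_t}$; since a one-step transition of $\Gamma$ from $0$ to $1$ cannot be lower-bounded by a constant (once the real pair has met, the starred pair can recouple only \emph{after} the real pair separates, which takes an extra step), the paper derives two-step transition bounds (Lemma~\ref{lem:stability_pseudo_markov}), dominates $\Gamma$ by two auxiliary two-state Markov chains run on even and odd offsets of $T-t$, and concludes by total-variation contraction to their invariant law, all under an ``$N$ large enough'' proviso. You instead split $\Gamma$ into its two components and exploit the resulting triangular structure: $q_t \eqdef \pftm{\mcio_t = \mcit_t}$ obeys the autonomous recursion $q_{t-1} \leq (1-\epss)q_t + D/N$, whose contraction comes from \eqref{eq:backward_different_proba_lower_bound} of Lemma~\ref{lem:two_backward_traj} and whose forcing term comes from the $O(1/N)$ bound on FFBS entries (the same ingredients as \eqref{ieq_d_4trajs_bw_reg_bound} and \eqref{ieq_e_bw_reg}), giving $q_t \leq C_0/N$ uniformly in $t$ and $T$; this then feeds the recursion $p_{t-1} \leq \oneminusmlmh p_t + q_t$ for $p_t \eqdef \pftm{\mcit_t \neq \mcist_t}$, whose contraction factor is Lemma~\ref{lem:backward_mixing} (as in \eqref{ieq_g_bw_reg_bound}), and iteration from $p_T = 0$ bounds both $p_s$ and $p_{s-1}$ by $C_0\mbhigh/(\mblow N)$. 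Both arguments consume the same primitives; what yours buys is the elimination of the auxiliary chains, the even/odd bookkeeping, and the $N \geq N_0$ caveat (your bounds hold for every $N$, being vacuous when $\tilde C \geq N$), with explicit constants. Two points deserve explicit mention in a write-up: on $\px{\mcio_t \neq \mcit_t,\ \mcit_t = \mcist_t}$ the maximal coupling of two copies of the \emph{same} FFBS kernel keeps the pair equal almost surely, which is what confines splitting to the event $\px{\mcio_t = \mcit_t}$; and your $p$-recursion charges that event its full probability $q_t$, so it is insensitive to the $k/\ell$ switch at $t=s$ in Algorithm~\ref{algo:four_trajs}, which is exactly why the extra iteration down to $p_{s-1}$ is legitimate.
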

\begin{proof}
	Define $A_t \eqdef \ind\px{\iit 1 \neq \iit 2}$, $B_t \eqdef \ind\px{\iit 2
    = \iits 2}$ and $\Gamma_t \eqdef A_tB_t$ and recall that $R_t \eqdef (\iit
    1, \iit 2, \iits 1, \iits 2)$. The sequence $(R_{T-\ell})_{\ell=0}^T$ is a
    Markov chain given $\FTM$, but this is not necessarily the case for the
    sequence $(\Gamma_{T-\ell})_{\ell=0}^T$ of Bernoulli random variables.
    Nevertheless, Lemma~\ref{lem:stability_pseudo_markov} below shows that one can
    get bounds on two-step ``transition probabilities'' for
    $(\Gamma_{T-\ell})$, i.e.\ the probabilities under $\FTM$ that $\Gamma_{t-2}
    = 1$ given $\Gamma_t$ \textit{and} $R_t$. This motivates our following
    construction of actual Markov chains approximating the dynamic of $\Gamma_t$. Let $\gammastar_T$ and $\gammastar_{T-1}$ be two independent Bernoulli random variables given $\FTM$ such that
	\begin{equation}
	\label{eq:gammastar_initial}
	\begin{aligned}
		\CProb{\gammastar_T = 1}{\FTM} &= \CProb{\Gamma_T = 1}{\FTM} \\
		\CProb{\gammastar_{T-1} = 1}{\FTM} &=
		\CProb{\Gamma_{T-1} = 1}{\FTM}.
	\end{aligned}
	\end{equation}
	Let $\gammastar_T, \gammastar_{T-2}, \gammastar_{T-4}, \ldots$ and $\gammastar_{T-1}, \gammastar_{T-3}, \ldots$ be two homogeneous Markov chains given $\FTM$ with the same transition kernel $\gdbackward$ defined by
	\begin{equation}
	\label{eq:gammastar_transition}
	\begin{alignedat}{3}
		&\P_{\FTM}(\gammastar_{t-2}=1 | \gammastar_t = 1) &&= 1 - \frac 2N \hlbr &&=: \gdbackward_{11} \\
		&\P_{\FTM}(\gammastar_{t-2} = 1 | \gammastar_t=0) &&= \frac{\mblow \epss}{2\mbhigh} &&=: \gdbackward_{01}
	\end{alignedat}
	\end{equation}
	where for two events $E_1$, $E_2$, the notation $\P_{\FTM}(E_1 | E_2)$ is the ratio between $\CProb{E_1, E_2}{\FTM}$ and $\CProb{E_2}{\FTM}$. We shall now prove by backward induction the following statement:
	\begin{equation}
		\label{eq:pseudo_markov_bound_via_true_markov}
		\CProb{\Gamma_t=1}{\FTM} \geq \CProb{\gammastar_t = 1}{\FTM}, \forall t \geq s-1.
	\end{equation}
	Firstly, \eqref{eq:pseudo_markov_bound_via_true_markov} holds for $t=T$ and $t=T-1$. Now suppose that it holds for some $t \geq s+1$ and we wish to justify it for $t-2$. By Lemma~\ref{lem:stability_pseudo_markov},
	\begin{align*}
		\prtf{\Gamma_{t-2}=1} \ind_{\Gamma_t=1} &\geq \gdbackward_{11} \ind_{\Gamma_t=1}\\
		\prtf{\Gamma_{t-2}=1} \ind_{\Gamma_t=0} &\geq \gdbackward_{01} \ind_{\Gamma_t=0}.
	\end{align*}
	Applying the law of total expectation gives
	\begin{align*}
		\CProb{\Gamma_{t-2}=1}{\FTM} &\geq \gdbackward_{11} \CProb{\Gamma_t=1}{\FTM} + \gdbackward_{01} \CProb{\Gamma_t=0}{\FTM} \\
		&= \pr{\gdbackward_{11} - \gdbackward_{01}} \CProb{\Gamma_t = 1}{\FTM} + \gdbackward_{01}\\
		&\geq \pr{\gdbackward_{11} - \gdbackward_{01}} \CProb{\gammastar_t=1}{\FTM} + \gdbackward_{01} \\
		&\text{ if } N \text{ is large enough, by induction hypothesis}\\
		&= \CProb{\gammastar_{t-2}=1}{\FTM}
	\end{align*}
	and \eqref{eq:pseudo_markov_bound_via_true_markov} is now proved. To finish the proof of the lemma, it is necessary to lower bound its right hand side. We start by controlling the distribution $\gammastar_t$ for $t = T$ and $t = T-1$. We have
	\begin{equation}
			\label{eq:fpseudo_L1}
	\begin{aligned}
		\pftm{\gammastar_T=1} &= \pftm{\Gamma_T=1} \text{ by \eqref{eq:gammastar_initial}}  \\
		&= 1 - \pftm{A_T = 0} \text{ as } B_T = 1 \text{ by Algorithm~\ref{algo:four_trajs}} \\
		&= 1 - \sum_{i=1}^N \pftm{\mci_T^1 = \mci_T^2=i} \\
		&= 1 - \sum_{i=1}^N \pr{\frac{G(X_T^i)}{\sum_{j=1}^N G(X_T^j)}}^2  \\
		&\geq 1 - \frac 1N \pr{\frac{\gbhigh}{\gblow}}^2 \text{ by Assumption~\ref{asp:g_2ways_bound}}
	\end{aligned}
	\end{equation}
	and
	\begin{equation}
			\label{eq:fpseudo_L2}
	\begin{aligned}
		\pftm{\gammastar_{T-1}=1} &\geq \pftm{\Gamma_T = 1, \Gamma_{T-1}=1}  \\
		&= \CE{\CProb{\Gamma_T=1, \Gamma_{T-1}=1}{R_T, \FTM}}{\FTM} \\ &\text{ by the law of total expectation} \\
		&= \CE{\CProb{\Gamma_{T-1}=1}{R_T, \FTM} \ind_{\Gamma_T=1}}{\FTM}\\
		&\geq \ps{1 - \frac 1N \hlbr} \pftm{\Gamma_T=1} \text{ via \eqref{ieq_c_4trajs_backward_reg_bound}} \\
		&\geq \ps{1 - \frac 1N \hlbr} \ps{1-\frac 1N \pr{\frac{\gbhigh}{\gblow}}^2}.
	\end{aligned}
	\end{equation}
	The contraction property of Lemma~\ref{lem:properties_TV} makes it possible
    to relate the intermediate distributions $\gammastar_t|\FTM$ to the end
    point ones $\gammastar_{T-1}|\FTM$ and $\gammastar_T|\FTM$. More
    specifically, \eqref{eq:gammastar_transition} and
    Lemma~\ref{lem:properties_TV} lead to
	\begin{equation}
	\label{eq:pseudo_end_to_mid}
		\operatorname{TV}(\gammastar_t | \FTM, \mu^*) \leq \max\pr{\operatorname{TV}(\gammastar_T|\FTM, \mu^*), \operatorname{TV}(\gammastar_{T-1}|\FTM, \mu^*)}
	\end{equation}
	where $\mu^*$ is the invariant distribution of a Markov chain with transition matrix $\gdbackward$, namely
	\begin{equation}
	\label{eq:fpseudo_L3}
	\begin{cases}
		\mu^*(\px{0}) &= \frac{\gdbackward_{10}}{\gdbackward_{01}+\gdbackward_{10}} \\
		\mu^*(\px{1}) &= 1 - \mu^*(\px{0}).
	\end{cases}
	\end{equation}
	Furthermore, an alternative expression of the total variation distance given in Lemma~\ref{lem:properties_TV} implies that the total variation distance between two Bernoulli distributions of parameters $p$ and $q$ is $\abs{p-q}$. Combining this with \eqref{eq:pseudo_end_to_mid}, the triangle inequality and the rough estimate $\max(a,b) \leq a+b \ \forall a,b \geq 0$, we get
	\begin{equation*}
		\pftm{\gammastar_t=0} \leq 3\mu^*(\px{0}) + \pftm{\gammastar_T=0} + \pftm{\gammastar_{T-1}=0} \leq \tilde C/N
	\end{equation*}
	where $\tilde C=\tilde C(\mblow, \mbhigh, \gblow, \gbhigh, \epss)$. The last inequality is straightforwardly derived by plugging respectively \eqref{eq:fpseudo_L3}, \eqref{eq:fpseudo_L1} and \eqref{eq:fpseudo_L2} into the three terms of the preceding sum. This combined with \eqref{eq:pseudo_markov_bound_via_true_markov} finishes the proof.
\end{proof}

\begin{lem}
	\label{lem:stability_pseudo_markov}
	For $s$ defined in the statement of Lemma~\ref{lem:cond_covar_two_traj}; $A_t$, $B_t$ and $R_t$ defined in the proof of Lemma~\ref{lem:I_s_star_likely_same} and all $t \geq s + 1$, we have
\begin{align*}
\CProb{A_{t-2}B_{t-2}=1}{R_t, \FTM} \ind_{\atbt=1} & \geq \pr{1 - \frac 2N
\hlbr} \ind_{\atbt=1}; \\
\CProb{A_{t-2}B_{t-2}=1}{R_t, \FTM} & \geq \frac{\mblow \epss}{2 \mbhigh} \\
\end{align*} 
where the inequalities hold for $N$ large enough, i.e., $N \geq N_0 = N_0(\mblow, \mbhigh, \gblow, \gbhigh, \epss)$.
\end{lem}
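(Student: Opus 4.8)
The plan is to deduce both two-step inequalities from a handful of one-step building blocks, chained together using the fact (noted after Algorithm~\ref{algo:four_trajs}) that $(R_{T-\ell})_{\ell=0}^T$ is a Markov chain given $\FTM$, so that $\prtf{\Gamma_{t-2}=1} = \CE{\CProb{\Gamma_{t-2}=1}{R_{t-1},\FTM}}{R_t,\FTM}$. Since $t \geq s+1$ forces $t-1 \geq s$, both backward steps fall in the $k=1,\ell=2$ regime of Algorithm~\ref{algo:four_trajs}, so the bookkeeping is uniform. The three one-step blocks I will need are: (a) a \emph{meeting} bound for the pair $(\mcit,\mcist)$ --- whenever they are resampled through a maximal coupling of $\btn{FFBS}$ kernels, Lemma~\ref{lem:backward_mixing} gives $\mci_{t-1}^2 = \mci_{t-1}^{*2}$ with probability at least $\mblow/\mbhigh$, and with probability one if the two indices already coincide (identical kernels); (b) a \emph{collision} bound for $(\mcio,\mcit)$ --- when $\mci_{t-1}^1,\mci_{t-1}^2$ are drawn independently from $\btn{FFBS}(\mci_t^1,\cdot)$ and $\btn{FFBS}(\mci_t^2,\cdot)$ (which by Lemma~\ref{lem:two_backward_traj} is their conditional law in the $A_t=1$ branch), bounding each kernel entry by $(\mbhigh/\mblow)W_{t-1}^i$ and each weight by $\gbhigh/(N\gblow)$ via Assumptions~\ref{asp:mt_2ways_bound} and~\ref{asp:g_2ways_bound} yields $\sum_i \btn{FFBS}(\mci_t^1,\{i\})\btn{FFBS}(\mci_t^2,\{i\}) \leq (\mbhigh/\mblow)^2 \sum_i (W_{t-1}^i)^2 \leq \frac1N\hlbr$; and (c) a \emph{separation} bound --- in the $A_t=0$ branch $\mci_{t-1}^1,\mci_{t-1}^2$ are i.i.d.\ draws from $\hat B_t^N(\mci_t^1,\cdot)$, so the support condition~\eqref{eq:support_cond} gives $\mci_{t-1}^1 \neq \mci_{t-1}^2$ with probability at least $\epss$.

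For the first inequality I will work on the event $\Gamma_t=1$, i.e.\ $\mci_t^1 \neq \mci_t^2$ and $\mci_t^2 = \mci_t^{*2}$. Here Algorithm~\ref{algo:four_trajs} takes the $A_t=1$ branch: the $k=2$ maximal coupling of two \emph{identical} kernels forces $\mci_{t-1}^2 = \mci_{t-1}^{*2}$ almost surely, while block (b) shows the independent draws $\mci_{t-1}^1,\mci_{t-1}^2$ coincide with probability at most $\frac1N\hlbr$. This produces the one-step estimate $\prtf{\Gamma_{t-1}=1}\ind_{\Gamma_t=1} \geq (1-\frac1N\hlbr)\ind_{\Gamma_t=1}$ (precisely the one-step bound that feeds into the proof of Lemma~\ref{lem:I_s_star_likely_same}). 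Applying it twice through the Markov/tower decomposition above, restricting the inner expectation to the event $\Gamma_{t-1}=1$, gives $\prtf{\Gamma_{t-2}=1}\ind_{\atbt=1} \geq (1-\frac1N\hlbr)^2\ind_{\atbt=1} \geq (1-\frac2N\hlbr)\ind_{\atbt=1}$, matching $\gdbackward_{11}$ in~\eqref{eq:gammastar_transition}.

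The second inequality must hold from an arbitrary state, and I will split on the value of $A_t$. If $A_t=1$, the step from $t$ is again an $A_t=1$ branch: block (a) applied to $(\mci_t^2,\mci_t^{*2})$ gives $B_{t-1}=1$ with probability at least $\mblow/\mbhigh$, while block (b) gives $A_{t-1}=1$ with probability at least $1-\frac1N\hlbr$; since these events are governed by the independent $k=2$ and $k=1$ couplings, the elementary bound $\CProb{A_{t-1}=1,B_{t-1}=1}{R_t,\FTM} \geq \CProb{B_{t-1}=1}{R_t,\FTM} - \CProb{A_{t-1}=0}{R_t,\FTM}$ yields $\prtf{\Gamma_{t-1}=1} \geq \mblow/\mbhigh - \frac1N\hlbr$. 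If instead $A_t=0$, the step from $t$ is an $A_t=0$ branch and block (c) gives $A_{t-1}=1$ with probability at least $\epss$; conditioned on this event the step from $t-1$ is an $A_{t-1}=1$ branch, to which the $A_t=1$ analysis applies verbatim and produces $\Gamma_{t-2}=1$ with conditional probability at least $\mblow/\mbhigh - \frac1N\hlbr$. In both cases, combining with the one-step extension and letting $N \geq N_0(\mblow,\mbhigh,\gblow,\gbhigh,\epss)$ be large enough to absorb the $\frac1N\hlbr$ correction (so that $\mblow/\mbhigh - \frac1N\hlbr \geq \mblow/(2\mbhigh)$), the two-step probability is at least $\epss \cdot \mblow/(2\mbhigh) = \frac{\mblow\epss}{2\mbhigh}$; the $A_t=1$ case even gains a factor $1$ in place of $\epss$, so the bound $\gdbackward_{01}$ holds uniformly over $R_t$.

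The main obstacle I anticipate is not any single estimate --- each of (a), (b), (c) is a short computation --- but the careful case bookkeeping: at every one of the (up to two) backward steps one must correctly identify whether Algorithm~\ref{algo:four_trajs} is in its $A=1$ or $A=0$ branch, invoke Lemma~\ref{lem:two_backward_traj} to justify that the relevant conditional marginals are genuinely $\btn{FFBS}$ kernels (even though the actual sampling uses the general kernel $\hat B_t^N$ and various maximal couplings), and keep track of which pairs are coupled versus independent. A secondary, purely technical point is threading the ``$N$ large enough'' requirement through the additive $\frac1N\hlbr$ error terms so that the clean constants $1-\frac2N\hlbr$ and $\frac{\mblow\epss}{2\mbhigh}$ of~\eqref{eq:gammastar_transition} emerge.
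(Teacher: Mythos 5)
Your proposal is correct and follows essentially the same route as the paper: your blocks (a), (b), (c) are precisely the paper's one-step estimates (its inequalities for $\prtf{B_{t-1}=1}\ind_{A_t=1}$ via Lemma~\ref{lem:backward_mixing}, the collision bound $\frac 1N \hlbr$, and the $\epss$-separation bound from \eqref{eq:support_cond} via Lemma~\ref{lem:two_backward_traj}), chained twice through the Markov property of $(R_{T-\ell})_\ell$ given $\FTM$ exactly as in the paper. The only cosmetic difference is that you split the second inequality explicitly on $A_t\in\{0,1\}$, whereas the paper absorbs that case split into its intermediate bound $\CProb{A_{t-1}=1}{R_t,\FTM}\geq\epss$ and then applies its $(\mblow/2\mbhigh)$-bound at time $t-1$.
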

\begin{proof}
	We start by showing the following three inequalities for all $t \geq s$ and $N$ sufficiently large:
	\begin{align}
		\label{ieq_a_4trajs_backward_reg_bound} \CProb{A_{t-1}=1}{R_t, \FTM} &\geq  \epss;\\
		\label{ieq_b_4trajs_backward_reg_bound}
		\CProb{A_{t-1}B_{t-1}=1}{R_t, \FTM} \ind_{A_t=1} &\geq (\mblow/2\mbhigh) \ind_{A_t=1};\\
		\label{ieq_c_4trajs_backward_reg_bound}
		\CProb{A_{t-1}B_{t-1}=1}{R_t, \FTM} \ind_{\atbt=1} &\geq \ps{1 - \frac 1N \pr{\frac{\gbhigh \mbhigh}{\gblow \mblow}}^2} \ind_{\atbt=1}.
	\end{align}
	For \eqref{ieq_a_4trajs_backward_reg_bound}, we have
	\begin{equation}
		\label{ieq_d_4trajs_bw_reg_bound}
		\CProb{A_{t-1}=1}{R_t, \FTM} \ind_{A_t \neq 1} = \CProb{\iitmo 1 \neq \iitmo 2}{R_t, \FTM} \ind_{\iit 1 = \iit 2} \geq \epss \ind_{A_t \neq 1}
	\end{equation}
	by Lemma~\ref{lem:two_backward_traj}. Next,
	\begin{equation}
	\begin{aligned}
		\MoveEqLeft\CProb{A_{t-1}=1}{R_t, \FTM} \ind_{A_t=1} \\
        & = \CProb{\iitmo 1 \neq \iitmo 2}{\rtftm} \ind_{\iit 1 \neq \iit 2}  \\
		& = \ps{1 - \sum_i \CProb{\iitmo 1 = \iitmo 2 = i}{\rtftm}} \ind_{\iit 1 \neq \iit 2}  \\
		& = \ps{1 - \sum_{i=1}^N 
			\prod_{k=1}^2
			 \bwdist{\iit k}
		} \ind_{\iit 1 \neq \iit 2} \text{ by Lemma~\ref{lem:two_backward_traj}}  \\
        & \geq \ps{1 - \frac 1N \hlbr} \ind_{A_t=1} \text{ by Assumptions~\ref{asp:mt_2ways_bound} and~\ref{asp:g_2ways_bound}.} \label{ieq_e_bw_reg}
	\end{aligned}
	\end{equation}
	Combining \eqref{ieq_d_4trajs_bw_reg_bound} and \eqref{ieq_e_bw_reg} yields \eqref{ieq_a_4trajs_backward_reg_bound} for $N$ large enough. To prove \eqref{ieq_b_4trajs_backward_reg_bound}, we write
	\begin{equation}\label{ieq_f_bw_reg_bound}
	\begin{aligned}
		\MoveEqLeft\CProb{A_{t-1}B_{t-1}=1}{\rtftm} \ind_{A_t=1}  \\
		&=\ps{1 - \CProb{A_{t-1}B_{t-1}=0}{\rtftm}} \ind_{A_t=1}  \\
        & \geq \ps{1 - \prtf{A_{t-1}=0} - \prtf{B_{t-1}=0}} \ind_{A_t=1}  \\
		&=\ps{\prtf{A_{t-1}=1} + \prtf{B_{t-1}=1} - 1} \ind_{A_t=1}. 
	\end{aligned}
	\end{equation}
	We analyse the second term in the above expression. We have
	\begin{equation}\label{ieq_g_bw_reg_bound}
	\begin{aligned}
		\MoveEqLeft\prtf{B_{t-1}=1} \ind_{A_t=1} \\
        & = \prtf{\iitmo 2 = \iitmos 2} \ind_{\iit 1 \neq \iit 2}  \\
		& = \ps{1 - \operatorname{TV}\pr{\btn{FFBS}(\iit 2, \cdot), \btn{FFBS}(\iits 2, \cdot)}} \ind_{A_t=1}  \\
		&\text{ by construction of Algorithm~\ref{algo:four_trajs}}  \\
        & \geq (\mblow/\mbhigh)\ind_{A_t=1} \text{ by Lemma~\ref{lem:backward_mixing}.} 
	\end{aligned}
	\end{equation}
	Plugging \eqref{ieq_e_bw_reg} and \eqref{ieq_g_bw_reg_bound} into \eqref{ieq_f_bw_reg_bound} yields
	\[\prtf{A_{t-1}B_{t-1}=1}\ind_{A_t=1} \geq \pr{-\frac 1N \hlbr + \frac{\mblow}{\mbhigh}} \ind_{A_t=1} \]
	and thus \eqref{ieq_b_4trajs_backward_reg_bound} follows if $N$ is large enough. The inequality \eqref{ieq_c_4trajs_backward_reg_bound} is justified by combining \eqref{ieq_e_bw_reg}, the simple decomposition $\ind_{\atbt=1} = \ind_{A_t=1} \ind_{B_t=1}$ and the fact that Algorithm~\ref{algo:four_trajs} guarantees $B_{t-1} = 1$ if $A_t = B_t = 1$.
	
	We can now deduce the two inequalities in the statement of the Lemma. The first one is a straightforward double application of \eqref{ieq_c_4trajs_backward_reg_bound}:
	\begin{align*}
		\MoveEqLeft\prtf{A_{t-2}B_{t-2}=1} \ind_{\atbt=1} \\
		&\geq \prtf{A_{t-2}B_{t-2}=1, A_{t-1}B_{t-1}=1} \ind_{\atbt=1} \\
		&=\CE{\CProb{A_{t-2}B_{t-2}=1, A_{t-1}B_{t-1}=1}{R_{t-1}, \rtftm}}{\rtftm} \ind_{\atbt=1}\\
		&\text{ by the law of total expectation}\\
		&=\CE{\CProb{A_{t-2}B_{t-2}=1}{R_{t-1}, \FTM}\ind_{A_{t-1}B_{t-1}=1}}{\rtftm} \ind_{\atbt=1} \\
		&\text{ since } (R_{T-\ell})_{\ell=0}^T \text{ is Markov given } \FTM \\
		&\geq \CE{\pr{1 - \frac 1N \hlbr} \ind_{A_{t-1}B_{t-1}=1}}{\rtftm} \ind_{\atbt=1} \\
		&\geq \ps{1 - \frac 1N \hlbr}^2 \ind_{\atbt=1} \geq \pr{1 - \frac 2N \hlbr} \ind_{\atbt=1}.
	\end{align*}
	Finally, we have
	\begin{align*}
		\MoveEqLeft\prtf{A_{t-2}B_{t-2} = 1} \\
        & \geq \prtf{A_{t-2}B_{t-2} = 1, A_{t-1} = 1}\\
		&=\CE{\CProb{A_{t-2}B_{t-2}=1}{R_{t-1}, \FTM} \ind_{A_{t-1}=1}}{\rtftm}\\ 
		&\mbox{ using law of total expectation and the Markov property as above}\\
		&\geq \frac{\mblow}{2\mbhigh} \prtf{A_{t-1}=1} \text{ by \eqref{ieq_b_4trajs_backward_reg_bound}}\\
		&\geq \frac{\mblow}{2\mbhigh} \epss \text{ by \eqref{ieq_a_4trajs_backward_reg_bound}}
	\end{align*}
	and the second inequality is proved.
\end{proof}

\subsection{\propp~\ref{prop:hybrid_validity} (hybrid rejection validity)}
\label{apx:proof:hybrid_validity}

\begin{proof}
Put $Z_n:=(X_n, U_nC\mu_0(X_n))$. Then $Z_n$ is uniformly distributed on
\[ \mathcal G_0:= \px{(x, y) \in \mathcal X \times \mathbb R_+, y \leq C \mu_0(x)}.\]
The proof would be done if one could show that, \textit{given} $K^* \leq K$, the variable $Z_{K^*}$ is uniformly distributed on
\[\mathcal G_1:= \px{(x, y) \in \mathcal X \times \mathbb R_+, y\leq \mu_1(x)}. \]
Note that $K^*$ is, by definition, the first time index where the sequence
$(Z_n)$ touches $\mathcal G_1$. Let $B$ be any subset of $\mathcal G_1$. We have
\begin{equation}\label{eq:main:proof:hybrid_validity}
	\begin{aligned}
	\MoveEqLeft\CProb{Z_{K^*} \in B}{K^* \leq K} \propto \P(Z_{K^*} \in B, K^* \leq K)\\
    & = \sum_{k^*=1}^\infty \P\pr{Z_{k^*} \in B, K^*=k^*, K \geq k^*} \\
    & = \sum_{k^*=1}^\infty \P\pr{Z_{k^*} \in B, Z_{1:k^*-1} \notin \mathcal G_1, K > k^*-1} \\
    & = \sum_{k^*=1}^{\infty} \P(Z_{k^*} \in B) \P\pr{Z_{1:k^*-1} \notin
    \mathcal G_1, K > k^*-1} \mbox{since $K$ stopping time}\\
    & = \P(Z_1 \in B)\sum_{k^*=1}^\infty \P\pr{Z_{1:k^*-1} \notin \mathcal G_1, K > k^*-1} \\
    & \propto \P(Z_1 \in B) \propto \CProb{Z_1 \in B}{Z_1 \in \mathcal G_1}.
	\end{aligned}
\end{equation}
By considering the special case $B = \mathcal G_1$, we see that the constant of proportionality between the first and the last terms of \eqref{eq:main:proof:hybrid_validity} must be $1$, from which the proof follows.
\end{proof}

\subsection{\prot~\ref{thm:intermediate_perf} (hybrid algorithm's intermediate complexity)}
From \eqref{eq:dist_tau_N}, one may have the correct intuition that as $N \to \infty$, $\tau_t^{1, \mrp}$ tends in distribution to that of the variable $\tau_t^{\infty, \mrp}$ defined as
\begin{equation}
\label{eq:dist_tau_infty}
\tau_t^{\infty, \mrp} \textrm{ } | \textrm{ } X_t^\pinfty \sim \operatorname{Geo}\pr{\frac{r_t(X_t^\pinfty)}{\mbhigh}}
\end{equation}
where $X_t^\pinfty \sim \Q_{t-1}M_t (\dd x_t)$ is distributed according to the predictive distribution of $X_t$ given $Y_{0:t-1}$ and $r_t$ is the density of $X_t^\pinfty$ with respect to the Lebesgue measure (cf. Definition~\ref{def:rt}).
The following proposition formalises the connection between $\tau_t^{1,\mrp}$ and $\tau_t^\pinfty$.
\begin{prop}
	\label{prop: tau_N_to_tau_inf}
	We have $\tau_t^{1, \mrp} \Rightarrow \tau_t^\pinfty$ as $N \to \infty$.
\end{prop}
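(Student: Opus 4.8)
The plan is to recognise that both $\tau_t^{1,\mrp}$ and $\tau_t^\pinfty$ are geometric variables whose (random) success probabilities are a predictive density evaluated at a draw from that same density, and to reduce the statement to the weak convergence of these success probabilities. Set $P_N \eqdef r_t^N(X_t^1)/\mbhigh$ and $P_\infty \eqdef r_t(\xinfffbs\text{-analogue }X_t^\pinfty)/\mbhigh$; Assumption~\ref{asp:Ct} ($0 < m_t \le \mbhigh$) gives $P_N,P_\infty \in (0,1]$, so \eqref{eq:dist_tau_N} and \eqref{eq:dist_tau_infty} read $\tau_t^{1,\mrp}\mid \mathcal F_{t-1},X_t^1 \sim \operatorname{Geo}(P_N)$ and $\tau_t^\pinfty\mid X_t^\pinfty \sim \operatorname{Geo}(P_\infty)$. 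For $\mathbb Z_{\geq 1}$-valued variables, $\tau_t^{1,\mrp}\cvd\tau_t^\pinfty$ is equivalent to the convergence of each point mass $\P(\tau_t^{1,\mrp}=k)=\E[g_k(P_N)]\to\E[g_k(P_\infty)]=\P(\tau_t^\pinfty=k)$, where $g_k(p)\eqdef p(1-p)^{k-1}$ is bounded and continuous on $[0,1]$. Hence it suffices to establish $P_N \cvd P_\infty$, i.e.\ $\E[h(P_N)]\to\E[h(P_\infty)]$ for every bounded continuous $h$.

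The key observation is that, given $\mathcal F_{t-1}$, the particle $X_t^1$ is drawn from $\sum_n W_{t-1}^n M_t(X_{t-1}^n,\cdot)$ and thus has density $r_t^N$ with respect to $\lambda_t$, while $X_t^\pinfty$ has density $r_t$ (Definition~\ref{def:rt}). Conditioning on $\mathcal F_{t-1}$ therefore gives
\[\E[h(P_N)] = \E\ps{H_N}, \qquad H_N \eqdef \int h\pr{r_t^N(x)/\mbhigh}\, r_t^N(x)\,\lambda_t(\dd x),\]
with target $H_\infty \eqdef \int h(r_t(x)/\mbhigh)\,r_t(x)\,\lambda_t(\dd x) = \E[h(P_\infty)]$. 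Since $\abs{H_N}\le\infnorm{h}$, bounded convergence reduces everything to proving $H_N \cvprob H_\infty$. Writing $H_N-H_\infty$ as $\int[h(r_t^N/\mbhigh)-h(r_t/\mbhigh)]r_t^N\,\lambda_t(\dd x)+\int h(r_t/\mbhigh)[r_t^N-r_t]\,\lambda_t(\dd x)$, both pieces are controlled by the single quantity $\norm{r_t^N-r_t}_{L^1(\lambda_t)}$: the second directly, being at most $\infnorm{h}\,\norm{r_t^N-r_t}_{L^1(\lambda_t)}$; the first via the uniform continuity of $h$ on $[0,1]$ combined with the elementary inequality $\int_{\{\abs{r_t^N-r_t}>c\}}r_t^N\,\lambda_t(\dd x)\le(\mbhigh/c)\norm{r_t^N-r_t}_{L^1(\lambda_t)}$ (using $r_t^N\le\mbhigh$). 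So the whole proof collapses to the claim $\norm{r_t^N-r_t}_{L^1(\lambda_t)}\cvprob 0$.

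This last claim is the crux, and it is exactly where the non-compactness of the state space (Assumption~\ref{asp:space}, $\lambda_t$ the infinite Lebesgue measure) makes things delicate: pointwise convergence of $r_t^N$ does not automatically pass to the integral. I would exploit the mass-preservation identity $\int r_t^N\,\lambda_t(\dd x)=\int r_t\,\lambda_t(\dd x)=1$. For each fixed $x$, consistency of the bootstrap filter—a special case of Theorem~\ref{thm:convergence_mcmc}, since $m_t(\cdot,x)\le\mbhigh$ is a bounded test function—yields $r_t^N(x)=\sum_n W_{t-1}^n m_t(X_{t-1}^n,x)\cvprob\Q_{t-1}(m_t(\cdot,x))=r_t(x)$, whence by boundedness $\E\abs{r_t^N(x)-r_t(x)}\to 0$ and $\E[r_t^N(x)]\to r_t(x)$ pointwise in $x$. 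Applying Fubini--Tonelli and the generalised (Pratt) dominated convergence theorem to $f_N(x)\eqdef\E\abs{r_t^N(x)-r_t(x)}$, dominated by $G_N(x)\eqdef\E[r_t^N(x)]+r_t(x)$ with $\int G_N\,\lambda_t(\dd x)=2\to\int 2r_t\,\lambda_t(\dd x)$, gives $\E\norm{r_t^N-r_t}_{L^1(\lambda_t)}=\int f_N\,\lambda_t(\dd x)\to 0$, hence convergence in probability. Mass preservation is precisely what makes the Pratt hypothesis on the integrals of the dominating functions hold.

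Collecting the pieces yields $H_N\cvprob H_\infty$, so $\E[h(P_N)]\to\E[h(P_\infty)]$, which gives $P_N\cvd P_\infty$ and therefore the proposition. The main obstacle, as flagged above, is the promotion of pointwise-in-$x$ filter consistency to $L^1(\lambda_t)$ convergence over the unbounded domain; I expect the Scheffé/Pratt-type argument anchored on the conserved unit mass of the predictive densities to be the right and essentially only analytic tool, with Assumptions~\ref{asp:Ct} and~\ref{asp:space} entering exactly there.
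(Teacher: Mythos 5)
Your proof is correct, and after the common first reduction (both you and the paper reduce weak convergence of the geometric variables to weak convergence of the success probabilities $r_t^N(X_t^1)/\mbhigh \Rightarrow r_t(X_t^\pinfty)/\mbhigh$, via bounded continuous test functions composed with the geometric kernel), your treatment of that key step is genuinely different from the paper's. The paper works at the particle level: it writes $r_t^N(X_t^1)-r_t(X_t^1)$ as a ratio, bounds the mean squared error of the numerator using exchangeability, kills the cross term with a propagation-of-chaos lemma (Lemma~\ref{lem:prop_chao}, proved by a characteristic-function argument), and then uses continuity of $r_t$ (Assumption~\ref{asp:continuous}) together with Slutsky's theorem twice to conclude $r_t^N(X_t^1)\Rightarrow r_t(X_t^\pinfty)$. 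You instead integrate the particle out, using the fact that $X_t^1$ has conditional density $r_t^N$ given $\mathcal F_{t-1}$ (the same observation driving the paper's proof of Proposition~\ref{prop:inf_expectation}), thereby reducing the whole proposition to $\norm{r_t^N-r_t}_{L^1(\lambda_t)}\cvprob 0$; this you obtain from pointwise filter consistency at each fixed $x$ plus a Scheff\'e/Pratt argument whose domination hypothesis is supplied exactly by conservation of the unit mass $\int r_t^N \,\dd\lambda_t = \int r_t\,\dd\lambda_t = 1$. Your route buys something concrete: it needs neither Assumption~\ref{asp:continuous} (no continuity of $m_t$, $r_t$ or $G_t$ anywhere) nor the propagation-of-chaos lemma, and it confronts the non-compactness of $\mathcal X_t$ head-on rather than circumventing it by evaluating at the random particle. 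The paper's route is shorter once Lemma~\ref{lem:prop_chao} is in hand and stays closer to standard SMC arguments, but your proof shows the proposition holds under strictly weaker hypotheses than those standing in Section~\ref{sec:ffbs_kernel}. All the individual steps check out: the inequality $\int_{\{\abs{r_t^N-r_t}>c\}} r_t^N\,\dd\lambda_t \leq (\mbhigh/c)\norm{r_t^N-r_t}_{L^1(\lambda_t)}$ is valid since $r_t^N \leq \mbhigh$, the pointwise consistency $r_t^N(x)\cvprob r_t(x)$ follows from bootstrap-filter consistency with the bounded test function $m_t(\cdot,x)$, and boundedness upgrades it to $\E\abs{r_t^N(x)-r_t(x)}\to 0$ as required for Pratt's lemma.
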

\begin{proof}
	From \eqref{eq:dist_tau_N} and Definition~\ref{def:rt} one has
	\begin{equation}
	\label{eq:dist_tauN_with_r}
	\tau_t^{1, \mrp} \given X_t^1, \mathcal{F}_{t-1} \sim \operatorname{Geo}\pr{\frac{r_t^N(X_t^1)}{\mbhigh}}.
	\end{equation}
	In light of \eqref{eq:dist_tau_infty}, it suffices to establish that
	\begin{equation}
	\label{eq:suff_prop_dist_conv}
	\frac{r_t^N(X_t^1)}{\mbhigh} \Rightarrow \frac{r_t(X_t^\pinfty)}{\mbhigh}.
	\end{equation}
	Indeed, this would mean that for any continuous bounded function $\psi$, we have
	\begin{align*}
	\E[\psi(\tau_t^{1,\mrp})] = \E\ps{(\operatorname{Geo}^\star \psi)\pr{\frac{r_t^N (X_t^1)}{\mbhigh}}} 
	&\rightarrow \E\ps{(\operatorname{Geo}^\star \psi)\pr{\frac{r_t (X_t^\pinfty)}{\mbhigh}}} \\
	&= \E[\psi(\tau_t^\pinfty)]
	\end{align*}
	where $\operatorname{Geo}^\star$ is the geometric Markov kernel that sends each $\lambda$ to the geometric distribution of parameter $\lambda$, i.e.\ $\operatorname{Geo}^\star(\lambda, \dx) = \operatorname{Geo}(\lambda)$. To this end, write
	\begin{equation}
	\label{eq:law_cvg_rN_to_r}
	\begin{split}
	r_t^N(X_t^1) - r_t(X_t^1) 
    &= \frac{\sum_n G_{t-1}(X_{t-1}^n) m_t(X_{t-1}^n, X_t^1)}{\sum_n G_{t-1}(X_{t-1}^n)} - r_t(X_t^1) \\
    &= \frac{\sum_n N^{-1} G_{t-1}(X_{t-1}^n) \left[m_t(X_{t-1}^n, X_t^1) -
        r_t(X_t^1)\right]}{N^{-1} \sum_n G_{t-1}(X_{t-1}^n)}.
	\end{split}
	\end{equation}
	We study the mean squared error of the numerator:
	\begin{align*}
	\MoveEqLeft \E\px{\frac 1 N\sum_n  G_{t-1}(X_{t-1}^n) \ps{m_t(X_{t-1}^n, X_t^1) - r_t(X_t^1)}}^2 \\
    & =  \frac{1}{N} \E\px{ G_{t-1}(X_{t-1}^1)^2 \ps{m_t(X_{t-1}^1, X_t^1) - r_t(X_t^1)}^2} \\
    & \quad + \frac{N(N-1)}{N^2} \E\Big\{G_{t-1}(X_{t-1}^1) G_{t-1}(X_{t-1}^2) \ps{m_t(X_{t-1}^1, X_t^1) - r_t(X_t^1)} \\
     & \qquad \times \ps{m_t(X_{t-1}^2, X_t^1) - r_t(X_t^1)} \Big\}
	\end{align*} 
	where we have again used the exchangeability induced by step $(\star)$ of Algorithm~\ref{algo:paris_concrete}. The first term obviously tends to $0$ as $N \to \infty$ by Assumptions~\ref{asp:Ct} and~\ref{asp:Gbound}. The second term also vanishes asymptotically thanks to Lemma~\ref{lem:prop_chao} below and Assumption~\ref{asp:continuous}. Assumption~\ref{asp:Gbound} also implies that the denominator of \eqref{eq:law_cvg_rN_to_r} converges in probability to some constant, via the consistency of particle approximations, see e.g.\ \citet{DelMoral:book} or \citet{SMCbook}. Thus, $r_t^N(X_t^1) - r_t(X_t^1) \Rightarrow 0$ by Slutsky's theorem. Moreover, $r_t(X_t^1) \Rightarrow r_t(X_t^\pinfty)$ by the continuity of $r_t$ and the consistency of particle approximations. Using again Slutsky's theorem yields \eqref{eq:suff_prop_dist_conv}. 
\end{proof}
The following lemma is needed to complete the proof of Proposition~\ref{prop: tau_N_to_tau_inf} and is related to the propagation of chaos property, see \citet[Chapter 8]{DelMoral:book}.
\begin{lem}
	\label{lem:prop_chao}
	We have $(X_{t-1}^1, X_{t-1}^2, X_t^1) \Rightarrow \Q_{t-2}M_{t-1} \otimes \Q_{t-2}M_{t-1} \otimes Q_{t-1}M_t$.
\end{lem}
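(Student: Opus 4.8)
\textit{Proof proposal.} The plan is to prove the weak convergence by testing against product functions. Since the candidate limit is a product measure, it suffices to show that for all bounded continuous $f,g,h$,
\[
\E\ps{f(X_{t-1}^1)\,g(X_{t-1}^2)\,h(X_t^1)} \longrightarrow (\Q_{t-2}M_{t-1})(f)\,(\Q_{t-2}M_{t-1})(g)\,(\Q_{t-1}M_t)(h),
\]
the functions $f\otimes g\otimes h$ being convergence-determining for a product-measure limit (tightness of the joint law is automatic once each marginal converges weakly). Throughout I would rely on two standard consistency facts for the bootstrap filter under multinomial resampling, i.e.\ under step $(\star)$ of Algorithm~\ref{algo:paris_concrete}: that given $\mathcal F_{s-1}$ the particles $X_s^{1:N}$ are i.i.d., and that the unweighted predictive averages $N^{-1}\sum_n \phi(X_s^n)$ converge in probability to $(\Q_{s-1}M_s)(\phi)$ for bounded continuous $\phi$ \citep{DelMoral:book, SMCbook}.

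First I would peel off the time-$t$ particle by conditioning on $\mathcal F_{t-1}$. As $f(X_{t-1}^1)g(X_{t-1}^2)$ is $\mathcal F_{t-1}$-measurable and $\CE{h(X_t^1)}{\mathcal F_{t-1}} = \sum_i W_{t-1}^i (M_t h)(X_{t-1}^i) =: S_N$, the tower property gives $\E[f(X_{t-1}^1)g(X_{t-1}^2)h(X_t^1)] = \E[f(X_{t-1}^1)g(X_{t-1}^2)\,S_N]$. Because the $X_t^n$ are conditionally i.i.d.\ with common conditional mean $S_N$ and conditional variance $\mathcal O(1/N)$, a conditional law of large numbers gives $N^{-1}\sum_n h(X_t^n) - S_N \cvprob 0$; combined with consistency of the predictive average $N^{-1}\sum_n h(X_t^n) \cvprob (\Q_{t-1}M_t)(h)=:c$, this yields $S_N \cvprob c$. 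Writing $S_N = c + (S_N - c)$ and using $\abs{S_N}\leq \infnorm h$, the contribution of $(S_N-c)$ is bounded by $\infnorm f\,\infnorm g\,\E\abs{S_N-c}\to 0$ (bounded convergence), leaving $c\,\E[f(X_{t-1}^1)g(X_{t-1}^2)]$.

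It then remains to treat the marked pair at time $t-1$ in the same fashion, conditioning this time on $\mathcal F_{t-2}$: the two particles are conditionally i.i.d., so $\CE{f(X_{t-1}^1)g(X_{t-1}^2)}{\mathcal F_{t-2}}$ factorises into a product of two empirical predictive averages, each converging in probability to $(\Q_{t-2}M_{t-1})(f)$ and $(\Q_{t-2}M_{t-1})(g)$ respectively by the argument above; bounded convergence then gives $\E[f(X_{t-1}^1)g(X_{t-1}^2)] \to (\Q_{t-2}M_{t-1})(f)\,(\Q_{t-2}M_{t-1})(g)$, and combining the two steps proves the claim. The only conceptually delicate point is the asymptotic independence between $X_t^1$ and the marked pair: $X_t^1$ descends from $X_{t-1}^1$ or $X_{t-1}^2$ with probability $\mathcal O(1/N)$, so one might fear a residual correlation. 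The conditioning argument absorbs this automatically, since the offending event is swallowed by the in-probability convergence of $S_N$ to the deterministic constant $c$, and no explicit ancestry bookkeeping is needed.
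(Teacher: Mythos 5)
Your proof is correct, but it takes a genuinely different route from the paper. The paper's proof works with characteristic functions: by exchangeability of the particle indices, it rewrites $\E\ps{\exp\pr{iuX_{t-1}^1 + ivX_{t-1}^2 + iwX_t^1}}$ as the expectation of a product of three symmetric empirical averages $\pr{\frac 1N \sum_n e^{iuX_{t-1}^n}}\pr{\frac 1N \sum_n e^{ivX_{t-1}^n}}\pr{\frac 1N \sum_n e^{iwX_t^n}}$ plus two explicit $\mathcal O(1/N)$ diagonal-correction terms, and then concludes from the almost-sure convergence of each empirical average to its deterministic limit together with dominated convergence. You instead test against products $f \otimes g \otimes h$ of bounded continuous functions and peel off coordinates by sequential conditioning: first on $\mathcal F_{t-1}$, replacing $h(X_t^1)$ by the weighted predictive average $S_N$, which converges in probability to the constant $(\Q_{t-1}M_t)(h)$; then on $\mathcal F_{t-2}$, where conditional independence of $X_{t-1}^1$ and $X_{t-1}^2$ under multinomial resampling factorises the remaining pair. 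Both arguments lean on the same two ingredients (consistency of bootstrap particle averages and a bounded-convergence step), but the mechanism by which the residual dependence disappears differs: in the paper it is visible as the $N/N^2$ collision terms produced by symmetrisation, while in your proof it is absorbed into the convergence of conditional expectations to deterministic constants — your closing remark about ancestry bookkeeping is exactly the counterpart of those correction terms. Your route is arguably more elementary (it needs only in-probability rather than almost-sure convergence of the particle averages, and on $\mathbb R^{d}$ you could even dispense with the tightness discussion by specialising your test functions to complex exponentials, which reduces your argument to Lévy's continuity theorem); the paper's symmetrisation is more compact and treats all three coordinates in a single stroke. Both proofs depend on the conditional i.i.d.\ structure of the bootstrap filter with multinomial resampling, so neither extends as written to, e.g., systematic resampling.
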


\begin{proof}
	For vectors $u$, $v$, and $w$, we have, by the symmetry of the distribution of particles:
	\begin{align*}
	\MoveEqLeft\E\ps{\exp\pr{iuX_{t-1}^1 + ivX_{t-1}^2 + iwX_t^1}} \\
    & = \E\ps{\pr{\Teiu} \pr{\Teiv} \pr{\Teiw}} \\
    & \quad - \frac{N}{N^2} \E\ps{e^{iuX_{t-1}^1} e^{ivX_{t-1}^1} \pr{\Teiw}} \\
    & \quad + \frac{N}{N^2} \E\ps{e^{iuX_{t-1}^1} e^{ivX_{t-1}^2} \pr{\Teiw}}.
	\end{align*}
	Note that
	\[\Teiu \overset{\textrm{a.s.}}{\longrightarrow} \Q_{t-2}M_{t-1}\pr{\exp(iu \bullet)} \]
	and
	\[\Teiw \overset{\textrm{a.s.}}{\longrightarrow} \Q_{t-1}M_t \pr{\exp(iw\bullet)}.\]
	The dominated convergence theorem, applicable since $\abs{e^{iu}} \leq 1$ for $u \in \mathbb{R}$, finishes the proof.
\end{proof}
\begin{proof}[Proof of Theorem~\ref{thm:intermediate_perf}]
	First of all,
	\begin{equation}
	\label{eq:proof_thm2_expectation_tauinf}
	\E[\tau_t^\pinfty] = \E\ps{\frac{\mbhigh}{r_t(X_t^\pinfty)}} = \int_{\mathcal{X}_t} \frac{\mbhigh}{r_t(x_t)} r_t(x_t) \dd x_t = \infty    
	\end{equation}
	by Assumption~\ref{asp:space}. Next, for any $x \in \mathbb{R} \setminus \mathbb{Z}$ and $N > x$,
	\[\P\pr{\tnn \leq x} = \P\pr{\tau_t^{1,\mrp} \leq x} \to \P\pr{\tau_t^\pinfty \leq x}\]
	by Proposition~\ref{prop: tau_N_to_tau_inf}. Thus, by Portmanteau theorem,
	\begin{equation}
	\label{eq:proof_thm2_tnn_to_tinf}
	\tnn \Rightarrow \tau_t^\pinfty.
	\end{equation}
	Altogether, we have
	\begin{align*}
	\liminf_{N \to \infty} \E\ps{\tnn} &= \liminf_{N\to\infty} \sum k \P\pr{\tnn = k} \\
	&\geq \sum \liminf_{N\to\infty} k \P\pr{\tnn = k} \textrm{ by Fatou's lemma} \\
	&= \sum k \P\pr{\tau_t^\pinfty = k} \textrm{ by \eqref{eq:proof_thm2_tnn_to_tinf}} \\
	&= \infty {\textrm{ by \eqref{eq:proof_thm2_expectation_tauinf}}}
	\end{align*}
	and
	\[\lim_{N \to \infty} \frac 1 N \E\ps{\tnn} = \lim_{N\to\infty} \E\ps{\min\pr{\frac{\tau_t^{1,\mrp}}{N}, 1}} \to 0 \]
	since $\tau_t^{1,\mrp} \Rightarrow \tau_t^\pinfty$ implies that the sequence of random variables \[\min\pr{\frac{\tau_t^{1,\mrp}}{N}, 1}\] converges to $0$ in distribution while being bounded between $0$ and $1$.
\end{proof}

\subsection{\prot~\ref{thm:near_linear} (hybrid PaRIS near-linear complexity)}
\label{proof:near_linear}
The following proposition shows that the real execution time for the hybrid algorithm is asymptotically at most of the same order as the ``oracle'' hybrid execution time.
\begin{prop}
	\label{prop:real_tnn_vs_oracle_tnn}
	We have \[\limsup_{N \to \infty} \frac{\E\ps{\tnn}}{\E\ps{\min(\tau_t^\pinfty, N)}} < \infty. \]
\end{prop}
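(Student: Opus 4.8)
The plan is to reduce both quantities to integrals of the Monte-Carlo predictive density $r_t^N$ against a fixed profile, and then to control $r_t^N$ against its limit $r_t$ uniformly in space. First I would put both expectations in closed form. Since the capped mean of a $\operatorname{Geo}(\lambda)$ variable is $\sum_{k=0}^{N-1}(1-\lambda)^k=(1-(1-\lambda)^N)/\lambda$, and since, given $\mathcal F_{t-1}$, the variable $X_t^1$ has Lebesgue density $r_t^N$ (step $(\star)$ of Algorithm~\ref{algo:paris_concrete}) while $X_t^\pinfty$ has density $r_t$, formula~\eqref{eq:dist_tauN_with_r} and Definition~\ref{def:rt} give
\[\E[\tnn]=\mbhigh\,\E\!\int \Phi_N\!\pr{r_t^N(x)}\dx,\qquad \E[\min(\tau_t^\pinfty,N)]=\mbhigh\!\int \Phi_N\!\pr{r_t(x)}\dx,\]
where $\Phi_N(r):=1-(1-r/\mbhigh)^N$ and the outer expectation is over $\mathcal F_{t-1}$. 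Second, I would sandwich $\Phi_N$ between constant multiples of the simpler profile $\Psi_N(r):=\min(1,Nr/\mbhigh)$: from $(1-u)^N\le e^{-Nu}$ and $1-e^{-v}\ge(1-e^{-1})\min(1,v)$ one obtains $(1-e^{-1})\,\Psi_N\le\Phi_N\le\Psi_N$, so it suffices to bound $\E\!\int\Psi_N(r_t^N)\,\dx$ against $\int\Psi_N(r_t)\,\dx$. Here $\Psi_N$ is concave, increasing, and vanishes at $0$, hence subadditive and satisfying $\Psi_N(cr)\le c\,\Psi_N(r)$ for $c\ge1$.

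Third, I would split on the normalising-constant event $\mathcal G_N:=\{\ell_{t-1}^N\ge \ell_{t-1}/2\}$, with $\ell_{t-1}:=L_{t-1}/L_{t-2}$. On $\mathcal G_N^c$ the crude bound $\int\Psi_N(r_t^N)\dx\le(N/\mbhigh)\int r_t^N\,\dx=N/\mbhigh$ (as $r_t^N$ integrates to one) gives a contribution $\le(N/\mbhigh)\,\P(\mathcal G_N^c)$, which vanishes because the normalising-constant estimates concentrate exponentially (e.g.\ \citealp{DelMoral:book}). On $\mathcal G_N$ one has $r_t^N(x)\le(2/\ell_{t-1})\,\tilde r_t^N(x)$ with the unnormalised $\tilde r_t^N(x):=N^{-1}\sum_n G_{t-1}(X_{t-1}^n)m_t(X_{t-1}^n,x)$, so by monotonicity and Jensen's inequality (concavity of $\Psi_N$),
\[\E\!\ps{\Psi_N(r_t^N(x))\,\mathbbm 1_{\mathcal G_N}}\le \Psi_N\!\pr{\tfrac{2}{\ell_{t-1}}\,\E[\tilde r_t^N(x)]}.\]
By exchangeability $\E[\tilde r_t^N(x)]=\int G_{t-1}(z)m_t(z,x)\rho_{t-1}^N(z)\,\dz$, where $\rho_{t-1}^N$ is the law of a single time-$(t-1)$ particle, whereas $\ell_{t-1}r_t(x)=\int G_{t-1}(z)m_t(z,x)r_{t-1}(z)\,\dz$. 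Thus the argument collapses to a uniform multiplicative domination of the one-particle predictive marginal by its limit, $\rho_{s}^N(z)\le C_s\,r_{s}(z)$ for all $z$ and all large $N$. Granting this with $s=t-1$, the displayed bound becomes $\le C\,\Psi_N(r_t(x))$; since $r_t$ is continuous with unit mass, $\int\Psi_N(r_t)\dx\ge v_0>0$, and collecting the two regimes yields $\E\!\int\Psi_N(r_t^N)\le C\int\Psi_N(r_t)+o(1)$, hence a finite $\limsup$.

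The hard part is precisely the domination $\rho_s^N\le C_s r_s$, which I would prove by induction on $s$ (base case $\rho_0^N=r_0$ is exact). The inductive step reuses the $\mathcal G_N$-split and Jensen applied to $\rho_s^N(z)=\E[\sum_n W_{s-1}^n m_s(X_{s-1}^n,z)]$, passing from level $s$ to level $s-1$ via $\int G_{s-1}(\cdot)m_s(\cdot,z)\,r_{s-1}=\ell_{s-1}r_s(z)$. The delicate point is that each split leaves an additive, space-independent remainder of size $\mbhigh\,\P(\mathcal G_N^c)$; this is harmless only because it is exponentially small \emph{and} because it is always reinserted under an integral $\int\Psi_N(\cdot)\dx$ that stays bounded by $N/\mbhigh$ on the bad event, rather than being extracted over the infinite-measure space $\mathbb R^{d_t}$, where any positive constant would integrate to $+\infty$. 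Keeping this bad-event bookkeeping synchronised with the final integral — so that the tail of $\mathbb R^{d_t}$ and the exponentially rare small-weight events never interact additively — is the technical heart of the proof; everything else is concavity and exchangeability.
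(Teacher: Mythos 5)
Your reduction is correct and is in fact the paper's own starting point: conditioning on $\mathcal F_{t-1}$ and using \eqref{eq:dist_tauN_with_r} gives $\E\ps{\tnn} = \mbhigh\,\E\int \Phi_N(r_t^N(x))\,\dd x$ and $\E\ps{\min(\tau_t^\pinfty,N)} = \mbhigh\int \Phi_N(r_t(x))\,\dd x$ with $\Phi_N(r)=1-(1-r/\mbhigh)^N$ (the paper's concave map $\lambda\mapsto\lambda z^N(\lambda)$ is precisely $\Phi_N(\mbhigh\lambda)$), and your sandwich $(1-e^{-1})\Psi_N\le\Phi_N\le\Psi_N$ is valid. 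The gap is the claim everything else funnels into: the uniform multiplicative domination $\rho_s^N(z)\le C_s\,r_s(z)$ for all $z$, where $\rho_s^N(z)=\E[r_s^N(z)]$ is the one-particle marginal. Your induction cannot deliver this statement. Each inductive step yields only $\rho_s^N(z)\le 2C_{s-1}\,r_s(z)+\mbhigh\,\P(\mathcal G_N^c)$; the remainder, however exponentially small, is constant in $z$ while $r_s(z)\to 0$ at infinity, so the hypothesis cannot be restored in its claimed multiplicative form. Worse, pushing an additive constant through the next level's kernel integral produces $\int G_s(w)\,m_{s+1}(w,z)\,\dd w$, an integral of the transition density over its \emph{first} argument, which need not even be finite (take $m_{s+1}(w,z)$ independent of $w$): the induction collapses after one step, and the lemma it was meant to prove is left unproven (nothing in your sketch decides whether unrestricted domination is even true). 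Your last paragraph correctly flags this as ``the technical heart'', but the repair is not bookkeeping --- it changes the statement of the lemma. What the induction does prove is domination of the \emph{good-event-restricted} marginals, $\beta_s^N(z):=\E[\ind_{\mathcal G_N^{(s-1)}}\,r_s^N(z)]\le 2^s\,r_s(z)$ with nested events $\mathcal G_N^{(s)}=\bigcap_{u\le s}\{\ell_u^N\ge\ell_u/2\}$; and to use it you must also keep the indicator through your Jensen step, via $\E[\ind_A\Psi_N(Z)]\le\Psi_N(\E[\ind_A Z])$ (valid since $\Psi_N$ is concave with $\Psi_N(0)=0$), because as written your bound $\E\ps{\Psi_N(r_t^N(x))\ind_{\mathcal G_N}}\le\Psi_N\pr{(2/\ell_{t-1})\E[\tilde r_t^N(x)]}$ reintroduces the \emph{unrestricted} expectation and hence the very claim you cannot prove. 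Only then can the accumulated bad event be converted, once, into $(N/\mbhigh)\,\P(\mathcal G_N^c)=o(1)$ under the final spatial integral, which additionally requires quantifying your concentration input as $\P(\mathcal G_N^c)=o(1/N)$.

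For contrast, the paper closes exactly this difficulty with a single pointwise estimate and no induction over time (Lemma~\ref{lem:magic_jensen}): the one-particle propagation-of-chaos bias bound $\abs{\E[r_t^N(x)]-r_t(x)}\le b_t\mbhigh/N$, uniform in $x$, combined with the fact that $z^N$ tolerates additive $O(1/N)$ perturbations of its argument at the price of a universal multiplicative constant, $z^N\pr{r_t(x)/\mbhigh-b_t/N}\le(e^{2b_t}+3)\,z^N\pr{r_t(x)/\mbhigh}$, proved by splitting on $r_t(x)/\mbhigh\lessgtr 1/2$. Because that estimate is multiplicative uniformly in $x$, the spatial tails --- exactly where your additive remainders break down --- are handled automatically. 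So you have two ways forward: carry out the nested-good-event version of your argument, which does yield the proposition and constitutes a genuinely different (arguably more elementary) route relying only on normalising-constant concentration; or substitute the paper's bias-plus-perturbation estimate for your domination lemma, which is the shortest path.
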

\begin{proof}
	Put
	\begin{equation}
	\label{eq:def_zn}
	 z^N(\lambda) \eqdef \frac{1 - (1-\lambda)^N}{\lambda} = \sum_{n=0}^{N-1}
     (1-\lambda)^n.
	 \end{equation}
	One can quickly verify (using the memorylessness of the geometric distribution for example) that $z^N(\lambda) = \CE{\min(G,N)}{G \sim \operatorname{Geo}(\lambda)}$. It will be useful to keep in mind the elementary estimate $z^N(\lambda) \leq \min(N, \lambda^{-1})$. We can now write
	\begin{align*}
	\E\ps{\tnn} &= \E\ps{z^N\pr{\frac{r_t^N(X_t^1)}{\mbhigh}}} \textrm{ (by \eqref{eq:dist_tauN_with_r})} 
	= \E\ps{\CE{z^N\pr{\frac{r_t^N(X_t^1)}{\mbhigh}}}{\mathcal{F}_{t-1}}} \\
	&= \E\ps{\intxt{\znp r_t^N(x_t) \lambda_t(\dd x_t)}} \\
	&\leq c_t\pr{\intxt{\znpt r_t(x_t) \lambda_t(\dd x_t)} + b_t} \text{ by Lemma~\ref{lem:magic_jensen}}\\
	&= c_t\pr{\E[\min(\tau_t^{\pinfty}, N)] + b_t}
	\end{align*}
	from which the proposition is immediate.
	\end{proof}
\begin{lem}
	\label{lem:magic_jensen}
	In addition to notations of Algorithm~\ref{algo:bootstrap}, let the function $z^N$ be defined as in \eqref{eq:def_zn} and the functions $r_t$ and $r_t^N$ be defined as in Definition~\ref{def:rt}. Let $\phi_t: \mathcal X_t \to \mathbb R_{>0}$ be a bounded non-negative deterministic function. Then, under Assumptions~\ref{asp:Gbound} and~\ref{asp:Ct}, there exist constants $b_t$ and $c_t$ depending only on the model such that
	\[\E\ps{\intxt{\znp r_t^N \phi_t}} \leq c_t\pr{\intxt{\znpt r_t \phi_t} + b_t \infnorm{\phi_t}}\]
	where for brevity, we shortened the integration notation (e.g.\ dropping $\lambda_t(\dd x_t)$, dropping $x_t$ from $\phi(x_t)$, etc.) whenever there is no ambiguity.
\end{lem}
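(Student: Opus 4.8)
The driving observation is that, since $m_t \le \mbhigh$ (Assumption~\ref{asp:Ct}) and $r_t^N$, $r_t$ are respectively a convex combination and a mixture of transition densities, both take values in $[0,\mbhigh]$, and the whole integrand can be written through a single deterministic function. I would put
\[ F(u) \eqdef u\, z^N\!\pr{u/\mbhigh} = \mbhigh\pr{1 - \pr{1 - u/\mbhigh}^N}, \qquad u \in [0,\mbhigh], \]
so that $\znp\, r_t^N = F(r_t^N)$ and $\znpt\, r_t = F(r_t)$. First I would record the elementary properties of $F$ on $[0,\mbhigh]$: it is nondecreasing and concave with $F(0)=0$, it obeys $F(u)\le \mbhigh$ and $F(u)\le Nu$, and (being concave through the origin) it is subadditive and satisfies $F(\lambda u)\le \lambda F(u)$ for $\lambda \ge 1$. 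By Tonelli the left-hand side equals $\intxt{\E[F(r_t^N(x_t))]\,\phi_t}$, so it suffices to bound $\E[F(r_t^N(x_t))]$ for each fixed $x_t$.

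The concavity of $F$ is what makes the finite-$N$ analysis tractable. Conditioning on $\mathcal F_{t-2}$ — under the bootstrap filter the time-$(t-1)$ particles are then i.i.d.\ — Jensen's inequality gives $\E[F(r_t^N(x_t))\mid \mathcal F_{t-2}] \le F\pr{\E[r_t^N(x_t)\mid \mathcal F_{t-2}]}$. The problem is thereby reduced to controlling the \emph{bias} of the self-normalised estimator $r_t^N(x_t) = \frac{N^{-1}\sum_n G_{t-1}(X_{t-1}^n) m_t(X_{t-1}^n,x_t)}{N^{-1}\sum_n G_{t-1}(X_{t-1}^n)}$ of the one-step predictive density (Definition~\ref{def:rt}). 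I would split according to whether the empirical normaliser is within a factor two of its conditional mean: on this event the ratio is at most twice the ``target'' predictive density carried by the current empirical measure, and on the complement one falls back on the crude bound $r_t^N \le \mbhigh$. Because $F$ is subadditive and satisfies $F(\lambda u)\le\lambda F(u)$, the factor-two main term yields the $c_t \intxt{F(r_t)\,\phi_t}$ contribution once the bias is propagated, while the complementary term yields the additive remainder.

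The main obstacle — precisely the ``finite sample size'' difficulty flagged before Theorem~\ref{thm:near_linear} — is to carry out this bias control \emph{uniformly in} $x_t$, so that the accumulated error is integrable against $\phi_t$ with a constant $b_t$ that does \emph{not} grow with $N$. The naive bad-event bound $r_t^N\le\mbhigh$ throws away the decay of $m_t(\cdot,x_t)$ in $x_t$, and a constant-in-$x_t$ error is fatal, since $\intxt{\phi_t}$ may be infinite (e.g.\ $\phi_t\equiv 1$, the case used in Proposition~\ref{prop:real_tnn_vs_oracle_tnn}). To repair this I would keep the $m_t(\cdot,x_t)$ factor inside the bad-event expectation and propagate the bias backwards through the Feynman--Kac recursion~\eqref{eq:fkmodel}: each time step peels off one self-normalisation (with potential $G_s$) at the price of another multiplicative constant and an additive remainder, and the recursion bottoms out at $t=0$, where the particles are sampled exactly from $\mathbb M_0$ and $\mathbb M_0(G_0)>0$ is a genuine deterministic constant — this supplies the uniform lower control that Assumption~\ref{asp:Gbound} does not provide. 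Absorbing the multiplicative factors into $c_t$ and the model-dependent, $N$-independent remainders into $c_t b_t \infnorm{\phi_t}$ would then finish the proof; ensuring that \emph{every} error term stays free of $N$ throughout this propagation is the delicate step.
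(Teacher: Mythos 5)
Your opening is exactly the paper's own key step: the ``magic Jensen'' of the lemma's name is precisely the concavity of $u \mapsto u\, z^N(u/\mbhigh)$ (equivalently, of $\lambda \mapsto \lambda z^N(\lambda)$ on $[0,1]$), used together with Fubini to get
$\E\ps{\intxt{\znp r_t^N \phi_t}} \leq \intxt{z^N\pr{\E\ps{r_t^N(x_t)}/\mbhigh} \E\ps{r_t^N(x_t)}\, \phi_t}$;
your conditioning on $\mathcal F_{t-2}$ adds nothing at this stage, since Jensen requires no independence and the tower property returns you to the unconditional statement. The divergence is in the bias control, and that is where the proposal has a genuine gap. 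The paper does \emph{not} re-derive the bias of $r_t^N$: it invokes the standard particle-filter bias bound (propagation of chaos with $q=1$ particle), $\abs{\E\ps{\Q_{t-1}^N(f)} - \Q_{t-1}(f)} \leq b_t \norminf{f}/N$, whose constant $b_t$ involves only the true, deterministic, strictly positive Feynman--Kac normalising constants. Applied to $f = m_t(\cdot, x_t)$ this gives $\abs{\E\ps{r_t^N(x_t)} - r_t(x_t)} \leq b_t \mbhigh/N$ \emph{uniformly} in $x_t$; the proof then concludes with an elementary two-case estimate giving the \emph{multiplicative} stability $z^N\pr{\lambda - b_t/N} \leq (e^{2b_t}+3)\, z^N(\lambda)$, keeping the factor $\E\ps{r_t^N(x_t)}$ (a probability density in $x_t$) to preserve integrability, and applying the same bias bound once more to the bounded function $z^N(r_t(\cdot)/\mbhigh)\phi_t(\cdot)$, whose sup-norm bound $N \norminf{\phi_t}$ cancels against the $O(1/N)$ bias to produce the additive term $b_t\norminf{\phi_t}$.

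Your substitute for this citation --- a factor-two good/bad event on the empirical normaliser plus a backward induction to time $0$ --- does not close. On the bad event the only available bounds on $F(r_t^N)$ are $\mbhigh$ (constant in $x_t$, fatal, as you yourself note) or $N r_t^N$ (which integrates against $\lambda_t$ to $N$), so the bad-event contribution is of order $N\, \P(\mathrm{bad})$, and you need $\P(\mathrm{bad}) \leq b_t/N$ with a model-only constant. But Chebyshev gives $\P(\mathrm{bad} \mid \mathcal F_{t-2}) \lesssim \norminf{G_{t-1}}/(N \bar D)$, where $\bar D = \sum_n W_{t-2}^n (M_{t-1}G_{t-1})(X_{t-2}^n)$ is the conditional mean of the normaliser; under Assumption~\ref{asp:Gbound} this random variable has no deterministic lower bound (only $G_t>0$ pointwise), so you need $\E[1/\bar D] < \infty$ with model-only control. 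Pushing that requirement down to the next level of your recursion, $1/\bar D$ is itself built from a ratio whose own bad event leaves it completely uncontrolled (it is unbounded there, and Cauchy--Schwarz would demand inverse \emph{second} moments --- circular). So the ``delicate step'' you flag is not a matter of bookkeeping: the specific mechanism you propose for it fails. The repair is either to import the known bias theorem, as the paper does, or to reproduce its proof, which telescopes against the unnormalised (unbiased) particle measures so that only the true, deterministic normalising constants ever appear in denominators.
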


\begin{proof} 
    We have
	\begin{equation}
	\label{eq:magic_jensen}
	 \E\ps{\intxt{ z^N\pr{\frac{r_t^N(x_t)}{\mbhigh}} r_t^N \phi_t}}
	\leq \intxt{ z^N\pr{\E\ps{\frac{r_t^N(x_t)}{\mbhigh}}} \E\ps{r_t^N(x_t)} \phi_t}
	\end{equation}
    using Fubini's theorem and the concavity of $\lambda \mapsto \lambda
    z^N(\lambda)$ on $[0,1]$. By a well-known result on the bias of a particle
    filter (which is in fact the propagation of chaos in the special case of
    $q=1$ particle), we have:
	\begin{align*}
	\abs{\E\ps{r_t^N(x_t)} - r_t(x_t)} &= \abs{\E\ps{\sum W_{t-1}^n m_t(X_{t-1}^n, x_t)} - r_t(x_t)} \\
	&= \abs{\E\ps{m_t\pr{X_{t-1}^{A_t^1}, x_t}} - \Q_{t-1}\pr{m_t\pr{\bullet, x_t}}} \\
	&\leq \frac{b_t \mbhigh}{N}
	\end{align*}
    for some constant $b_t$. We next show that such a bias does not change the
    asymptotic behavior of $z^N$. More precisely,
	\begin{equation}\label{eq:bias_zn_est1}
        \begin{aligned}
	z^N\pr{\E\ps{\frac{r_t^N(x_t)}{\mbhigh}}} 
&\leq z^N\pr{\frac{r_t(x_t)}{\mbhigh} - \frac{b_t}{N}} \\
    & = \sum_{n=0}^{N-1} \pr{\frac{1 - {r_t(x_t)}/{\mbhigh} + \nfrac{b_t}{N}}{1 - \nfrac{r_t(x_t)}{\mbhigh}}}^n \pr{1 - \frac{r_t(x_t)}{\mbhigh}}^n \\
	&\leq \sum_{n=0}^{N-1} \pr{1 + \frac{b_t}{N\pr{1 - r_t(x_t)/\mbhigh}}}^N \pr{1 - \frac{r_t(x_t)}{\mbhigh}}^n \\
    &\leq \exp\pr{\frac{b_t}{1-r_t(x_t)/\mbhigh}}
    z^N\pr{\frac{r_t(x_t)}{\mbhigh}} \\
    & \leq e^{2b_t} z^N\pr{\frac{r_t(x_t)}{\mbhigh}}
	\end{aligned}
\end{equation}
    if $x_t$ is such that $r_t(x_t)/\mbhigh \leq 1 / 2$. In contrast,
    if $r_t(x_t)/\mbhigh \geq 1 / 2$, then provided that $N \geq 6b_t$,
    we have
	\begin{equation}
	\label{eq:bias_zn_est2}
	z^N\pr{\E\ps{\frac{r_t^N(x_t)}{\mbhigh}}} \leq z^N\pr{\frac{r_t(x_t)}{\mbhigh} - \frac{b_t}{N}} \leq z^N\pr{\frac 13} \leq 3 z^N\pr{\frac{r_t(x_t)}{\mbhigh}}.
	\end{equation}
	Putting together \eqref{eq:bias_zn_est1} and \eqref{eq:bias_zn_est2}, we have, for $N \geq 6b_t$,
	\[z^N\pr{\E\ps{\frac{r_t^N(x_t)}{\mbhigh}}} \leq \pr{e^{2b_t} + 3} z^N\pr{\frac{r_t(x_t)}{\mbhigh}}\]
	and so, by \eqref{eq:magic_jensen},
	\begin{align*}
	\E\ps{\intxt{\znp r_t^N \phi_t}} &\leq \pr{e^{2b_t} + 3} \intxt{ z^N\pr{\frac{r_t(x_t)}{\mbhigh}} \E\ps{r_t^N(x_t)} \phi_t}  \\
	&= \pr{e^{2b_t} + 3} \E\ps{z^N\pr{\frac{r_t(X_t^1)}{\mbhigh}} \phi_t(X_t^1)}.
	\end{align*}
	Again, using the result on the bias of a particle filter,
	\[\abs{\E\ps{z^N\pr{\frac{r_t(X_t^1)}{\mbhigh}} \phi_t(X_t^1)} - \intxt{\znpt r_t \phi_t}} \leq \frac{b_t \norminf{z^N} \norminf{\phi_t}}{N} = b_t \norminf{\phi_t} \]
	which, together with the previous inequality, implies the desired result.
\end{proof}
\begin{prop}
	\label{prop:oracle_linear}
	In linear Gaussian state space models, we have \[\E\ps{\min(\tau_t^\pinfty, N)} = \mathcal{O}\pr{(\log N)^{d_t/2}}.\]
\end{prop}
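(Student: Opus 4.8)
The plan is to reduce the claim to an explicit Gaussian integral and then estimate it by splitting the state space into a high-density ellipsoid and its complementary tail. First I would use that $X_t^\pinfty$ has density $r_t$ together with the conditional law~\eqref{eq:dist_tau_infty} to write, with $z^N$ as in~\eqref{eq:def_zn},
\[
\E\ps{\min(\tau_t^\pinfty, N)} = \E\ps{z^N\pr{\frac{r_t(X_t^\pinfty)}{\mbhigh}}} = \intxt{z^N\pr{\frac{r_t(x_t)}{\mbhigh}} r_t(x_t)}.
\]
Invoking the elementary bound $z^N(\lambda) \leq \min(N, \lambda^{-1})$ recalled in the proof of Proposition~\ref{prop:real_tnn_vs_oracle_tnn}, the integrand is at most $\min\pr{N r_t(x_t), \mbhigh}$, so it suffices to control $\intxt{\min\pr{N r_t(x_t), \mbhigh}}$.

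In linear Gaussian models the predictive density is explicit, $r_t = \mathcal N(\cdot \mid \mu_t^{\mathrm{pred}}, \Sigma_t^{\mathrm{pred}})$. Writing $Q(x) \eqdef (x - \mu_t^{\mathrm{pred}})^\top (\Sigma_t^{\mathrm{pred}})^{-1}(x - \mu_t^{\mathrm{pred}})$ for the squared Mahalanobis distance, the inequality $r_t(x) \geq \mbhigh/N$ is equivalent to $Q(x) \leq R_N^2$ with $R_N^2 = 2\log N + C_0$ for an explicit model-dependent constant $C_0$ arising from the Gaussian normalising factor and $\mbhigh$. I would then split
\[
\intxt{\min\pr{N r_t(x_t), \mbhigh}} = \mbhigh\, \operatorname{Leb}\px{x : Q(x) \leq R_N^2} + N\, \P\pr{Q(X_t^\pinfty) > R_N^2}.
\]

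For the first term, the change of variables $y = (\Sigma_t^{\mathrm{pred}})^{-1/2}(x - \mu_t^{\mathrm{pred}})$ turns the ellipsoid into a Euclidean ball of radius $R_N$, so its Lebesgue measure equals $V_{d_t}\, |\Sigma_t^{\mathrm{pred}}|^{1/2} R_N^{d_t} = \mathcal O((\log N)^{d_t/2})$, where $V_{d_t}$ is the volume of the unit ball in $\mathbb R^{d_t}$. For the second term, $Q(X_t^\pinfty)$ is a $\chi^2_{d_t}$ variable, and the standard tail estimate $\P(\chi^2_{d_t} > s) \leq C s^{d_t/2 - 1} e^{-s/2}$ for large $s$ gives, with $s = R_N^2 = 2\log N + C_0$, the factor $e^{-s/2} = \mathcal O(1/N)$; hence $N\, \P(Q > R_N^2) = \mathcal O((\log N)^{d_t/2 - 1})$, which is negligible against the first term. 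Combining the two estimates yields $\E[\min(\tau_t^\pinfty, N)] = \mathcal O((\log N)^{d_t/2})$, as claimed. This is precisely the multivariate version of the heuristic computation~\eqref{eq:near_linear_simplified}; the only real work is the bookkeeping of the Gaussian and chi-squared tail constants, and there is no genuine obstacle here since $r_t$ is \emph{exactly} Gaussian (in contrast to Proposition~\ref{prop:real_tnn_vs_oracle_tnn}, where the difficulty was the $N$-dependent particle approximation $r_t^N$).
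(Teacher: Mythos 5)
Your proof is correct and takes essentially the same route as the paper's: both start from $\E\ps{\min(\tau_t^\pinfty,N)} = \E\ps{z^N\pr{r_t(X_t^\pinfty)/\bar M_h}}$, apply the elementary bound $z^N(\lambda)\leq \min(N,\lambda^{-1})$, split at the threshold where $N r_t(x) = \bar M_h$ (radial value of order $\log N$), and obtain a main term of order $(\log N)^{d_t/2}$ plus a tail term of order $(\log N)^{d_t/2-1}$ via incomplete-Gamma (chi-squared) asymptotics. The only difference is cosmetic: the paper first reduces to the one-dimensional Gamma-distributed Mahalanobis variable $W_t$ and splits that integral at $\log N$, whereas you split the integral over $\mathbb{R}^{d_t}$ into the ellipsoid and its complement, and the two decompositions coincide under the polar change of variables.
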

\begin{proof}
	Let $\mu_t$ and $\Sigma_t$ be such that $X_t^\pinfty \sim \mathcal{N}(\mu_t, \Sigma_t)$. Then \[\log(r_t(X_t^\pinfty)/\mbhigh) = b'_t - W_t\] where $b'_t$ is some constant and
	\[W_t \eqdef \frac{(X_t^\pinfty - \mu_t)^\top \Sigma_t^{-1} (X_t^\pinfty - \mu_t)}{2} \sim \operatorname{Gamma}\pr{\frac {d_t}{2}, 1}. \]
	We have
	\begin{align*}
	\E\ps{\min(\tau_t^\pinfty, N)} &= \E\ps{z^N\pr{\frac{r_t(X_t^\pinfty)}{\mbhigh}}} = \E\ps{z^N(e^{b'_t - W_t})} \\
	&= \int_0^\infty z^N(e^{b'_t-w}) \frac{w^{d_t/2 - 1}e^{-w}}{\Gamma(d_t/2)} \dd w \\
	&\leq \int_0^{\log N} e^{w-b'_t} \frac{w^{d_t/2 - 1}e^{-w}}{\Gamma(d_t/2)} \dd w + \int_{\log N}^\infty N \frac{w^{d_t/2 - 1}e^{-w}}{\Gamma(d_t/2)} \dd w
	\end{align*}
	using the bound $z^N(\lambda) \leq \min(N, 1/\lambda)$. The first term is of order $\mathcal{O}(\log^{d_t/2}N)$ by elementary calculus, while the second term is of order $\mathcal{O}(\log^{d_t/2 - 1} N)$ using asymptotic properties of the incomplete Gamma function, see \citet[Section 8.11]{nisthandbook}.
\end{proof}
\begin{proof}[Proof of Theorem~\ref{thm:near_linear}]
	The theorem is a straightforward consequence of Proposition~\ref{prop:real_tnn_vs_oracle_tnn} and Proposition~\ref{prop:oracle_linear}.
\end{proof}

\subsection{\prots~\ref{thm:ffbs_exec_infinite_general} and~\ref{thm:ffbs_exec_infinite_gaussian} (pure rejection FFBS complexity)}
\label{ap:proof:ffbs_exec_infinite}
We start with a useful remark linking the projection kernels $\Pi$ and the cost-to-go functions defined in Supplement~\ref{sec:apx_notations} with the L-kernels formulated in \eqref{eq:def:l_kernels}. The proof is simple and therefore omitted.
\begin{lem}
	\label{lem:l_and_proj}
	We have $L_{t:T}(x_{0:t}, \mathbbm 1) = H_{t:T}(x_t)$ for all $x_{0:t}$. Moreover, for any function $\phi_t: \mathcal X_t \to \mathbb R$, we have
	\[L_{t:T}\Proj^{0:T}_t \phi_t = \Proj^{0:t}_t(\phi_t \times H_{t:T}). \]
\end{lem}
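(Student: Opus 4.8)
The plan is to prove both identities by directly unfolding the definitions \eqref{eq:def:l_kernels} and \eqref{eq:def-cost-to-go} and exploiting the sifting property of the Dirac mass $\delta_{x^\star_{0:t}}$ that appears in the definition of $L_{t:T}$. No probabilistic machinery is required: everything reduces to integrating out that Dirac mass and then pulling factors that are constant with respect to the remaining integration variables out of the integral.

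For the first identity I would simply write
\[
L_{t:T}(x^\star_{0:t}, \ind) = \int \delta_{x^\star_{0:t}}(\dd x_{0:t}) \prod_{s=t+1}^{T} M_s(x_{s-1}, \dd x_s)\, G_s(x_s),
\]
integrate out the Dirac mass so that $x_{0:t} = x^\star_{0:t}$ (hence $x_t = x^\star_t$), and note that the surviving integral over $x_{t+1:T}$ is precisely $\prod_{s=t+1}^T M_s(x^\star_{s-1}, \dd x_s) G_s(x_s)$, which equals $H_{t:T}(x^\star_t)$ by \eqref{eq:def-cost-to-go}.

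For the second identity, the key observation is that, by the marginalising action of the projection kernel recalled in Supplement~\ref{sec:apx_notations}, $\Proj^{0:T}_t \phi_t$ is the function on $\mathcal X_0 \times \cdots \times \mathcal X_T$ given by $(x_0, \ldots, x_T) \mapsto \phi_t(x_t)$. Applying $L_{t:T}$ and integrating out $\delta_{x^\star_{0:t}}$ as above turns the factor $\phi_t(x_t)$ into $\phi_t(x^\star_t)$, which is constant with respect to the remaining variables $x_{t+1:T}$ and thus factors out of the integral; what is left is exactly $H_{t:T}(x^\star_t)$ from the first part. I would then recognise $\phi_t(x^\star_t)\, H_{t:T}(x^\star_t) = (\phi_t \times H_{t:T})(x^\star_t) = \Proj^{0:t}_t(\phi_t \times H_{t:T})(x^\star_{0:t})$, once more by the marginalising property of the projection kernel, which gives the claim.

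The calculation is entirely routine; the only point needing a little care is the bookkeeping of domains and codomains --- $L_{t:T}$ turns a function on $\mathcal X_{0:T}$ into a function on $\mathcal X_{0:t}$, $\Proj^{0:T}_t$ sends $\mathcal X_{0:T}$ to $\mathcal X_t$, and $\Proj^{0:t}_t$ sends $\mathcal X_{0:t}$ to $\mathcal X_t$ --- so that both sides of the second identity are correctly viewed as functions of the same argument $x^\star_{0:t}$. There is no genuine obstacle here, which is consistent with the fact that the paper omits the proof.
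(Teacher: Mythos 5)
Your proposal is correct: unfolding the definition \eqref{eq:def:l_kernels}, integrating out the Dirac mass $\delta_{x^\star_{0:t}}$, and pulling the constant factor $\phi_t(x^\star_t)$ out of the remaining integral is exactly the routine computation the paper has in mind when it declares the proof ``simple and therefore omitted.'' Your attention to the domains of $L_{t:T}$, $\Proj^{0:T}_t$ and $\Proj^{0:t}_t$ is the only point of substance, and you handle it correctly.
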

Theorems~\ref{thm:ffbs_exec_infinite_general} and \ref{thm:ffbs_exec_infinite_gaussian} both rely on an induction argument wrapped up in the following proposition.
\begin{prop}
	\label{prop:ffbs_exec_induction}
	We use the notations of Algorithm~\ref{algo:offline_generic}. Let $\Q_t^N$ be defined as in \eqref{eq:def_joint_qt}, where the $B_s^N$ kernels can be $B_s^{N, \mathrm{FFBS}}$ or any other kernels satisfying the hypotheses of Theorem~\ref{thm:convergence_mcmc}. Suppose that Assumption~\ref{asp:Gbound} holds. Let $f_t^N: \mathcal X_t \to \mathbb R_{\geq 0}$ be a (possibly random) function such that $f_t^N(x_t)$ is $\mathcal F_{t-1}$-measurable. Then the following assertions are true:
	\begin{enumerate}
		\item[(a)] Suppose that $\E\ps{\int_{\mathcal X_t} \px{r_t^N \times f_t^N \times G_t \times H_{t:T}} (x_t) \lambda_t(\dd x_t)} = \infty$, where $r_t^N$ and $\lambda_t$ are defined in Definition~\ref{def:rt}. Then
		\[\E\ps{\int \Q_T^N(\dd x_t) f_t^N(x_t)} = \infty. \]
		\item[(b)] Suppose that $\int_{\mathcal X_t} \px{r_t^N \times f_t^N \times G_t \times H_{t:T}} (x_t) \lambda_t(\dd x_t) \cvprob 0$. Then
		\[\int \Q_T^N(\dd x_t) f_t^N(x_t) \cvprob 0. \]
	\end{enumerate}
\end{prop}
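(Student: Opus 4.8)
The plan is to reduce the time-$T$ quantity $I_T^N \eqdef \int \Q_T^N(\dd x_t) f_t^N(x_t)$ back to time $t$ through a telescoping conditional-expectation identity, and then to read off (a) and (b) from the sign and boundedness properties of the normalising constants $\ell_s^N$. Writing $\varphi_T \eqdef \Proj_t^{0:T} f_t^N$ for the function of $x_{0:T}$ depending only on $x_t$, note that $\varphi_T$ is $\mathcal F_{t-1}$-measurable and non-negative, so Corollary~\ref{corol:fundamental} applies at every step. Its content can be rephrased as the \emph{base relation}
\[\CE{\ell_s^N \, \Q_s^N \psi}{\FTP{s-1}} = \Q_{s-1}^N\pr{L_{s-1:s}\psi}\]
valid for any non-negative $\FTP{s-1}$-measurable $\psi$ (indeed $\ell_s^N\Q_s^N\psi = N^{-1}\sum_n \tilde K_s^N(n,\psi)$, whose $\FTP{s-1}$-conditional mean is $\Q_{s-1}^N L_{s-1:s}\psi$). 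Iterating this from $s=T$ down to $s=t+1$, using the inclusion $\mathcal F_{t-1}\subseteq \FTP{s-1}$ for $s\geq t$ and the semigroup property $L_{t:t+1}L_{t+1:T}=L_{t:T}$ (immediate from \eqref{eq:def:l_kernels}), yields
\[\CE{\ell_{t+1}^N\cdots\ell_T^N\, I_T^N}{\FTP t} = \Q_t^N\pr{L_{t:T}\varphi_T} = \int \Q_t^N(\dd x_t)\,f_t^N(x_t) H_{t:T}(x_t) \eqdef J_t^N,\]
where the middle step uses Lemma~\ref{lem:l_and_proj} ($L_{t:T}\Proj_t^{0:T}f_t^N = \Proj_t^{0:t}(f_t^N H_{t:T})$). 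A final application of the base relation at $s=t$, together with the fact that the time-$(t-1)$ marginal of $\Q_{t-1}^N$ is the filtering measure $\sum_n W_{t-1}^n \delta_{X_{t-1}^n}$, so that $\int \Q_{t-1}^N(\dd x_{t-1})M_t(x_{t-1},\cdot) = r_t^N\lambda_t$ (Definition~\ref{def:rt}), gives
\[\CE{\ell_t^N J_t^N}{\FTP{t-1}} = \int_{\mathcal X_t} \px{r_t^N f_t^N G_t H_{t:T}}(x_t)\,\lambda_t(\dd x_t) \eqdef R_t^N,\]
so $R_t^N$ is exactly the time-$t$ integral in the statement.

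For part (a) I would use only the upper bound $\ell_s^N \leq \infnorm{G_s}$ (Assumption~\ref{asp:Gbound}) and the non-negativity of $I_T^N, J_t^N, R_t^N$. Taking full expectations in the two displayed identities and bounding the positive, bounded products of $\ell$'s gives $\E[R_t^N]\leq \infnorm{G_t}\,\E[J_t^N]$ and $\E[J_t^N]\leq \pr{\prod_{s=t+1}^T\infnorm{G_s}}\E[I_T^N]$. Chaining these, $\E[R_t^N]=\infty$ forces $\E[J_t^N]=\infty$ and hence $\E[I_T^N]=\infty$.

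For part (b) the obstacle is that $1/\ell_s^N$ is not uniformly bounded, so one cannot simply invert the inequalities of part (a); instead I would argue by conditional Markov plus Slutsky. Under Assumption~\ref{asp:Gbound} the bootstrap filter is consistent, so $\ell_s^N \cvprob L_s/L_{s-1}>0$ (\citealp{DelMoral:book}), whence $1/\ell_t^N$ and $1/(\ell_{t+1}^N\cdots\ell_T^N)$ are each $\bigOproba(1)$. Starting from $R_t^N \cvprob 0$, the conditional Markov inequality applied to the non-negative variable $\ell_t^N J_t^N$ (with $\FTP{t-1}$-conditional mean $R_t^N$) gives, for every $\varepsilon>0$, $\P(\ell_t^N J_t^N > \varepsilon)\leq \E[\min(1, R_t^N/\varepsilon)]\to 0$ by bounded convergence, i.e. $\ell_t^N J_t^N \cvprob 0$; multiplying by $1/\ell_t^N=\bigOproba(1)$ yields $J_t^N\cvprob 0$. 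Repeating the same two moves one level up — conditional Markov on $\ell_{t+1}^N\cdots\ell_T^N I_T^N$ (conditional mean $J_t^N\cvprob 0$) followed by multiplication by $1/(\ell_{t+1}^N\cdots\ell_T^N)=\bigOproba(1)$ — gives $I_T^N\cvprob 0$, as required.

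The main work, and the step I expect to be most delicate, is establishing the telescoped identity rigorously: keeping the measurability bookkeeping straight so that Corollary~\ref{corol:fundamental} is legitimately invoked at each level with a non-negative, $\FTP{s-1}$-measurable integrand that may be \emph{unbounded}, and handling the degenerate case $t=T$, where $H_{T:T}\equiv 1$ and the empty product $\ell_{t+1}^N\cdots\ell_T^N$ makes $J_T^N=I_T^N$. The convergence arguments in (a) and (b) are then routine, resting only on the almost-sure upper bound and the consistency of the $\ell_s^N$.
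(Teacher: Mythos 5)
Your proof is correct and follows essentially the same route as the paper's: both rest on Corollary~\ref{corol:fundamental} (your ``base relation''), Lemma~\ref{lem:l_and_proj} to identify the time-$t$ integral, the bound $\ell_s^N \leq \infnorm{G_s}$ for part (a), and, for part (b), the fact that a non-negative variable whose conditional mean vanishes in probability itself vanishes in probability (the paper's Lemma~\ref{lem:cvg_conde_to_convg_prob}, which you reprove inline via conditional Markov) combined with $\ell_s^N \cvprob \ell_s > 0$ and Slutsky. The only difference is presentational: the paper runs the one-step relation as an induction on $s$ from $t-1$ up to $T$, whereas you compose those same steps into a single telescoped conditional identity before passing to the limit.
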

\begin{proof}
	\textbf{Part (a).} We shall prove by induction the statement
	\[\E\ps{\Q_s^N L_{s:T} \Proj^{0:T}_t f_t^N} = \infty, \forall \ t-1 \leq s \leq T. \]
	For $s=t-1$, it follows from part (a)'s hypothesis and Lemma~\ref{lem:l_and_proj}. Indeed,
	\begin{align*}
		\MoveEqLeft\Q_{t-1}^N L_{t-1:T} \Proj_t^{0:T} f_t^N \\
		&= \Q_{t-1}^N L_{t-1:t} L_{t:T} \Proj_t^{0:T} f_t^N = \Q_{t-1}^N L_{t-1:t} \Proj_t^{0:t} (f_t^N \times H_{t:T}) \\
		&= \iint_{\mathcal X_{t-1} \times \mathcal X_t} \Q_{t-1}^N(\dd x_{t-1}) m_t(x_{t-1}, x_t) \lambda_t(\dd x_t) G_t(x_t) (f_t^N \times H_{t:T})(x_t) \\
		&= \int_{\mathcal X_t} \px{r_t^N \times f_t^N \times G_t \times H_{t:T}} (x_t) \lambda_t(\dd x_t).
	\end{align*}
	For $s \geq t$, we have
	\begin{align*}
		\E\ps{\Q_s^N L_{s:T} \Proj^{0:T}_t f_t^N} &= \E\ps{\frac{\inv N \sum \tilde K_s^N(n, L_{s:T}\Proj^{0:T}_t f_t^N)}{\ell_s^N}} \text{ by Corollary~\ref{corol:fundamental}} \\
		&\geq \frac{1}{\infnorm{G_s}} \E\ps{\inv N \sum \tilde K_s^N(n, L_{s:T}\Proj^{0:T}_t f_t^N)} \\
		&\text{by Assumption~\ref{asp:Gbound} and definition of } \ell_s^N \text{ (see Algorithm~\ref{algo:bootstrap})} \\
		&\geq \frac{1}{\infnorm{G_s}} \E\ps{\Q_{s-1}^N L_{s-1:s} L_{s:T} \Proj^{0:T}_t f_t^N} \\
		&\text{by Corollary~\ref{corol:fundamental} and law of total expectation}\\
		&= \E\ps{\Q_{s-1}^N L_{{s-1}:T} \Proj^{0:T}_t f_t^N} = \infty \text{ (induction hypothesis).}
	\end{align*}
	\textbf{Part (b).} Similar to part (a), we shall prove by induction the statement
	\[\Q_s^N L_{s:T}  \Proj^{0:T}_t f_t^N \cvprob 0, \forall \ t-1 \leq s \leq T.\]
	Again, by Corollary~\ref{corol:fundamental}, this quantity is equal to
	\[\frac{\inv N \sum \tilde K_s^N(n, L_{s:T}\Proj^{0:T}_t f_t^N)}{\ell_s^N}, \]
	and the expectation of the numerator given $\mathcal F_{s-1}$ is
    $\Q_{s-1}^N L_{s-1:T} \Proj_t^{0:T} f_t^N$, which tends to $0$ in
    probability by induction hypothesis.
    Lemma~\ref{lem:cvg_conde_to_convg_prob} (see below at the end of the
    section), the classical result $\ell_s^N \cvprob \ell_s:=\Q_{s-1}M_s(G_s)$ and Stutsky's theorem concludes the proof. 
\end{proof}
\begin{proof}[Proof of Theorem~\ref{thm:ffbs_exec_infinite_general}]
	By \eqref{eq:cond_dist_tau_ffbs}, we have $\E[\tau_t^{1, \mrffbs}] = \E[\int \Q_T^{N, \mrffbs}(\dd x_t) f_t^N(x_t)] $ where
	\[f_t^N(x_t) = \frac{\mbhigh}{\sum W_{t-1}^n m_t(X_{t-1}^n, x_t)} = \frac{\mbhigh}{r_t^N} \]
	with $r_t^N$ given in Definition~\ref{def:rt}.
    Proposition~\ref{prop:ffbs_exec_induction}(a) gives a sufficient condition
    for $\E[\tau_t^{1, \mrffbs}]=\infty$ to hold, namely
	\[\int_{\mathcal{X}_t} (r_t^N \times f_t^N \times G_t \times H_{t:T})(x_t)\lambda_t(\dd x_t) = \infty, \]
	which is equivalent to the hypothesis of the theorem.
\end{proof}
\begin{proof}[Proof of Theorem~\ref{thm:ffbs_exec_infinite_gaussian}]
	We use notations from Definition~\ref{def:rt} and Supplement~\ref{apx:linear_gaussian_models}. We note $\mathcal N(x | \mu, \Sigma)$ the density of the specified normal distribution at point $x$. Using Lemma~\ref{lem:higher_moment_geometric}, Proposition~\ref{prop:ffbs_exec_induction} and \eqref{eq:cond_dist_tau_ffbs}, we have
	\begin{align*}
		\E[(\tau_t^{1, \mrffbs})] = \infty &\Leftrightarrow \E\ps{\frac{1}{r_t^N(X_t^{\mathcal I_t^1})^k}} = \infty \\
		&\Leftrightarrow \E\ps{\int \Q_T^N(\dd x_t) \frac{1}{r_t^N(x_t)^k}} = \infty \\
		&\Leftarrow \E\ps{\int_{\mathcal X_t} \frac{r_t^N G_t H_{t:T}}{(r_t^N)^k}(x_t) \lambda_t(\dd x_t)} = \infty \\
		&\Leftarrow \int_{\mathcal X_t} \frac{r_t G_t H_{t:T}}{(r_t^N)^{k-1} r_t}(x_t) \lambda_t(\dd x_t) = \infty \text{ almost surely} \\
		&\Leftrightarrow \int_{\mathcal X_t} \frac{\mathcal N(x_t | \mu_t^{\mathrm{smth}}, \sigmasmth_t)}{r_t^N(x_t)^{k-1} \mathcal N(x_t| \mu_t^{\mathrm{pred}}, \sigmapred_t)} \lambda_t(\dd x_t) = \infty \text{ a.s. }
	\end{align*}
	The theorem then follows from elementary arguments, by noting that $r_t^N$ is a mixture of $N$ Gaussian distributions with covariance matrix $C_X$.
\end{proof}
\begin{lem}
	\label{lem:higher_moment_geometric}
	Let $L$ be a $]0,1]$-valued random variable. Suppose $X$ is another random variable such that $X | L \sim \operatorname{Geo}(L)$. Then for any real number $k>0$,
	\[\E[X^k] = \infty \Leftrightarrow \E[L^{-k}]=\infty. \]
\end{lem}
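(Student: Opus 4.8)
The plan is to condition on $L$ and reduce the claim to a pointwise two-sided comparison between the conditional moment $\E[X^k \mid L]$ and $L^{-k}$. Writing $g_k(\lambda) \eqdef \E[G^k]$ for $G \sim \operatorname{Geo}(\lambda)$, the tower property gives $\E[X^k] = \E[g_k(L)]$, so it suffices to exhibit constants $0 < c_k \leq C_k < \infty$, depending only on $k$, such that
\[
    c_k\, \lambda^{-k} \leq g_k(\lambda) \leq C_k\, \lambda^{-k}, \qquad \forall\, \lambda \in (0,1].
\]
Once this is established, $c_k\, \E[L^{-k}] \leq \E[X^k] \leq C_k\, \E[L^{-k}]$ and the stated equivalence of infiniteness follows immediately.

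To prove the comparison I would use the elementary coupling between geometric and exponential laws. Setting $\mu \eqdef -\log(1-\lambda) > 0$ and letting $Y$ be exponentially distributed with rate $\mu$, a direct computation of $\P(\lceil Y \rceil = n) = e^{-\mu(n-1)} - e^{-\mu n} = \lambda(1-\lambda)^{n-1}$ shows $\lceil Y \rceil \sim \operatorname{Geo}(\lambda)$; hence $G \stackrel{d}{=} \lceil Y \rceil$ and $\max(Y,1) \leq G \leq Y+1$ almost surely. Using $\E[Y^k] = \Gamma(k+1)\mu^{-k}$ together with the elementary inequality $(Y+1)^k \leq 2^{\max(k-1,0)}(Y^k+1)$ yields the upper bound $g_k(\lambda) \leq \E[(Y+1)^k] \leq C_k(\mu^{-k}+1)$ for a constant $C_k$; since $\mu \geq \lambda$ (from $1-\mu \leq e^{-\mu}=1-\lambda$) and $\lambda \leq 1$, both $\mu^{-k}$ and $1$ are dominated by $\lambda^{-k}$, giving the right-hand bound.

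For the lower bound I would split on the size of $\lambda$, which is where the only real subtlety lies: the relation $\mu \geq \lambda$ runs the wrong way, and near $\lambda = 1$ one has $\mu \to \infty$, so $\mu^{-k}$ decouples from $\lambda^{-k}$. When $\lambda \leq 1/2$, the convexity estimate $-\log(1-\lambda) \leq \lambda/(1-\lambda) \leq 2\lambda$ gives $\mu \leq 2\lambda$, whence $g_k(\lambda) \geq \E[Y^k] = \Gamma(k+1)\mu^{-k} \geq \Gamma(k+1)2^{-k}\lambda^{-k}$. When $\lambda > 1/2$, the target $\lambda^{-k}$ is bounded by $2^k$, while $g_k(\lambda) \geq 1$ because $G \geq 1$; thus $g_k(\lambda) \geq 1 \geq 2^{-k}\lambda^{-k}$. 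Taking $c_k = 2^{-k}\min(\Gamma(k+1),1)$ combines the two regimes and completes the comparison. The main obstacle is precisely this uniform lower bound across $\lambda \in (0,1]$; the exponential coupling reduces everything else to standard moment computations.
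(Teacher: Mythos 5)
Your proof is correct, and it takes a genuinely different route from the paper's. The paper conditions on $L$ as you do, but then invokes the layer-cake formula $\E[X^k]=\int_0^\infty \P(X\geq x^{1/k})\,\dd x$ precisely to sidestep the non-monotonicity of $x\mapsto x^k(1-L)^{x-1}L$, computes the resulting continuous surrogate exactly via $\int_0^\infty e^{-c x^{1/k}}\dd x = k\,\Gamma(k)\,c^{-k}$, and then applies its soft comparison Lemma~\ref{lem:equivalent_in_01} twice (once to trade $\lfloor x^{1/k}\rfloor$ for $x^{1/k}$, once to trade $\abs{\log(1-L)}^{-k}$ for $L^{-k}$); that lemma only requires the two functions to be comparable as $\ell\to 0^+$, so no uniform constants over $(0,1]$ are ever needed. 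You instead resolve the discretisation issue at the level of random variables, via the sandwich $\max(Y,1)\leq \lceil Y\rceil \leq Y+1$ for the exponential coupling $\lceil Y\rceil \sim \operatorname{Geo}(\lambda)$, and then do the constant-chasing explicitly, splitting at $\lambda=1/2$ to handle the regime where $\mu=-\log(1-\lambda)$ decouples from $\lambda$. The trade-off: the paper's argument is lighter on bookkeeping and reuses a lemma stated anyway, whereas yours yields a strictly stronger quantitative conclusion, namely the two-sided comparison $c_k\,\E[L^{-k}]\leq \E[X^k]\leq C_k\,\E[L^{-k}]$ with explicit constants, of which the stated equivalence of infiniteness is an immediate corollary. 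Both hinge on the same underlying fact (a geometric is a discretised exponential), but your packaging of it is self-contained and arguably more transparent.
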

\begin{proof}
	By the definition of $X$, we have
	\[\E[X^k] = \E\ps{\sum_{x=1}^\infty x^k(1-L)^{x-1}L}. \]
	A natural idea is then to approximate the sum by the integral $\int_0^\infty x^k (1-L)^{x-1} L \dd x$, from which one easily extracts the $L^{-k}$ factor. This is however technically laborious, since the function $x \mapsto x^k(1-L)^{x-1}L$ is not monotone on the whole real line. It is only so starting from a certain $x_0$ which itself depends on $L$. We would therefore rather write
	\begin{equation*}
		\begin{split}
		\E[X^k] &= \int_{0}^\infty \P(X^k \geq x) \dd x = \int_{0}^\infty \P(X \geq x^{1/k}) \dd x\\
		&= \int_0^\infty \E\ps{(1-L)^{\floor{x^{1/k}}}} \dd x \\
		&\text{where the two integrands are equal Lebesgue--almost-everywhere} \\
		&= \E\ps{\int_{0}^\infty \expfc{\floor} \dd x}
		\end{split}
	\end{equation*}
	with the natural interpretation of expressions when $L=1$. Using $u \sim v$ as a shorthand for ``$u$ and $v$ are either both finite or both infinite'', we have
	\begin{equation*}
		\begin{split}
		\E[X^k] &\sim  \E\ps{\int_0^\infty \expfc{} \dd x} \text{ by Lemma~\ref{lem:equivalent_in_01}} \\
		&= k\ \Gamma(k)\  \E\ps{\frac{1}{\abs{\log(1-L)}^k}} 
        \sim \E\left[\frac{1}{L^k}\right] \text{ by Lemma~\ref{lem:equivalent_in_01} again.}
		\end{split} 
	\end{equation*} 
\end{proof}

The following lemma is elementary. Its proof is therefore omitted.

\begin{lem}\label{lem:equivalent_in_01}
Let $L$ be a $]0,1]$-valued random variable and let $f_1$ and $f_2$ be two continuous functions from $]0,1]$ to $\mathbb R$. Suppose that $\limsup_{\ell \to 0^+} f_1(\ell)/f_2(\ell)$ and $\limsup_{\ell \to 0^+} f_2(\ell)/f_1(\ell)$ are both finite. Then $\E[f_1(L)]$ is finite if and only if $\E[f_2(L)]$ is so.
\end{lem}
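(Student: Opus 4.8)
The plan is to isolate the only place where either expectation can become infinite---the singularity at $\ell=0$---and then compare $f_1$ and $f_2$ there. First I would fix an arbitrary $\delta\in(0,1)$ and write $\E[f_i(L)]=\E[f_i(L)\ind\{L<\delta\}]+\E[f_i(L)\ind\{L\geq\delta\}]$ for $i\in\{1,2\}$. On the compact interval $[\delta,1]$ each $f_i$ is continuous, hence bounded, so the second term is finite; therefore $\E[f_i(L)]$ is finite if and only if $\E[f_i(L)\ind\{L<\delta\}]$ is. This reduces the claim to the behaviour of $f_1$ and $f_2$ on a neighbourhood of $0$.

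Next I would convert the two one-sided ratio hypotheses into a genuine two-sided bound. In every situation where the lemma is applied the functions $f_1$ and $f_2$ are strictly positive (e.g. $f_1(\ell)=1/\abs{\log(1-\ell)}^k$ and $f_2(\ell)=1/\ell^k$), so I may assume $f_1,f_2>0$ on $(0,1]$; this is precisely the regime in which the ratio conditions are informative. Writing $g\eqdef f_1/f_2>0$, finiteness of $\limsup_{\ell\to0^+}g(\ell)$ means that $\sup_{0<\ell<\delta}g(\ell)$ is finite for $\delta$ small enough, and likewise finiteness of $\limsup_{\ell\to0^+}g(\ell)^{-1}$ bounds $g^{-1}$ near $0$. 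Hence, after shrinking $\delta$, there is a constant $C\in[1,\infty)$ such that $C^{-1}f_2(\ell)\leq f_1(\ell)\leq C f_2(\ell)$ for all $\ell\in(0,\delta)$.

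Finally I would integrate this sandwich against the law of $L$ on the event $\{L<\delta\}$. Multiplying through by $\ind\{L<\delta\}\geq0$ and taking expectations gives $C^{-1}\,\E[f_2(L)\ind\{L<\delta\}]\leq\E[f_1(L)\ind\{L<\delta\}]\leq C\,\E[f_2(L)\ind\{L<\delta\}]$, where all three quantities are well defined in $[0,\infty]$ because the integrands are nonnegative on $(0,\delta)$. The outer inequalities show that the left- and right-hand expectations are finite together, and combining this with the reduction of the first paragraph yields that $\E[f_1(L)]$ is finite if and only if $\E[f_2(L)]$ is.

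The argument is almost entirely bookkeeping; the one step that deserves care is the passage from the limsup hypotheses to the uniform two-sided estimate $C^{-1}f_2\leq f_1\leq Cf_2$ near $0$, together with the observation that the positivity of $f_1$ and $f_2$ (automatic in all our applications) is what forces the two ratio conditions to pin $f_1$ and $f_2$ to the same order of magnitude. This is why the result can be recorded as elementary.
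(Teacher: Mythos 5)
Your proof is correct, and since the paper itself omits the proof of Lemma~\ref{lem:equivalent_in_01} (declaring it elementary), yours is precisely the bookkeeping argument the authors presumably had in mind: split the expectation at a threshold $\delta$, use continuity on the compact set $[\delta,1]$ to dispose of that piece, convert the two limsup hypotheses into a two-sided sandwich $C^{-1}f_2 \leq f_1 \leq C f_2$ on $(0,\delta)$, and integrate. The one point worth dwelling on is your explicit positivity assumption. You are right to make it, and it is not merely cosmetic: for sign-changing $f_1, f_2$ the statement as literally written is genuinely problematic --- the expectations need not even be well defined, and when $f_1$ and $f_2$ have opposite signs near $0$ the two limsup conditions no longer pin the functions to the same order of magnitude (a negative ratio whose values oscillate between $0^-$ and $-\infty$ has finite limsup in both directions, yet the magnitudes are wildly incomparable), so the claimed equivalence can fail outright. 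Since both invocations of the lemma inside the proof of Lemma~\ref{lem:higher_moment_geometric} involve strictly positive functions, restricting to the positive case is exactly the right reading of the statement, and your argument covers everything the paper actually uses.
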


\begin{lem}\label{lem:cvg_conde_to_convg_prob}
	Let $Z_1, Z_2, \ldots$ be non-negative random variables. Suppose that there exist $\sigma$-algebras $\mathcal F_1, \mathcal F_2, \ldots$ such that $\E[Z_n|\mathcal F_n] \cvprob 0$. Then $Z_n \cvprob 0$.
\end{lem}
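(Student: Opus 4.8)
The plan is to reduce the statement to the bounded convergence theorem via the conditional Markov inequality. First I would introduce $Y_n \eqdef \CE{Z_n}{\mathcal F_n}$, which is non-negative since $Z_n \geq 0$, and which satisfies $Y_n \cvprob 0$ by hypothesis. The goal is to show $\P(Z_n \geq \varepsilon) \to 0$ for every fixed $\varepsilon > 0$, which is the definition of $Z_n \cvprob 0$.

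Fixing $\varepsilon > 0$, the conditional Markov inequality gives, almost surely, $\CProb{Z_n \geq \varepsilon}{\mathcal F_n} \leq Y_n/\varepsilon$; and since a conditional probability never exceeds $1$, we in fact have $\CProb{Z_n \geq \varepsilon}{\mathcal F_n} \leq \min\pr{1, Y_n/\varepsilon}$ almost surely. Taking expectations and invoking the tower property yields $\P(Z_n \geq \varepsilon) = \E\ps{\CProb{Z_n \geq \varepsilon}{\mathcal F_n}} \leq \E\ps{\min\pr{1, Y_n/\varepsilon}}$. The point of the truncation is that the integrand is now a random variable bounded between $0$ and $1$.

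The final step is to show that this last expectation tends to $0$. Since the map $y \mapsto \min(1, y/\varepsilon)$ is continuous and vanishes at $0$, the continuous mapping theorem gives $\min\pr{1, Y_n/\varepsilon} \cvprob 0$; being uniformly bounded by $1$, the family is uniformly integrable, so convergence in probability upgrades to $L^1$ convergence and $\E\ps{\min\pr{1, Y_n/\varepsilon}} \to 0$. Combining the two displays, $\P(Z_n \geq \varepsilon) \to 0$ for every $\varepsilon > 0$, which is exactly $Z_n \cvprob 0$.

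I expect no serious obstacle here: the whole argument is routine measure theory. The only point requiring a word of justification is the passage from $Y_n \cvprob 0$ to convergence of the expectations of the truncations, which is precisely the bounded convergence theorem for convergence in probability (equivalently, the uniform integrability of a uniformly bounded family). Everything else—the conditional Markov inequality and the tower property—is standard, so the proof is short.
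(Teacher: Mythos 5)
Your proof is correct and follows essentially the same route as the paper's: conditional Markov inequality, then the observation that a $[0,1]$-bounded sequence converging to $0$ in probability also converges in expectation (bounded convergence), then the tower property. The only cosmetic difference is that you truncate $Y_n/\varepsilon$ via $\min(1, Y_n/\varepsilon)$, whereas the paper applies the same boundedness argument directly to the conditional probability $\CProb{Z_n \geq \varepsilon}{\mathcal F_n}$, which already lies in $[0,1]$.
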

\begin{proof}
	Fix $\varepsilon > 0$. By Markov's inequality, $\P(Z_n \geq \varepsilon |
    \mathcal F_n) \leq \inv \varepsilon \E[Z_n |\mathcal F_n]$. Therefore, the $[0,1]-$bounded random variable $\P(Z_n \geq \varepsilon | \mathcal F_n)$ tends to $0$ in probability, hence also in expectation. The law of total expectation then gives $\P(Z_n \geq \varepsilon) \to 0$, which, by varying $\varepsilon$, establishes the convergence of $Z_n$ to $0$ in probability.
\end{proof}

\subsection{\prot~\ref{thm:ffbs_hybrid_exec} and Corollary~\ref{cor:gaussian_hybrid_ffbs} (hybrid FFBS complexity)}
\label{ap:proof:collective_ffbs_hybrid}
\begin{proof}[Proof of Theorem~\ref{thm:ffbs_hybrid_exec}]
According to \citet[Lemma 3]{JansonBigO}, it is sufficient to show that
\[\frac{\sum_n \min(\tau_t^{n, \mrffbs}, N)}{N\alpha_N} \cvprob 0 \]
for any deterministic sequence $\alpha_N$ such that ${\alpha_N}/{\E[\min(\tauinfffbs, N)]} \to \infty$. By Lemma~\ref{lem:cvg_conde_to_convg_prob}, we can take expectation with respect to $\mathcal F_T$ to derive a sufficient condition, namely
\begin{equation*}
	\begin{split}
	&\hspace{0.5cm} \intxt \inv{\alpha_N} \znp \Q_T^N(\dd x_t) \cvprob 0 \text{ with } z^N \text{ defined in \eqref{eq:def_zn}} \\
	&\Leftarrow \intxt \inv{\alpha_N} \znp r_t^N \times G_t \times H_{t:T} \ \dd \lambda_t \cvprob 0 \text{ by Proposition~\ref{prop:ffbs_exec_induction}(b)}\\
	&\Leftarrow \E\ps{\intxt \inv{\alpha_N} \znp r_t^N \times G_t \times H_{t:T} \ \dd \lambda_t} \to 0 \\
	&\Leftarrow \intxt \inv{\alpha_N} \znpt r_t \times G_t \times H_{t:T} \ \dd \lambda_t \to 0 \text{ by Lemma~\ref{lem:magic_jensen}} \\
	&\Leftrightarrow \frac{\E[\min(\tauinfffbs,N)]}{\alpha_N} \to 0.
	\end{split}
\end{equation*}
The proof is now complete.
\end{proof}
\begin{proof}[Proof of Corollary~\ref{cor:gaussian_hybrid_ffbs}]
	We have, using the cost-to-go, the $z^N$ functions and the $\tau_t^{\infty, \mrp}$ distribution defined respectively in \eqref{eq:def-cost-to-go}, \eqref{eq:def_zn} and \eqref{eq:dist_tau_infty}:
	\begin{equation*}
		\begin{split}
		&\E[\min(\tauinfffbs, N)] = \intxt \znpt \Q_T(\dd x_t) \\
		= &\ps{(\Q_{t-1}M_t)(G_tH_{t:T})}^{-1} \intxt \znpt (G_tH_{t:T})(x_t)(\Q_{t-1}M_t)(\dd x_t)\\
		\leq& \infnorm{G_tH_{t:T}} \ps{(\Q_{t-1}M_t)(G_tH_{t:T})}^{-1} \intxt \znpt (\Q_{t-1}M_t)(\dd x_t)\\
		=& \infnorm{G_tH_{t:T}} \ps{(\Q_{t-1}M_t)(G_tH_{t:T})}^{-1} \E[\min(\tau_t^{\infty, \mrp}, N)].
		\end{split}
	\end{equation*}
	Proposition~\ref{prop:oracle_linear} then finishes the proof.
\end{proof}

\subsection{\propp~\ref{prop:mcmc_validity} (MCMC kernel properties)}
\label{apx:proof:mcmc_validity}
\newcommand{\tempxtmoa}{X_{t-1}^{1:N}}
\begin{proof}
	Let $J_t^n$ be such that $J_t^n|\tempxtmoa, X_t^n, \btnhat{IMH}(n, \cdot) \sim \btn{IMH}(n, \cdot)$. Moreover, let $K_t^n$ be such that
	$$K_t^n | \tempxtmoa, X_t^n, A_t^n, \btnhat{IMH}(n, \cdot) \sim \btn{IMH}(n, \cdot). $$
	Given $\tempxtmoa$, $X_t^n$ and $A_t^n$, the kernel $\btn{IMH}(n, \cdot)$ applies to $A_t^n$ one or more several MCMC steps keeping invariant $\btn{FFBS}(n, \cdot)$. Since $A_t^n | \tempxtmoa, X_t^n \sim \btn{FFBS}(n, \cdot)$, it follows that $K_t^n | \tempxtmoa, X_t^n \sim \btn{FFBS}(n, \cdot)$ too. On the other hand, $J_t^n$ and $K_t^n$ share the same distribution given $\tempxtmoa$, $X_t^n$ and $\btnhat{IMH}(n, \cdot)$. Hence they also do, given $\tempxtmoa$ and $X_t^n$ only. This implies that $J_t^n|\tempxtmoa, X_t^n \sim \btn{FFBS}(n, \cdot)$, which is the same as $A_t^n|\tempxtmoa, X_t^n$. Thus $(J_t^n, X_t^n)$ have the same distribution as $(A_t^n, X_t^n)$ given $\tempxtmoa$, as required by Theorem~\ref{thm:convergence_mcmc}. The arguments for the kernel $\btn{IMHP}$ are similar.
	
    To show that a certain kernel $B_t^N$ satisfies \eqref{eq:support_cond}, we
    look at two conditionally i.i.d. simulations $J_t^{n,1}$ and $J_t^{n,2}$
    from $B_t^N(n, \cdot)$ and lower bound the probability that they are
    different. For the kernel $\btn{IMH}$, the variables $J_t^{n,1}$ and
    $J_t^{n,2}$ both result from one step of MH applied to $A_t^n$. Let
    $J_t^{n,1*}$ and $J_t^{n,2*}$ be the corresponding MH proposals. A
    sufficient condition for $J_t^{n,1} \neq J_t^{n,2}$ is that $J_t^{n,1*}
    \neq J_t^{n,2*}$ and the two proposals are both accepted. The acceptance
    rate is at least $\mblow/\mbhigh$ by Assumption~\ref{asp:mt_2ways_bound}
    and the probability that $J_t^{n,1*} \neq J_t^{n,2*}$ is \[1 - \sum_{n=1}^N
    (W_{t-1}^n)^2 \geq 1 - \frac 1 N \pr{\frac{\gbhigh}{\gblow}}^2\] by
    Assumption~\ref{asp:g_2ways_bound}.  Thus \eqref{eq:support_cond} is
    satisfied for $\epss = \mblow/2\mbhigh$ for $N$
    large enough. Similarly, the probability that $J_t^{n,1} \neq J_t^{n,2}$ for
    the $\btn{IMHP}$ kernel with $\tilde N=2$ can be lower-bounded via the
    probability that $\tilde J_t^{n, 1} \neq \tilde J_t^{n,2}$ (where $\tilde
    J_t^{n,1}$ and $\tilde J_t^{n,2}$ are defined in \eqref{eq:mcmc_paris_kernel}).
    Thus using the same arguments,  \eqref{eq:support_cond} is satisfied here for
    $\epss = \mblow/4\mbhigh$.
\end{proof}

\subsection{Conditional probability of maximal couplings}
\label{conditional_proba_maximal_coupling}
In general, there exist multiple maximal couplings of two random distributions (i.e.\ couplings that maximise the probability of equality of the two variables). However, they all satisfy a certain conditional probability property stated in the following lemma. It is closely related to results on the coupling density on the diagonal (see e.g. \citealp[Lemma 2]{wang2021maximal} or \citealp[Theorem 19.1.6]{MR3889011}). Its statement, which we were unable to find in the literature in the exact form we need, is obvious in the discrete case but requires lengthier arguments in the continuous one.
\begin{prop}\label{conditional_proba_maximal_coupling_prop}
	Let $X_1$ and $X_2$ be two random variables with densities $f_1$ and $f_2$ with respect to some dominating measure defined on a space $\mathcal X$. Then, the following inequality holds almost surely:
	\begin{equation}
	\label{conditional_proba_maximal_coupling_ineq}
	\P(X_2 = X_1 | X_1) \leq 1 \land \frac{f_2(X_1)}{f_1(X_1)}.
	\end{equation}
	Moreover, the equality occurs almost surely if and only if $X_1$ and $X_2$ form a maximal coupling.
\end{prop}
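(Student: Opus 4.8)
The plan is to analyse the \emph{diagonal mass} of the joint law of $(X_1,X_2)$ and to compare it simultaneously against both marginals. Write $\mu_1$ and $\mu_2$ for the laws of $X_1$ and $X_2$, $\lambda$ for the common dominating measure, and let $\pi$ be the joint law on $\mathcal X \times \mathcal X$. Assuming $\mathcal X$ is a standard Borel space (so that the diagonal $D \eqdef \px{(x,x) : x \in \mathcal X}$ is measurable and a regular conditional distribution exists), I would disintegrate $\pi(\dd x_1, \dd x_2) = \mu_1(\dd x_1)\,\kappa(x_1, \dd x_2)$, where $\kappa$ is the conditional law of $X_2$ given $X_1$. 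Then by definition $\P(X_2 = X_1 \mid X_1 = x_1) = \kappa(x_1, \px{x_1})$, and the whole proposition reduces to controlling this conditional atom.

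The central object is the measure $\nu$ on $\mathcal X$ defined by $\nu(A) \eqdef \P(X_1 \in A,\, X_1 = X_2) = \pi(\px{(x,x) : x \in A})$, that is, the restriction of $\pi$ to $D$ projected onto the first coordinate. First I would establish the two domination bounds $\nu \leq \mu_1$ and $\nu \leq \mu_2$ as set functions: the former because $\px{X_1 \in A,\, X_1 = X_2} \subseteq \px{X_1 \in A}$; the latter because on $\px{X_1 = X_2}$ one may replace $X_1$ by $X_2$, so $\px{X_1 \in A,\, X_1 = X_2} \subseteq \px{X_2 \in A}$. Consequently $\nu$ is absolutely continuous with respect to $\lambda$, and its density $g \eqdef \dd\nu/\dd\lambda$ satisfies $g \leq \min(f_1, f_2)$ $\lambda$-almost everywhere.

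Next I would identify $g$ with the conditional atom. From the disintegration, the density of $\nu$ with respect to $\mu_1$ at $x_1$ is precisely $\kappa(x_1, \px{x_1})$, so the Radon--Nikodym chain rule gives $g(x_1) = \kappa(x_1, \px{x_1})\, f_1(x_1)$ for $\lambda$-a.e. $x_1$. Combining with $g \leq \min(f_1, f_2)$ and dividing by $f_1$ on the $\mu_1$-full set $\px{f_1 > 0}$ yields $\kappa(x_1, \px{x_1}) \leq 1 \land f_2(x_1)/f_1(x_1)$, which is exactly \eqref{conditional_proba_maximal_coupling_ineq}. For the equality characterisation I would integrate: $\P(X_1 = X_2) = \nu(\mathcal X) = \int g\, \dd\lambda \leq \int \min(f_1, f_2)\, \dd\lambda$, the last quantity equal to $1 - \operatorname{TV}(\mu_1, \mu_2)$ by Lemma~\ref{lem:properties_TV}. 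Since a maximal coupling is exactly one attaining $\P(X_1 = X_2) = 1 - \operatorname{TV}(\mu_1, \mu_2)$, and since $g \leq \min(f_1, f_2)$ pointwise, this equality holds iff $g = \min(f_1, f_2)$ $\lambda$-a.e., which is in turn equivalent to equality in \eqref{conditional_proba_maximal_coupling_ineq} holding $\mu_1$-almost surely.

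The main obstacle is purely measure-theoretic and lives entirely in the continuous case: making rigorous sense of $\P(X_2 = X_1 \mid X_1)$ as the atom $\kappa(x_1, \px{x_1})$, ensuring $x_1 \mapsto \kappa(x_1, \px{x_1})$ is measurable (which relies on joint measurability of the diagonal indicator), and justifying that the density of $\nu$ with respect to $\mu_1$ is this atom. In the discrete case everything is transparent, since $\kappa(x_1, \px{x_1}) = \P(X_1 = X_2 = x_1)/\P(X_1 = x_1)$ and the two bounds are immediate; the genuine work is transporting this to general (standard Borel) $\mathcal X$ via disintegration.
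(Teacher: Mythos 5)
Your proof is correct, and at its core it is the same argument as the paper's: both rest on the observation that the ``diagonal'' sub-probability measure $\nu(A) = \P(X_1 \in A,\, X_1 = X_2)$ is dominated by \emph{both} marginals (the bound $\nu \le \mu_2$ coming from swapping $X_1$ for $X_2$ on the event $\{X_1 = X_2\}$), so that its density w.r.t.\ $\lambda$ is at most $f_1 \wedge f_2$; and both settle the equality case identically, by integrating and comparing $\P(X_1 = X_2)$ with $1 - \operatorname{TV}(\mu_1, \mu_2)$ via Lemma~\ref{lem:properties_TV}. The difference is in the packaging. The paper works in weak form: it tests $\P(X_2 = X_1 \mid X_1)$ against an arbitrary non-negative $h(X_1)$ and performs the swap after splitting $\mathcal{X}$ into $\{f_1 \ge f_2\}$ and $\{f_2 \ge f_1\}$; this uses nothing beyond conditional expectation, so no structure on $\mathcal{X}$ is required beyond measurability of the diagonal (which the statement already presupposes). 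You instead realise the conditional probability concretely as the atom $\kappa(x_1, \{x_1\})$ of a regular conditional distribution and identify it with $\dd\nu/\dd\mu_1$ through the Radon--Nikodym chain rule. This is arguably cleaner in one respect --- your two separate dominations $\nu \le \mu_1$ and $\nu \le \mu_2$ avoid the paper's decomposition into $A_1$, $A_2$, which as written overlap on $\{f_1 = f_2\}$ and strictly speaking should be disjointified --- but it costs you the standard Borel hypothesis needed for disintegration and for measurability of $x \mapsto \kappa(x, \{x\})$, an assumption the paper's statement does not make and its proof does not need. In short: same idea, with your version trading a little generality for a more transparent, kernel-level identification of the conditional probability.
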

\begin{proof}
	Let $h$ be any non-negative test function from $\mathcal X$ to $\mathbb R$. Putting
	\begin{align*}
		A_1 &\eqdef \px{x \in \mathcal X \mid f_1(x)\geq f_2(x)} \\
		A_2 &\eqdef \px{x \in \mathcal X \mid f_2(x) \geq f_1(x)}, 
	\end{align*}
	we have
	\begin{equation*}
	\begin{split}
	\E[\P(X_2=X_1|X_1) h(X_1)] &= \E[\mathbbm 1_{X_2 = X_1} h(X_1)] \\
	&= \E[\mathbbm 1_{X_2 = X_1} \mathbbm 1_{X_1 \in A_1} h(X_1)] + \E[\mathbbm{1}_{X_2 = X_1} \ind_{X_1 \in A_2} h(X_1)] \\
	&= \E[\mathbbm 1_{X_2 = X_1} \mathbbm 1_{X_2 \in A_1} h(X_2)] + \E[\mathbbm{1}_{X_2 = X_1} \ind_{X_1 \in A_2} h(X_1)] \\
	&\leq \E[\ind_{X_2 \in A_1} h(X_2)] + \E[\ind_{X_1 \in A_2} h(X_1)] \\
	&= \int h(x) f_2 \land f_1(x) \dd x = \E\ps{\pr{1 \land \frac{f_2(X_1)}{f_1(X_1)}} h(X_1)}.
	\end{split}
	\end{equation*}
	The inequality \eqref{conditional_proba_maximal_coupling_ineq} is now proved almost-surely. As a result, almost-sure equality occurs if and only if the expectation of the two sides of \eqref{conditional_proba_maximal_coupling_ineq} are equal, which means, via Lemma~\ref{lem:properties_TV}, that the two variables are maximally coupled.
\end{proof}
The following lemma establishes the symmetry of Assumption~\ref{asp:coupling:dynamics}. Again, its statement is obvious in the discrete case, though some work is needed to rigorously justify the continuous one.
\begin{lem}\label{lem:coupling_efficiency_symmetric}
	Let $X_1$ and $X_2$ be two random variables of densities $f_1$ and $f_2$
    w.r.t.\ some dominating measure defined on some space $\mathcal X$. Suppose
    that almost-surely
	\[\P(X_2 = X_1|X_1) \geq \varepsilon \pr{1 \land \frac{f_2(X_1)}{f_1(X_1)}} \]
	for some $\varepsilon > 0$. Then almost-surely,
	\[\P(X_1 = X_2|X_2) \geq \varepsilon \pr{1 \land \frac{f_1(X_2)}{f_2(X_2)}}. \]
\end{lem}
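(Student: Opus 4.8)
The plan is to reduce both inequalities to a single symmetric object: the density, with respect to the dominating measure $\mu$, of the law of the common value of $X_1$ and $X_2$ restricted to the meeting event $\{X_1 = X_2\}$. First I would introduce the finite measure $\nu(B) \eqdef \P(X_1 = X_2,\ X_1 \in B)$ on $\mathcal X$. Since $\nu(B) \leq \P(X_1 \in B) = \int_B f_1 \, \mu(\dd x)$, we have $\nu \ll \mu$, so $\nu$ admits a Radon--Nikodym density $g \geq 0$ against $\mu$. The crux of the argument is that $g$ is \emph{symmetric} in the two variables: for any bounded test function $h$ we have $h(X_1) = h(X_2)$ on the event $\{X_1 = X_2\}$, hence
\[
\E[\ind_{X_1 = X_2}\, h(X_1)] = \E[\ind_{X_1 = X_2}\, h(X_2)],
\]
which shows that the same density $g$ represents $\nu$ whether the common value is traced through $X_1$ or through $X_2$. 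Disintegrating each side along the relevant variable then yields, for $\mu$-a.e.\ $x$,
\[
g(x) = \P(X_1 = X_2 \mid X_1 = x)\, f_1(x) = \P(X_1 = X_2 \mid X_2 = x)\, f_2(x).
\]

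Next I would feed the hypothesis into the first representation of $g$. Multiplying the assumed bound $\P(X_2 = X_1 \mid X_1 = x) \geq \varepsilon (1 \land f_2(x)/f_1(x))$ by $f_1(x)$ gives $g(x) \geq \varepsilon\, (f_1 \land f_2)(x)$ on $\{f_1 > 0\}$ (where the hypothesis holds, it being $f_1\mu$-a.e.), while on $\{f_1 = 0\}$ both $g$ and $f_1 \land f_2$ vanish $\mu$-a.e.; so the inequality $g \geq \varepsilon (f_1 \land f_2)$ holds $\mu$-a.e. Finally I would use the second representation of $g$ and divide by $f_2(x)$ on $\{f_2 > 0\}$:
\[
\P(X_1 = X_2 \mid X_2 = x) = \frac{g(x)}{f_2(x)} \geq \varepsilon\, \frac{(f_1 \land f_2)(x)}{f_2(x)} = \varepsilon \pr{1 \land \frac{f_1(x)}{f_2(x)}},
\]
which holds $f_2\mu$-a.e., i.e.\ almost surely, as required.

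The main obstacle is the rigorous construction of the symmetric diagonal density $g$ in the continuous setting: one must justify that the meeting measure $\nu$ is absolutely continuous with respect to $\mu$ (straightforward from $\nu \leq f_1\mu$) and, more delicately, that the substitution $h(X_1) \to h(X_2)$ on $\{X_1 = X_2\}$ is legitimate and produces the \emph{same} density $g$ regardless of the variable used for disintegration. This is precisely the subtlety flagged before the lemma (the statement being obvious when $\mathcal X$ is discrete). The switching trick mirrors the computation in the proof of Proposition~\ref{conditional_proba_maximal_coupling_prop}, where the same ``on the diagonal $X_1 = X_2$'' substitution is exploited, so I expect this step to be handled by essentially the same bookkeeping.
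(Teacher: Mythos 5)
Your proof is correct and is essentially the paper's own argument: the crux in both is the diagonal substitution $\E[\ind_{X_1=X_2}h(X_1)] = \E[\ind_{X_1=X_2}h(X_2)]$ for nonnegative test functions $h$, combined with the identity $\pr{1 \land \nfrac{f_2}{f_1}}f_1 = f_1 \land f_2$, and the conclusion that an inequality holding against all such $h$ yields the almost-sure conditional-probability bound. The only difference is organizational: you factor the computation through an explicit Radon--Nikodym density $g$ of the meeting measure, whereas the paper runs the same manipulations as a single chain of (in)equalities starting from $\E[\P(X_1=X_2|X_2)h(X_2)]$.
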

\begin{proof}
	We introduce a non-negative test function $h_2: \mathcal X \to \mathbb R$ and write
	\begin{equation*}
		\begin{split}
		\E[\P(X_1=X_2|X_2)h(X_2)] &= \E[\ind_{X_1=X_2} h(X_2)] = \E[\ind_{X_2=X_1} h(X_1)] \\
		&= \E[\P(X_2 = X_1|X_1)h(X_1)] \\
		&\geq \E\ps{\varepsilon\pr{1 \land \frac{f_2(X_1)}{f_1(X_1)}} h(X_1)} \\
		&= \int \varepsilon f_1 \land f_2(x) h(x) \dd x \\
		&= \E\ps{\varepsilon\pr{1 \land \frac{f_1(X_2)}{f_1(X_2)}} h(X_2)}
		\end{split}
	\end{equation*}
	which implies the desired result.
\end{proof}

\subsection{\prot~\ref{thm:intractable} (intractable kernel properties)}
\label{proof:thm:intractable}
\begin{proof}
	Let $J_t^n$ be a random variable such that 
    \[J_t^n | X_{t-1}^{1:N}, X_t^n, \hat B_t^{N, \mathrm{ITR}}(n, \cdot) \sim \btn{ITR}(n, \cdot).\]
	By construction of Algorithm~\ref{algo:intractable}, given $X_{t-1}^{1:N}$, the couple $(J_t^n, X_t^n)$ has the same distribution as $(A_t^{n,L}, X_t^{n,L})$. Thus, $\btn{ITR}$ satisfies the hypotheses of Theorem~\ref{thm:convergence_mcmc}. To verify \eqref{eq:support_cond}, we define the variables $J_t^{n, 1:2}$ accordingly and write:
	\begin{multline*}
		\CProb{J_t^{n,1} \neq J_t^{n,2}}{X_t^n=x_t, X_{t-1}^{1:N}=x_{t-1}^{1:N}} \\
		\begin{aligned}
		&= \frac 12 \CProb{X_t^{n,1} = X_t^{n,2}, A_t^{n,1} \neq A_t^{n,2}}{X_t^n=x_t, X_{t-1}^{1:N}=x_{t-1}^{1:N}} \\
		&= \CProb{X_t^{n,1}=X_t^{n,2}, A_t^{n,1} \neq A_t^{n,2}, L=1}{\defcond} \\
		&= \frac 12 \CProb{X_t^{n,1} = X_t^{n,2}, A_t^{n,1} \neq A_t^{n,2}}{L=1, \defcond} \text{(by symmetry)} \\
		&= \frac 12 \CProb{X_t^{n,1} = X_t^{n,2}, A_t^{n,1} \neq A_t^{n,2}}{\defcond[,1]} \\
		&= \frac 12 \begin{multlined}[t]
		 \sum_{a_t^{n,1} \neq a_t^{n,2}} \CProb{X_t^{n,2} = x_t}{A_t^{n,1} = a_t^{n,1}, A_t^{n,2} = a_t^{n,2}, \defcond[,1]} \times \\
		\times \CProb{A_t^{n,1} = a_t^{n,1}, A_t^{n,2} = a_t^{n,2}}{\defcond[,1]}
		\end{multlined} \\
		&\geq \frac 12
		\begin{multlined}[t]
		\epsd \frac{\mblow}{\mbhigh} \sum_{a_t^{n,1} \neq a_t^{n,2}} \CProb{A_t^{n,1} = a_t^{n,1}, A_t^{n,2} = a_t^{n,2}}{\defcond[,1]} \\
		\text{(by Assumptions~\ref{asp:coupling:dynamics} and~\ref{asp:mt_2ways_bound})}
		\end{multlined}\\
		&\geq \frac 12 \epsd \pr{\frac{\mblow}{\mbhigh}}^2 \epsa \text{ by Lemma~\ref{lem:modified_epsa}.}
		\end{aligned}
	\end{multline*}
	The proof is complete.
\end{proof}

\begin{lem}\label{lem:modified_epsa}
	We use the notations of Algorithm~\ref{algo:intractable}. Under Assumptions~\ref{asp:mt_2ways_bound} and~\ref{asp:coupling:ancestors}, we have
	\[\CProb{A_t^{n,1} \neq A_t^{n,2}}{X_t^{n,1}, X_{t-1}^{1:N}} \geq \frac{\mblow}{\mbhigh} \epsa. \]
\end{lem}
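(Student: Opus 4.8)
The plan is to compute the conditional probability directly via Bayes' rule, expressing it as a ratio of densities in the $x_t$-variable and then bounding the numerator and denominator separately using Assumption~\ref{asp:mt_2ways_bound}. The key structural observation is that, conditionally on the ancestor pair $(A_t^{n,1}, A_t^{n,2})$ and on $X_{t-1}^{1:N}$, the variable $X_t^{n,1}$ admits the density $x_t \mapsto m_t(X_{t-1}^{A_t^{n,1}}, x_t)$ with respect to $\lambda_t$, and this density depends only on $A_t^{n,1}$, not on $A_t^{n,2}$. This follows at once from the marginal requirement imposed on the move step of Algorithm~\ref{algo:intractable}, which stipulates $X_t^{n,1} \sim M_t(X_{t-1}^{A_t^{n,1}}, \dd x_t)$ regardless of how the coupling with $X_t^{n,2}$ is realised.

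Writing $p(a_1, a_2) \eqdef \CProb{A_t^{n,1}=a_1, A_t^{n,2}=a_2}{X_{t-1}^{1:N}}$, the disintegration of the joint law of $(A_t^{n,1}, A_t^{n,2}, X_t^{n,1})$ given $X_{t-1}^{1:N}$ factorises, and Bayes' rule gives, for $\lambda_t$-almost every $x_t$,
\begin{equation*}
\CProb{A_t^{n,1} \neq A_t^{n,2}}{X_t^{n,1}=x_t, X_{t-1}^{1:N}} = \frac{\sum_{a_1 \neq a_2} p(a_1,a_2)\, m_t(X_{t-1}^{a_1}, x_t)}{\sum_{a_1, a_2} p(a_1,a_2)\, m_t(X_{t-1}^{a_1}, x_t)}.
\end{equation*}
I would then bound the numerator from below using $m_t \geq \mblow$ together with Assumption~\ref{asp:coupling:ancestors},
\begin{equation*}
\sum_{a_1 \neq a_2} p(a_1,a_2)\, m_t(X_{t-1}^{a_1}, x_t) \geq \mblow \sum_{a_1 \neq a_2} p(a_1,a_2) = \mblow\, \CProb{A_t^{n,1} \neq A_t^{n,2}}{X_{t-1}^{1:N}} \geq \mblow\, \epsa,
\end{equation*}
and the denominator from above using $m_t \leq \mbhigh$ and $\sum_{a_1,a_2} p(a_1,a_2) = 1$, which yields $\mbhigh$. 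Dividing gives the claimed lower bound $\tfrac{\mblow}{\mbhigh}\epsa$.

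The argument is a short Bayes computation, so the only genuine care point is the rigorous justification of the density factorisation in the continuous state space: one must invoke a disintegration (Radon--Nikodym) argument to give meaning to ``the conditional density of $X_t^{n,1}$'', and verify the conditional independence of $X_t^{n,1}$ and $A_t^{n,2}$ given $A_t^{n,1}$ and $X_{t-1}^{1:N}$. Once this measure-theoretic bookkeeping is in place (it is the same disintegration used to derive the ratio above, so no new idea is required), the two crude bounds close the proof; and since the estimate holds for $\lambda_t$-almost every $x_t$, it holds almost surely in $X_t^{n,1}$.
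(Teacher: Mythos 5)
Your proposal is correct and is essentially the paper's own proof: both compute the conditional law of $(A_t^{n,1}, A_t^{n,2})$ given $X_t^{n,1}$ and $X_{t-1}^{1:N}$ via Bayes' rule, using the fact that the likelihood $m_t(X_{t-1}^{a_1}, X_t^{n,1})$ depends only on $a_1$, and then bound numerator and denominator by $\mblow$ and $\mbhigh$ respectively before invoking Assumption~\ref{asp:coupling:ancestors}. The only cosmetic difference is that the paper writes the sums over ancestor indices as integrals against the measure $\pi_0(a_1,a_2) = p(a_1,a_2 \mid X_{t-1}^{1:N})$.
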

\begin{proof}
	We write (and define new notations along the way):
	\begin{equation*}
		\begin{split}
		\pi(a_t^{n,1}, a_t^{n,2}) &:= p(a_t^{n,1}, a_t^{n,2} | X_t^{n,1}, X_{t-1}^{1:N}) \\
		&\propto p(a_t^{n,1}, a_t^{n,2}|X_{t-1}^{1:N}) m_t(X_{t-1}^{a_t^{n, 1}}, X_t^{n,1}) \\
		&=: p(a_t^{n,1}, a_t^{n,2}|X_{t-1}^{1:N}) \phi(a_t^{n,1})\\
		&=: \pi_0(a_t^{n,1}, a_t^{n,2}) \phi(a_t^{n,1}).
		\end{split}
	\end{equation*}
	Thus
	\begin{equation*}
		\begin{split}
		\CProb{A_t^{n,1} \neq A_t^{n,2}}{X_t^{n,1}, X_{t-1}^{1:N}} &= \int \mathbbm 1\px{a_t^{n, 1}\neq a_t^{n,2}} \pi(a_t\eno, a_t\ent) \\
		&= \frac{\int \mathbbm 1 \px{a_t\eno \neq a_t\ent} \pi_0(a_t\eno, a_t\ent)\phi(a_t\eno)}{\int \pi_0(a_t\eno, a_t\ent) \phi(a_t\eno)} \\
		&\geq \int \mathbbm 1\px{a_t^{n,1}\neq a_t^{n,2}} \pi_0(a_t\eno, a_t\ent) \frac{\mblow}{\mbhigh}
		\end{split}
	\end{equation*}
	by the boundedness of the function $\phi$ between $\mblow$ and $\mbhigh$. From this,  we get the desired result by virtue of Assumption~\ref{asp:coupling:ancestors}.
\end{proof}

\subsection{Validity of Algorithm~\ref{algo:coupling_two_gaussian} (modified Lindvall-Rogers coupler)}
\label{subsect:validity_lindvall_rogers_coupler}
Recall that generating a random variable is equivalent to uniformly simulating under the graph of its density \citep[see e.g.][The Fundamental Theorem of Simulation, chapter 2.3.1]{RobCas}. Algorithm~\ref{algo:coupling_two_gaussian}'s correctness is thus a direct corollary of the following intuitive lemma.
\begin{lem}
	Let $S_A$ and $S_B$ be two subsets of $\mathbb R^d$ with finite Lebesgue measures. Let $A$ and $B$ be two not necessarily independent random variables distributed according to $\operatorname{Uniform}(S_A)$ and $\operatorname{Uniform}(S_B)$ respectively. Denote by $S_0$ the intersection of $S_A$ and $S_B$; and by $C$ a certain $\operatorname{Uniform}(S_A)$-distributed random variable that is independent from $(A,B)$. Define $A^\star$ and $B^\star$ as
	\[A^\star = \begin{cases}
	C &\text{ if } (A, C) \in S_0 \times S_0\\
	A &\text{ otherwise}
	\end{cases}
	 \]
	 and 
	 \[B^\star = \begin{cases}
	 C &\text{ if } (B, C) \in S_0 \times S_0 \\
	 B &\text{ otherwise}
 \end{cases} \]
	 Then $A^\star \sim \operatorname{Uniform}(S_A)$ and $B^\star \sim \operatorname{Uniform}(S_B)$.
\end{lem}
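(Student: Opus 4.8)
The plan is to test both assertions against an arbitrary bounded measurable $h:\mathbb R^d \to \mathbb R$ and to show $\E[h(A^\star)]=\E[h(A)]$ and $\E[h(B^\star)]=\E[h(B)]$; since $A\sim\operatorname{Uniform}(S_A)$ and $B\sim\operatorname{Uniform}(S_B)$ by hypothesis, this immediately yields the claim. Throughout I write $\abs{S}$ for the Lebesgue measure of a set $S$ (finite by assumption), and I use $S_0 \eqdef S_A \cap S_B$, so that $S_0 \subseteq S_A$ and $S_0 \subseteq S_B$. Let $E\eqdef\px{A\in S_0,\ C\in S_0}$ be the event on which $A$ is overwritten by $C$, so that $A^\star = C\ind_E + A\ind_{E^c}$ and hence $h(A^\star)=h(C)\ind_E + h(A)\ind_{E^c}$.

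First I would decompose
\[\E[h(A^\star)] = \E\ps{h(C)\ind_E} + \E\ps{h(A)\ind_{E^c}} = \E[h(A)] + \E\ps{h(C)\ind_E} - \E\ps{h(A)\ind_E},\]
so that the target identity $\E[h(A^\star)]=\E[h(A)]$ reduces to proving $\E\ps{h(C)\ind_E}=\E\ps{h(A)\ind_E}$. The only probabilistic input here is that $C$ is independent of $(A,B)$, hence of $A$: factoring on this independence gives $\E\ps{h(C)\ind_E}=\E\ps{h(C)\ind_{C\in S_0}}\,\P(A\in S_0)$ and $\E\ps{h(A)\ind_E}=\E\ps{h(A)\ind_{A\in S_0}}\,\P(C\in S_0)$. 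Because $A$ and $C$ are \emph{both} $\operatorname{Uniform}(S_A)$, the two expectations $\E\ps{h(\cdot)\ind_{\cdot\in S_0}}$ both equal $\abs{S_A}^{-1}\int_{S_0} h$, and the two probabilities $\P(\cdot\in S_0)$ both equal $\abs{S_0}/\abs{S_A}$; the two products therefore coincide, settling the case of $A^\star$.

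The case of $B^\star$ follows the same template with $F\eqdef\px{B\in S_0,\ C\in S_0}$, reducing to $\E\ps{h(C)\ind_F}=\E\ps{h(B)\ind_F}$, and factoring on $C\perp B$ gives $\E\ps{h(C)\ind_{C\in S_0}}\P(B\in S_0)$ against $\E\ps{h(B)\ind_{B\in S_0}}\P(C\in S_0)$. The one point that requires genuine care — the main (if modest) obstacle — is that $C$ and $B$ now carry \emph{different} normalizations, $C$ being uniform on $S_A$ and $B$ uniform on $S_B$, so the naive symmetry argument of the $A^\star$ case no longer applies verbatim. The cancellation instead hinges on the elementary identity
\[\frac{1}{\abs{S_A}}\cdot\frac{\abs{S_0}}{\abs{S_B}} \;=\; \frac{1}{\abs{S_B}}\cdot\frac{\abs{S_0}}{\abs{S_A}},\]
whose left side is (density of $C$ on $S_0$) $\times\ \P(B\in S_0)$ and whose right side is (density of $B$ on $S_0$) $\times\ \P(C\in S_0)$; multiplying both sides by $\int_{S_0} h$ shows the two expectations agree. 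I would emphasize that $A$ and $B$ are never assumed independent — only $C\perp A$ and $C\perp B$ are invoked, both consequences of $C\perp(A,B)$ — and that the finiteness of $\abs{S_A}$ and $\abs{S_B}$ is what makes all the uniform densities and their restrictions to $S_0$ well defined. Applying the lemma to the regions under the graphs of $f\expa$ and $f\expb$ then gives the correctness of Algorithm~\ref{algo:coupling_two_gaussian} via the Fundamental Theorem of Simulation.
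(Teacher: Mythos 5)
Your proposal is correct and is essentially the paper's own argument written out in full: the paper observes that, conditionally on the swap event, both variables involved are $\operatorname{Uniform}(S_0)$ (even in the asymmetric $B^\star$ case), so the swap preserves marginals, and your test-function computation with the independence factorisation and the normalization identity $\abs{S_A}^{-1}\abs{S_0}/\abs{S_B} = \abs{S_B}^{-1}\abs{S_0}/\abs{S_A}$ is precisely that fact verified by direct integration. No gaps; the explicit treatment of the $B^\star$ asymmetry is exactly what the paper's proof glosses as ``notwithstanding the asymmetry in the definition of $C$.''
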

\begin{proof}
	Given $(A,C) \in S_0 \times S_0$, the two variables $A$ and $C$ have the same distribution (which is $\operatorname{Uniform}(S_0)$). Thus, the definition of $A^\star$ implies that $A$ and $A^\star$ have the same (unconditional) distribution. The same argument applies to $B$ and $B^\star$ notwithstanding the asymmetry in the definition of $C$.
\end{proof}

\subsection{Hoeffding inequalities}\label{ap:hoeffding}

This section proves a Hoeffding inequality for ratios, which helps us to bound~\eqref{eq:delta_tn}. It is essentially a reformulation of \citet[Lemma 4]{Douc2011} in a slightly more general manner.

\begin{defi}\label{def:sub_gaussian}
A real-valued random variable $X$ is called $(C, S)$-sub-Gaussian if 
\[\P\pr{\frac{\abs X}{S} > t} \leq 2 C e^{-t^2/2}, \forall\ t \geq 0. \]
\end{defi}
This definition is close to other sub-Gaussian definitions in the literature,
see e.g.\ \citet[Chapter 2.5]{vershynin2018high}. It basically means that the
tails of $X$ decreases at least as fast as the tails of the $\mathcal{N}(0, S^2)$ distribution, which is itself $(1, S)$-sub-Gaussian. The following result is classic.
\begin{thm}[Hoeffding's inequality]\label{thm:hoeffing_apx}
	Let $X_1, \ldots, X_N$ be $N$ i.i.d.\ random variables with mean $\mu$ and almost surely contained between $a$ and $b$. Then $N^{1/2}(\sum X_i/N - \mu)$ is $(1, (b-a)/2)$-sub-Gaussian.
\end{thm}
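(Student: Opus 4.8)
The plan is to reduce the statement to the classical two-sided tail bound for sums of bounded variables and then match the constants to Definition~\ref{def:sub_gaussian}. Write $\bar X \eqdef N^{-1}\sum_i X_i$, put $Y \eqdef N^{1/2}(\bar X - \mu)$ and $S \eqdef (b-a)/2$. Unfolding the sub-Gaussian definition, for fixed $t \geq 0$ the event $\px{\abs{Y}/S > t}$ is exactly $\px{\abs{\bar X - \mu} > S t N^{-1/2}}$, so it suffices to prove the tail estimate
\[\P\pr{\abs{\bar X - \mu} > s} \leq 2 e^{-2N s^2/(b-a)^2} \qquad \text{for all } s \geq 0,\]
and then substitute $s = S t N^{-1/2}$. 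With this choice one computes $2N s^2/(b-a)^2 = 2N \cdot \tfrac{(b-a)^2}{4} t^2 N^{-1} / (b-a)^2 = t^2/2$, so the bound collapses to $2 e^{-t^2/2}$, which is precisely the $(1, S)$-sub-Gaussian property claimed.

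First I would establish the one-sided bound $\P(\bar X - \mu > s) \leq e^{-2N s^2/(b-a)^2}$ by the Chernoff method. For any $\lambda > 0$, Markov's inequality applied to $e^{\lambda \sum_i (X_i - \mu)}$ together with independence gives
\[\P\pr{\bar X - \mu > s} \leq e^{-\lambda N s} \prod_{i=1}^N \E\ps{e^{\lambda (X_i - \mu)}}.\]
The key ingredient is Hoeffding's lemma: since each $X_i - \mu$ is centered and supported in an interval of length $b-a$, its moment generating function satisfies $\E[e^{\lambda(X_i - \mu)}] \leq e^{\lambda^2 (b-a)^2/8}$. Plugging this in yields the bound $\exp\pr{-\lambda N s + N \lambda^2 (b-a)^2/8}$, and minimizing the exponent over $\lambda$ (the optimizer being $\lambda = 4s/(b-a)^2$) produces exactly $-2N s^2/(b-a)^2$.

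The lower tail $\P(\bar X - \mu < -s)$ follows by running the identical argument with $X_i$ replaced by $-X_i$, and a union bound over the two tails supplies the factor $2$. The main (and only nontrivial) obstacle is Hoeffding's lemma itself; it is proved by writing $\psi(\lambda) \eqdef \log \E[e^{\lambda(X_i - \mu)}]$, checking $\psi(0) = \psi'(0) = 0$, and bounding $\psi''(\lambda)$, which equals the variance under the exponentially tilted law, by $(b-a)^2/4$ via Popoviciu's inequality for distributions on an interval of length $b-a$; a second-order Taylor expansion then gives $\psi(\lambda) \leq \lambda^2 (b-a)^2/8$. As the whole statement is entirely classical, I would simply cite a standard reference for Hoeffding's lemma rather than reproduce this computation, treating the reduction and constant-matching above as the substance of the proof.
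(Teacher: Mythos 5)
Your proposal is correct, and the constant-matching is exactly right: with $S=(b-a)/2$ the substitution $s = StN^{-1/2}$ turns the classical exponent $2Ns^2/(b-a)^2$ into $t^2/2$, which is precisely the $(1,S)$-sub-Gaussian property of Definition~\ref{def:sub_gaussian}. For comparison, the paper offers no proof at all — it introduces Theorem~\ref{thm:hoeffing_apx} with the single remark that ``the following result is classic'' — so your writeup (Chernoff bound, Hoeffding's lemma via the bound $\psi''(\lambda)\leq (b-a)^2/4$ on the tilted variance, optimization at $\lambda = 4s/(b-a)^2$, union bound over the two tails) is simply the standard derivation the paper implicitly invokes, together with the translation into the paper's particular sub-Gaussian parametrization, which is the only step specific to this paper and the one genuinely worth recording.
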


The following lemma is elementary from Definition~\ref{def:sub_gaussian}. The proof is omitted.

\begin{lem}\label{lem:hoeffding_apx}
Let $X$ and $Y$ be two (not necessarily independent) random variables. If $X$ is $(C_1, S_1)$-sub-Gaussian and $Y$ is $(C_2, S_2)$-sub-Gaussian, then $X+Y$ is $(C_1 + C_2, S_1 + S_2)$-sub-Gaussian.
\end{lem}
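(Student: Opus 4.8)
The plan is to reduce the lemma to a union bound, using that $(C,S)$-sub-Gaussianity (Definition~\ref{def:sub_gaussian}) is a purely tail-based property and therefore requires no independence between $X$ and $Y$. Unpacking the definition, the goal is to show that for every $t \ge 0$,
\[
\P\pr{\frac{\abs{X+Y}}{S_1+S_2} > t} \leq 2(C_1+C_2)e^{-t^2/2},
\]
which, after clearing the denominator, is the same as bounding $\P\pr{\abs{X+Y} > (S_1+S_2)t}$.

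The key step is the elementary set inclusion
\[
\px{\abs{X+Y} > (S_1+S_2)t} \subseteq \px{\abs{X} > S_1 t} \cup \px{\abs{Y} > S_2 t}.
\]
To justify it I would argue on the complement: if simultaneously $\abs{X} \leq S_1 t$ and $\abs{Y} \leq S_2 t$, then the triangle inequality gives $\abs{X+Y} \leq \abs{X} + \abs{Y} \leq (S_1+S_2)t$, so the left-hand event cannot occur. Here it is essential that the threshold $(S_1+S_2)t$ splits additively as $S_1 t + S_2 t$, matching the individual sub-Gaussian scales; this is the only place where the precise form of the output parameter $S_1+S_2$ is used.

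Finally I would apply the subadditivity of probability — the union bound, which holds irrespective of any dependence between $X$ and $Y$ — to obtain
\[
\P\pr{\abs{X+Y} > (S_1+S_2)t} \leq \P\pr{\abs{X} > S_1 t} + \P\pr{\abs{Y} > S_2 t},
\]
and then bound each term by its sub-Gaussian tail, namely $\P(\abs{X} > S_1 t) \leq 2C_1 e^{-t^2/2}$ and $\P(\abs{Y} > S_2 t) \leq 2C_2 e^{-t^2/2}$ from the hypotheses. Summing yields exactly $2(C_1+C_2)e^{-t^2/2}$, as required. There is no substantial obstacle in this argument; the only subtlety worth flagging is that the additive split of the scale parameter is what lets the single threshold $t$ be reused in both tail estimates, and that no joint distributional assumption on $(X,Y)$ is needed precisely because only the union bound — rather than a convolution or a product-type estimate — enters.
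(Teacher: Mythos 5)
Your proof is correct: the triangle-inequality inclusion $\px{\abs{X+Y} > (S_1+S_2)t} \subseteq \px{\abs{X} > S_1 t} \cup \px{\abs{Y} > S_2 t}$ followed by a dependence-free union bound is precisely the elementary argument the paper has in mind when it declares the lemma ``elementary from Definition~\ref{def:sub_gaussian}'' and omits the proof. Nothing is missing.
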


We are ready to state the main result of this section.

\begin{prop}[Hoeffding's inequality for ratios]\label{prop:apx:hoeffding}
	Let $a_N$, $b_N$, $a^*$, $b^*$ be random variables such that $\sqrt N(a_N - a^*)$ is $(C_a, S_a)$-sub-Gaussian and $\sqrt N(b_N - b^*)$ is $(C_b, S_b)$-sub-Gaussian.
	Then $\sqrt N \pr{\nfrac{a_N}{b_N} - \nfrac{a^*}{b^*}}$ is sub-Gaussian with parameters $(C^*, S^*)$ where
	\[\begin{cases}
	C^* &= C_a + C_b\\
	S^* &= \norm{\frac{1}{b^*}}_\infty (S_a + S_b \norm{\frac{a_N}{b_N}}_\infty).
	\end{cases} \]
The terms with inf-norm can be infinite if the corresponding random variables are unbounded.
\end{prop}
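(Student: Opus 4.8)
The plan is to reduce the ratio to a linear combination of the two hypothesised sub-Gaussian quantities, each multiplied by an almost-surely bounded (but possibly random) coefficient, and then to invoke the additivity Lemma~\ref{lem:hoeffding_apx}. First I would record the algebraic identity
\[
\frac{a_N}{b_N} - \frac{a^*}{b^*} = \frac{1}{b^*}(a_N - a^*) - \frac{1}{b^*}\cdot\frac{a_N}{b_N}(b_N - b^*),
\]
which is verified by clearing denominators. Multiplying through by $\sqrt N$ yields
\[
\sqrt N\pr{\frac{a_N}{b_N} - \frac{a^*}{b^*}} = \frac{1}{b^*}\cdot\sqrt N(a_N-a^*) - \frac{a_N}{b_N b^*}\cdot\sqrt N(b_N-b^*),
\]
so that the target quantity is a difference of the two given sub-Gaussian variables, each rescaled by a coefficient.

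Next I would establish the elementary scaling fact: if $X$ is $(C,S)$-sub-Gaussian and $\alpha$ is any random variable with $\abs{\alpha}\leq M$ almost surely, then $\alpha X$ is $(C,MS)$-sub-Gaussian. This follows directly from Definition~\ref{def:sub_gaussian} and, crucially, requires no independence between $\alpha$ and $X$: almost surely $\abs{\alpha X}/(MS) \leq \abs{X}/S$, so the event $\px{\abs{\alpha X}/(MS) > t}$ is contained in $\px{\abs{X}/S > t}$ and the tail bound transfers verbatim. I would also note that negating a variable leaves its sub-Gaussian parameters unchanged, since the definition involves $\abs{X}$ only.

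Applying the scaling fact to the first summand with $M = \infnorm{1/b^*}$ shows it is $(C_a, \infnorm{1/b^*}S_a)$-sub-Gaussian; applying it to the second summand with $M = \infnorm{1/b^*}\infnorm{a_N/b_N}$ shows it is $(C_b, \infnorm{1/b^*}\infnorm{a_N/b_N}S_b)$-sub-Gaussian. Combining the two through Lemma~\ref{lem:hoeffding_apx} then gives that their difference is $(C_a+C_b, S^*)$-sub-Gaussian with $S^* = \infnorm{1/b^*}\pr{S_a + \infnorm{a_N/b_N}S_b}$, matching the stated constants. If either inf-norm is infinite the bound on $S^*$ is vacuous and there is nothing to prove. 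I do not anticipate a genuine obstacle here; the only point deserving care is that the coefficients are random and correlated with the noise, which is exactly why I would justify the scaling step by the pointwise event-inclusion argument above rather than by any moment computation.
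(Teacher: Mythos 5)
Your proof is correct and follows essentially the same route as the paper: the exact identity you write down is precisely the paper's decomposition through the intermediate term $a_N/b^*$ (the paper phrases it via the triangle inequality), and both arguments conclude by the additivity Lemma~\ref{lem:hoeffding_apx}. The only difference is that you spell out the scaling fact for a bounded random coefficient via event inclusion, a step the paper leaves implicit, which is a welcome clarification but not a different method.
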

\begin{proof}
We have
\begin{align*}
\abs{\sqrt N \pr{\frac{a_N}{b_N} - \frac{a^*}{b^*}}} &\leq \abs{\sqrt N \pr{\frac{a_N}{b_N} - \frac{a_N}{b^*}}} + \abs{\sqrt N \pr{\frac{a_N}{b^*} - \frac{a^*}{b^*}}} \\
&= \abs{\frac{a_N}{b_N}} \abs{\frac 1{b^*}} \abs{\sqrt N (b_N - b^*)} + \abs{\frac 1{b^*}} \abs{\sqrt N (a_N - a^*)}
\end{align*}
by which the proposition follows from Lemma~\ref{lem:hoeffding_apx}.
\end{proof}

\end{document}